\newcommand{\abs}[1]{\left\lvert#1\right\rvert}
\newcommand{\norm}[1]{\left\lVert#1\right\rVert}
\DeclareMathOperator{\poly}{poly}
\DeclareMathOperator{\polylog}{polylog}
\newtheorem{theorem}{Theorem}
\newtheorem{lemma}{Lemma}
\newtheorem{definition}[lemma]{Definition}
\newtheorem{proposition}[lemma]{Proposition}
\newtheorem{corollary}[lemma]{Corollary}
\theoremstyle{remark}
\newtheorem*{remark}{Remark}
\theoremstyle{plain}
\newtheorem*{lemma*}{Lemma}
\newcommand{\eq}[1]{\cref{eq:#1}}
\newcommand{\thm}[1]{\hyperref[thm:#1]{Theorem~\ref*{thm:#1}}}
\newcommand{\defn}[1]{\hyperref[defn:#1]{Definition~\ref*{defn:#1}}}
\newcommand{\lem}[1]{\hyperref[lem:#1]{Lemma~\ref*{lem:#1}}}
\newcommand{\prop}[1]{\hyperref[prop:#1]{Proposition~\ref*{prop:#1}}}
\newcommand{\fig}[1]{\hyperref[fig:#1]{Figure~\ref*{fig:#1}}}
\newcommand{\tab}[1]{\hyperref[tab:#1]{Table~\ref*{tab:#1}}}
\renewcommand{\sec}[1]{\hyperref[sec:#1]{Section~\ref*{sec:#1}}}
\newcommand{\append}[1]{\hyperref[append:#1]{Appendix~\ref*{append:#1}}}
\newcommand{\cor}[1]{\hyperref[cor:#1]{Corollary~\ref*{cor:#1}}}
\newcommand{\ket}[1]{|#1\rangle}
\newcommand{\bra}[1]{\langle#1|}
\newcommand{\ketbra}[2]{\ket{#1}\!\bra{#2}}
\newcommand\blfootnote[1]{%
	\begingroup
	\renewcommand\thefootnote{}\footnote{#1}%
	\addtocounter{footnote}{-1}%
	\endgroup
}
\DeclareFontFamily{U}{matha}{\hyphenchar\font45}
\DeclareFontShape{U}{matha}{m}{n}{
	<5> <6> <7> <8> <9> <10> gen * matha
	<10.95> matha10 <12> <14.4> <17.28> <20.74> <24.88> matha12
}{}
\DeclareSymbolFont{matha}{U}{matha}{m}{n}
\DeclareFontFamily{U}{mathx}{\hyphenchar\font45}
\DeclareFontShape{U}{mathx}{m}{n}{
	<5> <6> <7> <8> <9> <10>
	<10.95> <12> <14.4> <17.28> <20.74> <24.88>
	mathx10
}{}
\DeclareSymbolFont{mathx}{U}{mathx}{m}{n}
\DeclareMathSymbol{\obot}         {2}{matha}{"6B}
\DeclareMathSymbol{\bigobot}       {1}{mathx}{"CB}
\def\newmaketag{%
  \def\maketag@@@##1{\hbox{\m@th\normalfont\normalsize##1}}%
  }
\apptocmd{\sloppy}{\hbadness 10000\relax}{}{}
\newcommand{\thickhline}{%
    \noalign {\ifnum 0=`}\fi \hrule height 1.5pt
    \futurelet \reserved@a \@xhline
}
\newcolumntype{"}{@{\hskip\tabcolsep\vrule width 1pt\hskip\tabcolsep}}
\definecolor{Gray}{gray}{0.75}
\renewcommand*{\backrefalt}[4]{%
\ifcase #1 %
No citations.%
\or
(Cited on page #2).%
\else
(Cited on pages #2).%
\fi
}
\patchcmd\NAT@citexnum{\let\NAT@last@num\NAT@num}{\MakeLinkTarget[cite]{}\Hy@backout{\@citeb\@extra@b@citeb}\let\NAT@last@num\NAT@num}{}{\fail}
\title{Quantum linear system algorithm with optimal queries to initial state preparation}
\author[aff1,aff2]{Guang Hao Low}
\author[aff1,aff3]{Yuan Su}
\affiliation[aff1]{Azure Quantum, Microsoft, Redmond, WA 98052, USA}
\affiliation[aff2]{Google Quantum AI, Venice, CA 90291, USA}
\affiliation[aff3]{AWS Center for Quantum Computing, Pasadena, CA 91106, USA}
\begin{document}
\maketitle
\blfootnote{Paper was written while both authors were affiliated with Microsoft.}
\vspace{-0.5cm}

\begin{abstract}
\begin{adjustwidth}{-20pt}{-20pt}
The quantum linear system problem provides one of the most enticing sources of exponential quantum speedups, and its resolution underlies other interesting quantum algorithms for differential equations and eigenvalue processing. 
The goal is to produce a state proportional to the solution $A^{-1}|b\rangle$ of a linear system, by querying an oracle $O_A$ that block encodes the coefficient matrix and an oracle $O_b$ that prepares the initial state. 

We present a quantum linear system algorithm with query complexity $\mathbf{\Theta}\left(1/\sqrt{p}\right)$ to $O_b$ that is optimal, and query complexity 
$\mathbf{O}\left(\kappa\log\left(1/p\right)\left(\log\log\left(1/p\right)+\log\left(1/\epsilon\right)\right)\right)$ to $O_A$ that is nearly optimal in all parameters including the condition number $\kappa=\|A\|\|A^{-1}\|$, success amplitude $\sqrt{p}=\|A^{-1}|b\rangle\|/\|A^{-1}\|$, and accuracy $\epsilon$. 
In various applications to solving differential equations, preparing ground states of operators with real spectra, estimating and transforming eigenvalues of non-normal matrices, we can further improve the dependence on $p$ to nearly match or outperform best previous results based on other methods.
As $\kappa$ can be arbitrarily larger than $1/\sqrt{p}$, our algorithm contrasts with 
recent results that have $\mathbf{O}\left(\kappa\log\left(1/\epsilon\right)\right)$ complexity to both oracles, which, while optimal in $O_A$, is highly suboptimal in $O_b$.

We achieve this using a new Variable Time Amplitude Amplification algorithm with Tunable thresholds (Tunable VTAA), which fully characterizes generic nested amplitude amplifications, eliminates redundant nestings, and is of independent interest. With an optimized schedule of thresholds, we prove that the complexity of Tunable VTAA scales with $\ell_{\frac{2}{3}}$-quasinorm of the input cost, improving over the $\ell_1$-norm result of Ambainis and the more common $\ell_2$-norm scaling.
Specialized to the quantum linear system problem, we construct a discretized inverse state, for which a deterministic amplification schedule exists. This leads to a substantially simplified VTAA with an optimal initial state preparation cost, even when the value of $p$ is not known {\emph{a priori}}.

We also introduce a \emph{block preconditioning} scheme that can artificially boost $\sqrt{p}$ in generic situations, in contrast to previous negative preconditioning results focusing on reducing $\kappa$. This further reduces the cost of initial state preparation in linear-system-based differential equation solvers, ground state preparators and eigenvalue processors. Additionally, block preconditioning furnishes a particularly simple quantum linear system algorithm with optimal $\mathbf{O}\left(\kappa\log\left(\frac{1}{\epsilon}\right)\right)$ queries to $O_A$ using $|b\rangle$ itself as the preconditioner. It also realizes a block-encoded eigenvalue transformer with $\mathbf{O}(n)$ scaling in degree of the target polynomial, compared to the best existing result of $\mathbf{O}\left(n^{1.5}\right)$.
\end{adjustwidth}
\end{abstract}
\newpage
{
	\thispagestyle{empty}
	\renewcommand{\baselinestretch}{0.95}\normalsize
	\clearpage\tableofcontents
	\renewcommand{\baselinestretch}{1.}\normalsize
	\thispagestyle{empty}
}
\newpage

\section{Introduction}
\label{sec:intro}

\subsection{Quantum linear system algorithms}
\label{sec:intro_qlsa}

The problem of solving large systems of linear equations provides one of the most enticing sources of exponential speedups for quantum computers. Since the output of any quantum circuit can be simulated by inverting an appropriately chosen matrix, the quantum linear system problem is BQP-complete and captures the full potential of quantum computing---this was proved in the seminal work by Harrow, Hassidim and Lloyd~\cite{Harrow2009}, where the first quantum algorithm for sparse systems of linear equations was developed. As linear equations are ubiquitous in science, quantum linear system algorithms have found broad applications such as computing electromagnetic scattering~\cite{Scherer2017}, estimating electrical resistance of networks~\cite{Wang2017Resistence}, solving differential equations~\cite{Berry2017Differential,BerryCosta22,Krovi2023improvedquantum,Fang2023timemarchingbased,AnChildsLin23}, optimization, and more recently, processing eigenvalues of non-normal matrices~\cite{QEVP}.
While identifying exponential speedups in end-to-end scientific applications of quantum linear system solvers remains challenging, physically-motivated candidates such as computing Green's functions of quantum many-body systems~\cite{2021Yupreconditioned} may have well-characterized complexities.

Quantum linear system solvers also provide a natural reduction for designing more advanced quantum algorithms. One class of such algorithms aim to efficiently apply functions $f(A)$ to high-dimensional matrices accessed by a quantum computer through block encoding oracles. When the input is Hermitian, the target function can be readily implemented by the so-called Quantum Singular Value Transformation (QSVT)~\cite{Gilyen2018singular}, which since its conception has unified many existing quantum algorithms and uncovered new ones~\cite{2021MartynGrand}. The general case where $A$ is non-normal is relevant for various other applications such as solving differential equations and simulating transcorrelated quantum chemistry~\cite{McArdle20}, but the problem appears to be considerably more difficult.
One prior approach~\cite{takahira2021contour,Takahira2020QuantumCauchy,Fang2023timemarchingbased,2021Yupreconditioned} uses the Cauchy integral formula: $f(A)=\frac{1}{2\pi i}\int_{\mathcal{C}}\mathrm{d}z\ f(z)(zI-A)^{-1}$ for $\mathcal{C}$ a contour enclosing all eigenvalues of $A$. By discretizing the integral, $f(A)$ can be approximated by a linear combination of shifted matrix inverses. 
Alternative method employs a rational matrix generating function to create a history state encoding a polynomial basis of the input matrix in quantum superposition, which can be post-processed to yield the desired $f(A)$.
This technique was initially developed to create monomial basis for solving differential equations~\cite{Berry2017Differential}, but is more recently extended to handle Faber polynomials~\cite{QEVP} for approximating more general functions over the complex plane. In any event, these advanced algorithms all rely on quantum linear system solvers as a subroutine and would directly benefit from any enhancements to the solvers.

With an eye toward these applications, we now take a closer look at the cost of quantum linear system algorithms. Here, the goal is to produce a state proportional to the solution $A^{-1}\ket{b}$ of a linear system, using oracular queries to the coefficient matrix $A$ and the initial state $\ket{b}$.
The query complexity of quantum linear system algorithms has gradually improved over a long series of work. The seminal work of Harrow, Hassidim and Lloyd showed that $\mathbf{O}\left(\poly\left(\kappa,\frac{1}{\epsilon}\right)\right)$ queries suffice to produce the solution state with accuracy $\epsilon$~\cite{Harrow2009}, where
\begin{equation}
    \kappa=\norm{A}\norm{A^{-1}}
\end{equation}
is the spectral condition number of the target linear system. The scaling in $\kappa$ was later improved to almost linear by Ambainis' Variable Time Amplitude Amplification (VTAA) algorithm~\cite{Ambainis2012VTAA}. Subsequently, the error scaling is improved exponentially to $\mathbf{O}\left(\polylog\left(\frac{1}{\epsilon}\right)\right)$ by applying a linear combination of quantum walks within VTAA~\cite{Childs2015LinearSystems}. Recent work achieves a similar scaling in $\kappa$ and $\epsilon$ using alternative methods based on the adiabatic evolution and eigenstate filtering~\cite{Subasi2019QLSPadiabatic,An2022QSLP,Lin2020QLSPfiltering}. This ultimately culminated in an algorithm with $\mathbf{O}\left(\kappa\log\left(\frac{1}{\epsilon}\right)\right)$ scaling and constant success probability~\cite{Costa2021linearsystems}---this was claimed to be optimal, citing an unpublished lower bound of Harrow and Kothari, confirmed in~\cite[Appendix A]{Costa2023constant}. Even more recently, a simpler algorithm with the same $\mathbf{O}\left(\kappa\log\left(\frac{1}{\epsilon}\right)\right)$ scaling was designed by performing kernel reflection on an augmented linear system~\cite{Dalzell2024shortcut}.

There are two distinct query oracles involved in the quantum linear system problem: the first one is $O_A$ which block encodes the matrix $A$ to be inverted, and the second one is $O_b$ which prepares the initial state $\ket{b}$ from a standard reference state, chosen to be the computational basis state $\ket{0}$ without loss of generality. The query complexities cited above are all the worst-case combined cost where the two oracles $O_A$ and $O_b$ are treated on equal footing. However, the cost to $O_b$ can in fact be much lower than that to $O_A$, as we will now argue.

Consider the approach where we first construct a block encoding of $\frac{A^{-1}}{2\norm{A^{-1}}}$ (we have used $\norm{A^{-1}}$ in place of its known upper bound for presentational purpose). This can be achieved using QSVT with accuracy $\epsilon$ by making $\mathbf{O}\left(\kappa\log\left(\frac{1}{\epsilon}\right)\right)$ queries to $O_A$~\cite[Corollary 69]{Gilyen2018singular}. Applying it to the initial state then produces the unnormalized state $\frac{A^{-1}\ket{b}}{2\norm{A^{-1}}}$ with accuracy $\epsilon$ and success probability close to $\frac{p_{\text{succ}}}{4}$, consuming one query to $O_b$, where
\begin{equation}
    p_{\text{succ}}=\frac{\norm{A^{-1}\ket{b}}^2}{\norm{A^{-1}}^2},\qquad
    1\leq \frac{1}{\sqrt{p_{\text{succ}}}}=\frac{\norm{A^{-1}}}{\norm{A^{-1}\ket{b}}}\leq \kappa.
\end{equation}
This probability can be boosted close to unity with $\mathbf{O}\left(\frac{1}{\sqrt{p_{\text{succ}}}}\right)$ rounds of amplitude amplification, but the error of solution state is also amplified accordingly. So to achieve an accuracy $\epsilon$ in $\frac{A^{-1}\ket{b}}{\norm{A^{-1}\ket{b}}}$, it suffices to query the oracles a number of times scaling like
\begin{equation}
    \mathbf{O}\left(\frac{1}{\sqrt{p_{\text{succ}}}}\mathbf{Cost}(O_b)
    +\frac{\kappa}{\sqrt{p_{\text{succ}}}}\log\left(\frac{1}{\sqrt{p_{\text{succ}}}\epsilon}\right)\mathbf{Cost}\left(O_A\right)\right).
\end{equation}
Thus the number of queries to $O_b$ attains a strictly linear scaling with $\frac{1}{\sqrt{p_{\text{succ}}}}$, surpassing almost all previous results quoted in~\tab{compare} and can be much smaller than $\kappa$ in practice.
In fact, this scaling is already achievable by the algorithm of~\cite{Harrow2009} as is observed in~\cite{PRXQuantum.2.010315}.
Note also that the approach works even without a prior knowledge of $p_{\text{succ}}$, in which case we simply use the success probability lower bound $\alpha_{p_{\text{succ}}}\leq p_{\text{succ}}$.
Of course, this makes $\mathbf{O}\left(\kappa^2\log\left(\frac{\kappa}{\epsilon}\right)\right)$ queries to $O_A$ when $\frac{1}{\sqrt{p_{\text{succ}}}}\approx\kappa$, suffering from a quadratic slowdown in the worst case.
Methods based on the adiabatic theorem and kernel reflection technique have the scaling
\begin{equation}
    \mathbf{O}\left(\kappa\log\left(\frac{1}{\epsilon}\right)\mathbf{Cost}(O_b)
    +\kappa\log\left(\frac{1}{\epsilon}\right)\mathbf{Cost}(O_A)\right),
\end{equation}
where the complexity of initial state preparation becomes significantly worse.

To date, the \emph{one} and \emph{only one} method offering a close-to-optimal complexity simultaneously to both $O_b$ and $O_A$ is based on VTAA. The state-of-the-art VTAA algorithm makes $\mathbf{O}\left(\frac{1}{\sqrt{p_{\text{succ}}}}\log(\kappa)\right)$ queries to $O_b$, under the nontrivial assumption that the probabilities with which the algorithm potentially succeeds at intermediate stages are all estimated to a constant multiplicative accuracy~\cite{Chakraborty2018BlockEncoding}. Without this prior knowledge, the query complexity worsens to $\mathbf{O}\left(\frac{1}{\sqrt{p_{\text{succ}}}}\log^3(\kappa)\log\log(\kappa)\right)$, and the algorithm becomes considerably more complicated (although this is a one-time cost to pay only when VTAA is compiled into a quantum circuit).

Suppose one were willing to accept the logarithmic slowdown of VTAA and its substantial setup cost---would that improve the complexity of initial state preparation for existing quantum differential equation solvers, eigenvalue estimators and transformers? In solving differential equations, the norm $\norm{A^{-1}}\sim t$ scales close to linearly with the evolution time $t$, whereas $\norm{A^{-1}\ket{b}}\sim\sqrt{t}$. So the dependence on inverse success amplitude scales like $\sim\sqrt{t}$, but there exist methods with cost independent of the evolution time~\cite{Fang2023timemarchingbased,AnChildsLin23}. In estimating eigenvalues of non-normal matrices, the norm $\norm{A^{-1}}\sim \frac{1}{\epsilon}$ scales linearly with the inverse accuracy, whereas $\norm{A^{-1}\ket{b}}\sim\frac{1}{\sqrt{\epsilon}}$. So the dependence on inverse success amplitude scales like $\sim\frac{1}{\sqrt{\epsilon}}$, but again there exist methods with complexity independent of the inverse accuracy~\cite{Zhang2024Nonnormal} (albeit using a different input model). In all cases, one finds oneself in the intermediate regime where
\begin{equation}
    1\ll \frac{1}{\sqrt{p_{\text{succ}}}}=\frac{\norm{A^{-1}}}{\norm{A^{-1}\ket{b}}}
    \ll \kappa=\norm{A}\norm{A^{-1}},
\end{equation}
with a high query complexity of initial state preparation. This becomes problematic when the initial state itself is created by an expensive quantum subroutine, as could arise in natural applications to ground state preparation~\cite{berry2024rapid}, differential equations, and eigenvalue processing.

\begin{table}[t]
    \centering
\resizebox{\textwidth}{!}{
    \begin{tabular}{cccccc}
\thickhline 
\multirow{1}{*}{Year} & \multirow{1}{*}{Algorithm} & \multirow{1}{*}{Primary innovation} & \multicolumn{1}{c}{Queries to $O_{b}$} & \multirow{1}{*}{Queries to $O_{A}$} \tabularnewline
\thickhline 
2008 & \cite{Harrow2009} & \makecell{Phase estimation\\+ Hamiltonian simulation} &\cellcolor{Gray} $\mathbf{O}(\frac{1}{\sqrt{p_{\text{succ}}}})$ & $\mathbf{O}\left(\frac{\kappa}{\sqrt{p_{\text{succ}}}\epsilon^{2}}\right)$ \tabularnewline
\hline 
2012 & \cite{Ambainis2012VTAA} & VTAA & $\mathbf{O}\left(\kappa\polylog\left(\frac{\kappa}{\epsilon}\right)\right)$ & $\mathbf{O}\left(\frac{\kappa}{\epsilon^{3}}\log^{3}\left(\frac{\kappa}{\epsilon}\right)\log^{2}\left(\frac{1}{\epsilon}\right)\right)$ \tabularnewline
\hline 
2017 & \cite{Childs2015LinearSystems} & Linear combination of quantum walks & $\mathbf{O}(\frac{1}{\sqrt{p_{\text{succ}}}}\log\left(\frac{\kappa}{\epsilon}\right))$ & $\mathbf{O}\left(\frac{\kappa}{\sqrt{p_{\text{succ}}}}\mathrm{polylog}\left(\frac{\kappa}{\epsilon}\right)\right)$ \tabularnewline
\hline
2017 & \cite{Childs2015LinearSystems} & \makecell{Gapped phase estimation\\+ Quantum walk + VTAA} & $\mathbf{O}(\kappa\polylog\left(\frac{\kappa}{\epsilon}\right))$ & $\mathbf{O}\left(\kappa\mathrm{polylog}\left(\frac{\kappa}{\epsilon}\right)\right)$ \tabularnewline
\hline 
2018 & \cite{Subasi2019QLSPadiabatic} & Randomized abiabatic evolution & $\mathbf{O}\left(\frac{\kappa}{\epsilon}\log(\kappa)\right)$ & $\mathbf{O}\left(\frac{\kappa}{\epsilon}\log(\kappa)\right)$ \tabularnewline
\hline 
2018 & \cite{Chakraborty2018BlockEncoding} & \cellcolor{Gray!50}Detailed analysis of VTAA & \multicolumn{1}{c}{$\mathbf{O}\left(\frac{1}{\sqrt{p_{\text{succ}}}}\log(\kappa)\right)$} & $\mathbf{O}\left(\kappa\log(\kappa)\log^{2}\left(\frac{\kappa}{\epsilon}\right)\right)$ \tabularnewline
\hline 
2019 & \cite{An2022QSLP} & Continuous time adiabatic evolution & $\mathbf{O}\left(\kappa\mathrm{polylog}\left(\frac{\kappa}{\epsilon}\right)\right)$ & $\mathbf{O}\left(\kappa\mathrm{polylog}\left(\frac{\kappa}{\epsilon}\right)\right)$ \tabularnewline
\hline 
2019 & \cite{Lin2020QLSPfiltering} & Eigenstate filtering & $\mathbf{O}\left(\kappa\log\left(\frac{\kappa}{\epsilon}\right)\right)$ & $\mathbf{O}\left(\kappa\log\left(\frac{\kappa}{\epsilon}\right)\right)$ \tabularnewline
\hline 
2022 & \cite{Costa2021linearsystems} & Discrete time adiabatic evolution & $\mathbf{O}\left(\kappa\log\left(\frac{1}{\epsilon}\right)\right)$ & \cellcolor{Gray} $\Theta\left(\kappa\log\left(\frac{1}{\epsilon}\right)\right)$ \tabularnewline
\hline 
2023 & \cite{Chakraborty2023VTAAQLSP} &\cellcolor{Gray!50}{QSVT-based GPE + VTAA}  & $\mathbf{O}\left(\frac{1}{\sqrt{p_{\text{succ}}}}\log(\kappa)\right)$ & $\mathbf{O}\left(\kappa\log(\kappa)\log\left(\frac{\kappa}{\epsilon}\right)\right)$ \tabularnewline
\hline 
2024 & \cite{Dalzell2024shortcut} & \makecell{Kernel reflection}  & $\mathbf{O}\left(\kappa\log\left(\frac{1}{\epsilon}\right)\right)$ & \cellcolor{Gray} $\Theta\left(\kappa\log\left(\frac{1}{\epsilon}\right)\right)$ \tabularnewline
\thickhline 
2024 & This work &\cellcolor{Gray!50} \colorbox{Gray!50}{\makecell{Tunable VTAA\\+ Discretized inverse state}} & \cellcolor{Gray}$\Theta\left(\frac{1}{\sqrt{p_{\text{succ}}}}\right)$ & $\mathbf{O}\left(\kappa\log\left(\frac{1}{\sqrt{p_{\text{succ}}}}\right)\log\left(\frac{\log(1/\sqrt{p_{\text{succ}}})}{\epsilon}\right)\right)$ \tabularnewline
\hline 
2024 & This work & \makecell{Block preconditioning}  & $\mathbf{O}\left(\kappa\log\left(\frac{1}{\epsilon}\right)\right)$ & \cellcolor{Gray} $\Theta\left(\kappa\log\left(\frac{1}{\epsilon}\right)\right)$ \tabularnewline
\thickhline 
    \end{tabular}
    }
    \caption{Complexity comparison of the new and previous methods for the quantum linear system problem, with optimal query complexities shaded dark gray. The present work achieves for the first time an optimal query cost of initial state preparation, among innovations shaded light gray with nearly optimal queries to both oracles, leading to a $\mathbf{\Theta}\left(\log(\kappa)\right)$ acceleration when $p_{\text{succ}}$ is known and a $\mathbf{\Theta}\left(\log^3(\kappa)\log\log(\kappa)\right)$ speedup when $p_{\text{succ}}$ is unknown.
    The block preconditioning technique further boosts $p_{\text{succ}}$ close to $1$ in applications to solving differential equations, preparing ground states of non-Hermitian operators and processing eigenvalues of non-normal matrices, and furnishes an extremely simple quantum algorithm with an optimal coefficient block encoding cost in the last row.
    }
    \label{tab:compare}
\end{table}

\subsection{Main result}
\label{sec:intro_result}
We develop faster quantum linear system algorithms, with a focus on improving the query complexity of initial state preparation. Our main result includes the following.
\begin{enumerate}[label=(\roman*)]
    \item \label{enum:opt_init} We develop a quantum linear system algorithm that makes $\Theta\left(\frac{1}{\sqrt{p_{\text{succ}}}}\right)$ queries to $O_b$ and\\ $\mathbf{O}\left(\kappa\log\left(\frac{1}{\sqrt{p_{\text{succ}}}}\right)\left(\log\log\left(\frac{1}{\sqrt{p_{\text{succ}}}}\right)+\log\left(\frac{1}{\epsilon}\right)\right)\right)$ queries to $O_A$. The strictly linear scaling with inverse success amplitude holds even when a multiplicative estimate of the solution norm is unavailable, where $p_{\text{succ}}$ is replaced by its lower bound, in contrast to all previous approaches where finding such an estimate incurs at least a polylogarithmic overhead. The optimal lower bound to $O_b$ is proven through a reduction to Grover search.
    \item \label{enum:precond} We show how one can artificially boost $p_{\text{succ}}$ close to $1$. We demonstrate that this leads to differential equation solvers, ground state preparators and eigenvalue processors with improved complexities of initial state preparation, matching or outperforming the state of the art.
\end{enumerate}
Result \ref{enum:opt_init} is realized by a substantially simplified VTAA that uses a deterministic amplification schedule, whose prerequisites are fulfilled by a discretized inverse state we design, giving a  $\mathbf{\Theta}\left(\log(\kappa)\right)$ improvement in the cost of initial state preparation when $p_{\text{succ}}$ is known and a $\mathbf{\Theta}\left(\log^3(\kappa)\log\log(\kappa)\right)$ speedup when $p_{\text{succ}}$ is unknown.
Result \ref{enum:precond} is based on a block preconditioning technique that amplifies a subspace flagged by the initial ancilla state using no oracular queries.

Additionally, we find that the block preconditioning technique can be more broadly applied to improve other complexity scalings of quantum linear system solvers.
Specifically, by choosing initial state $\ket{b}$ itself as the preconditioner, we develop a straightforward quantum linear system algorithm that makes $\mathbf{O}\left(\kappa\log\left(\frac{1}{\epsilon}\right)\right)$ queries to both $O_A$ and $O_b$. Our method appears to be conceptually even simpler than the recent kernel-reflection approach~\cite{Dalzell2024shortcut}.
We also develop a block-encoded quantum eigenvalue transformation algorithm with $\mathbf{O}(n)$ scaling in degree of the target polynomial, by applying preconditioning within the block-encoded version of quantum linear system algorithm, whereas the best previous result was $\mathbf{O}(n^{1.5})$~\cite[Section 5.2]{QEVP}.
We illustrate our results in \fig{diagram} and tabulate them in \tab{compare} and \tab{precond} against previous results.

\begin{figure}[t]
	\centering
\includegraphics[width=0.8\textwidth]{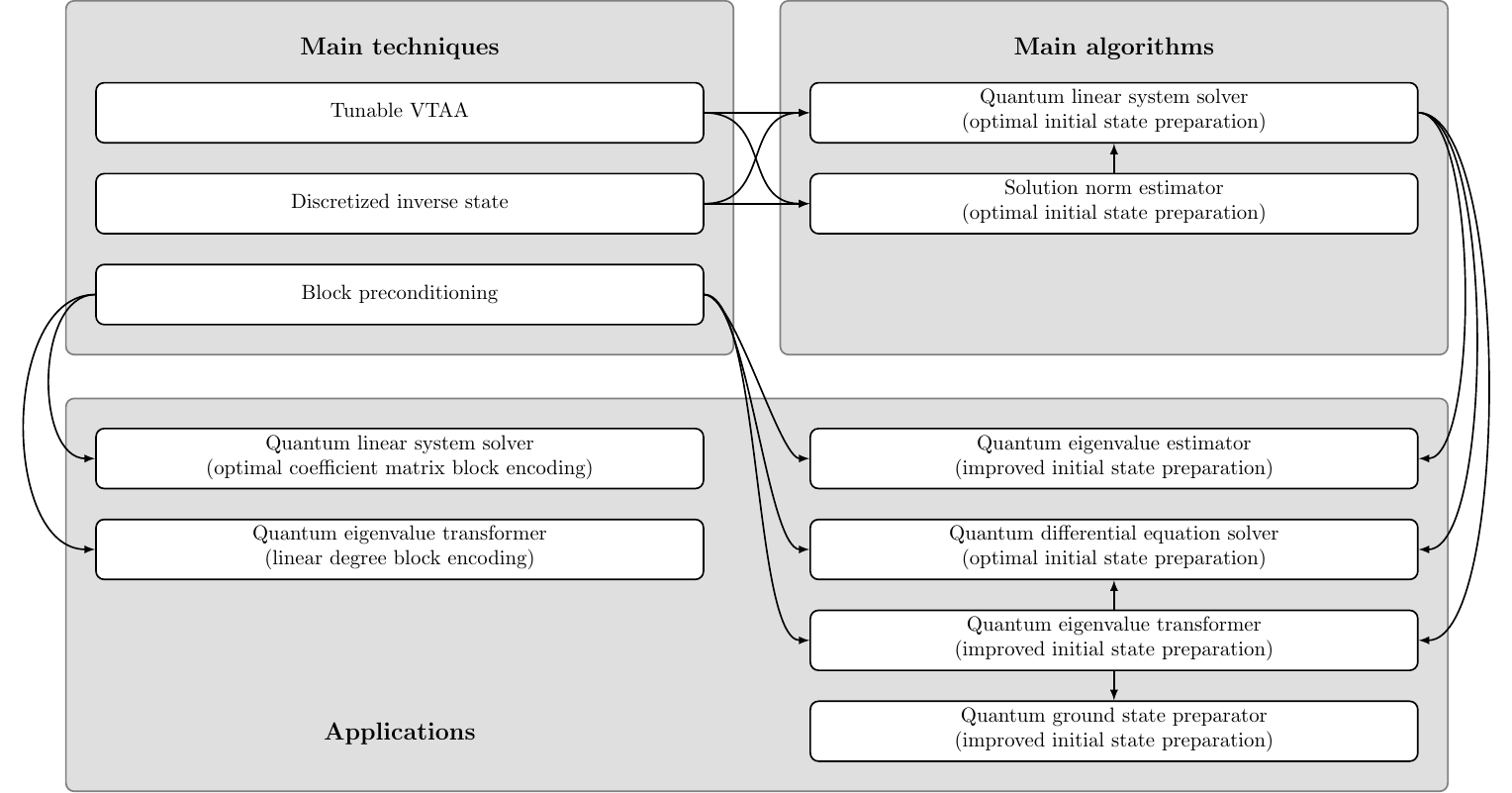}
\caption{A diagrammatic illustration of the main result and its applications.}
\label{fig:diagram}
\end{figure}

\subsection{Tunable variable time amplitude amplification}
\label{sec:intro_tunable}
Our first main technical contribution is the analysis of a generic version of VTAA in \sec{tunable}, with adjustable amplification thresholds that can be tuned/optimized for different input algorithms and initial states. We name this method \emph{Tunable VTAA}.

Before discussing Tunable VTAA in detail, let us first review the VTAA framework introduced by Ambainis~\cite{Ambainis2012VTAA}. The goal here is to construct a quantum state by applying a sequence of \emph{input algorithms} $A_1,\ldots,A_m$ on a starting state $\ket{\psi_0}$ incrementally, where the desired state resides in a subspace flagged by the \emph{flag projection} $\overline{\Pi_b}=I-\Pi_b$. If we apply the quantum algorithms naively, then the state is prepared with probability
\begin{equation}
    p_{\text{succ}}=\norm{\overline{\Pi_b}A_m\cdots A_1\ket{\psi_0}}^2.
\end{equation}
To prepare it with a probability close to unity, we perform $\mathbf{O}\left(\frac{1}{\sqrt{p_{\text{succ}}}}\right)$ rounds of amplitude amplification, leading to a total cost of
\begin{equation}
\label{eq:naive_amp_cost}
    \mathbf{O}\left(\frac{1}{\sqrt{p_{\text{succ}}}}\mathbf{Cost}\left(\ket{\psi_0}\right)+\frac{1}{\sqrt{p_{\text{succ}}}}\sum_{j=1}^m\mathbf{Cost}\left(A_j\right)\right).
\end{equation}

This is a worst-case scenario as all input algorithms $A_1,\ldots,A_m$ are treated on equal footing and get invoked the same number of times in the final stage. VTAA avoids this worst-case cost by performing amplitude amplification at intermediate stages. Formally, we have $m+1$ orthogonal projections all commuting with $\Pi_b$ and partially ordered according to their positive semidefinitess as $0=\Pi_0\leq\Pi_1\leq\cdots\leq\Pi_m=I$. These \emph{clock projections} are used to capture that quantum algorithms can terminate early at or before intermediate stages, i.e., we have $A_j\Pi_{j-1}=\Pi_{j-1}$ for all $j=1,\ldots,m$. Then instead of deferring the amplification to the very end, we perform $r_j$ rounds of amplitude amplification at stage $j$ toward the potentially good subspace flagged by $\overline{\Pi_j\Pi_b}=I-\Pi_j\Pi_b$. This gives \emph{amplified algorithms} $\widetilde A_j
    =\left(-\left(I-2A_j\widetilde A_{j-1}\ketbra{\psi_0}{\psi_0}\widetilde A_{j-1}^\dagger A_j^\dagger\right)\left(I-2\overline{\Pi_j\Pi_b}\right)\right)^{r_j}A_j\widetilde A_{j-1}$ with $\widetilde A_0=I$.
We call $2r_j+1$ the \emph{amplification schedule} or \emph{amplification step numbers}.

To ensure that $\widetilde A_m$ produces the desired state with a sufficiently large probability, one needs to choose $r_j$ carefully to avoid overshoot and undershoot. In the work of~\cite{Ambainis2012VTAA} and many of its followups, the schedule is selected as follows:
set $r_j$ to be the smallest nonnegative integer satisfying $(2r_j+1)\norm{\overline{\Pi_j\Pi_b}A_j\widetilde A_{j-1}\ket{\psi_0}}
        \geq\frac{1}{3\sqrt{m}}$.
Then using the state-of-the-art VTAA analysis~\cite[Lemma 10]{Chakraborty2023VTAAQLSP}, this gives an algorithm with query complexity
\begin{equation}
\label{eq:vtaa_cost}
    \mathbf{O}\left(\frac{\sqrt{m}}{\sqrt{p_{\text{succ}}}}\mathbf{Cost}\left(\ket{\psi_0}\right)
    +\frac{\sqrt{m}}{\sqrt{p_{\text{succ}}}}\sum_{j=1}^m\norm{\overline{\Pi_{j-1}\Pi_b}A_{j-1}\cdots A_1\ket{\psi_0}}\mathbf{Cost}\left(A_j\right)\right).
\end{equation}
Compared to the naive complexity \eq{naive_amp_cost}, VTAA makes fewer queries to the input algorithms by balancing the summation of cost.
Note the above discussion implicitly assumes that the amplitudes $\norm{\overline{\Pi_j\Pi_b}A_j\widetilde A_{j-1}\ket{\psi_0}}$ are known a prior. This assumption is only for presentational purpose: asymptotic scaling of the complexity remains the same if we instead have a constant multiplicative estimate of the amplitudes. Otherwise, we need to perform amplitude estimation to compute them to a constant multiplicative accuracy, incurring a substantial overhead---although this only needs to be done once during compilation.

In contrast, our Tunable VTAA uses the following schedule:
set $r_j$ to be the smallest nonnegative integer satisfying $(2r_j+1)\norm{\overline{\Pi_j\Pi_b}A_j\widetilde A_{j-1}\ket{\psi_0}}
        \geq\frac{1}{3}\sqrt{\alpha_j}$.
Here $\alpha_j$ are \emph{amplification thresholds} satisfying the technical conditions $0\leq\alpha_j\leq1$ and $\sum_{j=1}^m\alpha_j=\mathbf{O}(1)$.
Obviously, Tunable VTAA offers more flexibility as $\alpha_j$ can now be tuned/optimized for different input algorithms and initial states, although it may not be immediately apparent how much more powerful this is over previous results. We prove the following.

\begin{enumerate}[label=(\roman*)]
\item \label{enum:universality} 
Tunable VTAA consists of nested amplitude amplifications with  $\norm{\overline{\Pi_j\Pi_b}A_j\widetilde A_{j-1}\ket{\psi_0}}\leq\frac{1}{2r_j+1}$ and constant \emph{loss factor} $\prod_{j=1}^m\frac{\norm{\overline{\Pi_j\Pi_b}\widetilde A_j\ket{\psi_0}}}{(2r_j+1)\norm{\overline{\Pi_j\Pi_b}A_j\widetilde A_{j-1}\ket{\psi_0}}}=\mathbf{\Omega}(1)$.
Conversely, any nested amplitude amplifications with no overshoot $\norm{\overline{\Pi_j\Pi_b}A_j\widetilde A_{j-1}\ket{\psi_0}}\leq\frac{1}{3(2r_j+1)}$ and constant loss factor $\prod_{j=1}^m\frac{\norm{\overline{\Pi_j\Pi_b}\widetilde A_j\ket{\psi_0}}}{(2r_j+1)\norm{\overline{\Pi_j\Pi_b}A_j\widetilde A_{j-1}\ket{\psi_0}}}=\mathbf{\Omega}(1)$ is a Tunable VTAA.
\item \label{enum:nontrivial} Nontrivial amplifications $r_j\geq1$ in Tunable VTAA happen only at $l$ stages $1\leq s_1\leq\cdots\leq s_{l}\leq m$, where
\begin{equation}
    l=\mathbf{O}\left(\log\left(\frac{1}{\sqrt{p_{\text{succ}}}}\right)\right).
\end{equation}
Under the convention $s_{l+1}=m+1$ and $A_{m+1}=I$, Tunable VTAA has query complexity
\begin{small}
\begin{equation}
\newmaketag
\hspace{-1.cm}
    \mathbf{O}\left(\frac{1}{\sqrt{p_{\text{succ}}}}\mathbf{Cost}(A_{s_1}\cdots A_1\ket{\psi_0})
    +\frac{1}{\sqrt{p_{\text{succ}}}}\sum_{v=2}^{l+1}\frac{1}{\sqrt{\alpha_{s_{v-1}}}}
    \norm{\overline{\Pi_{s_{v-1}}\Pi_b}A_{s_{v-1}}\cdots A_1\ket{\psi_0}}\mathbf{Cost}\left(A_{s_v}\cdots A_{s_{v-1}+1}\right)\right).
\end{equation}
\end{small}%
\item \label{enum:quasinorm} Pre-merging trivial stages and using thresholds
\begin{equation}
    \alpha_{s_{v-1}}\propto
    \left(\norm{\overline{\Pi_{s_{v-1}}\Pi_b}A_{s_{v-1}}\cdots A_1\ket{\psi_0}}\mathbf{Cost}\left(A_{s_v}\cdots A_{s_{v-1}+1}\right)\right)^{\frac{2}{3}},
\end{equation}
Tunable VTAA attains the \emph{$\ell_{\frac{2}{3}}$-quasinorm} scaling
\begin{small}
\begin{equation}
\newmaketag
\hspace{-1.1cm}
    \mathbf{O}\left(\frac{1}{\sqrt{p_{\text{succ}}}}\mathbf{Cost}(A_{s_1}\cdots A_1\ket{\psi_0})
    +\frac{1}{\sqrt{p_{\text{succ}}}}
    \left(\sum_{v=2}^{l+1}\left(\norm{\overline{\Pi_{s_{v-1}}\Pi_b}A_{s_{v-1}}\cdots A_1\ket{\psi_0}}\mathbf{Cost}\left(A_{s_v}\cdots A_{s_{v-1}+1}\right)\right)^{\frac{2}{3}}\right)^{\frac{3}{2}}\right).
\end{equation}
\end{small}%
\end{enumerate}

Let us interpret this result. Property \ref{enum:universality} shows that Tunable VTAA is universal, in the sense that it captures the full power of a generic nested amplitude amplification. Stated differently, to improve the complexity of a nested amplification, it suffices to optimize over the threshold values. Such an exhaustive optimization can be computationally demanding, but one may find suboptimal thresholds that have analytic forms. In any event, it is this universality that motivates us to examine Tunable VTAA in greater detail. 
Property \ref{enum:nontrivial} demonstrates that the number of nontrivial amplification stages is at most logarithmic in the inverse success amplitude. Thus in the regime where $\log\left(\frac{1}{\sqrt{p_{\text{succ}}}}\right)\ll m$, majority of the input algorithms can be pre-merged and VTAA can be significantly simplified. 
Our query complexity then follows from a tightened analysis of VTAA with a $\sqrt{m}$ factor improvement, taking into account the fact that trivial stages can be pre-merged. To compile this into a quantum algorithm, one would naively estimate all $\norm{\overline{\Pi_j\Pi_b}A_j\widetilde A_{j-1}\ket{\psi_0}}$ using amplitude estimation and compare with threshold values to determine the schedule, but this is not really necessary if the thresholds $\alpha_j$ are chosen analytically, as we will demonstrate for the quantum linear system problem. Property \ref{enum:quasinorm} follows by minimizing the cost of Tunable VTAA under the constraints $0\leq\alpha_{s_{v-1}}\leq1$ and $\sum_{v=2}^{l+1}\alpha_{s_{v-1}}=\mathbf{O}(1)$. This $\ell_{\frac{2}{3}}$-quasinorm scaling is lower than the $\ell_1$- or $\ell_2$-norm result from previous work as
\begin{equation}
\begin{aligned}
    &\left(\sum_{v=2}^{l+1}\left(\norm{\overline{\Pi_{s_{v-1}}\Pi_b}A_{s_{v-1}}\cdots A_1\ket{\psi_0}}\mathbf{Cost}\left(A_{s_v}\cdots A_{s_{v-1}+1}\right)\right)^{\frac{2}{3}}\right)^{\frac{3}{2}}\\
    &\leq\sqrt{l}\sum_{v=2}^{l+1}\norm{\overline{\Pi_{s_{v-1}}\Pi_b}A_{s_{v-1}}\cdots A_1\ket{\psi_0}}\mathbf{Cost}\left(A_{s_v}\cdots A_{s_{v-1}+1}\right).
\end{aligned}
\end{equation}
Again, to achieve this ``ultimate performance'' of VTAA with a quantum algorithm, one needs to estimate all potentially good amplitudes $\norm{\overline{\Pi_{j-1}\Pi_b}A_{j-1}\cdots A_1\ket{\psi_0}}$ with logarithmic overhead, so this is more advantageous when the algorithm may be repeatedly invoked after compilation.

\subsection{Discretized inverse state}
\label{sec:intro_dinv}
For the quantum linear system problem, our goal is to produce a normalized version of the solution $A^{-1}\ket{b}$.
Here, the coefficient matrix is block encoded as $A/\alpha_A$ with normalization factor $\alpha_A\geq\norm{A}$, and a norm upper bound on its inverse $\alpha_{A^{-1}}\geq\norm{A^{-1}}$ is known a prior, giving $\kappa=\alpha_A\alpha_{A^{-1}}$. 
By considering the Hermitian dilation $\ketbra{0}{1}\otimes A+\ketbra{1}{0}\otimes A^\dagger$, we may without loss of generality assume that $A$ itself is Hermitian and has the spectral decomposition $A=\sum_u\lambda_u\ketbra{\phi_u}{\phi_u}$, where $\frac{1}{\alpha_{A^{-1}}}\leq\abs{\lambda_u}\leq\alpha_A$. Meanwhile, the initial state can be expanded in the eigenbasis as $\ket{b}=\sum_u\gamma_u\ket{\phi_u}$, so a direct application of QSVT produces an unnormalized state close to
\begin{equation}
    \frac{A^{-1}}{2\alpha_{A^{-1}}}\ket{b}
    =\sum_u\frac{1}{2\alpha_{A^{-1}}\lambda_u}\gamma_u\ket{\phi_u},\qquad
    \frac{\norm{A^{-1}\ket{b}}}{\alpha_{A^{-1}}}
    =\frac{1}{\alpha_{A^{-1}}}\sqrt{\sum_u\abs{\frac{\gamma_u}{\lambda_u}}^2}
    =\sqrt{p_{\text{succ}}}.
\end{equation}

This can be improved by VTAA, which operates on an enlarged Hilbert space and produces an intermediate state of the form~\cite{Childs2015LinearSystems}
\begin{equation}
\begin{aligned}
    \psi_{\text{inv}}&=
    \sum_{k=0}^{m-1}\sum_{\abs{\frac{\lambda_u}{\alpha_A}}\in\big[\frac{1}{2^{k+1}},\frac{1}{2^k}\big)}
    \frac{1}{\alpha_{A^{-1}}\lambda_u}
    \left(\zeta_{k+1,u}\ket{k}+\zeta_{k,u}\ket{k-1}\right)\gamma_u\ket{\phi_u},\\
    \norm{\psi_{\text{inv}}}&=\frac{1}{\alpha_{A^{-1}}}\sqrt{\sum_{k=0}^{m-1}\sum_{\abs{\frac{\lambda_u}{\alpha_A}}\in\big[\frac{1}{2^{k+1}},\frac{1}{2^k}\big)}
    \left(\abs{\zeta_{k+1,u}}^2+\abs{\zeta_{k,u}}^2\right)
    \abs{\frac{\gamma_u}{\lambda_{u}}}^2}
    =\sqrt{p_{\text{succ,inv}}}.
\end{aligned}
\end{equation}
Here, the ancilla register can be thought of as a ``clock register'' holding $m=\mathbf{Ceil}\left(\log_2\left(\alpha_A\alpha_{A^{-1}}\right)\right)=\mathbf{\Theta}(\log(\kappa))$ values, and $\abs{\zeta_{k+1,u}}^2+\abs{\zeta_{k,u}}^2\lessapprox1$ are close to being normalized for all $u$, implying $p_{\text{succ,inv}}=\mathbf{\Theta}(p_{\text{succ}})$. This intermediate state can be created using VTAA with a constant success probability. Specifically, one chooses the clock projections to be $\Pi_j=\sum_{k=0}^{j-1}\ketbra{k}{k}$ for $j=1,\ldots,m$, 
satisfying
$0\leq\Pi_1\leq\Pi_2\leq\cdots\leq\Pi_m=I$. 
Within the VTAA stage $k$, one performs a procedure called Gapped Phase Estimation (GPE) to approximately check whether the eigenvalues are in $\abs{\frac{\lambda_u}{\alpha_A}}\in\big[\frac{1}{2^{k}},1\big)$ with a certain confidence level, inverting the matrix if the condition is satisfied and terminating the current branch of computation afterward. Finally, the clock register can be uncomputed by running GPE in reverse (without VTAA), resulting in the desired $A^{-1}\ket{b}$.

Our second main technical contribution, to be detailed in \sec{dinv}, is the replacement of above state by the \emph{discretized inverse state}
\begin{equation}
\begin{aligned}
    \psi_{\text{d-inv}}&=
    \sum_{k=0}^{m-1}\sum_{\abs{\frac{\lambda_u}{\alpha_A}}\in\big[\frac{1}{\rho^{k+1}},\frac{1}{\rho^k}\big)}
    \left(\zeta_{k+1,u}\frac{\rho^{k+1}}{\rho^m}\ket{k}+\zeta_{k,u}\frac{\rho^k}{\rho^m}\ket{k-1}\right)\gamma_u\ket{\phi_u},\\
    \norm{\psi_{\text{d-inv}}}&=\sqrt{\sum_{k=0}^{m-1}\sum_{\abs{\frac{\lambda_u}{\alpha_A}}\in\big[\frac{1}{\rho^{k+1}},\frac{1}{\rho^k}\big)}
    \left(\abs{\zeta_{k+1,u}\frac{\rho^{k+1}}{\rho^m}}^2+\abs{\zeta_{k,u}\frac{\rho^k}{\rho^m}}^2\right)\abs{\gamma_u}^2}=\sqrt{p_{\text{succ,d-inv}}},
\end{aligned}
\end{equation}
where $\rho$ is an odd integer which can be chosen as $\rho=3$ without loss of generality and $m=\mathbf{Ceil}\left(\log_3\left(\alpha_A\alpha_{A^{-1}}\right)\right)=\mathbf{\Theta}(\log(\kappa))$. 
Note that we have switched from a base-$2$ partition of the eigenvalues to an odd-number base partition. 
This change plays the central role in achieving the optimal initial state preparation cost while substantially simplifying the structure of VTAA. We will return to this point momentarily.
The second change is we replace the eigenvalue inversion $\frac{1}{\lambda_u}$ by the discrete values $\frac{3^{k+1}}{3^m}$ and $\frac{3^k}{3^m}$, which can be introduced using only controlled single-qubit rotations. This modification frees the expensive matrix inversion from any nested amplitude amplifications, further simplifying the algorithm. Otherwise, we still perform GPE in each stage $k$ which has a cost of $\mathbf{O}\left(3^k\log\left(\frac{1}{\epsilon_{\text{gpe}}}\right)\right)$.
Since the eigenvalues are discretized with constant multiplicative accuracy, we have $p_{\text{succ,d-inv}}=\mathbf{\Theta}(p_{\text{succ}})$.

We now describe a simple VTAA schedule. We first recall from Property \ref{enum:nontrivial} that nontrivial amplifications happen only at $\mathbf{O}\left(\log\left(\frac{1}{\sqrt{p_{\text{succ}}}}\right)\right)$ stages of VTAA, and majority of the algorithms can be pre-merged. In our case, the complexity of GPE $\sim3^k$ increases exponentially in $k$, so intuitively one should pre-merge the initial stages with lower cost. We rigorize this idea by showing that the asymptotic complexity of VTAA remains unaffected after pre-merging the first $m-l$ stages, as long as $l=\mathbf{\Omega}\left(\log_3\left(\frac{1}{\sqrt{p_{\text{succ}}}}\right)\right)$. 
For the quantum linear system problem, the $\ell_{\frac{2}{3}}$-quasinorm from Property \ref{enum:quasinorm} can be difficult to evaluate directly, so we relax it to the $\ell_2$-norm, which is attained by setting the amplification thresholds $\alpha_k\propto 9^{k}\norm{\overline{\Pi_k\Pi_b}A_k\cdots A_1\ket{\psi_0}}^2$. These thresholds can be normalized to $\alpha_k=\mathbf{\Theta}\left(\frac{9^{k}\norm{\overline{\Pi_k\Pi_b}A_k\cdots A_1\ket{\psi_0}}^2}{9^m p_{\text{succ}}}\right)$ to fulfill the requirement $\sum_k\alpha_k=\mathbf{O}(1)$. With the $\ell_2$-norm cost, Tunable VTAA makes $\mathbf{O}\left(l3^m\log\left(\frac{1}{\epsilon_{\text{gpe}}}\right)\right)$ queries to the block encoding oracle $O_A$, which can be minimized by reducing the value of $l$. But recall that we need it sufficiently large due to pre-merging, so we should choose $l=\mathbf{\Theta}\left(\log_3\left(\frac{1}{\sqrt{p_{\text{succ}}}}\right)\right)$. This implies $p_{\text{succ}}=\mathbf{\Theta}\left(9^{-l}\right)$ and hence the thresholds
\begin{equation}
\label{eq:deterministic_threshold}
    \alpha_k=
    \begin{cases}
        \mathbf{\Theta}\left(9^{k-m+l}\norm{\overline{\Pi_k\Pi_b}A_k\cdots A_1\ket{\psi_0}}^2\right),\quad&k=m-l+1,\ldots,m,\\
        0,&k=1,\ldots,m-l.
    \end{cases}
\end{equation}

Property \ref{enum:universality} provides the theoretical guarantee that the above Tunable VTAA can be translated into a nested amplitude amplification.
To realize this,
one would naively perform amplitude estimation to determine all thresholds $\alpha_k$ to a constant multiplicative accuracy, which introduces polylogarithmic overhead and ruins the optimal scaling of initial state preparation. But this is unnecessary overkill. Suppose we have adjusted the constant factor so that $\alpha_k=c^29^{k-m+l}\norm{\overline{\Pi_k\Pi_b}A_k\cdots A_1\ket{\psi_0}}^2$ for some $c\gtrapprox 1$ (say $c=1.001$).
Then at the first nontrivial stage $k=m-l+1$, we would compare $\norm{\overline{\Pi_{m-l+1}\Pi_b}A_{m-l+1}\cdots A_1\ket{\psi_0}}$ with $\frac{1}{3}\sqrt{\alpha_{m-l+1}}=c\norm{\overline{\Pi_{m-l+1}\Pi_b}A_{m-l+1}\cdots A_1\ket{\psi_0}}$. We find that our current amplitude falls just short of the threshold, concluding that we need $2r_{m-l+1}+1=3$ steps of amplification. After this stage, we have approximately amplified the amplitude by $3$. At the next stage $k=m-l+2$, our current amplitude $\norm{\overline{\Pi_{m-l+2}\Pi_b}A_{m-l+2}\widetilde A_{m-l+1}\ket{\psi_0}}\lessapprox3\norm{\overline{\Pi_{m-l+2}\Pi_b}A_{m-l+2}\cdots A_1\ket{\psi_0}}$ is slightly below $\frac{1}{3}\sqrt{\alpha_{m-l+2}}=c3\norm{\overline{\Pi_{m-l+2}\Pi_b}A_{m-l+2}\cdots A_1\ket{\psi_0}}$, so we conclude once more we need $2r_{m-l+2}+1=3$ steps of amplification, which boosts the amplitude again by $3$ up to a loss factor.
Proceeding in this way, we obtain the following simple schedule
\begin{equation}
\label{eq:deterministic_schedule}
    2r_k+1=
    \begin{cases}
        3,\quad &k=m-l+1,\ldots,m,\\
        1,&k=1,\ldots,m-l.
    \end{cases}
\end{equation}
Note that this \emph{deterministic schedule} is derived entirely from the definition of Tunable VTAA, and the derivation requires \emph{no} amplitude estimation.
The amplification schedule \eq{deterministic_schedule} is reminiscent of another one used in the recent study of quantum search~\cite{SimpleSearch23} (see also~\cite{BoundedErrorInputs,NestedSearch,chakraborty_et_al:LIPIcs.STACS.2022.20}). This is not a coincidence. For the quantum search algorithm of~\cite{SimpleSearch23}, the target cost function is $\ell_2$-norm of the input cost, which grows exponentially in the stage number like $\sim 3^k$. Following an analysis similar to above, one can show that Tunable VTAA reproduces the search algorithm developed in~\cite{SimpleSearch23}.

With the deterministic schedule, GPE is invoked at most $m3^l$ times. To achieve accuracy $\epsilon$, we choose $\epsilon_{\text{gpe}}=\frac{\epsilon}{m3^l}$, giving the cost
$\mathbf{O}\left(\frac{1}{\sqrt{p_{\text{succ}}}}\mathbf{Cost}(O_b)
    +\kappa\log\left(\frac{1}{\sqrt{p_{\text{succ}}}}\right)\log\left(\frac{\log(\kappa)}{\sqrt{p_{\text{succ}}}\epsilon}\right)\mathbf{Cost}(O_A)\right)$.
By using a non-uniform error distribution in the first $m-l$ stages and projecting error onto the potentially good subspaces, we further improve this result to
\begin{equation}
    \mathbf{O}\left(\frac{1}{\sqrt{p_{\text{succ}}}}\mathbf{Cost}(O_b)
    +\kappa\log\left(\frac{1}{\sqrt{p_{\text{succ}}}}\right)\log\left(\frac{\log\left(\frac{1}{\sqrt{p_{\text{succ}}}}\right)}{\epsilon}\right)\mathbf{Cost}(O_A)\right).
\end{equation}
This achieves an optimal cost of initial state preparation with a substantially simplified VTAA schedule \eq{deterministic_schedule}. We summarize this algorithm in \thm{qls_opt_init} of \sec{lin} and prove the optimality in \thm{qls_lower}.

Note that our above discussion assumes that we have a constant multiplicative approximation of $\sqrt{p_{\text{succ}}}$, or equivalently, the \emph{solution norm} $\norm{A^{-1}\ket{b}}$. If this information is not available, then we develop a solution norm estimation algorithm in \thm{sol_est} that estimates $\norm{A^{-1}\ket{b}}$ to a constant multiplicative accuracy, by running VTAA with pre-merging parameter $l=0,1,2,\ldots$ that terminates at $l=\mathbf{\Theta}\left(\log_3\left(\frac{1}{\sqrt{p_{\text{succ}}}}\right)\right)$ with high probability. This costs $\mathbf{O}\left(\frac{1}{\sqrt{\alpha_{p_{\text{succ}}}}}\right)$ queries to the initial state preparation oracle with $\alpha_{p_{\text{succ}}}$ a lower bound on the success probability, again achieving the strictly linear scaling.

\subsection{Block preconditioning}
\label{sec:intro_precond}
As aforementioned, when applying the quantum linear system solvers, we are often in the intermediate regime where $1\ll \frac{1}{\sqrt{p_{\text{succ}}}}=\frac{\alpha_{A^{-1}}}{\norm{A^{-1}\ket{b}}}
    \ll \kappa=\alpha_{A}\alpha_{A^{-1}}$. 
Thus, the cost of querying $O_b$ (which depends on $\frac{1}{\sqrt{p_{\text{succ}}}}$) can be quite significant even with our optimal scaling algorithm. We now describe the third main contribution, a simple block preconditioning technique, that artifically boosts the solution norm $\norm{A^{-1}\ket{b}}$ and thereby reduces $\frac{1}{\sqrt{p_{\text{succ}}}}$ and hence the complexity of initial state preparation. In fact, for solving differential equations, preparing ground states and processing eigenvalues, the initial state query cost can even be reduced close to constant (in the evolution time, inverse spectral gap and degree of the input polynomial). See \sec{precond} for further details.

\begin{table}[t]
    \centering
\resizebox{\textwidth}{!}{
    \begin{tabular}{cccccc}
    \thickhline
    Application & Improvement & Preconditioner & Scaling without preconditioning & Scaling with preconditioning\\
    \thickhline
    \makecell{Linear system solver\\\thm{qls_opt_coeff}} & Input matrix & Initial state
    & $\frac{\kappa}{\sqrt{p_{\text{succ}}}}\log\left(\frac{1}{\sqrt{p_{\text{succ}}}\epsilon}\right)$
    & $\kappa\log\left(\frac{1}{\epsilon}\right)$
    \tabularnewline
\hline 
    \makecell{Differential equation solver\\\thm{diff_eq_init}} & Initial state & Ancilla state
    & $\frac{t}{\norm{e^{tA}\ket{b}}}\log\left(\frac{1}{\norm{e^{tA}\ket{b}}\epsilon}\right)\log\left(\frac{t}{\epsilon}\right)$
    & $\frac{1}{\norm{e^{tA}\ket{b}}}$
    \tabularnewline
\hline 
    \makecell{Eigenvalue estimator\\\thm{qeve_init}} & Initial state & Ancilla state 
    & $\frac{\kappa_S}{\epsilon}$
    & $\kappa_S$
    \tabularnewline
\hline 
    \makecell{Eigenvalue transformer\\\thm{qevt_init}} & Initial state & Ancilla state
    & $\frac{\kappa_S^2n\log\left(n\right)}{\norm{p\left(\frac{A}{\alpha_A}\right)\ket{\psi}}}
    \log\left(
    \frac{\kappa_S\log(n)}{\norm{p\left(\frac{A}{\alpha_A}\right)\ket{\psi}}\epsilon}\right)$
    & $\frac{\kappa_S\log(n)}{\norm{p\left(\frac{A}{\alpha_A}\right)\ket{\psi}}}$
    \tabularnewline
\hline 
    \makecell{Ground state preparator\\\thm{gs_init}} & Initial state & Ancilla state
    & $\frac{\kappa_S^2}{|\gamma_0|\delta_A}\log^2\left(\frac{\kappa_S}{|\gamma_0|\epsilon}\right)$
    & $\frac{\kappa_S}{|\gamma_0|}\log\left(\frac{1}{\delta_A}
        \log\left(\frac{\kappa_S}{|\gamma_0|\epsilon}\right)\right)$
    \tabularnewline
\hline 
    \makecell{Block-encoded\\eigenvalue transformer\\\thm{qevt_blk}} & Input matrix & Ancilla state
    & $\frac{n^{1.5}\kappa_S^2}{\norm{p\left(\frac{A}{\alpha_A}\right)}}\log\left(\frac{\sqrt{n}\kappa_S}{\norm{p\left(\frac{A}{\alpha_A}\right)}\epsilon}\right)\log\left(\frac{1}{\epsilon}\right)$
    & $\frac{n\kappa_S^2}{\norm{p\left(\frac{A}{\alpha_A}\right)}}\log\left(\frac{\kappa_S}{\norm{p\left(\frac{A}{\alpha_A}\right)}\epsilon}\right)\log\left(\frac{1}{\epsilon}\right)$
    \tabularnewline
    \thickhline
    \end{tabular}
}
\caption{Summary of applications of the block preconditioning technique. Using the initial ancilla state as the preconditioner, one can reduce the cost of initial state preparation in linear-system-based differential equation solvers, ground state preparators and eigenvalue processors, nearly matching or outperforming the state of the art. By choosing the initial state itself as the preconditioner and applying QSVT, one gets an extremely simple quantum linear system solver with optimal queries to the coefficient matrix.
See the corresponding theorem statements for definitions of the parameters.
}
\label{tab:precond}
\end{table}

Our block preconditioning technique targets at a subspace containing the initial state. Specifically, suppose that there is an orthogonal projection $\Pi_{\text{cond}}$ whose image includes the initial state
\begin{equation}
    \Pi_{\text{cond}}\ket{b}=\ket{b}.
\end{equation}
Then we introduce the scaling operator
\begin{equation}
    S=s\Pi_{\text{cond}}+\left(I-\Pi_{\text{cond}}\right),\qquad
    S^{-1}=\frac{1}{s}\Pi_{\text{cond}}+\left(I-\Pi_{\text{cond}}\right),
\end{equation}
where $0<s<1$ is a scaling parameter that can be tuned for specific applications.
Assuming $\Pi_{\text{cond}}$ is accessible through an oracle reflecting along the image space $-\Pi_{\text{cond}}+\left(I-\Pi_{\text{cond}}\right)
    =I-2\Pi_{\text{cond}}$,
we can block encode $S$ with normalization factor $1$ through the linear combination
$S=\frac{1-s}{2}\left(I-2\Pi_{\text{cond}}\right)
    +\frac{1+s}{2}
    I$.
This assumption is justified because in all our applications, we can efficiently implement unitaries $V_{\text{cond}}$ such that $V_{\text{cond}}^\dagger\left(I-2\Pi_{\text{cond}}\right)V_{\text{cond}}$ are reflections in the computational basis.

Now instead of $Ax=b$, we consider the preconditioned linear system $SAx=Sb$,
which yields the original solution $\frac{(SA)^{-1}Sb}{\norm{(SA)^{-1}Sb}}=\frac{A^{-1}\ket{b}}{\norm{A^{-1}\ket{b}}}$ when solved exactly.
Here, the preconditioned coefficient matrix can be block encoded as $(SA)/\alpha_A$ with the same normalization as before, and the initial state $\frac{Sb}{\norm{Sb}}=\ket{b}$ also remains the same, whereas the inverse matrix has a norm bounded by $\norm{(SA)^{-1}}
    =\norm{A^{-1}S^{-1}}
    \leq\sqrt{\frac{\norm{A^{-1}\Pi_{\text{cond}}}^2}{s^2}+\norm{A^{-1}\left(I-\Pi_{\text{cond}}\right)}^2}
    \leq\sqrt{\frac{\norm{A^{-1}\Pi_{\text{cond}}}^2}{s^2\norm{A^{-1}}^2}+1}\norm{A^{-1}}$.
However, the solution norm now becomes
\begin{equation}
    \norm{(SA)^{-1}\ket{b}}
    =\norm{A^{-1}S^{-1}\ket{b}}
    =\frac{1}{s}\norm{A^{-1}\ket{b}}.
\end{equation}
\addtocounter{equation}{1}%
So asymptotically the condition number stays the same as long as $s=\mathbf{\Omega}\left(\frac{\norm{A^{-1}\Pi_{\text{cond}}}}{\norm{A^{-1}}}\right)$,
but the solution norm will be artificially boosted by a factor of $\frac{1}{s}$ thanks to block preconditioning.

Preconditioning is a classical subject in numerical linear algebra, the goal of which is to transform linear system problems to improve their algorithmic solvability. Among the various preconditioning methods, the one closest to our work is the so-called matrix scaling (also known as matrix balancing or matrix equilibration); see the review article~\cite{Idel16} and the references therein. Using that language, our method can be interpreted as follows. When partitioned with respect to the orthogonal decomposition $\mathbf{Im}\left(\Pi_{\text{cond}}\right)\obot\mathbf{Im}\left(I-\Pi_{\text{cond}}\right)$, operators $A^{-1}$ and $S^{-1}$ take the form
\begin{equation}
    A^{-1}=
    \begin{bmatrix}
    \left.A^{-1}\right|_{\mathbf{Im}\left(\Pi_{\text{cond}}\right)} & \left.A^{-1}\right|_{\mathbf{Im}\left(I-\Pi_{\text{cond}}\right)}
    \end{bmatrix},\qquad
    S^{-1}=
    \begin{bmatrix}
        \frac{1}{s}\left.I\right|_{\mathbf{Im}\left(\Pi_{\text{cond}}\right)} & 0\\
        0 & \left.I\right|_{\mathbf{Im}\left(I-\Pi_{\text{cond}}\right)}
    \end{bmatrix}.
\end{equation}
Here, the block $\left.A^{-1}\right|_{\mathbf{Im}\left(\Pi_{\text{cond}}\right)}$ is of most interest to us, because it completely determines the action of $A^{-1}$ on a subspace containing the initial state $\ket{b}$. The issue is that $\norm{\left.A^{-1}\right|_{\mathbf{Im}\left(\Pi_{\text{cond}}\right)}}\ll\norm{A^{-1}}$ can often happen in practice, so we are inverting a coefficient matrix with a large condition number, while the initial state actually resides in a subspace with a relatively small condition number. To remedy this issue, we use the scaling matrix to boost the block $\frac{1}{s}\norm{\left.A^{-1}\right|_{\mathbf{Im}\left(\Pi_{\text{cond}}\right)}}\approx\norm{A^{-1}}$ without significantly increasing the overall condition number, leading to faster quantum linear system solvers; hence the name \emph{block preconditioning}. It is worth mentioning that preconditioning has also been explored in previous quantum work such as~\cite{2021Yupreconditioned,Preconditioned13,Circulant18}, although their focus is on reducing the condition number for problems with specially structured coefficients, as opposed to increasing the solution norm of a generic linear system considered here. 

As an immediate application, let us consider what if we choose the initial state itself as the preconditioner $\Pi_{\text{cond}}=\ketbra{b}{b}$. Then, the coefficient matrix
\begin{equation}
    SA=\left(s\ketbra{b}{b}+\left(I-\ketbra{b}{b}\right)\right)A
\end{equation}
can be block encoded with normalization factor $\alpha_A$, using $1$ query to $O_A$ and $2$ queries to $O_b$ and its inverse. Given a constant multiplicative approximation of $\norm{A^{-1}\ket{b}}$, we set
\begin{equation}
    s=\mathbf{\Theta}\left(\frac{\norm{A^{-1}\ket{b}}}{\alpha_{A^{-1}}}\right)=\mathbf{\Theta}\left(\sqrt{p_{\text{succ}}}\right).
\end{equation}
This implies the scaling of solution norm $\norm{(SA)^{-1}\ket{b}}
    =\frac{1}{s}\norm{A^{-1}\ket{b}}
    =\mathbf{\Theta}\left(\alpha_{A^{-1}}\right)$,
\addtocounter{equation}{1}%
and norm bound on the inverse coefficient matrix $\norm{(SA)^{-1}}
    \leq\sqrt{\frac{\norm{A^{-1}\ket{b}}^2}{s^2\norm{A^{-1}}^2}+1}\norm{A^{-1}}
    =\mathbf{O}\left(\alpha_{A^{-1}}\right)$.
\addtocounter{equation}{1}%
Therefore, by applying QSVT to the preconditioned problem, we obtain an extremely simple quantum linear system algorithm with complexity
\begin{equation}
    \mathbf{O}\left(\kappa\log\left(\frac{1}{\epsilon}\right)\mathbf{Cost}(O_b)
    +\kappa\log\left(\frac{1}{\epsilon}\right)\mathbf{Cost}(O_A)\right).
\end{equation}
This appears to be conceptually even simpler than the recent kernel-reflection method, which instead solves a padded linear system over an expanded Hilbert space~\cite{Dalzell2024shortcut}. Of course, one needs to estimate the solution norm to a constant multiplicative accuracy before running either approach. This can be achieved using either the method of~\cite{Dalzell2024shortcut} which makes optimal queries to the coefficient matrix, or our Tunable VTAA which makes optimal queries to the initial state preparation and nearly optimal uses of the coefficient matrix.

Observe that the preconditioned coefficient matrix $SA$ can be block encoded without querying $O_b$, if the initial state is in tensor product $\ket{0}\ket{b}$ and we choose the ancilla state $\ket{0}\otimes I$ as the preconditioner. We utilize this observation to reduce the cost of initial state preparation in solving differential equations, preparing ground states of operators with real spectra and processing eigenvalues of non-normal matrices, nearly matching or outperforming the state of the art. This refutes the widely held belief that linear-system-based differential equation solvers and eigenvalue estimators necessarily require more queries to the initial state oracle than their alternatives~\cite{Fang2023timemarchingbased,AnChildsLin23,Zhang2024Nonnormal}. Of independent interest, we also get a block-encoded eigenvalue transformation algorithm with $\mathbf{O}(n)$ scaling in degree of the target polynomial, whereas the best previous result was $\mathbf{O}(n^{1.5})$~\cite[Section 5.2]{QEVP}. See \tab{precond} for a summary of applications of the block preconditioning technique.

We summarize in~\sec{prelim} preliminaries required to understand our algorithms, and include in~\sec{discuss} a brief recap of the main result and a collection of questions for future work.

\section{Preliminaries}
\label{sec:prelim}

This section contains preliminaries useful for understanding our quantum linear system algorithms and their applications. We provide a list of notation and terminology in \sec{prelim_notation} to be used throughout the paper. We then review known results on VTAA in \sec{prelim_amp}. Unlike prior work, we introduce an axiomatic formulation of VTAA which is not bonded by any concrete circuit implementation and offers more flexibility. We refer the reader to~\cite[Section 2.4]{QEVP} for further preliminaries about the block encoding framework, within which our algorithms are developed.

\subsection{Notation and terminology}
\label{sec:prelim_notation}
We use lowercase Latin and Greek alphabets to denote scalars, vectors and functions. For instance, we often use $\epsilon$ for the accuracy of algorithm, $\kappa$ for a known upper bound on the condition number, $p_{\text{succ}}$ for success probability of the input algorithm of VTAA or the direct inversion of input matrix with block encoding, $m$ for the number of VTAA stages, and $r$ for the number of amplification steps. We also represent known upper/lower bounds of quantities using $\alpha$, writing $\alpha_A\geq\norm{A}$ for a known upper bound on the spectral norm of input matrix $A$ for block encoding, $\alpha_{A^{-1}}\geq\norm{A^{-1}}$ for a known upper bound on the norm of $A^{-1}$ leading to $\kappa=\alpha_A\alpha_{A^{-1}}$, $\alpha_{p_{\text{succ}}}\leq p_{\text{succ}}$ a known lower bound on the success probability, and $\alpha_j$ amplification thresholds associated with a Tunable VTAA. We employ standard notations for number sets, using $\mathbb{Z}$ for integers ($\mathbb{Z}_{\geq 0}$ for nonnegative integers), $\mathbb{R}$ for real numbers, and $\mathbb{C}$ for complex numbers.

We user uppercase Latin and Greek alphabets to represent matrices and operators. For example, we write $A$ for the coefficient matrix of a linear system, $O_A$ for its block encoding unitary, $O_b$ for the unitary preparing the initial state, $A_j$ ($j=1,\ldots,m$) for the input algorithms of VTAA, $\Pi$ for an orthogonal projection ($\overline{\Pi}=I-\Pi$ for its complement), $S$ for the scaling operator used for preconditioning, and $I,X,Y,Z$ for the identity and Pauli matrices. We use $\norm{A}$ to denote the spectral norm of $A$, i.e., its largest singular value.
For a matrix $A$ and scalar $\alpha_A\geq 0$, we say $A/\alpha_A$ can be block encoded if there exist isometries $G_1$, $G_2$ and unitary $U$ such that $\frac{A}{\alpha_A}=G_2^\dagger UG_1$. Such a \emph{block encoding} is mathematically feasible if and only if $\alpha_A\geq\norm{A}$, but additional normalization factors may be introduced when the block encoding is implemented by a quantum circuit. Moreover, the block encoding is Hermitian if the two isometries $G_1=G_2$ are identical and $U$ is a Hermitian unitary. This arises when we block encode the Hermitian dilation of a matrix as
\begin{equation*}
    \frac{1}{\alpha_A}
    \begin{bmatrix}
        0 & A\\
        A^\dagger & 0
    \end{bmatrix}
    =\begin{bmatrix}
        G_2^\dagger & 0\\
        0 & G_1^\dagger
    \end{bmatrix}
    \begin{bmatrix}
        0 & U\\
        U^\dagger & 0
    \end{bmatrix}
    \begin{bmatrix}
        G_2 & 0\\
        0 & G_1
    \end{bmatrix}.
\end{equation*}

We use boldface symbols to denote functions and operations having specific meanings. We write $\mathbf{Cost}(\cdot)$ for the (query) cost of implementing an operator, $\mathbf{Ceil}(\cdot)$/$\mathbf{Floor}(\cdot)$ for the nearest integer rounded up/down, $\mathbf{Ker}(\cdot)$/$\mathbf{Im}(\cdot)$ for the kernel/image of a linear operator, $\mathbf{P}(\cdot)$ for the probability of an event, and $\mathbf{T}_j(\cdot)$/$\mathbf{U}_j(\cdot)$ for Chebyshev polynomials of the first/second kind (writing its rescaled version as $\widetilde{\mathbf{T}}_j(\cdot)$ with $\widetilde{\mathbf{T}}_0=\frac{1}{2}$).
If $\Pi$ is an orthogonal projection, $\mathbf{Im}(\Pi)\obot\mathbf{Im}(\overline{\Pi})$ is an orthogonal decomposition of the underlying space and we may use it to bound the spectral norm of an operator as
\begin{equation}
    \norm{A}=\sqrt{\norm{AA^\dagger}}
    \leq\sqrt{\norm{A\Pi A^\dagger}+\norm{A\overline{\Pi}A^\dagger}}
    =\sqrt{\norm{A\Pi}^2+\norm{A\overline{\Pi}}^2}.
\end{equation}
We write $\mathbf{O}(\cdot)$ to mean asymptotically less than, $\mathbf{\Omega}(\cdot)$ to mean asymptotically more than, and $\mathbf{\Theta}(\cdot)$ to represent quantities having the same asymptotic scaling.
We let a summation be zero and a product by one if their lower limits exceed upper limits.

In analyzing the performance of Tunable VTAA, we will need to evaluate
\begin{equation}
    \norm{\beta}_{p}=\left(\sum_{j=1}^n\abs{\beta_j}^p\right)^{\frac{1}{p}},\qquad
    p\in(0,+\infty]
\end{equation}
for vectors $\begin{bmatrix}
    \beta_1 & \cdots & \beta_n
\end{bmatrix}\in\mathbb{C}^n$. When $1\leq p\leq\infty$, $\norm{\cdot}_p$ is the vector $\ell_p$-norm and its properties are summarized in standard references such as~\cite[Chapter 5]{horn2012matrix}. In particular, the case $p=2$ corresponds to the Euclidean norm and we may drop the subscript since it is the same as spectral norm when we equate vectors with column matrices. However if $0<p<1$, $\norm{\cdot}_p$ is no longer a vector norm and its properties may be less familiar. In this case, $\norm{\cdot}_p$ is known as a \emph{quasinorm}~\cite{Rosenzweig,Conrad}, which satisfies the following defining properties.
\begin{enumerate}[label=(\roman*)]
\item \label{enum:positivity} $\norm{\beta}_p\geq0$ for all $\beta\in\mathbb{C}^n$, with $\norm{\beta}_p=0$ if and only if $\beta=0$.
\item \label{enum:homogeneity} $\norm{c\beta}_p=\abs{c}\norm{\beta}_p$ for all $\beta\in\mathbb{C}^n$ and $c\in\mathbb{C}$.
\item \label{enum:modified_triangle} $\norm{\beta+\gamma}_p\leq2^{\frac{1}{p}-1}\left(\norm{\beta}_p+\norm{\gamma}_p\right)$ for all $\beta,\gamma\in\mathbb{C}^n$.
\end{enumerate}
The factor $2^{\frac{1}{p}-1}$ in Property \ref{enum:modified_triangle} is the best one can get for a modified triangle inequality $\norm{\beta+\gamma}_p\leq c\left(\norm{\beta}_p+\norm{\gamma}_p\right)$ with some universal constant $c\geq0$ for all $\beta,\gamma\in\mathbb{C}^n$. In fact, if $\beta$ and $\gamma$ are entrywise nonnegative, it holds the reverse Minkowski's inequality $\norm{\beta+\gamma}_p\geq\norm{\beta}_p+\norm{\gamma}_p$. Additionally, when $0<p<q\leq\infty$,
\begin{equation}
    \norm{\beta}_q\leq\norm{\beta}_p\leq n^{\frac{1}{p}-\frac{1}{q}}\norm{\beta}_q,
\end{equation}
and $n^{\frac{1}{p}-\frac{1}{q}}$ is again the best constant one can hope for.

We use the Dirac notation $\ket{\psi}$ for a vector only when it is normalized with respect to the Euclidean norm $\norm{\ket{\psi}}=1$. For general nonzero vectors $\psi$ and $\phi$, we define
\begin{equation}
    \psi\propto\phi
    \quad\Leftrightarrow\quad
    \frac{\psi}{\norm{\psi}}=\frac{\phi}{\norm{\phi}}
    \quad\Leftrightarrow\quad
    \psi=c\phi\ \exists c>0.
\end{equation}
That is, $\psi$ is proportional to $\phi$ if and only if they agree up to a positive rescaling, whereas $\ket{\psi}\propto\ket{\phi}\Leftrightarrow \ket{\psi}=\ket{\phi}$ holds for unit vectors. It can be verified that proportionality is reflexive, symmetric, and transitive. Moreover, 
\begin{equation}
    \psi\propto\phi
    \quad\Rightarrow\quad
    B\psi\propto B\phi
\end{equation}
for an operator $B$ whenever the composition makes sense.

Finally, we say $u>0$ is a $c$-\emph{multiplicative approximation} of $v>0$ if
\begin{equation}
    \frac{1}{c}\leq\frac{u}{v}\leq c
\end{equation}
for some $c\geq 1$. It is apparent that this relation is reflexive, symmetric, and invariant under reciprocal: any $u>0$ is a $1$-multiplicative approximation of itself; if $u>0$ is a $c$-multiplicative approximation of $v>0$, then $v$ is a $c$-multiplicative approximation of $u$, and $\frac{1}{u}$ is a $c$-multiplicative approximation of $\frac{1}{v}$. Moreover, if $u_1$, $u_2$ are $c_1$-, $c_2$-multiplicative approximations of $v_1$, $v_2$ respectively, their product $u_1u_2$ is a ($c_1c_2$)-multiplicative approximation of $v_1v_2$. When analyzing VTAA, we often consider the case where $c$ is constant. But there may also be scenarios where we want $1-\epsilon\leq\frac{u}{v}\leq\frac{1}{1-\epsilon}$ for some small $0<\epsilon<1$. This then leads to a simple product rule:
\begin{equation}
    1-\epsilon_1\leq\frac{u_1}{v_1}\leq\frac{1}{1-\epsilon_1},\quad
    1-\epsilon_2\leq\frac{u_2}{v_2}\leq\frac{1}{1-\epsilon_2}
    \quad\Rightarrow\quad
    1-(\epsilon_1+\epsilon_2)\leq\frac{u_1u_2}{v_1v_2}\leq\frac{1}{1-(\epsilon_1+\epsilon_2)}.
\end{equation}
By a possible rescaling of $\epsilon$, we can also consider alternative definitions such as $\frac{1}{1+\epsilon}\leq\frac{u}{v}\leq1+\epsilon$ and $1-\epsilon\leq\frac{u}{v}\leq1+\epsilon$, which relate more closely to additive approximations as
\begin{equation}
    1-\epsilon\leq\frac{u}{v}\leq1+\epsilon
    \quad\Leftrightarrow\quad
    \abs{u-v}\leq v\epsilon.
\end{equation}

\subsection{Variable time amplitude amplification}
\label{sec:prelim_amp}
We now formally introduce variable time amplitude amplification, and review known results about this framework~\cite{Ambainis2012VTAA,Chakraborty2018BlockEncoding}, using an axiomatic formulation not bonded by its specific circuit implementation. We assume throughout this subsection that an underlying Hilbert space $\mathcal{H}$ has been fixed on which all operators act.

\begin{definition}[Variable time algorithm and amplification]
\label{defn:vta}
A variable time quantum algorithm is a $3$-tuple $\left(\{\Pi_j\}_{j=0}^m,\Pi_b,\{A_j\}_{j=0}^m\right)$ satisfying the following axioms.
\begin{enumerate}[label=(\roman*)]
    \item \label{enum:vta_clock} $\Pi_j$ are orthogonal projections partially ordered as $0=\Pi_0\leq\Pi_1\leq\cdots\leq\Pi_m=I$.
    \item \label{enum:vta_flag} $\Pi_b$ is an orthogonal projection  commuting with all $\Pi_j$: $\Pi_b\Pi_j=\Pi_j\Pi_b$ for $j=0,\ldots, m$.
    \item \label{enum:vta_input} $A_j$ are unitaries such that $A_j\Pi_{j-1}=\Pi_{j-1}$ for all $j=1,\ldots, m$, and $A_0=I$.

\end{enumerate}
A variable time amplification algorithm is a $5$-tuple $\left(\{\Pi_j\}_{j=0}^m,\Pi_b,\{A_j\}_{j=0}^m,\{\widetilde{A}_j\}_{j=0}^m,\ket{\psi_0}\right)$
that additionally satisfies
\begin{enumerate}[label=(\roman*)]
\setcounter{enumi}{3}
    \item \label{enum:vta_amp} $\widetilde A_j$ are unitaries such that $\frac{\overline{\Pi_j\Pi_b}\widetilde{A}_j\ket{\psi_0}}{\norm{\overline{\Pi_j\Pi_b}\widetilde{A}_j\ket{\psi_0}}}
    =\frac{\overline{\Pi_j\Pi_b}A_j\widetilde{A}_{j-1}\ket{\psi_0}}{\norm{\overline{\Pi_j\Pi_b}A_j\widetilde{A}_{j-1}\ket{\psi_0}}}$ for all $j=1,\ldots, m$, and $\widetilde A_0=I$.
\end{enumerate}
\end{definition}

The above definition deserves a few remarks. We call $\Pi_j$ ($j=0,\ldots,m$) the \emph{clock projections}. As $\Pi_j$ are Hermitian, they can be partially ordered according to the positive semidefinitess, and Axiom~\ref{enum:vta_clock} requires that $\Pi_j\leq\Pi_{j+1}$ hold true for all $j=0,\ldots,m-1$. In \append{axiom_clock}, we present a number of equivalent characterizations of the partial ordering $\Pi_j\leq\Pi_{j+1}$, showing in particular that $\Pi_{j}=\Pi_{j+1}\Pi_j=\Pi_{j}\Pi_{j+1}$.
As an immediate corollary, we have
\begin{equation}
    \Pi_j=\Pi_k\Pi_j=\Pi_j\Pi_k,\qquad 0\leq j\leq k\leq m
\end{equation}
and $\Pi_k-\Pi_j$ are themselves orthogonal projections. In a variable time quantum algorithm, $\mathbf{Im}(\Pi_j)$ represents the part of space in which the algorithm stops running before or at stage $j$. The space $\mathbf{Im}(\Pi_j)$ monotonically increases with $j$, echoing the fact that more branches of the quantum algorithm will halt as the computation proceeds toward completion.

We call $\Pi_b$ and its complement $\overline{\Pi_b}=I-\Pi_b$ the \emph{flag projections}. In the context of a variable time quantum algorithm, $\mathbf{Im}(\overline{\Pi_b})$ is the part of space where the desired output state resides, corresponding to the case of success. Axiom \ref{enum:vta_flag} requires that $\Pi_b$ commutes with all $\Pi_j$, so these projections may be simultaneously measured in a quantum computation. In \append{axiom_flag}, we tabulate the meaning of different outcomes from the simultaneous measurement of $\{\Pi_j-\Pi_{j-1}\}_{j=1}^m$ and $\{\Pi_b,I-\Pi_b\}$. Of particular interest is the outcome corresponding to $\overline{\Pi_j\Pi_b}$, which represents that the computation can potentially succeed at stage $j$. These potentially good projections form a monotonically decreasing sequence
\begin{equation}
    I=\overline{\Pi_0\Pi_b}\geq\overline{\Pi_1\Pi_b}\geq\cdots\geq\overline{\Pi_m\Pi_b}=\overline{\Pi_b},
\end{equation}
and they satisfy
\begin{equation}
    \overline{\Pi_k\Pi_b}=\overline{\Pi_j\Pi_b}\cdot\overline{\Pi_k\Pi_b}=\overline{\Pi_k\Pi_b}\cdot\overline{\Pi_j\Pi_b},\qquad
    0\leq j\leq k\leq m.
\end{equation}

We call $A_j$ the \emph{input algorithms}, which act trivially on $\mathbf{Im}(\Pi_{j-1})$ as is required by Axiom \ref{enum:vta_input}. In \append{axiom_input}, we present a number of equivalent characterizations of the property $A_j\Pi_{j-1}=\Pi_{j-1}$, finding that $A_j=\Pi_{j-1}+\overline{\Pi_{j-1}}A_j\overline{\Pi_{j-1}}$ are necessarily controlled unitaries controlled by $\overline{\Pi_{j-1}}$. This implies
\begin{equation}
    A_j\Pi_l=\Pi_l=\Pi_lA_j,\qquad A_j\overline{\Pi_l\Pi_b}=\overline{\Pi_l\Pi_b}A_j,\qquad
    0\leq l<j\leq m.
\end{equation}
As an immediate consequence of this property, we have that the potentially good amplitudes
\begin{equation}
\label{eq:pg_amp}
    \norm{\overline{\Pi_{j}\Pi_b}A_{j}\cdots A_1\ket{\psi_0}}=\norm{\overline{\Pi_{j}\Pi_b}A_{m}\cdots A_1\ket{\psi_0}},\qquad
    j=0,\ldots,m
\end{equation}
form a monotonic sequence decreasing from $1$ to $\sqrt{p_{\text{succ}}}$:
\begin{equation}
    1=\norm{\overline{\Pi_{0}\Pi_b}\ket{\psi_0}}\geq\norm{\overline{\Pi_{1}\Pi_b}A_1\ket{\psi_0}}\geq\cdots\geq\norm{\overline{\Pi_{m}\Pi_b}A_{m}\cdots A_1\ket{\psi_0}}=\norm{\overline{\Pi_b}A_{m}\cdots A_1\ket{\psi_0}}=\sqrt{p_{\text{succ}}}.
\end{equation}

Finally, we call $\widetilde A_j$ the \emph{amplified algorithms}. At each stage $j$, we would like the amplification to preserve the part of state potentially leading to success, while suppressing the part that has already resulted in failure. This requirement is captured by Axiom \ref{enum:vta_amp}, which can be succinctly represented as $\overline{\Pi_j\Pi_b}\widetilde{A}_j\ket{\psi_0}\propto\overline{\Pi_j\Pi_b}A_j\widetilde{A}_{j-1}\ket{\psi_0}$. We then prove in \append{axiom_amp} that
\begin{equation}
\label{eq:trans_amp}
    \frac{\norm{\overline{\Pi_{h}\Pi_b}A_{h}\cdots A_{j+1}\widetilde{A}_{j}\ket{\psi_0}}}{\norm{\overline{\Pi_{k}\Pi_b}A_{k}\cdots A_{j+1}\widetilde{A}_{j}\ket{\psi_0}}}
    =\frac{\norm{\overline{\Pi_{h}\Pi_b}A_{h}\cdots A_{l+1}\widetilde{A}_{l}\ket{\psi_0}}}{\norm{\overline{\Pi_{k}\Pi_b}A_{k}\cdots A_{l+1}\widetilde{A}_{l}\ket{\psi_0}}},\qquad
    0\leq l,j\leq k,h\leq m.
\end{equation}
That is, the transition of potentially good amplitudes remains the same, regardless of whether we start with the pre- or post-amplified algorithms.

Until this point, we have not specified the amplification algorithms $\widetilde A_j$ other than the requirement that they should preserve the potentially good outcomes. In what follows, we consider the case where $\widetilde A_j$ are constructed by the standard amplitude amplification toward the potentially good subspaces $\widetilde A_j
    =\left(-\left(I-2A_j\widetilde A_{j-1}\ketbra{\psi_0}{\psi_0}\widetilde A_{j-1}^\dagger A_j^\dagger\right)\left(I-2\overline{\Pi_j\Pi_b}\right)\right)^{r_j}A_j\widetilde A_{j-1}$.
See~\append{qubitization_amp} for a brief review of amplitude amplification.
Here, $r_j$ are nonnegative integers and we call $2r_j+1$ the \emph{amplification schedules} or \emph{amplification step numbers}. The query complexity of $\widetilde A_j$ then satisfies the recurrence
\begin{equation}
    \mathbf{Cost}\left(\widetilde A_j\ket{\psi_0}\right)=(2r_j+1)\left(\mathbf{Cost}( A_j)+\mathbf{Cost}\left(\widetilde A_{j-1}\ket{\psi_0}\right)\right),\qquad
    j=1,\ldots,m.
\end{equation}
For a given choice of $r_j$, we can unwrap the recursion to obtain a \emph{variable time nested amplitude amplification} $\widetilde A_m$, which has a query cost of
\begin{equation}
\label{eq:nest_complexity}
    \mathbf{Cost}\left(\widetilde A_m\ket{\psi_0}\right)
    =\mathbf{Cost}(\ket{\psi_0})\prod_{k=1}^m(2r_k+1)
    +\sum_{j=1}^m\mathbf{Cost}(A_j)\prod_{k=j}^m(2r_k+1).
\end{equation}
Formally:
\begin{definition}[Variable time nested amplitude amplification]
\label{defn:nested}
A variable time nested amplitude amplification is a $5$-tuple $\left(\{\Pi_j\}_{j=0}^m,\Pi_b,\{A_j\}_{j=0}^m,\{r_j\}_{j=1}^m,\ket{\psi_0}\right)$ that additionally satisfies
\begin{enumerate}[label=(\roman*$'$)]
\setcounter{enumi}{3}
    \item \label{enum:nested_amp} $r_j$ are nonnegative integers for $j=1,\ldots,m$, which define
\begin{equation}
    \widetilde A_j=
    \begin{cases}
        \left(-\left(I-2A_j\widetilde A_{j-1}\ketbra{\psi_0}{\psi_0}\widetilde A_{j-1}^\dagger A_j^\dagger\right)\left(I-2\overline{\Pi_j\Pi_b}\right)\right)^{r_j}A_j\widetilde A_{j-1},\quad&j=1,\ldots,m,\\
        I,&j=0.
    \end{cases}
\end{equation}
\end{enumerate}
\end{definition}

When $(2r_j+1)\norm{\overline{\Pi_j\Pi_b}A_j\widetilde A_{j-1}\ket{\psi_0}}\leq1$, we have $(2r_j+1)\arcsin\left(\norm{\overline{\Pi_j\Pi_b}A_j\widetilde A_{j-1}\ket{\psi_0}}\right)\leq\frac{\pi}{2}$ and there is no over amplification/overshoot. Consequently, the pre- and post-amplified amplitudes of stage $j$ are related by the analytic expression
\begin{equation}
    \norm{\overline{\Pi_j\Pi_b}\widetilde A_j\ket{\psi_0}}
    =\sin\left((2r_j+1)\arcsin\left(\norm{\overline{\Pi_j\Pi_b}A_j\widetilde A_{j-1}\ket{\psi_0}}\right)\right).
\end{equation}
Note we have almost amplified the amplitude by $2r_j+1$, up to the \emph{loss factor}
\begin{equation}
    \frac{\sin\left((2r_j+1)\arcsin\left(\norm{\overline{\Pi_j\Pi_b}A_j\widetilde A_{j-1}\ket{\psi_0}}\right)\right)}{(2r_j+1)\norm{\overline{\Pi_j\Pi_b}A_j\widetilde A_{j-1}\ket{\psi_0}}}.
\end{equation}
This is always $\leq1$, and one should maintain it $\approx 1$ so as to avoid unnecessary waste of oracular queries. See~\append{dirichlet} for a tight bound on this quantity.
Carrying out this analysis to all the stages, we have the total loss factor
\begin{equation}
    \prod_{j=1}^m\left(\frac{\sin\left((2r_j+1)\arcsin\left(\norm{\overline{\Pi_j\Pi_b}A_j\widetilde A_{j-1}\ket{\psi_0}}\right)\right)}{(2r_j+1)\norm{\overline{\Pi_j\Pi_b}A_j\widetilde A_{j-1}\ket{\psi_0}}}\right).
\end{equation}
See \fig{vtaa_transition} for an illustration of the amplitude transitions in a nested amplitude amplification.

\begin{figure}[t]
    \centering
\includegraphics[width=1.\textwidth]{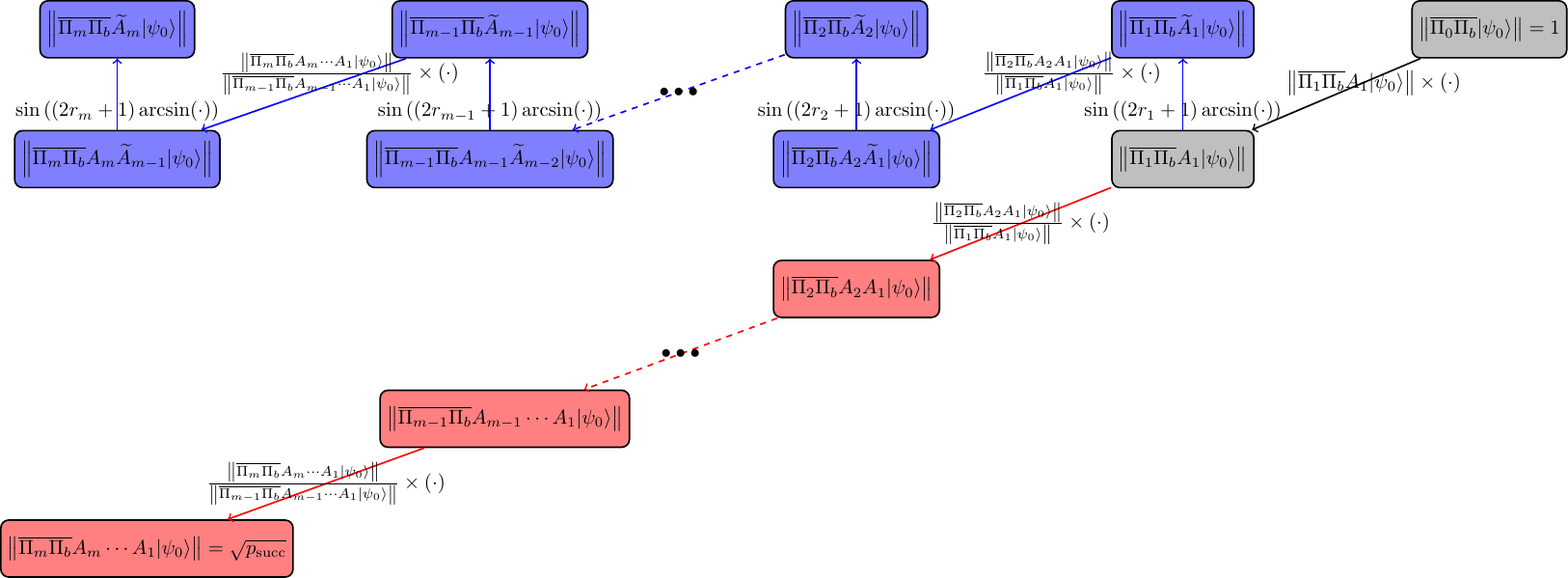}
    \caption{Illustration of the transition of amplitudes in a VTAA/nested amplitude amplification. The flow of algorithms is represented by arrows, with the input branch colored in red and amplified one in blue. Assuming there is no over amplification, for any pair of connected nodes, the amplitude from the bottom node is less than or equal to that from the top node.}
    \label{fig:vtaa_transition}
\end{figure}

In Ambainis' \emph{Variable Time Amplitude Amplification} (VTAA)~\cite{Ambainis2012VTAA}, the schedules are determined as follows: we choose $r_j$ as the smallest nonnegative integer satisfying $(2r_j+1)\norm{\overline{\Pi_j\Pi_b}A_j\widetilde A_{j-1}\ket{\psi_0}}\geq\frac{1}{3\sqrt{m}}$. This choice of $r_j$ necessarily means that $(2r_j+1)\norm{\overline{\Pi_j\Pi_b}A_j\widetilde A_{j-1}\ket{\psi_0}}\leq\frac{1}{\sqrt{m}}$, and one can show via an induction that there is no over amplification at all stages. On the other hand, we have $\sum_{j=1}^m(2r_j+1)^2\norm{\overline{\Pi_j\Pi_b}A_j\widetilde A_{j-1}\ket{\psi_0}}^2\leq1$, which yields $\prod_{j=1}^m\left(\frac{\sin\left((2r_j+1)\arcsin\left(\norm{\overline{\Pi_j\Pi_b}A_j\widetilde A_{j-1}\ket{\psi_0}}\right)\right)}{(2r_j+1)\norm{\overline{\Pi_j\Pi_b}A_j\widetilde A_{j-1}\ket{\psi_0}}}\right)=\mathbf{\Omega}(1)$. The conclusion is that Ambainis' VTAA is a nested amplitude amplification with no overshoot and constant loss factor. Note our above definition of $r_j$ uses the precise value of the amplitude $\norm{\overline{\Pi_j\Pi_b}A_j\widetilde A_{j-1}\ket{\psi_0}}$. This is only for presentational purpose. In practice, one would estimate it to a constant muliplicative precision by running VTAA up to stage $j-1$~\cite[Theorem 23]{Chakraborty2018BlockEncoding}. Then one compares the upper bound with the threshold value to determine the amplification schedule, which introduces only a constant factor overhead.

Using the state-of-the-art analysis, VTAA has the query complexity
\begin{equation}
\label{eq:vtaa_cost2}
    \mathbf{O}\left(\frac{\sqrt{m}}{\sqrt{p_{\text{succ}}}}\mathbf{Cost}\left(\ket{\psi_0}\right)
    +\frac{\sqrt{m}}{\sqrt{p_{\text{succ}}}}\sum_{j=1}^m\norm{\overline{\Pi_{j-1}\Pi_b}A_{j-1}\cdots A_1\ket{\psi_0}}\mathbf{Cost}\left(A_j\right)\right).
\end{equation}
Here, the cost of input algorithms has the $\ell_1$-norm scaling with an additional $\sqrt{m}$ factor, which can be relaxed to the $\ell_2$-norm scaling introducing another $\sqrt{m}$ factor. This complexity analysis can be further improved---we present the tightened analysis in \sec{tunable} for Tunable VTAA which subsumes Ambainis' scheme as a special case. In any event, the above complexity is achievable only when all the amplification schedules are pre-determined during compilation. Otherwise, we need to compile the algorithm by frequently invoking amplitude estimations. This makes VTAA substantially more complicated, and introduces polylogarithmic factors to the cost of both input algorithms and the initial state preparation.

\section{Tunable variable time amplitude amplification}
\label{sec:tunable}

We now introduce \emph{Tunable Variable Time Amplitude Amplification} (Tunable VTAA) and analyze its performance. Specifically, we formally define Tunable VTAA in \sec{tunable_universal} and show its equivalence to a generic nested amplitude amplification algorithm reflecting toward the potentially good subspaces. We then analyze the query complexity of Tunable VTAA in \sec{tunable_query}. We present an optimized amplification schedule in \sec{tunable_optimal} under which Tunable VTAA has the lowest possible query cost.

For presentational purpose, we assume that values of all norms in our discussion are known a prior. We will comment on how this assumption can be relaxed at the end of \sec{tunable_optimal}.

\subsection{Definition and universal property}
\label{sec:tunable_universal}
We begin with the formal definition of Tunable VTAA.
\begin{definition}[Tunable variable time amplitude amplification]
\label{defn:tunable}
A tunable variable time amplitude amplification is a $5$-tuple $\left(\{\Pi_j\}_{j=0}^m,\Pi_b,\{A_j\}_{j=0}^m,\{\alpha_j\}_{j=0}^m,\ket{\psi_0}\right)$ that additionally satisfies
\begin{enumerate}[label=(\roman*$''$)]
\setcounter{enumi}{3}
    \item \label{enum:tunable_amp} $\alpha_j$ are nonnegative real numbers for $j=1,\ldots,m$, and $\alpha_0=1$, which define
\begin{equation}
\begin{aligned}
    r_j&=\min\left\{r\in\mathbb Z_{\geq0}\ \big|\ (2r+1)\norm{\overline{\Pi_j\Pi_b}A_j\widetilde A_{j-1}\ket{\psi_0}}\geq\frac{1}{3}\sqrt{\alpha_j}\right\},\qquad j=1,\ldots,m\\
    \widetilde A_j
    &=
    \begin{cases}
        \left(-\left(I-2A_j\widetilde A_{j-1}\ketbra{\psi_0}{\psi_0}\widetilde A_{j-1}^\dagger A_j^\dagger\right)\left(I-2\overline{\Pi_j\Pi_b}\right)\right)^{r_j}A_j\widetilde A_{j-1},\quad&j=1,\ldots,m,\\
        I,&j=0.
    \end{cases}
\end{aligned}
\end{equation}
\end{enumerate}
\end{definition}
Compared with the standard VTAA, we have introduced the \emph{amplification thresholds} $\alpha_j$ in Tunable VTAA. These thresholds determine the amplification schedules and amplified algorithms via the specified recurrence. That is, the schedule $r_j$ is the smallest nonnegative integer satisfying $(2r_j+1)\norm{\overline{\Pi_j\Pi_b}A_j\widetilde A_{j-1}\ket{\psi_0}}\geq\frac{1}{3}\sqrt{\alpha_j}$; it has the following closed-form representation
    \begin{equation}
        r_j=\max\left\{\mathbf{Ceil}\left(\frac{\sqrt{\alpha_j}}{6\norm{\overline{\Pi_j\Pi_b}A_j\widetilde A_{j-1}\ket{\psi_0}}}-\frac{1}{2}\right),0\right\}.
    \end{equation}
When it happens $\norm{\overline{\Pi_j\Pi_b}A_j\widetilde A_{j-1}\ket{\psi_0}}\geq\frac{1}{3}\sqrt{\alpha_j}$, we choose $r_j=0$. In this case, we have a trivial step $\widetilde A_j=A_j\widetilde A_{j-1}$. Otherwise if $0<\norm{\overline{\Pi_j\Pi_b}A_j\widetilde A_{j-1}\ket{\psi_0}}<\frac{1}{3}\sqrt{\alpha_j}$, we have a nontrivial amplification with $r_j\geq1$. $\widetilde A_j$ is then obtained by performing $2r_j+1$ steps of amplitude amplification toward the potentially good subspace $\mathbf{Im}\left(\overline{\Pi_j\Pi_b}\right)$.

Apparently, Tunable VTAA offers more flexibility than the VTAA of Ambainis, due to the introduction of tunable threshold values $\alpha_j$ that can be optimized for different input algorithms and initial states, though it may not be immediately clear how much more powerful this modification is. In the following, we show that Tunable VTAA is universal in the sense that it is equivalent to a generic nested amplitude amplification with no overshoot and constant loss factor. This provides a compelling reason for us to examine Tunable VTAA in greater detail.

In the forward direction, let us suppose we have a Tunable VTAA with amplification thresholds $0\leq\alpha_j\leq1$ for all $j=1,\ldots,m$ and $\sum_{j=1}^m\alpha_j=\mathbf{O}(1)$. Then, in case we have a nontrivial amplification with $r_j\geq1$ and $0<\norm{\overline{\Pi_j\Pi_b}A_j\widetilde A_{j-1}\ket{\psi_0}}<\frac{1}{3}\sqrt{\alpha_j}$,
\begin{equation}
\begin{aligned}
    (2r_j+1)\norm{\overline{\Pi_j\Pi_b}A_j\widetilde A_{j-1}\ket{\psi_0}}
    &<\left(2\left(\frac{\sqrt{\alpha_j}}{6\norm{\overline{\Pi_j\Pi_b}A_j\widetilde A_{j-1}\ket{\psi_0}}}+\frac{1}{2}\right)+1\right)\norm{\overline{\Pi_j\Pi_b}A_j\widetilde A_{j-1}\ket{\psi_0}}\\
    &=\frac{\sqrt{\alpha_j}}{3}+2\norm{\overline{\Pi_j\Pi_b}A_j\widetilde A_{j-1}\ket{\psi_0}}\\
    &<\sqrt{\alpha_j}\leq 1,
\end{aligned}
\end{equation}
so there is no over amplification. If on the other hand $r_j=0$, then it trivially holds $(2\cdot0+1)\norm{\overline{\Pi_j\Pi_b}A_j\widetilde A_{j-1}\ket{\psi_0}}\leq 1$. We thus conclude that there is no overshoot in all the stages $j=1,\ldots,m$.
Now we lower bound the loss factor. Applying \prop{dirichlet} from \append{dirichlet} only to the nontrivial stages,
\begin{equation}
\begin{aligned}
    &\prod_{r_j\geq1}\left(\frac{\sin\left((2r_j+1)\arcsin\left(\norm{\overline{\Pi_j\Pi_b}A_j\widetilde A_{j-1}\ket{\psi_0}}\right)\right)}{(2r_j+1)\norm{\overline{\Pi_j\Pi_b}A_j\widetilde A_{j-1}\ket{\psi_0}}}\right)\\
    &\geq\prod_{r_j\geq1}\left(1-\frac{1}{6}(2r_j+1)^2\norm{\overline{\Pi_j\Pi_b}A_j\widetilde A_{j-1}\ket{\psi_0}}^2\right)
    \geq\exp\left(\sum_{r_j\geq1}\ln\left(1-\frac{\alpha_j}{6}\right)\right)\\
    &\geq\exp\left(\sum_{r_j\geq1}\ln\left(\frac{5}{6}\right)\alpha_j\right)
    =\left(\frac{5}{6}\right)^{\sum\limits_{r_j\geq1}\alpha_j}
    \geq\left(\frac{5}{6}\right)^{\sum_{j=1}^m\alpha_j}
    =\mathbf{\Omega}(1),
\end{aligned}
\end{equation}
where in the third inequality we have used the fact that $\ln(1-x)\geq\ln(1-c)\frac{x}{c}$ for $0<x\leq c<1$. Hence, the total loss factor of the nested amplitude amplification is at least constant.

Conversely, suppose we have a nested amplitude amplification with schedules $r_j$, no overshoot $\norm{\overline{\Pi_j\Pi_b}A_j\widetilde A_{j-1}\ket{\psi_0}}\leq\frac{1}{3(2r_j+1)}$, and constant total loss factor $\prod_{j=1}^m\frac{\norm{\overline{\Pi_j\Pi_b}\widetilde A_j\ket{\psi_0}}}{(2r_j+1)\norm{\overline{\Pi_j\Pi_b}A_j\widetilde A_{j-1}\ket{\psi_0}}}=\mathbf{\Omega}(1)$. Then, let us choose the following thresholds
\begin{equation}
    \alpha_j=
    \begin{cases}
        9(2r_j+1)^2\norm{\overline{\Pi_j\Pi_b}A_j\widetilde A_{j-1}\ket{\psi_0}}^2,\qquad& r_j\geq1,\\
        0,&r_j=0.
    \end{cases}
\end{equation}
Clearly, $\alpha_j\leq1$ which follows from the no overshoot condition. It is also verifiable that Tunable VTAA with thresholds $\alpha_j$ implements exactly a nested amplitude amplification with schedules $r_j$. Moreover, we upper bound the loss factor by applying \prop{dirichlet} to all the nontrivial stages
\begin{equation}
\begin{aligned}
    &\prod_{r_j\geq1}\left(\frac{\sin\left((2r_j+1)\arcsin\left(\norm{\overline{\Pi_j\Pi_b}A_j\widetilde A_{j-1}\ket{\psi_0}}\right)\right)}{(2r_j+1)\norm{\overline{\Pi_j\Pi_b}A_j\widetilde A_{j-1}\ket{\psi_0}}}\right)\\
    &\leq\prod_{r_j\geq1}\left(1-\frac{4\pi-8}{\pi^3}(2r_j+1)^2\norm{\overline{\Pi_j\Pi_b}A_j\widetilde A_{j-1}\ket{\psi_0}}^2\right)
    =\exp\left(\sum_{r_j\geq1}\ln\left(1-\frac{4\pi-8}{9\pi^3}\alpha_j\right)\right)\\
    &\leq\exp\left(-\sum_{r_j\geq1}\frac{4\pi-8}{9\pi^3}\alpha_j\right)
    =\frac{1}{\left(\exp\left(\frac{4\pi-8}{9\pi^3}\right)\right)^{\sum_{r_j\geq1}\alpha_j}}
    =\frac{1}{\left(\exp\left(\frac{4\pi-8}{9\pi^3}\right)\right)^{\sum_{j=1}^m\alpha_j}},
\end{aligned}
\end{equation}
where in the second inequality we use the fact that $\ln(1-x)\leq-x$ for $x<1$.
The assumption that the total loss factor is constant then implies that $\sum_{j=1}^m\alpha_j=\mathbf{O}(1)$. We have thus established:

\begin{proposition}[Universality of Tunable VTAA]
\label{prop:tunable_universal}
The following correspondence holds between Tunable VTAA and variable time nested amplitude amplification.
\begin{enumerate}[label=(\roman*)]
\item A Tunable VTAA $\left(\{\Pi_j\}_{j=0}^m,\Pi_b,\{A_j\}_{j=0}^m,\{\alpha_j\}_{j=0}^m,\ket{\psi_0}\right)$ according to \defn{tunable} where $\alpha_j\leq1$ and $\sum_{j=1}^m\alpha_j=\mathbf{O}(1)$, is a variable time nested amplitude amplification according to \defn{nested} where  $\norm{\overline{\Pi_j\Pi_b}A_j\widetilde A_{j-1}\ket{\psi_0}}\leq\frac{1}{2r_j+1}$ and $\prod_{j=1}^m\frac{\norm{\overline{\Pi_j\Pi_b}\widetilde A_j\ket{\psi_0}}}{(2r_j+1)\norm{\overline{\Pi_j\Pi_b}A_j\widetilde A_{j-1}\ket{\psi_0}}}=\mathbf{\Omega}(1)$.
\item A variable time nested amplitude amplification $\left(\{\Pi_j\}_{j=0}^m,\Pi_b,\{A_j\}_{j=0}^m,\{r_j\}_{j=1}^m,\ket{\psi_0}\right)$ according to \defn{nested} where $\norm{\overline{\Pi_j\Pi_b}A_j\widetilde A_{j-1}\ket{\psi_0}}\leq\frac{1}{3(2r_j+1)}$ and  $\prod_{j=1}^m\frac{\norm{\overline{\Pi_j\Pi_b}\widetilde A_j\ket{\psi_0}}}{(2r_j+1)\norm{\overline{\Pi_j\Pi_b}A_j\widetilde A_{j-1}\ket{\psi_0}}}=\mathbf{\Omega}(1)$, is some Tunable VTAA according to \defn{tunable} where $\alpha_j\leq1$ and $\sum_{j=1}^m\alpha_j=\mathbf{O}(1)$.
\end{enumerate}
\end{proposition}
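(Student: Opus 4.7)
My plan is to prove the two directions of the correspondence by converting between amplification thresholds $\alpha_j$ and amplification step numbers $r_j$, then to bound the per-stage loss factor $\sin((2r_j+1)\arcsin y)/((2r_j+1)y)$ using a two-sided Dirichlet-type inequality (\prop{dirichlet} in the appendix) and to telescope the product via a logarithmic estimate. To streamline notation I abbreviate $y_j := \norm{\overline{\Pi_j\Pi_b}A_j\widetilde{A}_{j-1}\ket{\psi_0}}$ throughout.

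For the forward direction, I would start from a Tunable VTAA with $\alpha_j\leq 1$ and $\sum_{j=1}^m\alpha_j = \mathbf{O}(1)$ and verify no overshoot stage by stage. Using the closed form $r_j = \mathbf{Ceil}\bigl(\sqrt{\alpha_j}/(6y_j) - 1/2\bigr)$ on nontrivial stages, direct algebra gives $(2r_j+1)y_j < \sqrt{\alpha_j}/3 + 2y_j$; combining with the defining inequality $y_j < \sqrt{\alpha_j}/3$ yields $(2r_j+1)y_j < \sqrt{\alpha_j}\leq 1$, and trivial stages $r_j=0$ need no separate argument. For the loss factor, the lower bound from \prop{dirichlet} produces a per-stage factor at least $1 - (2r_j+1)^2 y_j^2/6 \geq 1 - \alpha_j/6$, and the elementary estimate $\ln(1-x)\geq 6\ln(5/6)\,x$ on $0\leq x\leq 1/6$ gives $\prod_{j=1}^m(\cdots) \geq (5/6)^{\sum_j\alpha_j} = \mathbf{\Omega}(1)$.

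For the reverse direction, given a nested amplitude amplification with $y_j\leq 1/(3(2r_j+1))$ and constant loss factor, I would set $\alpha_j := 9(2r_j+1)^2 y_j^2$ on nontrivial stages and $\alpha_j := 0$ on trivial ones, which satisfies $\alpha_j\leq 1$ by the overshoot bound. A short verification then confirms that Tunable VTAA with these thresholds reproduces exactly the prescribed $r_j$: the identity $(2r_j+1)y_j = \sqrt{\alpha_j}/3$ makes $r_j$ feasible in \defn{tunable}, while $(2r_j-1)y_j = \sqrt{\alpha_j}/3 - 2y_j < \sqrt{\alpha_j}/3$ rules out any smaller nonnegative integer (and on trivial stages $\alpha_j = 0$ makes $r_j=0$ automatic). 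Finally, the upper bound in \prop{dirichlet} combined with $\ln(1-x)\leq -x$ yields $\prod_{j=1}^m(\cdots) \leq \exp\bigl(-\tfrac{4\pi-8}{9\pi^3}\sum_{r_j\geq 1}\alpha_j\bigr)$, and since the left-hand side is bounded below by a positive constant, $\sum_j\alpha_j = \mathbf{O}(1)$ follows.

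The main obstacle is really the two-sided Dirichlet-type inequality $1 - c_1 (2r+1)^2 y^2 \leq \sin((2r+1)\arcsin y)/((2r+1)y) \leq 1 - c_2 (2r+1)^2 y^2$ with appropriately sharp constants $c_1 = 1/6$ and $c_2 = (4\pi-8)/\pi^3$ in the relevant regime; the two inequalities must be delicately balanced so that one produces an $\mathbf{\Omega}(1)$ product bound from the $\mathbf{O}(1)$ threshold sum in the forward direction while the other extracts an $\mathbf{O}(1)$ threshold sum from the $\mathbf{\Omega}(1)$ loss factor in the reverse direction. Once that proposition is in hand, the rest reduces to careful case analysis between trivial and nontrivial stages and two applications of elementary $\ln(1-x)$ bounds in opposite directions.
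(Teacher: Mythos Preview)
Your proposal is correct and follows essentially the same approach as the paper: both directions hinge on \prop{dirichlet} for the two-sided per-stage loss bound, the same threshold assignment $\alpha_j = 9(2r_j+1)^2 y_j^2$ (zero on trivial stages) in the reverse direction, and the same pair of logarithmic estimates $\ln(1-x)\geq \ln(1-c)\,x/c$ and $\ln(1-x)\leq -x$ to telescope the product. Your verification that the chosen $\alpha_j$ reproduce the prescribed $r_j$ is slightly more explicit than the paper's, which simply asserts it as ``verifiable''.
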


\subsection{Query complexity}
\label{sec:tunable_query}
We now analyze the performance of Tunable VTAA, taking into account the fact that only a few stages have nontrivial amplitude amplifications. 

Specifically, suppose there are nontrivial amplifications at stages $1\leq s_1\leq\cdots\leq s_{l}\leq m$, with the corresponding amplification step numbers $r_{s_1},\ldots,r_{s_{l}}\geq 1$. Then, the cost of nested amplitude amplification previously stated in \eq{nest_complexity} should be revised to
\begin{equation}
\label{eq:tunable_complexity}
    \mathbf{Cost}(\widetilde A_m\ket{\psi_0})
    =\mathbf{Cost}(\ket{\psi_0})\prod_{u=1}^{l}(2r_{s_u}+1)
    +\sum_{v=1}^{l+1}\mathbf{Cost}(A_{s_v}\cdots A_{s_{v-1}+1})\prod_{u=v}^{l}(2r_{s_u}+1),
\end{equation}
where we adopt the convention that $s_0=0$, $s_{l+1}=m+1$ and $A_{m+1}=I$. That is, the query cost is multiplied by $2r_{s_u}+1$ only at the $l$ nontrivial stages, and we merge the cost of input algorithms $\mathbf{Cost}(A_{k})$ for $k=s_{v-1}+1,\ldots,s_v$, associated with the same query product $\prod_{u=v}^{l}(2r_{s_u}+1)$.

Let us consider a general query product from stage $s_v$ to $s_w$ and re-express it as
\begin{equation}
\label{eq:query_product}
\begin{aligned}
    &\prod_{u=v}^{w}(2r_{s_u}+1)
    =\prod_{k=s_v}^{s_{w}}(2r_k+1)\\
    &=\prod_{k=s_v}^{s_{w}}\left(\frac{1}{\frac{\norm{\overline{\Pi_k\Pi_b}\widetilde A_k\ket{\psi_0}}}{(2r_k+1)\norm{\overline{\Pi_k\Pi_b}A_k\widetilde A_{k-1}\ket{\psi_0}}}}\right)
    \prod_{k=s_v}^{s_{w}}\left(\frac{\norm{\overline{\Pi_k\Pi_b}\widetilde A_k\ket{\psi_0}}}{\norm{\overline{\Pi_k\Pi_b}A_k\widetilde A_{k-1}\ket{\psi_0}}}\right)\\
    &=\prod_{k=s_v}^{s_{w}}\left(\frac{1}{\frac{\norm{\overline{\Pi_k\Pi_b}\widetilde A_k\ket{\psi_0}}}{(2r_k+1)\norm{\overline{\Pi_k\Pi_b}A_k\widetilde A_{k-1}\ket{\psi_0}}}}\right)
    \frac{\norm{\overline{\Pi_{s_{w}}\Pi_b}\widetilde A_{s_{w}}\ket{\psi_0}}}{\norm{\overline{\Pi_{s_v-1}\Pi_b}\widetilde{A}_{s_v-1}\ket{\psi_0}}}
    \frac{\norm{\overline{\Pi_{s_v-1}\Pi_b}A_{s_v-1}\cdots A_1\ket{\psi_0}}}{\norm{\overline{\Pi_{{s_{w}}}\Pi_b}A_{{s_{w}}}\cdots A_1\ket{\psi_0}}}\\
    &=\prod_{u=v}^{w}\left(\frac{1}{\frac{\norm{\overline{\Pi_{s_u}\Pi_b}\widetilde A_{s_u}\ket{\psi_0}}}{(2r_{s_u}+1)\norm{\overline{\Pi_{s_u}\Pi_b}A_{s_u}\widetilde A_{s_u-1}\ket{\psi_0}}}}\right)
    \frac{\norm{\overline{\Pi_{s_{w}}\Pi_b}\widetilde A_{s_{w}}\ket{\psi_0}}}{\norm{\overline{\Pi_{s_{v-1}}\Pi_b}\widetilde{A}_{s_{v-1}}\ket{\psi_0}}}
    \frac{\norm{\overline{\Pi_{s_{v-1}}\Pi_b}A_{s_{v-1}}\cdots A_1\ket{\psi_0}}}{\norm{\overline{\Pi_{s_{w}}\Pi_b}A_{s_{w}}\cdots A_1\ket{\psi_0}}}.
\end{aligned}
\end{equation}
Here, the first equality uses the observation that for $s_v\leq k\leq s_w$, either we have a nontrivial stage with $k=s_u$ for $v\leq u\leq w$, or we have a trivial stage with $r_k=0$. The second equality is a direct rewriting. The third equality can be reasoned using \eq{trans_amp} in a similar way as~\cite[Lemma 16]{Chakraborty2018BlockEncoding}:
\begin{equation}
\label{eq:amp_product}
\begin{aligned}
    \prod_{k=s_v}^{s_{w}}\left(\frac{\norm{\overline{\Pi_k\Pi_b}\widetilde A_k\ket{\psi_0}}}{\norm{\overline{\Pi_k\Pi_b}A_k\widetilde A_{k-1}\ket{\psi_0}}}\right)
    &=\prod_{k=s_v}^{s_{w}}\left(\frac{\norm{\overline{\Pi_k\Pi_b}\widetilde A_k\ket{\psi_0}}}{\norm{\overline{\Pi_{k-1}\Pi_b}\widetilde A_{k-1}\ket{\psi_0}}}
    \frac{\norm{\overline{\Pi_{k-1}\Pi_b}\widetilde A_{k-1}\ket{\psi_0}}}{\norm{\overline{\Pi_k\Pi_b}A_k\widetilde A_{k-1}\ket{\psi_0}}}\right)\\
    &=\prod_{k=s_v}^{s_{w}}\left(\frac{\norm{\overline{\Pi_k\Pi_b}\widetilde A_k\ket{\psi_0}}}{\norm{\overline{\Pi_{k-1}\Pi_b}\widetilde A_{k-1}\ket{\psi_0}}}
    \frac{\norm{\overline{\Pi_{k-1}\Pi_b}A_{k-1}\cdots A_1\ket{\psi_0}}}{\norm{\overline{\Pi_k\Pi_b}A_k\cdots A_1\ket{\psi_0}}}\right)\\
    &=\frac{\norm{\overline{\Pi_{s_{w}}\Pi_b}\widetilde A_{s_{w}}\ket{\psi_0}}}{\norm{\overline{\Pi_{s_v-1}\Pi_b}\widetilde{A}_{s_v-1}\ket{\psi_0}}}
    \frac{\norm{\overline{\Pi_{s_v-1}\Pi_b}A_{s_v-1}\cdots A_1\ket{\psi_0}}}{\norm{\overline{\Pi_{{s_{w}}}\Pi_b}A_{{s_{w}}}\cdots A_1\ket{\psi_0}}}.
\end{aligned}
\end{equation}

The justification of the last equality requires some more efforts. In the first factor, we have changed the multiplication variable from $k$ back to $u$, using again the observation that for $s_v\leq k\leq s_w$, either we have a nontrivial stage with $k=s_u$ for $v\leq u\leq w$, or we have a trivial stage with $r_k=0$ and loss factor $\frac{\norm{\overline{\Pi_k\Pi_b}\widetilde A_k\ket{\psi_0}}}{(2r_k+1)\norm{\overline{\Pi_k\Pi_b}A_k\widetilde A_{k-1}\ket{\psi_0}}}=1$. 
Now recall that $s_{v-1}+1,\ldots,s_v-1$ are all trivial stages by definition. 
Thus, invoking \eq{trans_amp},
\begin{small}
\begin{equation}
\newmaketag
    \frac{\norm{\overline{\Pi_{s_v-1}\Pi_b}A_{s_v-1}\cdots A_1\ket{\psi_0}}}{\norm{\overline{\Pi_{s_v-1}\Pi_b}\widetilde{A}_{s_v-1}\ket{\psi_0}}}
    =\frac{\norm{\overline{\Pi_{s_v-1}\Pi_b}A_{s_v-1}\cdots A_1\ket{\psi_0}}}{\norm{\overline{\Pi_{s_v-1}\Pi_b}{A}_{s_v-1}\cdots A_{s_{v-1}+1}\widetilde{A}_{s_{v-1}}\ket{\psi_0}}}
    =\frac{\norm{\overline{\Pi_{s_{v-1}}\Pi_b}A_{s_{v-1}}\cdots A_1\ket{\psi_0}}}{\norm{\overline{\Pi_{s_{v-1}}\Pi_b}\widetilde{A}_{s_{v-1}}\ket{\psi_0}}}.
\end{equation}
\end{small}%
This proves the claimed representation of query product.
As this representation may be of independent interest, let us encapsulate it into a lemma.

\begin{lemma}[Representation of query product]
\label{lem:rep_query}
Let $\left(\{\Pi_j\}_{j=0}^m,\Pi_b,\{A_j\}_{j=0}^m,\{r_j\}_{j=1}^m,\ket{\psi_0}\right)$ be a variable time nested amplitude amplification according to \defn{nested}. Suppose that $r_k\geq 1$ happens only at $l$ stages $1\leq s_1\leq\cdots\leq s_{l}\leq m$.
Then,
\begin{small}
\begin{equation}
\newmaketag
\begin{aligned}
    \prod_{u=v}^{w}(2r_{s_u}+1)
    =\prod_{u=v}^{w}\left(\frac{1}{\frac{\norm{\overline{\Pi_{s_u}\Pi_b}\widetilde A_{s_u}\ket{\psi_0}}}{(2r_{s_u}+1)\norm{\overline{\Pi_{s_u}\Pi_b}A_{s_u}\widetilde A_{s_u-1}\ket{\psi_0}}}}\right)
    \frac{\norm{\overline{\Pi_{s_{w}}\Pi_b}\widetilde A_{s_{w}}\ket{\psi_0}}}{\norm{\overline{\Pi_{s_{v-1}}\Pi_b}\widetilde{A}_{s_{v-1}}\ket{\psi_0}}}
    \frac{\norm{\overline{\Pi_{s_{v-1}}\Pi_b}A_{s_{v-1}}\cdots A_1\ket{\psi_0}}}{\norm{\overline{\Pi_{s_{w}}\Pi_b}A_{s_{w}}\cdots A_1\ket{\psi_0}}}
\end{aligned}
\end{equation}
\end{small}%
for $1\leq v\leq w\leq l$, under the convention that $s_0=0$.
\end{lemma}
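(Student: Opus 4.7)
The plan is to prove the identity by a direct telescoping argument, exactly following the manipulation that appears just before the lemma statement in the excerpt. The key observation that makes everything work is twofold: (i) for trivial stages $k \notin \{s_1,\dots,s_l\}$ we have $r_k = 0$, so $2r_k+1 = 1$ and the loss factor $\frac{\|\overline{\Pi_k\Pi_b}\widetilde A_k|\psi_0\rangle\|}{(2r_k+1)\|\overline{\Pi_k\Pi_b}A_k\widetilde A_{k-1}|\psi_0\rangle\|}$ equals $1$; and (ii) the transition-of-amplitudes identity \eqref{eq:trans_amp} from \append{axiom_amp} lets us replace post-amplification amplitudes by pre-amplification ones.

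First I would rewrite the product over nontrivial stages as a product over all intermediate stages, using (i):
\[
\prod_{u=v}^{w}(2r_{s_u}+1) \;=\; \prod_{k=s_v}^{s_w}(2r_k+1).
\]
Next I would multiply and divide each factor by $\|\overline{\Pi_k\Pi_b}\widetilde A_k|\psi_0\rangle\|/\|\overline{\Pi_k\Pi_b}A_k\widetilde A_{k-1}|\psi_0\rangle\|$, separating out the reciprocal loss factors (which collapse over trivial stages back to the product over $u=v,\dots,w$) from a ratio product. So everything reduces to proving
\[
\prod_{k=s_v}^{s_w}\frac{\|\overline{\Pi_k\Pi_b}\widetilde A_k|\psi_0\rangle\|}{\|\overline{\Pi_k\Pi_b}A_k\widetilde A_{k-1}|\psi_0\rangle\|}
=\frac{\|\overline{\Pi_{s_w}\Pi_b}\widetilde A_{s_w}|\psi_0\rangle\|}{\|\overline{\Pi_{s_{v-1}}\Pi_b}\widetilde A_{s_{v-1}}|\psi_0\rangle\|}\cdot\frac{\|\overline{\Pi_{s_{v-1}}\Pi_b}A_{s_{v-1}}\cdots A_1|\psi_0\rangle\|}{\|\overline{\Pi_{s_w}\Pi_b}A_{s_w}\cdots A_1|\psi_0\rangle\|}.
\]

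To handle this, the trick is to insert $\|\overline{\Pi_{k-1}\Pi_b}\widetilde A_{k-1}|\psi_0\rangle\|$ into each factor and then invoke \eqref{eq:trans_amp} on each term to trade $\|\overline{\Pi_{k-1}\Pi_b}\widetilde A_{k-1}|\psi_0\rangle\|/\|\overline{\Pi_k\Pi_b}A_k\widetilde A_{k-1}|\psi_0\rangle\|$ for the pre-amplification ratio $\|\overline{\Pi_{k-1}\Pi_b}A_{k-1}\cdots A_1|\psi_0\rangle\|/\|\overline{\Pi_k\Pi_b}A_k\cdots A_1|\psi_0\rangle\|$. The resulting product then telescopes in two parallel chains, leaving only the boundary terms at $k = s_v-1$ and $k = s_w$, i.e.
\[
\frac{\|\overline{\Pi_{s_w}\Pi_b}\widetilde A_{s_w}|\psi_0\rangle\|}{\|\overline{\Pi_{s_v-1}\Pi_b}\widetilde A_{s_v-1}|\psi_0\rangle\|}\cdot\frac{\|\overline{\Pi_{s_v-1}\Pi_b}A_{s_v-1}\cdots A_1|\psi_0\rangle\|}{\|\overline{\Pi_{s_w}\Pi_b}A_{s_w}\cdots A_1|\psi_0\rangle\|}.
\]

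The step I expect to be the subtlest, and the one the excerpt itself flags as needing some more effort, is the final index shift from $s_v - 1$ down to $s_{v-1}$. The point is that stages $s_{v-1}+1,\dots,s_v-1$ are all trivial by definition, so $\widetilde A_{s_v-1} = A_{s_v-1}\cdots A_{s_{v-1}+1}\widetilde A_{s_{v-1}}$. Applying \eqref{eq:trans_amp} once more to this intermediate composition shows
\[
\frac{\|\overline{\Pi_{s_v-1}\Pi_b}A_{s_v-1}\cdots A_1|\psi_0\rangle\|}{\|\overline{\Pi_{s_v-1}\Pi_b}\widetilde A_{s_v-1}|\psi_0\rangle\|}
=\frac{\|\overline{\Pi_{s_{v-1}}\Pi_b}A_{s_{v-1}}\cdots A_1|\psi_0\rangle\|}{\|\overline{\Pi_{s_{v-1}}\Pi_b}\widetilde A_{s_{v-1}}|\psi_0\rangle\|},
\]
which converts the boundary indices from $s_v - 1$ to $s_{v-1}$ and yields precisely the claimed expression. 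This is purely a bookkeeping argument with no new analytic input beyond \eqref{eq:trans_amp}, so the only real care required is tracking the two parallel telescoping chains and the handling of the $v=1$ boundary under the convention $s_0 = 0$ and $\widetilde A_0 = I$.
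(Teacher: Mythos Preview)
Your proposal is correct and follows essentially the same approach as the paper's own argument: the paper likewise extends the product to all stages $s_v\leq k\leq s_w$, separates out the reciprocal loss factors, telescopes via \eqref{eq:trans_amp} exactly as in your insertion-of-$\|\overline{\Pi_{k-1}\Pi_b}\widetilde A_{k-1}|\psi_0\rangle\|$ trick, and then handles the index shift from $s_v-1$ to $s_{v-1}$ by noting that the intervening stages are trivial and applying \eqref{eq:trans_amp} once more. You have reproduced the paper's derivation accurately, including the boundary-case handling it flags as the subtlest step.
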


We now apply this representation to the query products in \eq{tunable_complexity}, all of which end with $w=l$.
By definition, $s_l$ is the last nontrivial stage so $s_{l}+1,\ldots,m$ are all trivial. 
Invoking \eq{trans_amp} once more,
\begin{equation}
    \frac{\norm{\overline{\Pi_{s_{l}}\Pi_b}A_{s_{l}}\cdots A_1\ket{\psi_0}}}{\norm{\overline{\Pi_{s_{l}}\Pi_b}\widetilde{A}_{s_{l}}\ket{\psi_0}}}
    =\frac{\norm{\overline{\Pi_{m}\Pi_b}A_{m}\cdots A_1\ket{\psi_0}}}{\norm{\overline{\Pi_{m}\Pi_b}A_m\cdots A_{s_{l}+1}\widetilde{A}_{s_{l}}\ket{\psi_0}}}
    =\frac{\norm{\overline{\Pi_{m}\Pi_b}A_{m}\cdots A_1\ket{\psi_0}}}{\norm{\overline{\Pi_{m}\Pi_b}\widetilde{A}_{m}\ket{\psi_0}}},
\end{equation}
giving
\begin{small}
\begin{equation}
\newmaketag
\begin{aligned}
    \prod_{u=v}^{l}(2r_{s_u}+1)
    =\prod_{u=v}^{l}\left(\frac{1}{\frac{\norm{\overline{\Pi_{s_u}\Pi_b}\widetilde A_{s_u}\ket{\psi_0}}}{(2r_{s_u}+1)\norm{\overline{\Pi_{s_u}\Pi_b}A_{s_u}\widetilde A_{s_u-1}\ket{\psi_0}}}}\right)
    \frac{\norm{\overline{\Pi_{m}\Pi_b}\widetilde A_{m}\ket{\psi_0}}}{\norm{\overline{\Pi_{s_{v-1}}\Pi_b}\widetilde{A}_{s_{v-1}}\ket{\psi_0}}}
    \frac{\norm{\overline{\Pi_{s_{v-1}}\Pi_b}A_{s_{v-1}}\cdots A_1\ket{\psi_0}}}{\norm{\overline{\Pi_{m}\Pi_b}A_{m}\cdots A_1\ket{\psi_0}}}.
\end{aligned}
\end{equation}
\end{small}%
To proceed, note that the first factor is the inverse loss factor, which can be bounded by $\left(\frac{6}{5}\right)^{\sum_{j=1}^m\alpha_j}$. Then, we have the amplitude of post-amplified algorithm trivially bounded by $\norm{\overline{\Pi_{m}\Pi_b}\widetilde A_{m}\ket{\psi_0}}\leq1$. In practice, we would also adjust the last stage of VTAA so that the amplitude is at least constant. Next, $\norm{\overline{\Pi_{m}\Pi_b}A_{m}\cdots A_1\ket{\psi_0}}=\sqrt{p_{\text{succ}}}$ by definition. We now lower bound $\norm{\overline{\Pi_{s_{v-1}}\Pi_b}\widetilde{A}_{s_{v-1}}\ket{\psi_0}}$:
\begin{equation}
\begin{aligned}
    \norm{\overline{\Pi_{s_{v-1}}\Pi_b}\widetilde{A}_{s_{v-1}}\ket{\psi_0}}
    &=\sin\left((2r_{s_{v-1}}+1)\arcsin\left(\norm{\overline{\Pi_{s_{v-1}}\Pi_b}A_{s_{v-1}}\widetilde A_{{s_{v-1}}-1}\ket{\psi_0}}\right)\right)\\
    &\geq\left(\frac{5}{6}\right)^{\alpha_{s_{v-1}}}(2r_{s_{v-1}}+1)\norm{\overline{\Pi_{s_{v-1}}\Pi_b}A_{s_{v-1}}\widetilde A_{{s_{v-1}}-1}\ket{\psi_0}}\\
    &=\mathbf{\Omega}\left(\sqrt{\alpha_{s_{v-1}}}\right).
\end{aligned}
\end{equation}
This analysis holds for all query products starting at $v=1,\ldots,l$. In particular, for the cost of initial state preparation, we have $v=1$, $\norm{\overline{\Pi_0\Pi_b}\widetilde A_0\ket{\psi_0}}=\norm{\overline{\Pi_0\Pi_b}A_0\ket{\psi_0}}=1$.
Altogether, this gives the query complexity of Tunable VTAA
\begin{equation}
\begin{aligned}
    \mathbf{Cost}(\widetilde A_m\ket{\psi_0})
    &=\mathbf{O}\Bigg(\frac{1}{\sqrt{p_{\text{succ}}}}\mathbf{Cost}(A_{s_1}\cdots A_{1}\ket{\psi_0})\\
    &\qquad\quad+\frac{1}{\sqrt{p_{\text{succ}}}}\sum_{v=2}^{l+1}\frac{1}{\sqrt{\alpha_{s_{v-1}}}}
    \norm{\overline{\Pi_{s_{v-1}}\Pi_b}A_{s_{v-1}}\cdots A_1\ket{\psi_0}}\mathbf{Cost}\left(A_{s_v}\cdots A_{s_{v-1}+1}\right)\Bigg).
\end{aligned}
\end{equation}

Note that VTAA would have the same cost if we were to pre-merge the input algorithms $A_{s_v}\cdots A_{s_{v-1}+1}$ during compilation, resulting in a ($l+1$)-stage algorithm. We now claim that $l=\mathbf{O}\left(\log\left(\frac{1}{\sqrt{p_{\text{succ}}}}\right)\right)$. Indeed, this follows from the observation that
\begin{equation}
    3^l
    \leq\prod_{u=1}^{l}(2r_{s_u}+1)
    \leq\left(\frac{6}{5}\right)^{\sum_{j=1}^m\alpha_j}
    \frac{1}{\sqrt{p_{\text{succ}}}}
    =\mathbf{O}\left(\frac{1}{\sqrt{p_{\text{succ}}}}\right)
    \quad\Rightarrow\quad
    l=\mathbf{O}\left(\log_3\left(\frac{1}{\sqrt{p_{\text{succ}}}}\right)\right).
\end{equation}
Therefore, in the interesting regime where $\log\left(\frac{1}{\sqrt{p_{\text{succ}}}}\right)\ll m$, majority of the algorithms can be pre-merged yielding a significantly simplified VTAA.

\subsection{\texorpdfstring{$\ell_{\frac{2}{3}}$}{l2/3}-quasinorm scaling}
\label{sec:tunable_optimal}
Having established the universal property and query complexity of Tunable VTAA, we now explore its limitation. In particular, we will show that complexity of Tunable VTAA can be optimized to scale with $\ell_{\frac{2}{3}}$-quasinorm of the input cost.

To this end, we note that the query cost of initial state preparation and algorithm $A_{s_1}\cdots A_1$ has the scaling $\mathbf{O}\left(\frac{1}{\sqrt{p_{\text{succ}}}}\right)$ independent of the amplification schedules, so it suffices to optimize query complexity of the remaining input algorithms. Up to a rescaling, this reduces to solving the following problem:
\begin{equation}
\begin{aligned}
    &\mathrm{minimize} \quad && \sum_{v=2}^{l+1}\frac{1}{\sqrt{\alpha_{s_{v-1}}}}
    \norm{\overline{\Pi_{s_{v-1}}\Pi_b}A_{s_{v-1}}\cdots A_1\ket{\psi_0}}\mathbf{Cost}\left(A_{s_v}\cdots A_{s_{v-1}+1}\right)\\
    &\textrm{subject to}\quad &&\sum_{v=2}^{l+1}\alpha_{s_{v-1}}=1,\\
    & &&\alpha_{s_{v-1}}>0,\quad v=2,\ldots,l+1.
\end{aligned}
\end{equation}

This optimization problem can be readily solved by the weighted mean inequality. Specifically, given positive real numbers $\{w_v\}_{v=1}^l$ and $\{x_v\}_{v=1}^l$, the inequality asserts that the weighted harmonic mean is always upper bounded by the weighted quadratic mean~\cite[Problem 8.3]{steele2004cauchy}:
\begin{equation}
    \frac{\sum\limits_{v=1}^{l}w_v}{\sum\limits_{v=1}^{l}w_v\frac{1}{x_v}}
    \leq\sqrt{\frac{\sum\limits_{v=1}^{l}w_vx_v^2}{\sum\limits_{v=1}^{l}w_v}}
    \quad\Leftrightarrow\quad
    \sum\limits_{v=1}^{l}w_v\frac{1}{x_v}\geq\frac{\left(\sum\limits_{v=1}^{l}w_v\right)^{\frac{3}{2}}}{\sqrt{\sum\limits_{v=1}^{l}w_vx_v^2}}
\end{equation}
with equality if and only if $x_1=\cdots=x_{l}$. Specialized to our problem, let us choose 
\begin{equation}
\begin{aligned}
    w_{v-1}&=\left(\norm{\overline{\Pi_{s_{v-1}}\Pi_b}A_{s_{v-1}}\cdots A_1\ket{\psi_0}}\mathbf{Cost}\left(A_{s_v}\cdots A_{s_{v-1}+1}\right)\right)^{\frac{2}{3}},\\
    x_{v-1}&=
    \sqrt{\alpha_{s_v-1}}\left(\norm{\overline{\Pi_{s_{v-1}}\Pi_b}A_{s_{v-1}}\cdots A_1\ket{\psi_0}}\mathbf{Cost}\left(A_{s_v}\cdots A_{s_{v-1}+1}\right)\right)^{-\frac{1}{3}},
\end{aligned}
\end{equation}
for $v=2,\ldots,l+1$.
Then, the weighted mean inequality implies that
\begin{equation}
\begin{aligned}
    &\sum_{v=2}^{l+1}\frac{1}{\sqrt{\alpha_{s_{v-1}}}}
    \norm{\overline{\Pi_{s_{v-1}}\Pi_b}A_{s_{v-1}}\cdots A_1\ket{\psi_0}}\mathbf{Cost}\left(A_{s_v}\cdots A_{s_{v-1}+1}\right)\\
    &\geq\left(\sum_{v=2}^{l+1}
    \left(\norm{\overline{\Pi_{s_{v-1}}\Pi_b}A_{s_{v-1}}\cdots A_1\ket{\psi_0}}\mathbf{Cost}\left(A_{s_v}\cdots A_{s_{v-1}+1}\right)\right)^{\frac{2}{3}}\right)^{\frac{3}{2}}
\end{aligned}
\end{equation}
with the lower bound attained when
\begin{equation}
    \alpha_{s_v-1}\propto
    \left(\norm{\overline{\Pi_{s_{v-1}}\Pi_b}A_{s_{v-1}}\cdots A_1\ket{\psi_0}}\mathbf{Cost}\left(A_{s_v}\cdots A_{s_{v-1}+1}\right)\right)^{\frac{2}{3}}.
\end{equation}
We have thus established:

\begin{proposition}[Query complexity of Tunable VTAA]
\label{prop:tunable_query}
Let $\left(\{\Pi_j\}_{j=0}^m,\Pi_b,\{A_j\}_{j=0}^m,\{\alpha_j\}_{j=0}^m,\ket{\psi_0}\right)$ be a Tunable VTAA according to \defn{tunable}. Then,
\begin{enumerate}[label=(\roman*)]
\item Assuming the threshold values satisfy $0\leq\alpha_j\leq1$ and $\sum_{j=1}^m\alpha_j=\mathbf{O}(1)$, nontrivial amplifications happen only at $l$ stages $1\leq s_1\leq\cdots\leq s_{l}\leq m$, where
\begin{equation}
    l=\mathbf{O}\left(\log_3\left(\frac{1}{\sqrt{p_{\text{succ}}}}\right)\right),\qquad
    p_{\text{succ}}=\norm{\overline{\Pi_{m}\Pi_b}A_{m}\cdots A_1\ket{\psi_0}}^2.
\end{equation}
Under the convention that $s_{l+1}=m+1$ and $A_{m+1}=I$, Tunable VTAA has query complexity
\begin{equation}
\hspace{-0.75cm}
\begin{aligned}
    \mathbf{Cost}(\widetilde A_m\ket{\psi_0})
    &=\mathbf{O}\Bigg(\frac{1}{\sqrt{p_{\text{succ}}}}\mathbf{Cost}(A_{s_1}\cdots A_{1}\ket{\psi_0})\\
    &\qquad\quad+\frac{1}{\sqrt{p_{\text{succ}}}}\sum_{v=2}^{l+1}\frac{1}{\sqrt{\alpha_{s_{v-1}}}}
    \norm{\overline{\Pi_{s_{v-1}}\Pi_b}A_{s_{v-1}}\cdots A_1\ket{\psi_0}}\mathbf{Cost}\left(A_{s_v}\cdots A_{s_{v-1}+1}\right)\Bigg).
\end{aligned}
\end{equation}
\item Pre-merging trivial stages and using thresholds
\begin{equation}
    \alpha_{s_{v-1}}\propto
    \left(\norm{\overline{\Pi_{s_{v-1}}\Pi_b}A_{s_{v-1}}\cdots A_1\ket{\psi_0}}\mathbf{Cost}\left(A_{s_v}\cdots A_{s_{v-1}+1}\right)\right)^{\frac{2}{3}},
\end{equation}
the complexity of Tunable VTAA can be minimized to attain the \emph{$\ell_{\frac{2}{3}}$-quasinorm} scaling
\begin{equation}
\hspace{-.75cm}
\begin{aligned}
    \mathbf{Cost}(\widetilde A_m\ket{\psi_0})
    &=
    \mathbf{O}\Bigg(\frac{1}{\sqrt{p_{\text{succ}}}}\mathbf{Cost}(A_{s_1}\cdots A_{1}\ket{\psi_0})\\
    &\qquad\quad+\frac{1}{\sqrt{p_{\text{succ}}}}
    \left(\sum_{v=2}^{l+1}\left(\norm{\overline{\Pi_{s_{v-1}}\Pi_b}A_{s_{v-1}}\cdots A_1\ket{\psi_0}}\mathbf{Cost}\left(A_{s_v}\cdots A_{s_{v-1}+1}\right)\right)^{\frac{2}{3}}\right)^{\frac{3}{2}}\Bigg).
\end{aligned}
\end{equation}
\end{enumerate}
    
\end{proposition}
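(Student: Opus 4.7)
The plan is to combine a telescoping representation of the query-cost formula with a two-step analysis: first bound the number of nontrivial stages and express the total cost in terms of the chosen thresholds, then optimize the thresholds subject to $\sum_j \alpha_j = \mathbf{O}(1)$. I begin by unwinding the recurrence for $\mathbf{Cost}(\widetilde A_m \ket{\psi_0})$ as in \eq{nest_complexity}, collapsing the product so it runs only over the nontrivial stages $s_1,\ldots,s_l$ because $r_k = 0$ forces $(2r_k+1)=1$. The core technical step is to re-express each product $\prod_{u=v}^{l}(2r_{s_u}+1)$ by writing each factor $(2r_{s_u}+1)$ as a ratio of post- to pre-amplified amplitudes divided by the per-stage loss factor, and telescoping across $u$. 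On trivial intermediate stages $A_k \widetilde A_{k-1} = \widetilde A_k$, so the telescoping passes through them unharmed; together with \eq{trans_amp}, this yields a closed form in which the product equals an inverse total loss factor times a ratio of two amplitudes measured at stages $s_{v-1}$ and $m$, as in \eq{query_product}.

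For part (i) I bound each factor in this identity: the inverse total loss factor is at most $(6/5)^{\sum_j \alpha_j} = \mathbf{O}(1)$ by the forward direction of \prop{tunable_universal}; the amplitude $\norm{\overline{\Pi_m \Pi_b}\widetilde{A}_m \ket{\psi_0}} \leq 1$ trivially; and $\norm{\overline{\Pi_m \Pi_b} A_m \cdots A_1 \ket{\psi_0}} = \sqrt{p_{\text{succ}}}$ by definition. The key lower bound $\norm{\overline{\Pi_{s_{v-1}} \Pi_b}\widetilde{A}_{s_{v-1}} \ket{\psi_0}} = \mathbf{\Omega}(\sqrt{\alpha_{s_{v-1}}})$ follows from the defining inequality $(2r_{s_{v-1}}+1)\norm{\overline{\Pi_{s_{v-1}}\Pi_b}A_{s_{v-1}}\widetilde A_{s_{v-1}-1}\ket{\psi_0}} \geq \sqrt{\alpha_{s_{v-1}}}/3$ combined with a per-stage Dirichlet-type loss estimate of at least $(5/6)^{\alpha_{s_{v-1}}}$. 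Substituting these bounds into the cost formula gives the claimed complexity. To bound $l$, I specialize the representation to $v=1$: every nontrivial stage contributes a factor of at least $3$, so $3^l \leq \prod_{u=1}^l (2r_{s_u}+1) = \mathbf{O}(1/\sqrt{p_{\text{succ}}})$, yielding $l = \mathbf{O}(\log_3(1/\sqrt{p_{\text{succ}}}))$.

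For part (ii) I treat the thresholds $\alpha_{s_{v-1}}$ as free nonnegative variables constrained by $\sum_v \alpha_{s_{v-1}} \leq 1$ and minimize the cost from part (i). Writing $c_v := \norm{\overline{\Pi_{s_{v-1}} \Pi_b} A_{s_{v-1}} \cdots A_1 \ket{\psi_0}} \cdot \mathbf{Cost}(A_{s_v} \cdots A_{s_{v-1}+1})$, the minimization reduces to $\min \sum_v c_v / \sqrt{\alpha_{s_{v-1}}}$ under $\sum_v \alpha_{s_{v-1}} \leq 1$. Applying the weighted harmonic-quadratic mean inequality with weights $w_v = c_v^{2/3}$ and entries $x_v = \sqrt{\alpha_{s_{v-1}}}\, c_v^{-1/3}$ converts the constraint into $\sum_v w_v x_v^2 = \sum_v \alpha_{s_{v-1}}$ and the objective into $\sum_v w_v/x_v$, giving $\sum_v c_v/\sqrt{\alpha_{s_{v-1}}} \geq (\sum_v c_v^{2/3})^{3/2}$ with equality at $\alpha_{s_{v-1}} \propto c_v^{2/3}$. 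This is precisely the $\ell_{2/3}$-quasinorm scaling. One must additionally verify that this choice satisfies $\alpha_{s_{v-1}} \leq 1$ after normalization, which is immediate since the constraint is $\sum_v \alpha_{s_{v-1}} \leq 1$.

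The main obstacle is the telescoping in the product representation: one must carefully distinguish nontrivial from trivial stages and use \eq{trans_amp} to identify the amplitude ratio at an index $k$ with its counterpart at the nearest preceding nontrivial stage $s_{v-1}$, a step that relies on the input-algorithm axiom $A_j\Pi_{j-1} = \Pi_{j-1}$ from \defn{vta} to push the projections through the amplified operators. Once that identity is established, everything else reduces to elementary bounds on the loss factor and a single application of the weighted-mean inequality.
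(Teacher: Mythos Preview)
Your proposal is correct and follows essentially the same approach as the paper: the telescoping representation of the query product via loss factors and \eq{trans_amp}, the bounds $(6/5)^{\sum_j\alpha_j}$ on the inverse loss and $\mathbf{\Omega}(\sqrt{\alpha_{s_{v-1}}})$ on the post-amplified amplitude, the $3^l\le\prod_u(2r_{s_u}+1)$ argument for bounding $l$, and the weighted harmonic--quadratic mean inequality with the exact same substitution $w_v=c_v^{2/3}$, $x_v=\sqrt{\alpha_{s_{v-1}}}\,c_v^{-1/3}$ all match the paper's development in Sections~3.2--3.3.
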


In our above discussion, we have used precise values of all the norms involved to simplify the presentation, but they can be replaced by their constant multiplicative approximators without affecting the asymptotic analysis. 
If this prior knowledge is not available, one could perform nested amplitude estimation like in~\cite{Ambainis2012VTAA,Chakraborty2018BlockEncoding}, introducing logarithmic overhead and substantially complicating the structure of algorithm.
However, the advantage of using Tunable VTAA is that this nested amplitude estimation may be completely avoided if we choose the threshold values analytically. We will demonstrate this feature in \sec{dinv} for solving the quantum linear system problem.

\section{Discretized inverse state}
\label{sec:dinv}

In this section, we introduce a \emph{discretized inverse state} for solving the quantum linear system problem, which can be efficiently prepared by Tunable VTAA. We begin with its construction in \sec{dinv_branch} by performing \emph{Gapped Phase Estimation} (GPE) on quantum walk, taking special care of the issue that eigenphases of the walk operator are split into two branches with opposite $\pm$ signs. We then present a deterministic amplification schedule in \sec{dinv_deterministic} that significantly simplifies the structure of VTAA.

To further reduce the query complexity, we present an improved analysis of VTAA in \sec{dinv_err} by projecting error onto the potentially good subspaces. Finally, we develop a simple solution norm estimation algorithm in \sec{dinv_est} to estimate $p_{\text{succ}}$ when a constant multiplicative approximation of it is unavailable a prior.

\subsection{Gapped phase estimation with branch marking}
\label{sec:dinv_branch}
In the quantum linear system problem, the coefficient matrix is block encoded by oracle $O_A$ as $A/\alpha_A$ with normalization factor $\alpha_A\geq\norm{A}$, and an upper bound on its inverse $\alpha_{A^{-1}}\geq\norm{A^{-1}}$ is known. We now make a few more assumptions to simplify the discussion without affecting generality. First, we assume that $A$ is Hermitian and $O_A$ is a Hermitian unitary. We can always fulfill this requirement by considering the Hermitian dilation $\ketbra{0}{1}\otimes A+\ketbra{1}{0}\otimes A^\dagger$ block encoded by the Hermitian unitary $\ketbra{0}{1}\otimes O_A+\ketbra{1}{0}\otimes O_A^\dagger$ and two identical isometries, along with the corresponding initial state $\ket{0}\ket{b}$. Second, we assume $\alpha_A\geq2\norm{A}$, which can be achieved by block encoding a constant factor to artificially increase the normalization factor. Third, we assume that both $\alpha_A$ and $\alpha_{A^{-1}}$ are integer powers of $3$. Again the former can be satisfied by rescaling the block encoding, whereas the latter can be realized using a slightly looser bound at most a constant factor $3$ larger.
As an immediate consequence, the upper bound $\kappa=\alpha_A\alpha_{A^{-1}}$ on the condition number is also an integer power of $3$, and thus the number of VTAA stages $m=\log_3(\kappa)$ is an integer. Moreover, for any eigenvalue $\lambda_u$ of $A$, the block encoded operator $\frac{A}{\alpha_A}$ has an eigenvalue $\frac{\lambda}{\alpha_A}$ such that $\abs{\frac{\lambda}{\alpha_A}}\in\left[\frac{1}{3^{m}},1\right)$.

Our algorithm proceeds by applying quantum signal processing to the quantum walk operator. Specifically, assume that the coefficient matrix is block encoded as $\frac{A}{\alpha_A}=G^\dagger O_AG$ for some isometry $G$. Then, the quantum walk operator is defined by $W=\left(2GG^\dagger-I\right)O_A$. In \append{qubitization}, we give a self-contained exposition of the qubitization result~\cite{Low2016Qubitization} linking a block encoding to its walk operator. In particular, we have the following spectrum correspondence: if the input matrix has the spectral decomposition
\begin{equation}
    A=\sum_u\lambda_u\ketbra{\phi_u}{\phi_u},
\end{equation}
then the walk operator has the corresponding spectral decomposition
\begin{equation}
    W
    =\sum_{u}\left(e^{+ i\arccos\left(\frac{\lambda_u}{\alpha_A}\right)}\ketbra{\phi_{u,Y+}}{\phi_{u,Y+}}
    +e^{-i\arccos\left(\frac{\lambda_u}{\alpha_A}\right)}\ketbra{\phi_{u,Y-}}{\phi_{u,Y-}}\right)
\end{equation}
when restricted to $\mathbf{Im}\left(GG^\dagger\right)+\mathbf{Im}\left(O_AGG^\dagger O_A^\dagger\right)$, where
\begin{equation}
    \ket{\phi_{u,0}}=G\ket{\phi_u},\qquad
    \ket{\phi_{u,1}}=\frac{O_AG\ket{\phi_u}-\lambda_uG\ket{\phi_u}}{\sqrt{1-\lambda_u^2}},\qquad
    \ket{\phi_{u,Y\pm}}=\frac{\ket{\phi_{u,0}}\pm i\ket{\phi_{u,1}}}{\sqrt{2}}.
\end{equation}
Note that we have omitted the degenerate $1$-dimensional subspaces in qubitization, because all our eigenvalues satisfy $\abs{\frac{\lambda_u}{\alpha_A}}<1$ following the rescaling assumption at the beginning of this subsection.

Now, suppose that the initial state can be expanded in the eigenbasis of $A$ as $\ket{b}=\sum_u\gamma_u\ket{\phi_u}$. We can then apply $G$ and expand it in the eigenbasis of $W$ as
\begin{equation}
    G\ket{b}=\sum_u\gamma_uG\ket{\phi_u}
    =\sum_u\gamma_u\ket{\phi_{u,0}}
    =\sum_u\gamma_u\frac{\ket{\phi_{u,Y+}}+\ket{\phi_{u,Y-}}}{\sqrt{2}}.
\end{equation}
Next, we apply GPE to label the interval to which every eigenvalue belongs. In earlier work~\cite{Childs2015LinearSystems}, GPE is performed on the time evolution operator $e^{iA}$, which introduces the overhead of Hamiltonian simulation. More recent work~\cite{Chakraborty2023VTAAQLSP} suggested applying GPE directly on the quantum walk operator $W$.  In our notation, this would produce an unnormalized state of the form
\begin{equation}
\label{eq:gpe_no_bm}
\begin{aligned}
    \frac{1}{\sqrt{2}}&\sum_{k=0}^{m-1}\sum_{\abs{\frac{\lambda_u}{\alpha_A}}\in\big[\frac{1}{3^{k+1}},\frac{1}{3^k}\big)}
    \Bigg(\left(\zeta_{k+1,u,+}\frac{3^{k+1}}{3^m}\ket{k}+\zeta_{k,u,+}\frac{3^k}{3^m}\ket{k-1}\right)\gamma_u\ket{\phi_{u,Y+}}\\
    &\qquad\qquad\qquad\qquad\quad+\left(\zeta_{k+1,u,-}\frac{3^{k+1}}{3^m}\ket{k}+\zeta_{k,u,-}\frac{3^k}{3^m}\ket{k-1}\right)\gamma_u\ket{\phi_{u,Y-}}\Bigg).
\end{aligned}
\end{equation}
As the coefficients $\zeta_{k,u,+}$  are different from $\zeta_{k,u,-}$ in general, the original uniform superposition of $\ket{\phi_{u,Y+}}$ and $\ket{\phi_{u,Y-}}$ is ``distorted'' by GPE into a non-uniform superposition. Moreover, this distortion on the $\pm$ branches of quantum walk is not recorded by any ancilla state, and so it is not possible to perform the correct amplitude amplification. This then leads to a highly inaccurate solution state, a critical issue left unattended in~\cite{Chakraborty2023VTAAQLSP}.

We address this issue in \append{gpe_bm} by a gadget called \emph{branch marking}. Up to a controllable error, branch marking realizes the transformation
\begin{equation}
    \ket{+}\ket{\phi_{u,Y\pm}}\mapsto\ket{\pm}\ket{\phi_{u,Y\pm}},
\end{equation}
recording information about the signs of eigenphases of the walk operator into an ancilla register. Given thresholds $\gamma,\frac{\gamma}{\rho}$, we then perform GPE controlled by the branch register to transform
\begin{equation}
    \ket{0}\ket{\pm}\ket{\phi_{u,Y\pm}}\mapsto\ket{\xi_{u}}\ket{\pm}\ket{\phi_{u,Y\pm}},\qquad
    \ket{\xi_{u}}=\xi_{u,0}\ket{0}+\xi_{u,1}\ket{1},
\end{equation}
such that
\begin{equation}
    \ket{\xi_u}\approx
\begin{cases}
    \ket{0},\quad&\frac{\lambda_u}{\alpha_A}\in[\gamma,1),\\
    i\ket{1},&\frac{\lambda_u}{\alpha_A}\in\left[-\frac{\gamma}{\rho},\frac{\gamma}{\rho}\right],\\
    -\ket{0},&\frac{\lambda_u}{\alpha_A}\in\left(-1,-\gamma\right],
\end{cases}
\end{equation}
Importantly, the output state $\ket{\xi_u}$ has \emph{no} dependence on the specific $\pm$ branch of quantum walk, regardless of whether the original eigenvalue $\lambda_u$ is in the passband, stopband or transition band of GPE---this is the primary purpose of introducing branch marking.

We now describe a variable time quantum algorithm to prepare the discretized inverse state. Our algorithm acts on a clock register $\ket{j}$ holding values $j=0,\ldots,m-1$, a two-qubit flag register with possible states
\begin{equation}
    \ket{\mathrm{good}}=\ket{00},\qquad
    \ket{\mathrm{bad}}=\ket{10},\qquad
    \ket{\mathrm{cont'd}}=\ket{01},
\end{equation}
and a system register holding the solution state and ancilla state for block encoding, a single qubit $\ket{\pm}$ for branch marking, and any additional ancillas consumed by the branch marking and GPE. We introduce accuracy parameters $\epsilon_{\text{bm}},\epsilon_{\text{gpe},j}$ to be specified later. The algorithm is then constructed as follows.
\begin{enumerate}
    \item Initialization: 
    \begin{enumerate}
        \item Set the state of clock register to be $\ket{0}$.
        \item Set the state of flag register to be $\ket{\mathrm{cont'd}}$. 
        \item Apply the branch marking unitary from \prop{bm} to implement the transformation $\ket{+}\ket{\phi_{u,Y\pm}}\mapsto\ket{\pm}\ket{\phi_{u,Y\pm}}$ with accuracy $\epsilon_{\text{bm}}$.
    \end{enumerate}
    \item Discretized inversion: for $j=1,\ldots,m-1$,
    \begin{enumerate}
        \item Controlled on the clock state $\ket{j-1}$, flip the second qubit of the flag register and apply GPE from \prop{gpe} with $\gamma=\frac{1}{3^j}$, $\rho=3$ and accuracy $\epsilon_{\text{gpe},j}$, storing outcome back to the second qubit of flag register. Up to error $\epsilon_{\text{gpe},j}$, this implements the transformation $\ket{\mathrm{cont'd}}\ket{\pm}\ket{\phi_{u,Y\pm}}\mapsto\left(\xi_{j,u,0}\ket{\mathrm{good}}+\xi_{j,u,1}\ket{\mathrm{cont'd}}\right)\ket{\pm}\ket{\phi_{u,Y\pm}}$, where $\xi_{j,u,1}=0$ if $\frac{1}{3^j}\leq\abs{\frac{\lambda_u}{\alpha_A}}<\frac{1}{2}$, and $\xi_{j,u,0}=0$ if $0\leq\abs{\frac{\lambda_u}{\alpha_A}}<\frac{1}{3^{j+1}}$. Note that both $\pm$ branches pick up the same coefficients $\xi_{j,u,0}$ and $\xi_{j,u,1}$ in the outcome state.
        \item Controlled on the state $\ket{j-1}$, implement the mapping $\ket{\mathrm{cont'd}}\mapsto\ket{\mathrm{cont'd}},\ket{\mathrm{good}}\mapsto\frac{3^j}{3^m}\ket{\mathrm{good}}+\sqrt{1-\frac{9^j}{9^m}}\ket{\mathrm{bad}}$. This can in turn be achieved by applying a Pauli-$Y$ rotation on the first qubit of flag register, controlled by the second qubit in state $\ket{0}$.
        \item Controlled on the flag state $\ket{\mathrm{cont'd}}$, increment the clock state $\ket{j}\mapsto\ket{j+1}$.
    \end{enumerate}
    \item Finalization:
    \begin{enumerate}
        \item Controlled on the clock state $\ket{m-1}$, apply the transformation $\ket{\mathrm{cont'd}}\mapsto\ket{\mathrm{good}}$. This can be achieved by applying a Pauli-$X$ gate on the second qubit of flag register.
        \item Undo the branch marking by invoking the reversal of \prop{bm}.
    \end{enumerate}
\end{enumerate}

Let us first confirm that the above is indeed a variable time quantum algorithm in the sense of \defn{vta}. To this end, we define the clock projections
\begin{equation}
    \Pi_j=\sum_{0\leq x\leq j-1}\ketbra{x}{x}\otimes\left(\ketbra{\mathrm{good}}{\mathrm{good}}+\ketbra{\mathrm{bad}}{\mathrm{bad}}\right)
\end{equation}
for $j=0,\ldots,m$. Note that $\Pi_m=I$ holds effectively since the last step of our algorithm always maps $\ket{\mathrm{cont'd}}$ to $\ket{\mathrm{good}}$. The flag projection is naturally selected to be
\begin{equation}
    \Pi_b=\ketbra{\mathrm{bad}}{\mathrm{bad}}.
\end{equation}
Finally, we have the following input algorithm at stage $j$ (only showing its nontrivial actions):
\begin{equation}
\begin{aligned}
    C_j&=\ketbra{j}{j-1}\otimes\sum_u\xi_{j,u,1}\ketbra{\mathrm{cont'd}}{\mathrm{cont'd}}\otimes\ketbra{\pm,\phi_{u,Y\pm}}{\pm,\phi_{u,Y\pm}}\\
    &\quad+\ketbra{j-1}{j-1}\otimes\sum_u\xi_{j,u,0}\left(\frac{3^j}{3^m}\ket{\mathrm{good}}+\sqrt{1-\frac{9^j}{9^m}}\ket{\mathrm{bad}}\right)\bra{\mathrm{cont'd}}\otimes\ketbra{\pm,\phi_{u,Y\pm}}{\pm,\phi_{u,Y\pm}}.\\
\end{aligned}
\end{equation}
It is then a routine verification that $C_j\Pi_{j-1}=\Pi_{j-1}$ holds for all $j=1,\ldots,m-1$. This is also true for $j=m$ if we define $C_m=\ketbra{m-1}{m-1}\otimes\ketbra{\mathrm{good}}{\mathrm{cont'd}}$ (corresponding to setting $\xi_{m,u,0}=1$ for all $u$). For notational convenience, we may assume that branch marking and its inverse are incorporated into $C_1$ and $C_m$, respectively.

To simplify the discussion, we consider the case where the branch marking can be performed perfectly, corresponding to input algorithms $B_1,\ldots,B_m$. This produces the output state
\begin{equation}
\begin{aligned}
    \frac{1}{\sqrt{2}}\sum_{k=0}^{m-1}\sum_{\abs{\frac{\lambda_u}{\alpha_A}}\in\big[\frac{1}{3^{k+1}},\frac{1}{3^k}\big)}
    \sum_{j=1}^m
    \zeta_{j,u}
    \ket{j-1}\left(\frac{3^{j}}{3^m}\ket{\mathrm{good}}+\sqrt{1-\frac{9^{j}}{9^m}}\ket{\mathrm{bad}}\right)
    \gamma_u\left(\ket{+,\phi_{u,Y+}}+\ket{-,\phi_{u,Y-}}\right),
\end{aligned}
\end{equation}
where $\zeta_{j,u}$ are the cumulative coefficients at stages $j$
\begin{equation}
    \zeta_{j,u}=\xi_{j,u,0}\prod_{l=1}^{j-1}\xi_{l,u,1}.
\end{equation}
Moreover, we know that GPE has the action:
\begin{enumerate}[label=(\roman*)]
    \item For $\abs{\frac{\lambda_u}{\alpha_A}}\in\left[\frac{1}{3^{k+1}},\frac{1}{3^k}\right)\subseteq\left[\frac{1}{3^{l}},1\right)$, or equivalently, $l\geq k+1$, it holds $\xi_{l,u,1}\approx0$ and $\xi_{l,u,0}\approx\pm1$.
\addtocounter{equation}{1}%
    \item For $\abs{\frac{\lambda_u}{\alpha_A}}\in\left[\frac{1}{3^{k+1}},\frac{1}{3^k}\right)=\left[\frac{1}{3^{l+1}},\frac{1}{3^l}\right)$, or equivalently, $l=k$, it holds $\abs{\xi_{l,u,0}}^2+
        \abs{\xi_{l,u,1}}^2=1$.
\addtocounter{equation}{1}%
    \item For $\abs{\frac{\lambda_u}{\alpha_A}}\in\left[\frac{1}{3^{k+1}},\frac{1}{3^k}\right)\subseteq\left[0,\frac{1}{3^{l+1}}\right)$, or equivalently, $l\leq k-1$, it holds $\xi_{l,u,0}\approx0$ and $\xi_{l,u,1}\approx i$.
\addtocounter{equation}{1}%
\end{enumerate}
This means that the cumulative coefficients satisfy $\zeta_{j,u}\approx 0$ if $j\geq k+2$ or $j\leq k-1$.
So if we further assume that GPE can be performed perfectly in the passband and stopband, corresponding to input algorithms $A_1,\ldots,A_m$, then we get
\begin{equation}
\label{eq:gpe_with_bm}
\begin{aligned}
    \frac{1}{\sqrt{2}}&\sum_{k=0}^{m-1}\sum_{\abs{\frac{\lambda_u}{\alpha_A}}\in\big[\frac{1}{3^{k+1}},\frac{1}{3^k}\big)}
    \Bigg(\zeta_{k+1,u}
    \ket{k}\left(\frac{3^{k+1}}{3^m}\ket{\mathrm{good}}+\sqrt{1-\frac{9^{k+1}}{9^m}}\ket{\mathrm{bad}}\right)\\
    &\qquad\qquad\qquad\qquad\
    +\zeta_{k,u}
    \ket{k-1}\left(\frac{3^{k}}{3^m}\ket{\mathrm{good}}+\sqrt{1-\frac{9^{k}}{9^m}}\ket{\mathrm{bad}}\right)\Bigg)
    \gamma_u\left(\ket{+,\phi_{u,Y+}}+\ket{-,\phi_{u,Y-}}\right).
\end{aligned}
\end{equation}

Compared with \eq{gpe_no_bm}, our \eq{gpe_with_bm} maintains the uniform superposition of eigenstates $\ket{\phi_{u,Y\pm}}$ of the quantum walk operator. At the end, the branch marking register is uncomputed, so the two branches can be merged back to recover the original eigenstate $\ket{\phi_u}$ of the input matrix.
In the next subsection, we will describe a deterministic amplification schedule for Tunable VTAA to amplify such a state.

\subsection{Deterministic amplification schedule}
\label{sec:dinv_deterministic}
We now describe a deterministic amplification schedule for Tunable VTAA with input algorithms $A_1,\ldots,A_m$, to prepare a state proportional to
\begin{equation}
\begin{aligned}
    \psi_{\text{d-inv}}&=
    \sum_{k=0}^{m-1}\sum_{\abs{\frac{\lambda_u}{\alpha_A}}\in\big[\frac{1}{3^{k+1}},\frac{1}{3^k}\big)}
    \left(\zeta_{k+1,u}\frac{3^{k+1}}{3^m}\ket{k}+\zeta_{k,u}\frac{3^k}{3^m}\ket{k-1}\right)\gamma_u\ket{\phi_u},\\
    \norm{\psi_{\text{d-inv}}}^2&=\sum_{k=0}^{m-1}\sum_{\abs{\frac{\lambda_u}{\alpha_A}}\in\big[\frac{1}{3^{k+1}},\frac{1}{3^k}\big)}\abs{\gamma_u}^2\left(\abs{\zeta_{k+1,u}}^2\frac{9^{k+1}}{9^m}
    +\abs{\zeta_{k,u}}^2\frac{9^k}{9^m}\right)
    =p_{\text{succ,d-inv}}.
\end{aligned}
\end{equation}
Note however that in defining the above state, we have implicitly assumed that GPE and branch marking can both be performed with zero error. If branch marking is perfect but GPE is erroneous, the corresponding input algorithms are denoted by $B_1,\ldots,B_m$ and the resulting amplified state is
\begin{equation*}
\begin{aligned}
        \psi_{\text{d-inv},m}&=
    \sum_{k=0}^{m-1}\sum_{\abs{\frac{\lambda_u}{\alpha_A}}\in\big[\frac{1}{3^{k+1}},\frac{1}{3^k}\big)}
    \sum_{j=1}^{m}
    \zeta_{j,u}\frac{3^{j}}{3^m}
    \ket{j-1}
    \gamma_u\ket{\phi_u},\\
    \norm{\psi_{\text{d-inv},m}}^2&=
    \sum_{k=0}^{m-1}\sum_{\abs{\frac{\lambda_u}{\alpha_A}}\in\big[\frac{1}{3^{k+1}},\frac{1}{3^k}\big)}
    \abs{\gamma_u}^2\sum_{j=1}^{m}\abs{\zeta_{j,u}}^2\frac{9^{j}}{9^m}=p_{\text{succ,d-inv},m}.
\end{aligned}
\end{equation*}
Finally, in the most general case where branch marking and GPE are both erroneous, we denote the input algorithms by $C_1,\ldots,C_m$ and the amplified output state by $\psi_{\text{d-inv,bm}}$, with probability $\norm{\psi_{\text{d-inv,bm}}}^2=p_{\text{succ,d-inv,bm}}$.

For presentational purposes, most of our analysis in this subsection assumes that GPE and branch marking are both perfect, corresponding to the input algorithms $A_1,\ldots,A_m$, and we know the precise value of $p_{\text{succ,d-inv}}=\norm{\psi_{\text{d-inv}}}^2$. This is without loss of generality, because the asymptotic scaling of VTAA is not affected if a constant multiplicative approximation of the success probability is used instead.
Moreover, we can set the accuracy $\epsilon_{\text{bm}}$ and $\epsilon_{\text{gpe},j}$ sufficiently small so that
\begin{equation*}
    p_{\text{succ,d-inv,bm}}=\mathbf{\Theta}\left(p_{\text{succ,d-inv},m}\right)=\mathbf{\Theta}\left(p_{\text{succ,d-inv}}\right)=\mathbf{\Theta}\left(p_{\text{succ}}\right);
\end{equation*}
a similar relation holds for the amplification thresholds. See \append{multi} (in particular \prop{multi_succ} and \prop{multi_amp}) for a detailed explanation.

Following the discussion in \sec{intro_dinv}, we consider the amplification thresholds
\begin{equation}
\label{eq:threshold_def}
    \alpha_j=
    \begin{cases}
        c^29^{j-m+l}\norm{\overline{\Pi_j\Pi_b}A_j\cdots A_1\ket{\psi_0}}^2,\quad&j=m-l+1,\ldots,m,\\
        0,&j=1,\ldots,m-l,
    \end{cases}
\end{equation}
for some constant $c\gtrapprox 1$ (say $c=1.001$) and integer $0\leq l\leq m$ to be determined later.

\paragraph{Analysis of amplification threshold}
From \prop{tunable_universal}, we should aim for $\sum_{j=m-l+1}^m\alpha_j=\mathbf{O}(1)$ to avoid a large loss factor (which would otherwise cause a waste of oracular queries). Here, we can use \eq{pg_amp} and \eq{gpe_with_bm} to evaluate the potentially good probabilities as
\begin{equation}
\begin{aligned}
    &\norm{\overline{\Pi_{j}\Pi_b}A_{j}\cdots A_1\ket{\psi_0}}^2\\
    &=\norm{\overline{\Pi_{j}\Pi_b}A_{m}\cdots A_1\ket{\psi_0}}^2\\
    &=\sum_{k=j}^{m-1}\sum_{\abs{\frac{\lambda_u}{\alpha_A}}\in\big[\frac{1}{3^{k+1}},\frac{1}{3^k}\big)}\abs{\gamma_u}^2\abs{\zeta_{k+1,u}}^2
    +\sum_{k=j+1}^{m-1}\sum_{\abs{\frac{\lambda_u}{\alpha_A}}\in\big[\frac{1}{3^{k+1}},\frac{1}{3^k}\big)}\abs{\gamma_u}^2\abs{\zeta_{k,u}}^2\\
    &\quad+\sum_{k=0}^{j-1}\sum_{\abs{\frac{\lambda_u}{\alpha_A}}\in\big[\frac{1}{3^{k+1}},\frac{1}{3^k}\big)}\abs{\gamma_u}^2\abs{\zeta_{k+1,u}}^2\frac{9^{k+1}}{9^m}
    +\sum_{k=0}^{j}\sum_{\abs{\frac{\lambda_u}{\alpha_A}}\in\big[\frac{1}{3^{k+1}},\frac{1}{3^k}\big)}\abs{\gamma_u}^2\abs{\zeta_{k,u}}^2\frac{9^k}{9^m}.\\
\end{aligned}
\end{equation}
Using this, we compute the sum of threshold values:
\begin{equation}
\begin{aligned}
    &\sum_{j=m-l+1}^m\alpha_j
    =\sum_{j=m-l+1}^{m}c^2\norm{\overline{\Pi_{j}\Pi_b}A_{j}\cdots A_1\ket{\psi_0}}^29^{j-m+l}\\
    &\leq c^2\sum_{j=m-l+1}^m\Bigg(
    \sum_{k=j}^{m-1}\sum_{\abs{\frac{\lambda_u}{\alpha_A}}\in\big[\frac{1}{3^{k+1}},\frac{1}{3^k}\big)}\abs{\gamma_u}^2\abs{\zeta_{k+1,u}}^2
    +\sum_{k=0}^{j-1}\sum_{\abs{\frac{\lambda_u}{\alpha_A}}\in\big[\frac{1}{3^{k+1}},\frac{1}{3^k}\big)}\abs{\gamma_u}^2\abs{\zeta_{k+1,u}}^2\frac{9^{k+1}}{9^m}\\
    &\qquad\qquad\qquad\ +\sum_{k=j+1}^{m-1}\sum_{\abs{\frac{\lambda_u}{\alpha_A}}\in\big[\frac{1}{3^{k+1}},\frac{1}{3^k}\big)}\abs{\gamma_u}^2\abs{\zeta_{k,u}}^2
    +\sum_{k=0}^{j}\sum_{\abs{\frac{\lambda_u}{\alpha_A}}\in\big[\frac{1}{3^{k+1}},\frac{1}{3^k}\big)}\abs{\gamma_u}^2\abs{\zeta_{k,u}}^2\frac{9^k}{9^m}\Bigg)9^{j-m+l}\\
    &=c^2\sum_{j=1}^{l}\Bigg(
    \sum_{k=j+m-l}^{m-1}\sum_{\abs{\frac{\lambda_u}{\alpha_A}}\in\big[\frac{1}{3^{k+1}},\frac{1}{3^k}\big)}\abs{\gamma_u}^2\abs{\zeta_{k+1,u}}^2
    +\sum_{k=0}^{j+m-l-1}\sum_{\abs{\frac{\lambda_u}{\alpha_A}}\in\big[\frac{1}{3^{k+1}},\frac{1}{3^k}\big)}\abs{\gamma_u}^2\abs{\zeta_{k+1,u}}^2\frac{9^{k+1}}{9^m}\\
    &\qquad\qquad\qquad+\sum_{k=j+m-l+1}^{m-1}\sum_{\abs{\frac{\lambda_u}{\alpha_A}}\in\big[\frac{1}{3^{k+1}},\frac{1}{3^k}\big)}\abs{\gamma_u}^2\abs{\zeta_{k,u}}^2
    +\sum_{k=0}^{j+m-l}\sum_{\abs{\frac{\lambda_u}{\alpha_A}}\in\big[\frac{1}{3^{k+1}},\frac{1}{3^k}\big)}\abs{\gamma_u}^2\abs{\zeta_{k,u}}^2\frac{9^k}{9^m}\Bigg)9^{j}.\\
\end{aligned}
\end{equation}

By exchanging the order of summation, we may bound first line of the result as
\begin{small}
\begin{equation}
\newmaketag
\begin{aligned}
    &\sum_{j=1}^{l}\bigg(\sum_{k=j+m-l}^{m-1}\sum_{\abs{\frac{\lambda_u}{\alpha_A}}\in\big[\frac{1}{3^{k+1}},\frac{1}{3^k}\big)}\abs{\gamma_u}^2\abs{\zeta_{k+1,u}}^2
    +\sum_{k=0}^{j+m-l-1}\sum_{\abs{\frac{\lambda_u}{\alpha_A}}\in\big[\frac{1}{3^{k+1}},\frac{1}{3^k}\big)}\abs{\gamma_u}^2\abs{\zeta_{k+1,u}}^2\frac{9^{k+1}}{9^m}\bigg)9^{j}\\
    &=\sum_{k=m-l+1}^{m-1}\sum_{j=1}^{k-m+l}\sum_{\abs{\frac{\lambda_u}{\alpha_A}}\in\big[\frac{1}{3^{k+1}},\frac{1}{3^k}\big)}\abs{\gamma_u}^2\abs{\zeta_{k+1,u}}^29^j
    +\sum_{k=0}^{m-1}\sum_{j=k-m+l+1}^{l}\sum_{\abs{\frac{\lambda_u}{\alpha_A}}\in\big[\frac{1}{3^{k+1}},\frac{1}{3^k}\big)}\abs{\gamma_u}^2\abs{\zeta_{k+1,u}}^2\frac{9^{k+1}}{9^m}9^j\\
    &\leq\frac{9}{8}\sum_{k=m-l+1}^{m-1}\sum_{\abs{\frac{\lambda_u}{\alpha_A}}\in\big[\frac{1}{3^{k+1}},\frac{1}{3^k}\big)}\abs{\gamma_u}^2\abs{\zeta_{k+1,u}}^29^{k-m+l}
    +\frac{9}{8}\sum_{k=0}^{m-1}\sum_{\abs{\frac{\lambda_u}{\alpha_A}}\in\big[\frac{1}{3^{k+1}},\frac{1}{3^k}\big)}\abs{\gamma_u}^2\abs{\zeta_{k+1,u}}^2\frac{9^{k+1}}{9^m}9^l\\
    &\leq\frac{5}{4}\sum_{k=0}^{m-1}\sum_{\abs{\frac{\lambda_u}{\alpha_A}}\in\big[\frac{1}{3^{k+1}},\frac{1}{3^k}\big)}\abs{\gamma_u}^2\abs{\zeta_{k+1,u}}^2\frac{9^{k+1}}{9^m}9^l.
\end{aligned}
\end{equation}
\end{small}
Similarly,
\begin{small}
\begin{equation}
\newmaketag
\begin{aligned}
    &\sum_{j=1}^{l}
    \bigg(\sum_{k=j+m-l+1}^{m-1}\sum_{\abs{\frac{\lambda_u}{\alpha_A}}\in\big[\frac{1}{3^{k+1}},\frac{1}{3^k}\big)}\abs{\gamma_u}^2\abs{\zeta_{k,u}}^2
    +\sum_{k=0}^{j+m-l}\sum_{\abs{\frac{\lambda_u}{\alpha_A}}\in\big[\frac{1}{3^{k+1}},\frac{1}{3^k}\big)}\abs{\gamma_u}^2\abs{\zeta_{k,u}}^2\frac{9^k}{9^m}\bigg)9^{j}\\
    &=\sum_{k=m-l+2}^{m-1}\sum_{j=1}^{k-m+l-1}\sum_{\abs{\frac{\lambda_u}{\alpha_A}}\in\big[\frac{1}{3^{k+1}},\frac{1}{3^k}\big)}\abs{\gamma_u}^2\abs{\zeta_{k,u}}^29^j
    +\sum_{k=0}^{m}\sum_{j=k-m+l}^{l}\sum_{\abs{\frac{\lambda_u}{\alpha_A}}\in\big[\frac{1}{3^{k+1}},\frac{1}{3^k}\big)}\abs{\gamma_u}^2\abs{\zeta_{k,u}}^2\frac{9^k}{9^m}9^{j}\\
    &\leq\frac{9}{8}\sum_{k=m-l+2}^{m-1}\sum_{\abs{\frac{\lambda_u}{\alpha_A}}\in\big[\frac{1}{3^{k+1}},\frac{1}{3^k}\big)}\abs{\gamma_u}^2\abs{\zeta_{k,u}}^29^{k-m+l-1}
    +\frac{9}{8}\sum_{k=0}^{m}\sum_{\abs{\frac{\lambda_u}{\alpha_A}}\in\big[\frac{1}{3^{k+1}},\frac{1}{3^k}\big)}\abs{\gamma_u}^2\abs{\zeta_{k,u}}^2\frac{9^k}{9^m}9^{l}\\
    &\leq\frac{5}{4}\sum_{k=0}^{m-1}\sum_{\abs{\frac{\lambda_u}{\alpha_A}}\in\big[\frac{1}{3^{k+1}},\frac{1}{3^k}\big)}\abs{\gamma_u}^2\abs{\zeta_{k,u}}^2\frac{9^{k}}{9^m}9^l.
\end{aligned}
\end{equation}
\end{small}
Altogether, we obtain
\begin{equation}
\begin{aligned}
    &c^2p_{\text{succ,d-inv}}9^l
    =\alpha_m
    \leq\sum_{j=m-l+1}^m\alpha_j
    =\sum_{j=m-l+1}^{m}c^2\norm{\overline{\Pi_{j}\Pi_b}A_{j}\cdots A_1\ket{\psi_0}}^29^{j-m+l}\\
    &\leq\frac{5c^2}{4}\left(\sum_{k=0}^{m-1}\sum_{\abs{\frac{\lambda_u}{\alpha_A}}\in\big[\frac{1}{3^{k+1}},\frac{1}{3^k}\big)}\abs{\gamma_u}^2\abs{\zeta_{k+1,u}}^2\frac{9^{k+1}}{9^m}
    +\sum_{k=0}^{m-1}\sum_{\abs{\frac{\lambda_u}{\alpha_A}}\in\big[\frac{1}{3^{k+1}},\frac{1}{3^k}\big)}\abs{\gamma_u}^2\abs{\zeta_{k,u}}^2\frac{9^{k}}{9^m}\right)9^l\\
    &=\frac{5c^2}{4}p_{\text{succ,d-inv}}9^l.
\end{aligned}
\end{equation}
We thus have $\sum_{j=m-l+1}^m\alpha_j\leq 1$ as long as
$l\leq\mathbf{Floor}\left(\log_3\left(\frac{2}{\sqrt{5}c\sqrt{p_{\text{succ,d-inv}}}}\right)\right)
    =\mathbf{O}\left(\log_3\left(\frac{1}{\sqrt{p_{\text{succ}}}}\right)\right)$.
In particular, there is no over amplification and the total loss factor of VTAA is $\geq\frac{5}{6}$.

\paragraph{Analysis of amplification schedule}
Next, let us derive the deterministic amplification schedule. At the beginning, $j=m-l+1$, and we compare
\begin{equation}
    \norm{\overline{\Pi_{m-l+1}\Pi_b}A_{m-l+1}\widetilde A_{m-l}\ket{\psi_0}}
    =\norm{\overline{\Pi_{m-l+1}\Pi_b}A_{m-l+1}A_{m-l}\cdots A_1\ket{\psi_0}}
\end{equation}
and
\begin{equation}
    \frac{1}{3}\sqrt{\alpha_{m-l+1}}
    =c\norm{\overline{\Pi_{m-l+1}\Pi_b}A_{m-l+1}A_{m-l}\cdots A_1\ket{\psi_0}}.
\end{equation}
Assuming $c\gtrapprox1$ (say $c=1.001$), we find that
\begin{equation}
\begin{aligned}
    \norm{\overline{\Pi_{m-l+1}\Pi_b}A_{m-l+1}\widetilde A_{m-l}\ket{\psi_0}}&<\frac{1}{3}\sqrt{\alpha_{m-l+1}},\\
    3\norm{\overline{\Pi_{m-l+1}\Pi_b}A_{m-l+1}\widetilde A_{m-l}\ket{\psi_0}}&\geq\frac{1}{3}\sqrt{\alpha_{m-l+1}},
\end{aligned}
\end{equation}
so $3$ amplification steps are needed for $j=m-l+1$. After that, 
\begin{equation}
\begin{aligned}
    &\left(1-\frac{1}{6}9\norm{\overline{\Pi_{m-l+1}\Pi_b}A_{m-l+1}\cdots A_1\ket{\psi_0}}^2\right)3\norm{\overline{\Pi_{m-l+1}\Pi_b}A_{m-l+1}\cdots A_1\ket{\psi_0}}\\
    &\leq\norm{\overline{\Pi_{m-l+1}\Pi_b}\widetilde A_{m-l+1}\ket{\psi_0}}\\
    &\leq3\norm{\overline{\Pi_{m-l+1}\Pi_b}A_{m-l+1}\cdots A_1\ket{\psi_0}}.
\end{aligned}
\end{equation}

By induction, suppose we have
\begin{equation}
\begin{aligned}
    &\prod_{k=m-l+1}^{j}\left(1-\frac{1}{6}9^{k-m+l}\norm{\overline{\Pi_{k}\Pi_b}A_k\cdots A_1\ket{\psi_0}}^2\right)\norm{\overline{\Pi_{j}\Pi_b}A_{j}\cdots A_1\ket{\psi_0}}3^{j-m+l}\\
    &\leq\norm{\overline{\Pi_{j}\Pi_b}\widetilde A_{j}\ket{\psi_0}}\\
    &\leq\norm{\overline{\Pi_{j}\Pi_b}A_{j}\cdots A_1\ket{\psi_0}}3^{j-m+l}
\end{aligned}
\end{equation}
after stage $j$. For $j+1$, we want to compare
\begin{equation}
    \norm{\overline{\Pi_{j+1}\Pi_b}A_{j+1}\widetilde A_{j}\ket{\psi_0}}
\end{equation}
and
\begin{equation}
    \frac{1}{3}\sqrt{\alpha_{j+1}}
    =c3^{j-m+l}\norm{\overline{\Pi_{j+1}\Pi_b}A_{j+1}A_{j}\cdots A_1\ket{\psi_0}}.
\end{equation}
Using the inductive hypothesis and \eq{trans_amp},
\begin{equation}
\begin{aligned}
    &\prod_{k=m-l+1}^{j}\left(1-\frac{1}{6}9^{k-m+l}\norm{\overline{\Pi_{k}\Pi_b}A_k\cdots A_1\ket{\psi_0}}^2\right)\norm{\overline{\Pi_{j+1}\Pi_b}A_{j+1}\cdots A_1\ket{\psi_0}}3^{j-m+l}\\
    &\leq\norm{\overline{\Pi_{j+1}\Pi_b}A_{j+1}\widetilde A_{j}\ket{\psi_0}}\\
    &\leq\norm{\overline{\Pi_{j+1}\Pi_b}A_{j+1}\cdots A_1\ket{\psi_0}}3^{j-m+l}.
\end{aligned}
\end{equation}
Then since the loss factor $\geq\left(\frac{5}{6}\right)^{\sum_{j=1}^m\alpha_j}\geq\frac{5}{6}$, the amplitude satisfies
\begin{equation}
\begin{aligned}
    \norm{\overline{\Pi_{j+1}\Pi_b}A_{j+1}\widetilde A_{j}\ket{\psi_0}}&<\frac{1}{3}\sqrt{\alpha_{j+1}},\\
    3\norm{\overline{\Pi_{j+1}\Pi_b}A_{j+1}\widetilde A_{j}\ket{\psi_0}}&\geq\frac{1}{3}\sqrt{\alpha_{j+1}}.
\end{aligned}
\end{equation}
So again we need $3$ amplification steps for $j+1$. After that, 
\begin{equation}
\begin{aligned}
    &\prod_{k=m-l+1}^{j+1}\left(1-\frac{1}{6}9^{k-m+l}\norm{\overline{\Pi_{k}\Pi_b}A_k\cdots A_1\ket{\psi_0}}^2\right)\norm{\overline{\Pi_{j+1}\Pi_b}A_{j+1}\cdots A_1\ket{\psi_0}}3^{j+1-m+l}\\
    &\leq\left(1-\frac{1}{6}9\norm{\overline{\Pi_{j+1}\Pi_b}A_{j+1}\widetilde A_{j}\ket{\psi_0}}^2\right)3\norm{\overline{\Pi_{j+1}\Pi_b}A_{j+1}\widetilde A_{j}\ket{\psi_0}}\\
    &\leq\norm{\overline{\Pi_{j+1}\Pi_b}\widetilde A_{j+1}\ket{\psi_0}}\\
    &\leq3\norm{\overline{\Pi_{j+1}\Pi_b}A_{j+1}\widetilde A_{j}\ket{\psi_0}}\\
    &\leq\norm{\overline{\Pi_{j+1}\Pi_b}A_{j+1}\cdots A_1\ket{\psi_0}}3^{j+1-m+l}.
\end{aligned}
\end{equation}
Here, the first inequality follows from the inductive hypothesis and the calculation
\begin{equation*}
\begin{aligned}
    9\norm{\overline{\Pi_{j+1}\Pi_b}A_{j+1}\widetilde A_{j}\ket{\psi_0}}^2
    &=9\norm{\overline{\Pi_{j+1}\Pi_b}A_{j+1}\cdots A_1\ket{\psi_0}}^2
    \frac{\norm{\overline{\Pi_{j+1}\Pi_b}A_{j+1}\widetilde A_{j}\ket{\psi_0}}^2}{\norm{\overline{\Pi_{j+1}\Pi_b}A_{j+1}\cdots A_1\ket{\psi_0}}^2}\\
    &=9\norm{\overline{\Pi_{j+1}\Pi_b}A_{j+1}\cdots A_1\ket{\psi_0}}^2
    \frac{\norm{\overline{\Pi_{j}\Pi_b}\widetilde A_{j}\ket{\psi_0}}^2}{\norm{\overline{\Pi_{j}\Pi_b}A_{j}\cdots A_1\ket{\psi_0}}^2}\\
    &\leq 9\norm{\overline{\Pi_{j+1}\Pi_b}A_{j+1}\cdots A_1\ket{\psi_0}}^29^{j-m+l},
\end{aligned}
\end{equation*}
where we have used \eq{trans_amp} in the second equality and \lem{rep_query} in the last inequality (applied from the first nontrivial stage $s_1=m-l+1$ up to stage $s_{j-m+l}=j$ with the amplitudes corresponding to $s_0=0$ canceled out).
The induction is now complete. We obtain the following deterministic amplification schedule as desired:
\begin{equation}
\label{eq:deterministic_schedule2}
    2r_j+1=
    \begin{cases}
        3,\quad &j=m-l+1,\ldots,m,\\
        1,&j=1,\ldots,m-l.
    \end{cases}
\end{equation}

\paragraph{Analysis of query complexity}
We now consider the query complexity. Invoking \prop{tunable_query}, \prop{bm} and \prop{gpe} as well as the choice of thresholds in \eq{threshold_def}, we have the cost
\begin{equation}
\begin{aligned}
    &\mathbf{O}\Bigg(\frac{1}{\sqrt{p_{\text{succ,d-inv}}}}\mathbf{Cost}\left(A_{m-l+1}\cdots A_{1}\ket{\psi_0}\right)\\
    &\qquad+\frac{1}{\sqrt{p_{\text{succ,d-inv}}}}\sum_{j=m-l+2}^{m}\frac{1}{\sqrt{\alpha_{j-1}}}
    \norm{\overline{\Pi_{j-1}\Pi_b}A_{j-1}\cdots A_1\ket{\psi_0}}\mathbf{Cost}\left(A_{j}\right)\Bigg)\\
    &=\mathbf{O}\Bigg(\frac{1}{\sqrt{p_{\text{succ}}}}\mathbf{Cost}\left(O_b\right)
    +\frac{1}{\sqrt{p_{\text{succ}}}}\log\left(\frac{1}{\epsilon_{\text{bm}}}\right)\mathbf{Cost}\left(O_A\right)
    +\frac{1}{\sqrt{p_{\text{succ}}}}\sum_{j=1}^{m-l+1}3^j\log\left(\frac{1}{\epsilon_{\text{gpe},j}}\right)\mathbf{Cost}(O_A)\\
    &\qquad\quad +\frac{1}{\sqrt{p_{\text{succ}}}}\sum_{j=m-l+2}^{m}3^{m-l-j}3^j\log\left(\frac{1}{\epsilon_{\text{gpe},j}}\right)\mathbf{Cost}\left(O_A\right)\Bigg),\\
\end{aligned}
\end{equation}
where we have again used $p_{\text{succ,d-inv}}=\mathbf{\Theta}\left(p_{\text{succ}}\right)$.
Here, the cost of GPE grows exponentially with the stage number $j$ like $\sim3^j$, suggesting a non-uniform error distribution of the form
\begin{equation}
\label{eq:error_schedule}
    \epsilon_{\text{gpe},j}=
    \begin{cases}
        \frac{\epsilon_{\text{gpe}}}{l},\quad&j=m-l+2,\ldots,m,\\
        \frac{\epsilon_{\text{gpe}}}{l\cdot 2^{m-l+2-j}},&j=1,\ldots,m-l+1,
    \end{cases}
    \qquad\qquad
    \sum_{j=1}^m\epsilon_{\text{gpe},j}
    \leq\epsilon_{\text{gpe}}.
\end{equation}
This then leads to the cost
\begin{equation}
\begin{aligned}
    &\mathbf{O}\Bigg(\frac{1}{\sqrt{p_{\text{succ}}}}\mathbf{Cost}\left(O_b\right)
    +\frac{1}{\sqrt{p_{\text{succ}}}}\log\left(\frac{1}{\epsilon_{\text{bm}}}\right)\mathbf{Cost}\left(O_A\right)\\
    &\qquad\quad+\frac{1}{\sqrt{p_{\text{succ}}}}\sum_{j=1}^{m-l+1}3^j(m-l+2-j)\log\left(\frac{l}{\epsilon_{\text{gpe}}}\right)\mathbf{Cost}(O_A)\\
    &\qquad\quad +\frac{1}{\sqrt{p_{\text{succ}}}}\sum_{j=m-l+2}^{m}3^{m-l-j}3^j\log\left(\frac{l}{\epsilon_{\text{gpe}}}\right)\mathbf{Cost}\left(O_A\right)\Bigg)\\
    &=\mathbf{O}\left(\frac{1}{\sqrt{p_{\text{succ}}}}\mathbf{Cost}\left(O_b\right)
    +\frac{1}{\sqrt{p_{\text{succ}}}}\left(\log\left(\frac{1}{\epsilon_{\text{bm}}}\right)+\frac{l3^m}{3^l}\log\left(\frac{l}{\epsilon_{\text{gpe}}}\right)\right)\mathbf{Cost}\left(O_A\right)\right),
\end{aligned}
\end{equation}
which decreases as $l$ increases. So we should choose the largest possible $l$:
\begin{equation}
    l=\mathbf{Floor}\left(\log_3\left(\frac{2}{\sqrt{5}c\sqrt{p_{\text{succ,d-inv}}}}\right)\right)
    =\mathbf{\Theta}\left(\log_3\left(\frac{1}{\sqrt{p_{\text{succ}}}}\right)\right).
\end{equation}

\paragraph{Analysis of success probability}
As an immediate consequence of the deterministic schedule, we obtain
\begin{equation}
\begin{aligned}
    \norm{\overline{\Pi_m\Pi_b}\widetilde A_m\ket{\psi_0}}
    &\geq\frac{5}{6}3^l\norm{\overline{\Pi_m\Pi_b}A_m\cdots A_1\ket{\psi_0}}\\
    &\geq\frac{5}{6}3^{\log_3\left(\frac{2}{\sqrt{5}c\sqrt{p_{\text{succ,d-inv}}}}\right)-1}\sqrt{p_{\text{succ,d-inv}}}\\
    &=\frac{5}{6}\frac{2}{\sqrt{5}c\sqrt{p_{\text{succ,d-inv}}}}\frac{1}{3}\sqrt{p_{\text{succ,d-inv}}}
    =\frac{\sqrt{5}}{9c},
\end{aligned}
\end{equation}
where the first inequality follows from \lem{rep_query} (applied from the first nontrivial stage $s_1=m-l+1$ up to stage $s_l=m$).
So VTAA outputs the normalized version of discretized inverse state $\psi_{\text{d-inv}}$ with a constant success probability.

\paragraph{Analysis of error}
With this deterministic schedule, the initial state $\ket{b}$ and input algorithms $A_1,\ldots,A_{m-l+1}$ are invoked $3^l$ times. After that, algorithm $A_{j}$ is invoked $3^{m-j+1}$ times for $j=m-l+2,\ldots,m$. So the total error of preparing the discretized inverse state is at most
\begin{equation}
    3^l\epsilon_{\text{bm}}
    +3^l\sum_{j=1}^{m-l+1}\frac{\epsilon_{\text{gpe}}}{l\cdot2^{m-l+2-j}}
    +\sum_{j=m-l+2}^m 3^{m-j+1}\frac{\epsilon_{\text{gpe}}}{l}
    =\mathbf{O}\left(3^l\left(\epsilon_{\text{bm}}+\epsilon_{\text{gpe}}\right)\right)
    =\mathbf{O}\left(\frac{\epsilon_{\text{bm}}+\epsilon_{\text{gpe}}}{\sqrt{p_{\text{succ}}}}\right).
\end{equation}
Thus, by setting
\begin{equation}
    \epsilon_{\text{bm}},\epsilon_{\text{gpe}}=\mathbf{\Theta}\left(\sqrt{p_{\text{succ}}}\epsilon\right),
\end{equation}
we can generate the discretized inverse state $\psi_{\text{d-inv}}$ with constant probability, accuracy $\epsilon$ and query complexity
\begin{equation}
    \mathbf{O}\left(\frac{1}{\sqrt{p_{\text{succ}}}}\mathbf{Cost}\left(O_b\right)
    +\kappa
    \log\left(\frac{1}{\sqrt{p_{\text{succ}}}}\right)
    \log\left(\frac{1}{\sqrt{p_{\text{succ}}}\epsilon}\right)
    \mathbf{Cost}\left(O_A\right)\right).
\end{equation}
In the next subsection, we show how to project the error into potentially good subspaces to further improve the query complexity to
\begin{equation}
    \mathbf{O}\left(\frac{1}{\sqrt{p_{\text{succ}}}}\mathbf{Cost}\left(O_b\right)
    +\kappa
    \log\left(\frac{1}{\sqrt{p_{\text{succ}}}}\right)
    \log\left(\frac{\log\left(\frac{1}{\sqrt{p_{\text{succ}}}}\right)}{\epsilon}\right)
    \mathbf{Cost}\left(O_A\right)\right).
\end{equation}

\subsection{Projecting error toward potentially good subspaces}
\label{sec:dinv_err}
In this subection, we present an improved error analysis for preparing the normalized discretized inverse state by projecting error onto potentially good subspaces.

To begin with, recall that if branch marking were perfectly performed, our discretized inverse state would take the form
\begin{equation}
\begin{aligned}
    \sum_{k=0}^{m-1}\sum_{\abs{\frac{\lambda_u}{\alpha_A}}\in\big[\frac{1}{3^{k+1}},\frac{1}{3^k}\big)}
    \sum_{j=1}^m
    \zeta_{j,u}
    \ket{j-1}\left(\frac{3^{j}}{3^m}\ket{\mathrm{good}}+\sqrt{1-\frac{9^{j}}{9^m}}\ket{\mathrm{bad}}\right)
    \gamma_u\ket{\phi_u},
\end{aligned}
\end{equation}
where $\zeta_{j,u}=\xi_{j,u,0}\prod_{l=1}^{j-1}\xi_{l,u,1}$ are the cumulative coefficients from erroneous GPEs, such that the normalization condition $\abs{\xi_{l,u,0}}^2+\abs{\xi_{l,u,1}}^2=1$ holds at all stages $l$. If GPEs were also performed perfectly, then majority of $\zeta_{j,u}$ would be zero unless $j=k,k+1$ and we get
\begin{equation}
\begin{aligned}
    &\sum_{k=0}^{m-1}\sum_{\abs{\frac{\lambda_u}{\alpha_A}}\in\big[\frac{1}{3^{k+1}},\frac{1}{3^k}\big)}
    \Bigg(\zeta_{k+1,u}
    \ket{k}\left(\frac{3^{k+1}}{3^m}\ket{\mathrm{good}}+\sqrt{1-\frac{9^{k+1}}{9^m}}\ket{\mathrm{bad}}\right)\\
    &\qquad\qquad\qquad\qquad\quad
    +\zeta_{k,u}
    \ket{k-1}\left(\frac{3^{k}}{3^m}\ket{\mathrm{good}}+\sqrt{1-\frac{9^{k}}{9^m}}\ket{\mathrm{bad}}\right)\Bigg)
    \gamma_u\ket{\phi_u}.
\end{aligned}
\end{equation}
Let us compare these two states, but only within the potentially good subspaces. That is, we compare
\begin{equation}
\begin{aligned}
    \psi_{\text{d-inv},m}&=\sum_{k=0}^{m-1}\sum_{\abs{\frac{\lambda_u}{\alpha_A}}\in\big[\frac{1}{3^{k+1}},\frac{1}{3^k}\big)}
    \sum_{j=1}^m
    \zeta_{j,u}
    \ket{j-1}\frac{3^{j}}{3^m}
    \gamma_u\ket{\phi_u},\qquad
    \norm{\psi_{\text{d-inv},m}}^2=p_{\text{succ,d-inv},m},\\
    \psi_{\text{d-inv}}&=\sum_{k=0}^{m-1}\sum_{\abs{\frac{\lambda_u}{\alpha_A}}\in\big[\frac{1}{3^{k+1}},\frac{1}{3^k}\big)}
    \left(\zeta_{k+1,u}
    \ket{k}\frac{3^{k+1}}{3^m}
    +\zeta_{k,u}
    \ket{k-1}\frac{3^{k}}{3^m}\right)
    \gamma_u\ket{\phi_u},\qquad
    \norm{\psi_{\text{d-inv}}}^2=p_{\text{succ,d-inv}}.
\end{aligned}
\end{equation}
Then their squared Euclidean distance is bounded by
\begin{equation}
    \norm{\psi_{\text{d-inv},m}-\psi_{\text{d-inv}}}^2
    \leq\sum_{k=0}^{m-1}\sum_{\abs{\frac{\lambda_u}{\alpha_A}}\in\big[\frac{1}{3^{k+1}},\frac{1}{3^k}\big)}\abs{\gamma_u}^2\sum_{j\neq k,k+1}\abs{\zeta_{j,u}}^2\frac{9^j}{9^m}.
\end{equation}

We divide the analysis into two cases.
\begin{enumerate}[label=(\roman*)]
    \item $j\leq k-1$: In this case, we use the bound 
    \begin{equation}
    \begin{cases}
        \abs{\xi_{l,u,1}}\leq1,\quad&l=1,\ldots,j-1,\\
        \abs{\xi_{j,u,0}}\leq\epsilon_{\text{qpe},j}.&
    \end{cases}
    \end{equation}
    to obtain
    \begin{equation}
        \abs{\zeta_{j,u}}\leq\epsilon_{\text{gpe},j}.
    \end{equation}
    \item $j\geq k+2$: In this case, we use the bound
    \begin{equation}
    \begin{cases}
        \abs{\xi_{l,u,1}}\leq1,\quad&l=1,\ldots,k,\\
        \abs{\xi_{l,u,1}}\leq\epsilon_{\text{gpe},l},\quad&l=k+1,\ldots,j-1,\\
        \abs{\xi_{j,u,0}}\leq1,
    \end{cases}
    \end{equation}
    to obtain
    \begin{equation}
        \abs{\zeta_{j,u}}\leq\prod_{l=k+1}^{j-1}\epsilon_{\text{gpe},l}.
    \end{equation}
\end{enumerate}
This bounds the squared distance by
\begin{equation}
\begin{aligned}
    \norm{\psi_{\text{d-inv},m}-\psi_{\text{d-inv}}}^2&\leq\sum_{k=0}^{m-1}\sum_{\abs{\frac{\lambda_u}{\alpha_A}}\in\big[\frac{1}{3^{k+1}},\frac{1}{3^k}\big)}\abs{\gamma_u}^2
    \left(\sum_{j=1}^{k-1}\epsilon_{\text{gpe},j}^2\frac{9^j}{9^m}
    +\sum_{j=k+2}^{m}\prod_{l=k+1}^{j-1}\epsilon_{\text{gpe},l}^2\frac{9^j}{9^m}\right)\\
    &\leq\sum_{k=0}^{m-1}\sum_{\abs{\frac{\lambda_u}{\alpha_A}}\in\big[\frac{1}{3^{k+1}},\frac{1}{3^k}\big)}\abs{\gamma_u}^2
    \left(\sum_{j=1}^{k-1}\epsilon_{\text{gpe},j}^2\frac{9^{k-1}}{9^m}
    +\sum_{j=k+2}^{m}\epsilon_{\text{gpe},j-1}^2\frac{9^{k+2}}{9^m}\right)\\
    &\leq\left(\sum_{k=0}^{m-1}\sum_{\abs{\frac{\lambda_u}{\alpha_A}}\in\big[\frac{1}{3^{k+1}},\frac{1}{3^k}\big)}\abs{\gamma_u}^2\frac{9^{k+2}}{9^m}\right)
    \left(\sum_{j=1}^{m}\epsilon_{\text{gpe},j}\right)^2
    \leq 9p_{\text{succ,d-inv},1}\epsilon_{\text{gpe}}^2,
\end{aligned}
\end{equation}
where
\begin{equation}
    p_{\text{succ,d-inv},1}=\sum_{k=0}^{m-1}\sum_{\abs{\frac{\lambda_u}{\alpha_A}}\in\big[\frac{1}{3^{k+1}},\frac{1}{3^k}\big)}
    \abs{\gamma_u}^2\frac{9^{k+1}}{9^m}.
\end{equation}
Here, we assume all $\epsilon_{\text{gpe},l}\leq\frac{1}{3}$ in the second inequality and upper bound the $\ell_2$-norm by $\ell_1$-norm in the third inequality.

Now we take the error of branch marking into account. If the actual output state is $\psi_{\text{d-inv,bm}}$ with $\norm{\psi_{\text{d-inv,bm}}}^2=p_{\text{succ,d-inv,bm}}$, then
\begin{equation}
    \norm{\psi_{\text{d-inv,bm}}
    -\psi_{\text{d-inv},m}}
    \leq2\epsilon_{\text{bm}}
    \quad\Rightarrow\quad
    \norm{\psi_{\text{d-inv,bm}}-\psi_{\text{d-inv}}}
    =\mathbf{O}\left(\sqrt{p_{\text{succ,d-inv},1}}\epsilon_{\text{gpe}}+\epsilon_{\text{bm}}\right).
\end{equation}
We will prove in \prop{multi_succ} from \append{multi} that
\begin{equation}
    p_{\text{succ,d-inv,bm}}=\mathbf{\Theta}\left(p_{\text{succ,d-inv},m}\right)=\mathbf{\Theta}\left(p_{\text{succ,d-inv}}\right)=\mathbf{\Theta}(p_{\text{succ,d-inv},1})=\mathbf{\Theta}\left(p_{\text{succ}}\right).
\end{equation}
Hence, by~\cite[Lemma 24]{QEVP},
\begin{equation}
    \norm{\frac{\psi_{\text{d-inv,bm}}}{\sqrt{p_{\text{succ,d-inv,bm}}}}
    -\frac{\psi_{\text{d-inv}}}{\sqrt{p_{\text{succ,d-inv}}}}}
    \leq\frac{2\norm{\psi_{\text{d-inv,bm}}-\psi_{\text{d-inv}}}}{\sqrt{p_{\text{succ,d-inv}}}}
    =\mathbf{O}\left(\epsilon_{\text{gpe}}+\frac{\epsilon_{\text{bm}}}{\sqrt{p_{\text{succ}}}}\right),
\end{equation}
We see that the error of GPE shrinks by a factor of $\mathbf{\Theta}\left(\sqrt{p_{\text{succ}}}\right)$ when projected onto the potentially good subspaces, which cancels with the state normalization factor $\mathbf{\Theta}\left(\frac{1}{\sqrt{p_{\text{succ}}}}\right)$. Hence, we choose 
\begin{equation}
    \epsilon_{\text{gpe}}=\mathbf{\Theta}(\epsilon),\qquad
    \epsilon_{\text{bm}}=\mathbf{\Theta}\left(\sqrt{p_{\text{succ}}}\epsilon\right),
\end{equation}
obtaining the claimed query complexity of preparing discretized inverse state
\begin{equation}
    \mathbf{O}\left(\frac{1}{\sqrt{p_{\text{succ}}}}\mathbf{Cost}\left(O_b\right)
    +\kappa
    \log\left(\frac{1}{\sqrt{p_{\text{succ}}}}\right)
    \log\left(\frac{\log\left(\frac{1}{\sqrt{p_{\text{succ}}}}\right)}{\epsilon}\right)
    \mathbf{Cost}\left(O_A\right)\right).
\end{equation}

\subsection{Solution norm estimation}
\label{sec:dinv_est}
Up to this point, we have assumed that a constant multiplicative estimate of the solution norm $\norm{A^{-1}\ket{b}}$ is available a prior. This assumption is equivalent to knowing $p_{\text{succ}}$ and $p_{\text{succ,d-inv}}$ to a constant multiplicative accuracy, which is necessary to achieve the $\mathbf{O}\left(\frac{1}{\sqrt{p_{\text{succ}}}}\right)$ scaling for initial state preparation. In this subsection, we show that this strictly linear scaling can be attained even with an unknown solution norm, when $p_{\text{succ}}$ is replaced by its lower bound $\alpha_{p_{\text{succ}}}\leq p_{\text{succ}}$. In fact, we will describe a solution norm estimation algorithm whose query complexity of the initial state oracle has the scaling $\mathbf{O}\left(\frac{1}{\sqrt{\alpha_{p_{\text{succ}}}}}\right)$.

We begin by introducing the amplitude estimation algorithm.
\begin{lemma}[Amplitude estimation]
Let $\ket{\psi_0}$ be a quantum state and $\Pi$ be an orthogonal projection. Then for any $\epsilon,\delta>0$, there exists a quantum algorithm that outputs $y$ with
\begin{equation}
    \mathbf{P}\left(\abs{y-\norm{\Pi \ket{\psi_0}}}\geq\epsilon\right)<\delta
\end{equation}
using
\begin{equation}
    \mathbf{O}\left(\frac{1}{\epsilon}\log\left(\frac{1}{\delta}\right)\right)
\end{equation}
queries to controlled-preparation of $\ket{\psi_0}$, controlled-reflection $I-2\Pi$ and their inverses.
\end{lemma}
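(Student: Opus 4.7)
The plan is to invoke the standard amplitude estimation algorithm of Brassard, H\o yer, Mosca and Tapp, and then to boost its success probability by a median-of-means amplification.

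First, I would form the Grover-style walk operator $Q=-\left(I-2U\ketbra{0}{0}U^\dagger\right)\left(I-2\Pi\right)$, which requires one query each to controlled-$U$, controlled-$U^\dagger$, and to the reflections $I-2\ketbra{0}{0}$ and $I-2\Pi$. Writing $\sin\theta=\norm{\Pi U\ket{0}}$ with $\theta\in[0,\pi/2]$, the usual two-dimensional invariant subspace analysis (spanned by the normalizations of $\Pi U\ket{0}$ and $\overline{\Pi}U\ket{0}$) shows that $U\ket{0}$ lies in the span of two eigenvectors of $Q$ with eigenvalues $e^{\pm 2i\theta}$.

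Next, I would run quantum phase estimation on $Q$ initialized with $U\ket{0}$, using a clock register of size $t=\mathbf{\Theta}\left(\log(1/\epsilon)\right)$ and $\mathbf{O}(1/\epsilon)$ controlled applications of $Q$. The standard QPE accuracy guarantee produces an estimate $\widetilde{\theta}$ of $\theta$ with $|\widetilde{\theta}-\theta|\leq\epsilon$ with probability at least $8/\pi^2$. Setting $y=\sin\widetilde{\theta}$ and using the $1$-Lipschitz property of $\sin$ gives
\begin{equation}
    \abs{y-\norm{\Pi U\ket{0}}}
    =\abs{\sin\widetilde{\theta}-\sin\theta}
    \leq\abs{\widetilde\theta-\theta}
    \leq\epsilon
\end{equation}
with constant probability, at a total cost of $\mathbf{O}(1/\epsilon)$ queries to controlled-$U$, controlled-$U^\dagger$ and the two controlled reflections.

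Finally, to drive the failure probability down to $\delta$, I would independently repeat the procedure $\mathbf{\Theta}\left(\log(1/\delta)\right)$ times and output the median of the resulting estimates. A Chernoff bound on the indicator that a given run lands within $\epsilon$ of $\norm{\Pi U\ket{0}}$ (which has success probability bounded above $1/2$) shows that the median fails to lie in $[\norm{\Pi U\ket{0}}-\epsilon,\norm{\Pi U\ket{0}}+\epsilon]$ only with probability $<\delta$. Composing the two reductions yields the stated query complexity $\mathbf{O}\left(\frac{1}{\epsilon}\log\left(\frac{1}{\delta}\right)\right)$. The only slightly subtle point is keeping careful track of the factor of $2$ between $\theta$ and the eigenphase $2\theta$ of $Q$, and of the boundary cases $\theta\in\{0,\pi/2\}$; both affect only constants and cause no trouble.
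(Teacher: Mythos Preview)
Your proposal is correct and matches the paper's approach: the paper does not prove this lemma in detail but simply cites Brassard--H\o yer--Mosca--Tapp for the phase-estimation-based amplitude estimation, noting that for constant $\epsilon$ one can alternatively repeat-and-Chernoff. Your sketch of the Grover walk, QPE, and median amplification is exactly the standard argument behind that citation.
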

\noindent Note that in applications where the target accuracy $\epsilon=\mathbf{\Theta}(1)$ is constant, it is possible to perform amplitude estimation by repeatedly measuring outcome of the algorithm and applying Chernoff bound. Otherwise if a smaller $\epsilon$ is desired, one can use quantum phase estimation~\cite{brassard2002quantum}.

To simplify the discussion, we first consider the discretized inverse state with an unknown $p_{\text{succ,d-inv}}=\norm{\psi_{\text{d-inv}}}^2$ but a known lower bound $\alpha_{p_{\text{succ,d-inv}}}\leq p_{\text{succ,d-inv}}$. We then (mathematically) define
\begin{equation}
    l^*=\mathbf{Floor}\left(\log_3\left(\frac{2}{\sqrt{5}c\sqrt{p_{\text{succ,d-inv}}}}\right)\right).
\end{equation}
If we run Tunable VTAA with $l^*$, our analysis in \sec{dinv_deterministic} then shows that the success amplitude is at least
\begin{equation}
    \norm{\overline{\Pi_m\Pi_b}\widetilde A_m\ket{\psi_0}}
    \geq\frac{\sqrt{5}}{9c}=\frac{5\sqrt{5}}{45c}.
\end{equation}
Our claim is that if $l<l^*$ is sufficiently small, the success amplitude is constant gapped below $\frac{\sqrt{5}}{9c}$. Indeed, if $l\leq l^*-2$, we invoke \lem{rep_query} to get
\begin{equation}
\begin{aligned}
    \norm{\overline{\Pi_m\Pi_b}\widetilde A_m\ket{\psi_0}}
    &\leq 3^l\norm{\overline{\Pi_m\Pi_b} A_m\cdots A_1\ket{\psi_0}}
    \leq3^{l^*-2}\sqrt{p_{\text{succ,d-inv}}}\\
    &\leq3^{\log_3\left(\frac{2}{\sqrt{5}c\sqrt{p_{\text{succ,d-inv}}}}\right)-2}\sqrt{p_{\text{succ,d-inv}}}\\
    &=\frac{2}{\sqrt{5}c\sqrt{p_{\text{succ,d-inv}}}}\frac{1}{9}\sqrt{p_{\text{succ,d-inv}}}=\frac{2\sqrt{5}}{45c}.
\end{aligned}
\end{equation}

Our solution norm estimation algorithm proceeds as follows. We run Tunable VTAA with pre-merging parameter $l=0,1,2,\ldots$, accuracy $\epsilon_{\text{gpe}}=\mathbf{\Theta}(1)$ and $\epsilon_{\text{bm}}=\mathbf{\Theta}\left(\sqrt{p_{\text{succ}}}\right)$, and we perform amplitude estimation with a sufficiently small (yet constant) accuracy and failure probability $\delta_l$. 
When iterating through $l=0,1,2,\ldots$, with high probability we see the estimated amplitude below $\frac{\sqrt{5}}{15c}=\frac{3\sqrt{5}}{45c}$ for all $l=0,1,\ldots,l^*-2$ and amplitude above $\frac{4\sqrt{5}}{45c}$ for either $l=l^*-1,l^*$.
Supposing the first time we see the amplitude above $\frac{4\sqrt{5}}{45c}$ is at some $l$, we consider the value $l+1$. Then, we have
\begin{equation}
    \log_3\left(\frac{2}{\sqrt{5}c\sqrt{p_{\text{succ,d-inv}}}}\right)-1
    \leq l^*
    \leq l+1
    \leq l^*+1
    \leq \log_3\left(\frac{2}{\sqrt{5}c\sqrt{p_{\text{succ,d-inv}}}}\right)+1,
\end{equation}
which implies
\begin{equation}
    \frac{1}{3}\frac{2}{\sqrt{5}c\sqrt{p_{\text{succ,d-inv}}}}
    \leq3^{l+1}
    \leq3\frac{2}{\sqrt{5}c\sqrt{p_{\text{succ,d-inv}}}}.
\end{equation}
Hence, $\frac{2}{\sqrt{5}3^{l+1}c}$ is a $3$-multiplicative approximation of $\sqrt{p_{\text{succ,d-inv}}}$.

Note that for each fixed $l$, Tunable VTAA has query complexity
\begin{equation}
\begin{aligned}
    &3^l\mathbf{Cost}\left(A_{m-l+1}\cdots A_1\ket{\psi_0}\right)
    +\sum_{j=m-l+2}^m3^{m-j+1}\mathbf{Cost}\left(A_j\right)\\
    &=\mathbf{O}\left(3^l\mathbf{Cost}(O_b)
    +3^l\sum_{j=1}^{m-l+1}3^j\log\left(2^{m-l+2-j}l\right)\mathbf{Cost}(O_A)
    +\sum_{j=m-l+2}^m3^{m-j+1}3^j\log(l)\mathbf{Cost}(O_A)\right)\\
    &=\mathbf{O}\left(3^l\mathbf{Cost}(O_b)
    +3^m l\log(l)\mathbf{Cost}(O_A)\right).
\end{aligned}
\end{equation}
Iterating over $l$ and incorporating the cost of amplitude estimation, the total query complexity is thus bounded by
\begin{equation}
    \mathbf{O}\left(\sum_{l=1}^{l^*}3^l\log\left(\frac{1}{\delta_l}\right)\mathbf{Cost}(O_b)
    +\sum_{l=1}^{l^*}3^m l\log(l)\log\left(\frac{1}{\delta_l}\right)\mathbf{Cost}(O_A)\right).
\end{equation}
Let us choose the schedule of failure probabilities to be
\begin{equation}
    \delta_l=\frac{1}{(\alpha_{l^*}-l+3)^2},\qquad
    \alpha_{l^*}=\mathbf{Floor}\left(\log_3\left(\frac{2}{\sqrt{5}c\sqrt{\alpha_{p_{\text{succ,d-inv}}}}}\right)\right)\geq l^*.
\end{equation}
By the union bound, the total failure probability is at most
\begin{equation}
    \sum_{l=1}^{l^*}\frac{1}{(\alpha_{l^*}-l+3)^2}
    \leq\sum_{l=1}^{\alpha_{l^*}}\frac{1}{(\alpha_{l^*}-l+3)^2}
    \leq\frac{1}{3^2}+\frac{1}{4^2}+\cdots
    =\frac{\pi^2}{6}-\frac{5}{4}\approx0.4.
\end{equation}
The query complexity then becomes
\begin{equation}
\begin{aligned}
    &\mathbf{O}\left(\sum_{l=1}^{l^*}3^l\log\left(\alpha_{l^*}-l+3\right)\mathbf{Cost}(O_b)
    +\kappa\sum_{l=1}^{l^*} l\log(l)\log\left(\alpha_{l^*}-l+3\right)\mathbf{Cost}(O_A)\right)\\
    &=\mathbf{O}\left(\frac{1}{\sqrt{\alpha_{p_{\text{succ,d-inv}}}}}\mathbf{Cost}(O_b)
    +\kappa\log^2\left(\frac{1}{\sqrt{\alpha_{p_{\text{succ,d-inv}}}}}\right)\log\log^2\left(\frac{1}{\sqrt{\alpha_{p_{\text{succ,d-inv}}}}}\right)\mathbf{Cost}(O_A)\right).\\
\end{aligned}
\end{equation}

The above analysis handles the discretized inverse state $\psi_{\text{d-inv}}$, and assumes that a lower bound $\alpha_{p_{\text{succ,d-inv}}}\leq p_{\text{succ,d-inv}}=\norm{\psi_{\text{d-inv}}}^2$ is known a prior. 
In practice, the output state is instead $\psi_{\text{d-inv,bm}}$ with $\norm{\psi_{\text{d-inv,bm}}}^2=p_{\text{succ,d-inv,bm}}$, but we would still proceed as above, except we replace all $p_{\text{succ,d-inv}}$ by $p_{\text{succ,d-inv,bm}}$, and analyze VTAA using \prop{multi_succ} and \prop{multi_amp}.
As per \prop{multi_succ}, we can also convert the given lower bound $\alpha_{p_{\text{succ}}}\leq p_{\text{succ}}$ into some $\alpha_{p_{\text{succ,d-inv,bm}}}\leq p_{\text{succ,d-inv,bm}}$.
Consequently, what we actually obtain is a constant multiplicative approximation of $\sqrt{p_{\text{succ,d-inv,bm}}}$. Invoking \prop{multi_succ} once more, we obtain a constant multiplicative approximation of $\sqrt{p_{\text{succ}}}$.

\begin{theorem}[Solution norm estimation with optimal initial state preparation]
\label{thm:sol_est}
Let $A$ be the coefficient matrix such that $A/\alpha_A$ is block encoded by $O_A$ with normalization factor $\alpha_A\geq\norm{A}$. Let $\ket{b}$ be the initial state prepared by oracle $O_b$.
Then the solution norm $\norm{A^{-1}\ket{b}}$, and hence the success amplitude $\sqrt{p_{\text{succ}}}=\frac{\norm{A^{-1}\ket{b}}}{\alpha_{A^{-1}}}$, can be estimated to a constant multiplicative accuracy and success probability $>\frac{1}{2}$ with query complexity
\begin{equation}
    \mathbf{O}\left(\frac{1}{\sqrt{\alpha_{p_{\text{succ}}}}}\mathbf{Cost}(O_b)
    +\kappa\log^2\left(\frac{1}{\sqrt{\alpha_{p_{\text{succ}}}}}\right)\log\log^2\left(\frac{1}{\sqrt{\alpha_{p_{\text{succ}}}}}\right)\mathbf{Cost}(O_A)\right),
\end{equation}
where $\alpha_{A^{-1}}\geq\norm{A^{-1}}$ is a norm upper bound on the inverse matrix, $\kappa=\alpha_A\alpha_{A^{-1}}$ is an upper bound on the spectral condition number, and $\alpha_{p_{\text{succ}}}\leq p_{\text{succ}}$ is a lower bound on the success probability.

The algorithm uses a nested amplitude amplification with deterministic schedule \eq{deterministic_schedule2} and an increasing choice of pre-merging parameter $l=0,1,2,\ldots$
\end{theorem}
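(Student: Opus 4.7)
The plan is to perform an exponential search over the pre-merging parameter $l=1,2,\ldots$ of Tunable VTAA, stopping as soon as amplitude estimation detects a constant-sized success amplitude. From the deterministic-schedule analysis of \sec{dinv_deterministic}, the ``target'' value $l^\star=\mathbf{Floor}\!\left(\log_3\!\left(\frac{2}{\sqrt{5}c\sqrt{p_{\text{succ,d-inv}}}}\right)\right)$ yields success amplitude $\norm{\overline{\Pi_m\Pi_b}\widetilde{A}_m\ket{\psi_0}}\geq\frac{\sqrt{5}}{9c}$, whereas for any $l\leq l^\star-2$ the amplitude is trivially bounded above by $3^l\sqrt{p_{\text{succ,d-inv}}}\leq\frac{2\sqrt{5}}{45c}$. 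Since these two regimes differ by a constant factor greater than two, a fixed threshold placed inside the gap distinguishes them using amplitude estimation at constant target accuracy.

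At each $l$, the approach is to run Tunable VTAA and apply amplitude estimation with failure probability $\delta_l=1/(\alpha_{l^\star}-l+3)^2$, where $\alpha_{l^\star}=\mathbf{Floor}\!\left(\log_3\!\left(\frac{2}{\sqrt{5}c\sqrt{\alpha_{p_{\text{succ,d-inv}}}}}\right)\right)\geq l^\star$ is computable a priori from the given lower bound $\alpha_{p_{\text{succ}}}$ via \prop{multi_succ}. Let $\hat{l}$ be the first $l$ whose estimated amplitude exceeds the chosen threshold; then with high probability $\hat{l}\in\{l^\star-1,l^\star\}$, so $\hat{l}+1$ approximates $l^\star$ within an additive $1$, and consequently $3^{\hat{l}+1}$ is a $3$-multiplicative approximation of $2/(\sqrt{5}c\sqrt{p_{\text{succ,d-inv}}})$. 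Reciprocating yields a constant multiplicative estimate of $\sqrt{p_{\text{succ,d-inv}}}$, which propagates through the chain $p_{\text{succ,d-inv,bm}}\leftrightarrow p_{\text{succ,d-inv}}\leftrightarrow p_{\text{succ}}$ via \prop{multi_succ} and the product rule of \sec{prelim_notation} to a constant multiplicative estimate of $\sqrt{p_{\text{succ}}}$, and hence of $\norm{A^{-1}\ket{b}}=\alpha_{A^{-1}}\sqrt{p_{\text{succ}}}$.

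For complexity, a single run of Tunable VTAA at parameter $l$ costs $\mathbf{O}(3^l\,\mathbf{Cost}(O_b)+\kappa l\log(l)\,\mathbf{Cost}(O_A))$, and wrapping it in amplitude estimation multiplies this by $\mathbf{O}(\log(1/\delta_l))=\mathbf{O}(\log(\alpha_{l^\star}-l+3))$. Summing over $l=1,\ldots,l^\star$, the geometric series in the $O_b$ branch is dominated by its final term, giving $\mathbf{O}(3^{l^\star}\log(\alpha_{l^\star}))=\mathbf{O}(1/\sqrt{\alpha_{p_{\text{succ}}}})$ queries to $O_b$ after absorbing logarithmic factors into the geometric decay; the $O_A$ branch telescopes to $\mathbf{O}(\kappa\log^2(1/\sqrt{\alpha_{p_{\text{succ}}}})\log\log^2(1/\sqrt{\alpha_{p_{\text{succ}}}}))$. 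A union bound over $l=1,\ldots,l^\star$ controls the total failure probability by $\sum_{k\geq 3}1/k^2=\pi^2/6-5/4<1/2$, producing success probability strictly greater than $1/2$.

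The main obstacle will be rigorously verifying the uniform upper bound $\norm{\overline{\Pi_m\Pi_b}\widetilde{A}_m\ket{\psi_0}}\leq 3^l\norm{\overline{\Pi_m\Pi_b}A_m\cdots A_1\ket{\psi_0}}$ for premature values $l<l^\star$, since the absence of over-amplification at every intermediate stage underlies the inductive bookkeeping of \sec{dinv_deterministic} and must hold even when $l$ is strictly smaller than the value originally targeted. Fortunately, a smaller $l$ only sharpens the no-overshoot condition, so the induction from \sec{dinv_deterministic} carries over directly. A secondary subtlety is that the lower bound $\alpha_{p_{\text{succ}}}$ enters the algorithm only through $\alpha_{l^\star}$, which caps the iteration range; one must check that stopping strictly before reaching $\alpha_{l^\star}$ cannot yield a false positive, which follows because the amplitude-estimation accuracy is fixed at a constant strictly smaller than the gap width.
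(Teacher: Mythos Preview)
Your proposal is correct and follows essentially the same approach as the paper's own argument in \sec{dinv_est}: the exponential search over $l$, the gap $\frac{\sqrt{5}}{9c}$ versus $\frac{2\sqrt{5}}{45c}$, the failure schedule $\delta_l=1/(\alpha_{l^\star}-l+3)^2$, the union bound $\pi^2/6-5/4$, the per-iteration cost $\mathbf{O}(3^l\mathbf{Cost}(O_b)+\kappa l\log l\,\mathbf{Cost}(O_A))$, and the final passage from $p_{\text{succ,d-inv,bm}}$ to $p_{\text{succ}}$ via the multiplicative-approximation propositions all match. The only cosmetic difference is that the paper invokes \prop{multi_amp} (threshold sums) rather than \prop{multi_succ} (probabilities) when handling the erroneous GPE/branch-marking case, but both propositions are in play and your reasoning is sound.
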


\section{Solving linear systems}
\label{sec:lin}

We now consider the quantum linear system problem. In \sec{lin_inv}, we show that the problem can be solved by inverting the input matrix over the discretized inverse state. Combining with the tunable VTAA algorithm from the previous section, we provide a simplified quantum linear system algorithm with optimal queries to the initial state preparation and nearly optimal queries to the coefficient matrix block encoding. We provide a proof of the optimality in \sec{lin_lower}.

\subsection{Matrix inversion over discretized inverse state}
\label{sec:lin_inv}
Recall from \sec{dinv} that we can use Tunable VTAA to prepare the following state with accuracy $\epsilon$ and constant success probability
\begin{equation}
\begin{aligned}
    \frac{1}{\sqrt{p_{\text{succ,d-inv}}}}&\sum_{k=0}^{m-1}\sum_{\abs{\frac{\lambda_u}{\alpha_A}}\in\big[\frac{1}{3^{k+1}},\frac{1}{3^k}\big)}
    \left(\zeta_{k+1,u}\frac{3^{k+1}}{3^m}
    \ket{k}
    +\zeta_{k,u}\frac{3^{k}}{3^m}
    \ket{k-1}\right)
    \gamma_u\ket{\phi_u},\\
    p_{\text{succ,d-inv}}=&\sum_{k=0}^{m-1}\sum_{\abs{\frac{\lambda_u}{\alpha_A}}\in\big[\frac{1}{3^{k+1}},\frac{1}{3^k}\big)}\abs{\gamma_u}^2\left(\abs{\zeta_{k+1,u}}^2\frac{9^{k+1}}{9^m}
    +\abs{\zeta_{k,u}}^2\frac{9^k}{9^m}\right).
\end{aligned}
\end{equation}
This state differs from the solution state in that the reciprocal of eigenvalues are replaced by the discrete values $\frac{3^k}{3^m}$. To get the actual solution state, we perform a block-encoded matrix inversion using the following gadget.

\begin{lemma}[Block encoding inversion {\cite[Corollary 69]{Gilyen2018singular}}]
\label{lem:inv_block}
     Let $A$ be a matrix such that $A/\alpha_A$ is block encoded by $O_A$ with some normalization factor $\alpha_A\geq\norm{A}$.
     Then the operator
     \begin{equation}
         \frac{A^{-1}}{2\alpha_{A^{-1}}}
     \end{equation}
     can be block encoded with accuracy $\epsilon$ using
     \begin{equation}
        \mathbf{O}\left(\kappa\log\left(\frac{1}{\epsilon}\right)\right)
    \end{equation}
    queries to the controlled-$O_A$ and its inverse, where $\alpha_{A^{-1}}\geq\norm{A^{-1}}$ is a norm upper bound on the inverse matrix, and $\kappa=\alpha_A\alpha_{A^{-1}}$ is an upper bound on the spectral condition number.
\end{lemma}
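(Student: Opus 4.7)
The plan is to invoke Quantum Singular Value Transformation (QSVT) with a polynomial approximation of the reciprocal function, following the standard approach. Under the given assumptions $\alpha_A\geq\|A\|$ and $\alpha_{A^{-1}}\geq\|A^{-1}\|$, every nonzero singular value $\sigma$ of $A/\alpha_A$ lies in $[1/\kappa,1]$. It therefore suffices to construct an odd polynomial $p(x)$ that approximates the rescaled reciprocal $1/(2\kappa x)$ uniformly on the domain $[-1,-1/\kappa]\cup[1/\kappa,1]$ to error at most $\epsilon$, with $|p(x)|\leq 1$ on $[-1,1]$ so that QSVT is applicable.

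First, I would construct such a polynomial. A classical result that is by now standard (originating in the Childs--Kothari--Somma linear-system paper and refined in the QSVT framework) shows that $1/x$ can be approximated on $[1/\kappa,1]$ to uniform error $\epsilon'$ by a polynomial of degree $d=\mathbf{O}(\kappa\log(1/\epsilon'))$. One concrete route is to express $1/x$ via its integral representation, truncate a Jacobi-Anger / Chebyshev expansion, and then smoothly kill the remaining tail inside $(-1/\kappa,1/\kappa)$ using a low-degree rectangle-like polynomial. The prefactor $1/(2\kappa)$ guarantees the sup-norm constraint $|p(x)|\leq 1$ on the full interval $[-1,1]$, so no additional rescaling is needed.

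Next, I would apply the QSVT theorem of Gilyén--Su--Low--Wiebe. Given an odd polynomial $p$ of degree $d$ satisfying $\|p\|_{[-1,1]}\leq 1$, QSVT produces a block encoding of $p(A/\alpha_A)$ using $d$ applications of controlled-$O_A$ and its inverse, together with $d+1$ single-qubit phase rotations whose angles are determined by a classical precomputation from $p$. Choosing $\epsilon'=\mathbf{\Theta}(\epsilon)$, the resulting block encoding is $\epsilon$-close to $A^{-1}/(2\alpha_{A^{-1}})$ in operator norm, and the query count is $\mathbf{O}(\kappa\log(1/\epsilon))$ as claimed.

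The main obstacle is the polynomial approximation step: one must simultaneously control the degree, the sup-norm bound on $[-1,1]$, and the parity requirement imposed by QSVT. All of this is handled by now-standard constructions, so I would simply cite the polynomial approximation lemma together with the QSVT theorem. The remaining bookkeeping (propagating the polynomial approximation error to the block encoding error, and counting queries including those to the inverse block encoding) is routine given the QSVT framework.
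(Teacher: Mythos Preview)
Your proposal is correct and matches the standard QSVT-based proof from the cited reference. Note that the paper does not actually prove this lemma; it is stated as a citation of \cite[Corollary 69]{Gilyen2018singular}, and your sketch recapitulates exactly the argument given there.
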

In our case, we apply the block-encoded inversion of the input matrix with accuracy $\epsilon_{\text{blk}}$, controlled by the clock register:
\begin{equation}
    \sum_{k=0}^{m-1}\ketbra{k}{k}\otimes\frac{\Pi_{[\frac{1}{3^{k+2}},1)}A^{-1}\Pi_{[\frac{1}{3^{k+2}},1)}}{2\cdot \frac{3^{k+2}}{\alpha_A}},\qquad
    \Pi_{\mathcal{S}}=\sum_{\abs{\frac{\lambda_u}{\alpha_A}}\in\mathcal{S}}\ketbra{\phi_u}{\phi_u}.
\end{equation}
Here, the clock state $\ket{k}$ corresponds to eigenvalues of the input matrix within $\abs{\frac{\lambda_u}{\alpha_A}}\in[\frac{1}{3^{k+2}},\frac{1}{3^k})$. Thus to fully cover all possible eigenvalues, we perform the block-encoded matrix inversion over $[\frac{1}{3^{k+2}},1)$. 
This block encoding can be implemented with query complexity $\mathbf{O}\left(\kappa\log(1/\epsilon_{\text{blk}})\right)$, by using the $\ket{k}$ register to control the rotations in the quantum signal processing circuit~\cite{Low2016HamSim}, and the entire operator has accuracy $\epsilon_{\text{blk}}$ due to its block diagonal structure.
With a high probability, we then transform the state into
\begin{equation}
\begin{aligned}
    \frac{1}{\sqrt{p_{\text{succ,blk}}}}&\sum_{k=0}^{m-1}\sum_{\abs{\frac{\lambda_u}{\alpha_A}}\in\big[\frac{1}{3^{k+1}},\frac{1}{3^k}\big)}
    \left(\zeta_{k+1,u}\frac{1}{3^m}
    \frac{\alpha_A}{2\cdot3}
    \frac{1}{\lambda_u}
    \ket{k}
    +\zeta_{k,u}\frac{1}{3^m}
    \frac{\alpha_A}{2\cdot3}
    \frac{1}{\lambda_u}
    \ket{k-1}\right)
    \gamma_u\ket{\phi_u},\\
    p_{\text{succ,blk}}=&\sum_{k=0}^{m-1}\sum_{\abs{\frac{\lambda_u}{\alpha_A}}\in\big[\frac{1}{3^{k+1}},\frac{1}{3^k}\big)}\abs{\frac{1}{3^m}
    \frac{\alpha_A}{2\cdot3}
    \frac{1}{\lambda_u}}^2
    =\frac{p_{\text{succ}}}{36}.
\end{aligned}
\end{equation}
This state can be produced with probability close to unity by another amplitude amplification.
Finally, we uncompute the clock register by reversing the GPE (without performing rotations $\ket{\mathrm{good}}\mapsto\frac{3^j}{3^m}\ket{\mathrm{good}}+\sqrt{1-\frac{9^j}{9^m}}\ket{\mathrm{bad}}$ or VTAA), obtaining the normalized solution state
\begin{equation}
    \frac{\sum_{u}\frac{1}{\lambda_u}\gamma_u\ket{\phi_u}}{\sqrt{\sum_{u}\abs{\frac{\gamma_u}{\lambda_u}}^2}}
    =\frac{A^{-1}\ket{b}}{\norm{A^{-1}\ket{b}}}.
\end{equation}
As per \prop{multi_succ},
\begin{equation*}
    p_{\text{succ,d-inv,bm}}=\mathbf{\Theta}\left(p_{\text{succ,d-inv},m}\right)=\mathbf{\Theta}\left(p_{\text{succ,d-inv}}\right)=\mathbf{\Theta}(p_{\text{succ,d-inv},1})=\mathbf{\Theta}\left(p_{\text{succ}}\right)=\mathbf{\Theta}\left(p_{\text{succ,blk}}\right),
\end{equation*}
so we only need at most constant rounds of amplitude amplifications beyond the preparation of discretized inverse state by Tunable VTAA. Setting $\epsilon_{\text{blk}}=\mathbf{\Theta}(\epsilon)$, we have thus established the main algorithm.

\begin{theorem}[Quantum linear system algorithm with optimal initial state preparation]
\label{thm:qls_opt_init}
Let $A$ be the coefficient matrix such that $A/\alpha_A$ is block encoded by $O_A$ with normalization factor $\alpha_A\geq\norm{A}$. Let $\ket{b}$ be the initial state prepared by oracle $O_b$.
Suppose that the solution norm $\norm{A^{-1}\ket{b}}$, and hence the success amplitude $\sqrt{p_{\text{succ}}}=\frac{\norm{A^{-1}\ket{b}}}{\alpha_{A^{-1}}}$, can be estimated to a constant multiplicative accuracy.
Then the quantum state 
\begin{equation}
    \frac{A^{-1}\ket{b}}{\norm{A^{-1}\ket{b}}}
\end{equation}
can be prepared to accuracy $\epsilon$ and success probability $>\frac{1}{2}$ with query complexity
\begin{equation}
    \mathbf{O}\left(\frac{1}{\sqrt{p_{\text{succ}}}}\mathbf{Cost}\left(O_b\right)
    +\kappa
    \log\left(\frac{1}{\sqrt{p_{\text{succ}}}}\right)
    \log\left(\frac{\log\left(\frac{1}{\sqrt{p_{\text{succ}}}}\right)}{\epsilon}\right)
    \mathbf{Cost}\left(O_A\right)\right),
\end{equation}
where $\alpha_{A^{-1}}\geq\norm{A^{-1}}$ is a norm upper bound on the inverse matrix, and
$\kappa=\alpha_A\alpha_{A^{-1}}$ is an upper bound on the spectral condition number.

The algorithm uses a nested amplitude amplification with deterministic schedule \eq{deterministic_schedule2} and pre-merging parameter $l=\mathbf{\Theta}\left(\log_3\left(\frac{1}{\sqrt{p_{\text{succ}}}}\right)\right)$.
\end{theorem}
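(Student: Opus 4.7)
The plan is to assemble the theorem by combining three pieces already developed in the paper: (i) Tunable VTAA with the deterministic schedule of Section~\ref{sec:dinv_deterministic}, which prepares the normalized discretized inverse state at optimal $O_b$ cost; (ii) a clock-controlled block-encoded inversion via \lem{inv_block}, which replaces the coarse factor $3^k/3^m$ by the true $1/\lambda_u$; and (iii) a constant number of standard amplitude amplifications plus uncomputation of the clock register to recover a normalized $A^{-1}\ket{b}/\norm{A^{-1}\ket{b}}$.

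First, I would invoke the analysis of \sec{dinv_deterministic} and \sec{dinv_err} with pre-merging parameter $l=\mathbf{\Theta}(\log_3(1/\sqrt{p_{\text{succ}}}))$ and the deterministic schedule~\eq{deterministic_schedule2}. Using the constant multiplicative estimate of $\norm{A^{-1}\ket{b}}$ (equivalently of $p_{\text{succ}}$ and hence $p_{\text{succ,d-inv}}$ by \prop{multi_succ}), we choose the constant $c\gtrapprox 1$ and the error budgets $\epsilon_{\text{gpe}}=\mathbf{\Theta}(\epsilon)$, $\epsilon_{\text{bm}}=\mathbf{\Theta}(\sqrt{p_{\text{succ}}}\epsilon)$ as prescribed, so that the discretized inverse state $\psi_{\text{d-inv}}/\sqrt{p_{\text{succ,d-inv}}}$ is produced to accuracy $\epsilon$ with constant success probability, at total cost
\begin{equation}
    \mathbf{O}\left(\frac{1}{\sqrt{p_{\text{succ}}}}\mathbf{Cost}(O_b)
    +\kappa\log\left(\frac{1}{\sqrt{p_{\text{succ}}}}\right)\log\left(\frac{\log(1/\sqrt{p_{\text{succ}}})}{\epsilon}\right)\mathbf{Cost}(O_A)\right).
\end{equation}

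Second, I apply the clock-controlled block encoding $\sum_k \ketbra{k}{k}\otimes \tfrac{\Pi_{[1/3^{k+2},1)}A^{-1}\Pi_{[1/3^{k+2},1)}}{2\cdot 3^{k+2}/\alpha_A}$ to accuracy $\epsilon_{\text{blk}}=\mathbf{\Theta}(\epsilon)$, which by \lem{inv_block} and QSP with signal-processing angles loaded from the clock register costs only $\mathbf{O}(\kappa\log(1/\epsilon))$ queries to $O_A$; this is asymptotically absorbed into the $O_A$ cost above. The action on the discretized inverse state converts $3^k/3^m$ into $\alpha_A/(6\lambda_u)$ inside each eigenspace interval, yielding a state whose success amplitude equals $\sqrt{p_{\text{succ}}/36}$ on the ``good'' flag. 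A constant number $\mathbf{O}(1)$ of standard amplitude amplifications (a property guaranteed by \prop{multi_succ}, since the intermediate success probabilities are $\mathbf{\Theta}(1)$-multiplicative approximations of each other) then boosts the success probability above $1/2$. Uncomputing the clock register by reversing the GPE gadgets (without the Pauli-$Y$ rotations or any further VTAA) and undoing the block-encoding isometry $G$ yields the normalized solution $A^{-1}\ket{b}/\norm{A^{-1}\ket{b}}$.

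The main subtlety is verifying that the error survives these extra steps at the claimed accuracy. The key observation, already exploited in \sec{dinv_err}, is that projecting onto the potentially-good flag subspace shrinks the GPE error by the factor $\mathbf{\Theta}(\sqrt{p_{\text{succ}}})$ that is exactly cancelled when we renormalize the post-selected state via~\cite[Lemma 24]{QEVP}. The block-encoded inversion is applied block-diagonally on the clock register and composes additively in operator norm, so an $\epsilon_{\text{blk}}=\mathbf{\Theta}(\epsilon)$ choice preserves the overall $\epsilon$ accuracy. Since neither the controlled inversion nor the constant-round final amplification changes the asymptotic query counts, combining the bounds yields exactly the complexity in the theorem statement, completing the proof.
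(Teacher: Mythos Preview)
Your proposal is correct and follows essentially the same approach as the paper's own proof in \sec{lin_inv}: prepare the normalized discretized inverse state via Tunable VTAA with the deterministic schedule, apply the clock-controlled block-encoded inversion from \lem{inv_block}, boost with $\mathbf{O}(1)$ rounds of amplitude amplification (justified by \prop{multi_succ}), and uncompute the clock register by reversing GPE. The only minor imprecision is your phrasing ``success amplitude equals $\sqrt{p_{\text{succ}}/36}$'': since you start from the already-normalized discretized inverse state, the relevant success probability of the block-encoded inversion step is the ratio $p_{\text{succ,blk}}/p_{\text{succ,d-inv}}=\mathbf{\Theta}(1)$, which is precisely why only constant further amplification is needed; your conclusion is correct regardless.
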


\subsection{Lower bound}
\label{sec:lin_lower}
Now, we prove that our query complexity to the initial state preparation is optimal, by showing how the quantum search problem~\cite{Grover} can be solved by a linear system algorithm. We note that a similar lower bound has been previously derived in~\cite{PRXQuantum.2.010315}.

Specifically, consider the $d$-dimensional initial state
\begin{equation}
    \ket{b}=\frac{1}{\sqrt{d}}\left(\sum_{j\neq w}\ket{j}-\ket{w}\right),
\end{equation}
where $\ket{w}\in\{\ket{0},\ket{1},\ldots,\ket{d-1}\}$ is an arbitrary unknown basis state. This state can be (and only be) prepared by making $1$ query to the Grover oracle with the uniform superposition state. We then perform an orthogonal decomposition of $\ket{b}$ along the uniform superposition state to get
\begin{equation}
    \ket{b}=\frac{d-2}{d}\ket{\phi_{ZX+}}+\frac{2\sqrt{d-1}}{d}\ket{\phi_{ZX-}},
\end{equation}
where using the notation of~\append{qubitization_amp}
\begin{equation}
    \ket{\phi_{ZX+}}=\frac{1}{\sqrt{d}}\sum_{j}\ket{j},\qquad
    \ket{\phi_{ZX-}}=\sum_{j\neq w}\frac{1}{\sqrt{(d-1)d}}\ket{j}+\sqrt{\frac{d-1}{d}}\ket{w}.
\end{equation}

Now, we define the coefficient matrix
\begin{equation}
    A=\frac{1}{\sqrt{d}}\left(I-\ketbra{\phi_{ZX+}}{\phi_{ZX+}}\right)+\ketbra{\phi_{ZX+}}{\phi_{ZX+}}.
\end{equation}
Note that $A$ can be block encoded with normalization factor $\alpha_A=1$ and zero query to the Grover oracle. Then the inverse matrix
\begin{equation}
    A^{-1}=\sqrt{d}\left(I-\ketbra{\phi_{ZX+}}{\phi_{ZX+}}\right)+\ketbra{\phi_{ZX+}}{\phi_{ZX+}}
\end{equation}
has spectral norm
\begin{equation}
    \norm{A^{-1}}=\sqrt{d}=\alpha_{A^{-1}}.
\end{equation}

We now invoke the quantum linear system algorithm, obtaining a state within Euclidean distance $\epsilon_{\text{lin}}$ to the solution $\frac{A^{-1}\ket{b}}{\norm{A^{-1}\ket{b}}}$.
\addtocounter{equation}{1}%
Here,
\begin{equation}
\begin{aligned}
    A^{-1}\ket{b}
    &=2\sqrt{\frac{d-1}{d}}\ket{\phi_{ZX-}}+\frac{d-2}{d}\ket{\phi_{ZX+}}\\
    &=\sum_{j\neq w}\frac{d-2+2\sqrt{d}}{d\sqrt{d}}\ket{j}
    -\frac{2(d-1)\sqrt{d}-d+2}{d\sqrt{d}}\ket{w},
\end{aligned}
\end{equation}
which implies
\begin{equation}
    \norm{A^{-1}\ket{b}}
    =\sqrt{\frac{5d^2-8d+4}{d^2}}.
\end{equation}
So the inverse success amplitude is
\begin{equation}
    \frac{\norm{A^{-1}}}{\norm{A^{-1}\ket{b}}}
    =\sqrt{d}\sqrt{\frac{d^2}{5d^2-8d+4}}
    =\mathbf{\Theta}\left(\sqrt{d}\right).
\end{equation}

Meanwhile, the absolute value of amplitude of the desired $\ket{w}$ is lower bounded by
\begin{equation}
\begin{aligned}
    \abs{\frac{\bra{w}A^{-1}\ket{b}}{\norm{A^{-1}\ket{b}}}}-\epsilon_{\text{lin}}
    &\geq\abs{\frac{\frac{2(d-1)\sqrt{d}+d-2}{d\sqrt{d}}}{\sqrt{\frac{5d^2-8d+4}{d^2}}}}-\epsilon_{\text{lin}}
    \geq\frac{\frac{2(d-1)\sqrt{d}-2}{d\sqrt{d}}}{\sqrt{\frac{5d^2+4}{d^2}}}-\epsilon_{\text{lin}}\\
    &\geq\frac{\frac{2(d-2)\sqrt{d}}{d\sqrt{d}}}{\sqrt{\frac{5(d+1)^2}{d^2}}}-\epsilon_{\text{lin}}
    =\frac{2}{\sqrt{5}}\frac{d-2}{d+1}-\epsilon_{\text{lin}}\\
    &=\frac{2}{\sqrt{5}}\left(1-\frac{3}{d+1}\right)-\epsilon_{\text{lin}}
    \geq\frac{13}{8\sqrt{5}}-\epsilon_{\text{lin}}
\end{aligned}
\end{equation}
when $d\geq 15$. Let us choose $\epsilon_{\text{lin}}=\frac{1}{60}$,
\addtocounter{equation}{1}%
so the probability of getting outcome $\ket{w}$ is at least $\left(\frac{13}{8\sqrt{5}}-\frac{1}{60}\right)^2
    >0.504$.
\addtocounter{equation}{1}%
This thus solves the Grover search problem, and we conclude that the query complexity to the initial state is at least
\begin{equation}
    \mathbf{\Omega}\left(\sqrt{d}\right)
    =\mathbf{\Omega}\left(\frac{\norm{A^{-1}}}{\norm{A^{-1}\ket{b}}}\right).
\end{equation}

\begin{theorem}[Lower bound on initial state preparation cost]
\label{thm:qls_lower}
For any $0<p<1$, there exists a linear system with coefficient matrix $A$ and initial state $\ket{b}$ (prepared by $O_b$) satisfying
\begin{equation}
    \frac{\norm{A^{-1}\ket{b}}^2}{\norm{A^{-1}}^2}=\mathbf{\Theta}(p),
\end{equation}
such that
\begin{equation}
\begin{aligned}
    &\min\bigg\{g\in\mathbb Z_{\geq0}\ \Big|\
    \exists\text{ a quantum algorithm }V\text{ making }g\text{ queries to }O_b,\\
    &\qquad\qquad\qquad\quad\norm{V\ket{0}-\frac{A^{-1}\ket{b}}{\norm{A^{-1}\ket{b}}}}=\mathbf{O}(1)\text{ sufficiently small}\bigg\}
    =\mathbf{\Omega}\left(\frac{1}{\sqrt{p}}\right).
\end{aligned}
\end{equation}
\end{theorem}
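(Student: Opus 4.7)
The plan is to prove the lower bound via a reduction from unstructured search: I will construct a family of linear systems, parameterized by a dimension $d$, whose solution is so sharply concentrated on a Grover-marked basis state $\ket{w}$ that any quantum linear system solver automatically solves Grover's problem on $d$ elements. Since Grover search requires $\mathbf{\Omega}(\sqrt{d})$ queries to the marking oracle, and since in my construction the marking oracle appears \emph{only} inside the initial-state preparation $O_b$, this forces $\mathbf{\Omega}(\sqrt{d})$ queries to $O_b$, and I will arrange $d = \mathbf{\Theta}(1/p)$.

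The construction I have in mind is to take the initial state $\ket{b} \propto \sum_{j \neq w}\ket{j} - \ket{w}$, which is obtained by a single Grover-oracle query acting on the uniform superposition $\ket{+}$, and to take the coefficient matrix $A = \frac{1}{\sqrt{d}}\bigl(I - \ketbra{+}{+}\bigr) + \ketbra{+}{+}$. The key feature of this $A$ is that it can be block encoded with normalization factor $\alpha_A = 1$ using \emph{zero} queries to the Grover oracle, because building $\ketbra{+}{+}$ requires only Hadamards and a single reflection about $\ket{0}$. Then $\norm{A^{-1}} = \sqrt{d}$, and I will decompose $\ket{b} = \tfrac{d-2}{d}\ket{+} + \tfrac{2\sqrt{d-1}}{d}\ket{-_b}$ along the $\ket{+}$ axis, so that $A^{-1}\ket{b}$ has norm $\Theta(1)$ while $\norm{A^{-1}}/\norm{A^{-1}\ket{b}} = \mathbf{\Theta}(\sqrt{d})$; setting $d = \Theta(1/p)$ yields the desired success amplitude.

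Next I will show that the normalized solution $A^{-1}\ket{b}/\norm{A^{-1}\ket{b}}$ has a constant overlap, strictly bounded away from $1/\sqrt{2}$, with the marked state $\ket{w}$. A direct computation gives an amplitude on $\ket{w}$ proportional to $2(d-1)\sqrt{d} - d + 2$, which behaves like $\tfrac{2}{\sqrt{5}}$ for large $d$; picking a concrete threshold (for example $d \geq 15$ and accuracy $\epsilon_{\text{lin}} = 1/60$) should give a measurement probability strictly above $1/2$. Since any algorithm producing a state within Euclidean distance $\epsilon_{\text{lin}}$ of the normalized solution then identifies $w$ with success probability $> 1/2$, such an algorithm yields a bounded-error Grover solver that makes exactly as many Grover-oracle queries as the QLS algorithm makes to $O_b$. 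The known $\mathbf{\Omega}(\sqrt{d}) = \mathbf{\Omega}(1/\sqrt{p})$ lower bound for unstructured search then finishes the proof.

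The main obstacle I anticipate is not the reduction itself but pinning down the constants: I need to simultaneously (i) verify that the amplitude on $\ket{w}$ stays bounded away from $1/\sqrt{2}$ for all sufficiently large $d$, (ii) absorb the Euclidean-distance error $\epsilon_{\text{lin}}$ into that gap, and (iii) confirm that the construction really does achieve $\norm{A^{-1}\ket{b}}^2/\norm{A^{-1}}^2 = \mathbf{\Theta}(p)$ for the desired $p$, rather than a slightly shifted value. Each of these is an elementary estimate, but getting them to line up for an arbitrary prescribed $p \in (0,1)$ (rather than just small $p$) requires a careful choice of the free parameter $d$ together with a uniform bound on the overlap expression; I will handle the regime of moderate $p$ by fixing $d$ to a small explicit constant if necessary, since the claimed lower bound is trivially $\mathbf{\Omega}(1)$ when $p = \Theta(1)$.
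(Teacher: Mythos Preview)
Your proposal is correct and follows essentially the same approach as the paper's own proof: the same Grover-based construction with $\ket{b}=\frac{1}{\sqrt{d}}\bigl(\sum_{j\neq w}\ket{j}-\ket{w}\bigr)$ and $A=\frac{1}{\sqrt{d}}(I-\ketbra{+}{+})+\ketbra{+}{+}$, the same decomposition along $\ket{+}$, and even the same concrete constants $d\geq 15$, $\epsilon_{\text{lin}}=1/60$ to push the success probability above $1/2$. Your added remark about handling $p=\Theta(1)$ by the trivial $\mathbf{\Omega}(1)$ bound is a sensible completion that the paper leaves implicit.
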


\section{Block preconditioning}
\label{sec:precond}

In this section, we introduce the technique of \emph{block preconditioning}, and describe its algorithmic applications. Specifically, we present a simple quantum linear system solver in \sec{precond_qls} with optimal queries to the coefficient matrix block encoding by choosing the initial state itself as the preconditioner. Combining block preconditioning with our main algorithms \thm{sol_est} and \thm{qls_opt_init}, we then show how to reduce the cost of initial state preparation in solving differential equations (\sec{precond_diff_eq}), estimating real eigenvalues of non-normal matrices (\sec{precond_est}), transforming matrices with real spectra and preparing their ground states (\sec{precond_transform}). Additionally, we apply block preconditioning to improve the block encoded version of quantum eigenvalue transformer in \sec{precond_transform_blk}.

\subsection{Quantum linear system solver}
\label{sec:precond_qls}
Our quantum linear system algorithm in \thm{qls_opt_init} has an optimal query complexity of the initial state preparation, and a nearly optimal cost of the coefficient block encoding. We now show that it is possible to solve a preconditioned quantum linear system problem, producing the same solution state but making an optimal number of queries to the block encoding oracle.

Specifically, we choose the preconditioner to be the initial state $\ket{b}$ itself and define the scaling matrix
\begin{equation}
\label{eq:qls_precond_operator}
    S=s\ketbra{b}{b}+\left(I-\ketbra{b}{b}\right),\qquad
    S^{-1}=\frac{1}{s}\ketbra{b}{b}+\left(I-\ketbra{b}{b}\right),\qquad
    0<s<1.
\end{equation}
Then solving the preconditioned quantum linear system problem would produce the same solution state because
\begin{equation}
    \frac{(SA)^{-1}\ket{b}}{\norm{(SA)^{-1}\ket{b}}}
    =\frac{A^{-1}\ket{b}}{\norm{A^{-1}\ket{b}}}.
\end{equation}
We can block encode $S$ with normalization factor $1$ using $2$ queries to the initial state oracle $O_b$ through the linear combination
\begin{equation}
    S=O_b\left(\frac{1-s}{2}(I-2\ketbra{0}{0})+\frac{1+s}{2}I\right)O_b^\dagger.
\end{equation}
This then yields a block encoding of $SA/\alpha_A$ with normalization factor $\alpha_A$ same as that of the input matrix.
Meanwhile, the inverse matrix has the norm bound
\begin{equation}
\begin{aligned}
    \norm{(SA)^{-1}}&=\norm{A^{-1}S^{-1}}=\norm{\frac{1}{s}A^{-1}\ketbra{b}{b}+A^{-1}\left(I-\ketbra{b}{b}\right)}\\
    &\leq\sqrt{\frac{\norm{A^{-1}\ket{b}}^2}{s^2}+\norm{A^{-1}\left(I-\ketbra{b}{b}\right)}^2}
    \leq\sqrt{\frac{\norm{A^{-1}\ket{b}}^2}{s^2\norm{A^{-1}}^2}+1}\norm{A^{-1}},
\end{aligned}
\end{equation}
whereas the solution norm becomes
\begin{equation}
    \norm{(SA)^{-1}\ket{b}}
    =\frac{1}{s}\norm{A^{-1}\ket{b}}.
\end{equation}

Suppose we have a constant multiplicative approximation of the solution norm $\norm{A^{-1}\ket{b}}$, say
\begin{equation}
\label{eq:qls_precond_multi}
    \frac{t}{c}<\norm{A^{-1}\ket{b}}<ct
    \quad\Leftrightarrow\quad
    \frac{\norm{A^{-1}\ket{b}}}{c}< t<c\norm{A^{-1}\ket{b}}
\end{equation}
with some constant $c>1$. Then we choose
\begin{equation}
\label{eq:qls_precond_parameter}
    s=\frac{t}{c\alpha_{A^{-1}}},\qquad
    \frac{\norm{A^{-1}\ket{b}}}{c^2\alpha_{A^{-1}}}<s<\frac{\norm{A^{-1}\ket{b}}}{\alpha_{A^{-1}}}\leq\frac{\norm{A^{-1}\ket{b}}}{\norm{A^{-1}}}\leq 1.
\end{equation}
This choice gives the norm bound on the preconditioned inverse matrix
\begin{equation}
    \norm{(SA)^{-1}}\leq\sqrt{\frac{\norm{A^{-1}\ket{b}}^2}{s^2}+\norm{A^{-1}}^2}
    \leq\sqrt{c^4+1}\alpha_{A^{-1}},
\end{equation}
and the preconditioned solution norm
\begin{equation}
    \norm{(SA)^{-1}\ket{b}}
    =\frac{1}{s}\norm{A^{-1}\ket{b}}
    =\frac{c\alpha_{A^{-1}}\norm{A^{-1}\ket{b}}}{t},
\end{equation}
which implies the success amplitude after block preconditioning
\begin{equation}
    \sqrt{p_{\text{succ}}}
    =\frac{\norm{(SA)^{-1}\ket{b}}}{\sqrt{c^4+1}\alpha_{A^{-1}}}
    =\frac{c\norm{A^{-1}\ket{b}}}{\sqrt{c^4+1}t}\geq\frac{1}{\sqrt{c^4+1}}.
\end{equation}

Thus, by invoking the QSVT-based linear system solver (\lem{inv_block}), we can obtain the solution state to accuracy $\epsilon$ and constant success probability at least $\frac{1}{c^4+1}-\mathbf{O}(\epsilon)$ with query complexity
\begin{equation}
    \mathbf{O}\left(\kappa\log\left(\frac{1}{\epsilon}\right)\mathbf{Cost}(O_b)
    +\kappa\log\left(\frac{1}{\epsilon}\right)\mathbf{Cost}(O_A)\right).
\end{equation}
Moreover, this approach does not require padding the target linear system, and appears to be conceptually even simpler than the kernel reflection method~\cite{Dalzell2024shortcut} which in turn simplifies the discrete adiabatic method~\cite{Costa2021linearsystems}.

\begin{theorem}[Quantum linear system algorithm with optimal coefficient block encoding]
\label{thm:qls_opt_coeff}
Let $A$ be the coefficient matrix such that $A/\alpha_A$ is block encoded by $O_A$ with normalization factor $\alpha_A\geq\norm{A}$. Let $\ket{b}$ be the initial state prepared by oracle $O_b$.
Suppose that the solution norm $\norm{A^{-1}\ket{b}}$, and hence the success amplitude $\sqrt{p_{\text{succ}}}=\frac{\norm{A^{-1}\ket{b}}}{\alpha_{A^{-1}}}$, can be estimated to a constant multiplicative accuracy.
Then the quantum state 
\begin{equation}
    \frac{A^{-1}\ket{b}}{\norm{A^{-1}\ket{b}}}
\end{equation}
can be prepared to accuracy $\epsilon$ and success probability $>\frac{1}{2}$ with query complexity
\begin{equation}
    \mathbf{O}\left(\kappa\log\left(\frac{1}{\epsilon}\right)\mathbf{Cost}(O_b)
    +\kappa\log\left(\frac{1}{\epsilon}\right)\mathbf{Cost}(O_A)\right),
\end{equation}
where $\alpha_{A^{-1}}\geq\norm{A^{-1}}$ is a norm upper bound on the inverse matrix, and
$\kappa=\alpha_A\alpha_{A^{-1}}$ is an upper bound on the spectral condition number.

The algorithm solves a preconditioned linear system specified by the scaling operator \eq{qls_precond_operator} and parameter \eq{qls_precond_multi} and \eq{qls_precond_parameter}.
\end{theorem}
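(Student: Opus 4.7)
The plan is to formalize the block preconditioning construction already sketched immediately before the theorem statement. First, I would define the scaling operator $S$ as in \eqref{eq:qls_precond_operator} with parameter $s$ to be chosen, and verify the key algebraic fact that $(SA)^{-1}\ket{b} = A^{-1}S^{-1}\ket{b} = (1/s)A^{-1}\ket{b}$, so that after renormalization the solution to the preconditioned system $SAx = Sb$ coincides with $A^{-1}\ket{b}/\norm{A^{-1}\ket{b}}$. Next, I would show that $S$ itself admits a block encoding with normalization $1$ using exactly two queries to $O_b$ and $O_b^\dagger$, via the linear combination $S = O_b\bigl(\tfrac{1-s}{2}(I - 2\ketbra{0}{0}) + \tfrac{1+s}{2}I\bigr)O_b^\dagger$, since $\ketbra{b}{b} = O_b \ketbra{0}{0} O_b^\dagger$ and the bracketed operator is itself a convex combination of reflections and the identity. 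Multiplying block encodings then gives a block encoding of $SA/\alpha_A$ with the same normalization $\alpha_A$ as the input, consuming one query to $O_A$ and $\mathbf{O}(1)$ queries to $O_b$.

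The heart of the argument is the norm analysis. Given the multiplicative estimate \eqref{eq:qls_precond_multi} of $\norm{A^{-1}\ket{b}}$, I would set $s$ according to \eqref{eq:qls_precond_parameter}, i.e.\ $s = t/(2\alpha_{A^{-1}})$, and then directly bound
\begin{equation}
    \norm{(SA)^{-1}} \le \sqrt{\tfrac{\norm{A^{-1}\ket{b}}^2}{s^2} + \norm{A^{-1}}^2} \le \sqrt{17}\,\alpha_{A^{-1}},
\end{equation}
which shows the preconditioned problem has condition number still $\mathbf{O}(\kappa)$. Simultaneously, the preconditioned solution norm satisfies $\norm{(SA)^{-1}\ket{b}} = \norm{A^{-1}\ket{b}}/s = \mathbf{\Theta}(\alpha_{A^{-1}})$, so the associated success amplitude is bounded below by an absolute constant, $\sqrt{p_{\text{succ}}} \ge 1/\sqrt{17}$.

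With these two bounds in hand, the final step is an invocation of the block-encoded matrix inversion, \lem{inv_block}, applied to $SA/\alpha_A$ with upper bound $\sqrt{17}\alpha_{A^{-1}}$ on $\norm{(SA)^{-1}}$. This produces a block encoding of $(SA)^{-1}/(2\sqrt{17}\alpha_{A^{-1}})$ to accuracy $\epsilon$ using $\mathbf{O}(\kappa\log(1/\epsilon))$ queries to the block encoding of $SA$, hence $\mathbf{O}(\kappa\log(1/\epsilon))$ queries to each of $O_A$ and $O_b$. Applying this block encoding to $\ket{b}$ (one further $O_b$ query) and performing $\mathbf{O}(1)$ rounds of fixed-point amplitude amplification to boost the constant success probability close to unity then yields the target state to accuracy $\epsilon$, with the error tracking that of \lem{inv_block} after scaling the bound through the amplification by the standard argument from~\cite[Lemma 24]{QEVP}.

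The only nontrivial step I anticipate is bookkeeping the error and the amplification loss: the $\mathbf{O}(1)$ rounds of amplitude amplification amplify the error in the block-encoded inverse by a constant factor, but since the success amplitude is bounded below by a universal constant there is no dependence on $\kappa$ or $p_{\text{succ}}$ accumulating in the $\log(1/\epsilon)$ accuracy parameter, so the overall complexity stays at $\mathbf{O}(\kappa\log(1/\epsilon))$ as claimed. Everything else is a direct combination of the preconditioning identity, the linear-combination block encoding of $S$, and the standard QSVT inversion of \lem{inv_block}.
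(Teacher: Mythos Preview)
Your proposal is correct and follows essentially the same route as the paper: define the scaling operator $S$ with $\Pi_{\text{cond}}=\ketbra{b}{b}$, block encode it via the linear-combination-of-reflections identity using two queries to $O_b$, set $s=t/(2\alpha_{A^{-1}})$ from the multiplicative estimate, verify $\norm{(SA)^{-1}}\le\sqrt{17}\,\alpha_{A^{-1}}$ and $\sqrt{p_{\text{succ}}}\ge 1/\sqrt{17}$, and then invoke \lem{inv_block} followed by $\mathbf{O}(1)$ rounds of amplitude amplification. The paper's argument is exactly this, presented in the paragraphs preceding the theorem statement.
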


\subsection{Quantum differential equation solver}
\label{sec:precond_diff_eq}
Differential equations arise naturally in a broad range of scientific disciplines including engineering, physics, economics, and biology. However, classical differential equation solvers can struggle to handle problems of large dimensions, which motivates the development of quantum algorithms. To be concrete, consider the system of first-order linear differential equations
\begin{equation}
    \frac{\mathrm{d}}{\mathrm{d}t}x(t)=Ax(t),\qquad
    x(0)=b,
\end{equation}
whose solution is given formally by
\begin{equation}
    x(t)=e^{tA}b.
\end{equation}
Here, we assume the coefficient matrix $A/\alpha_A$ is block encoded by $O_A$ with normalization factor $\alpha_A\geq\norm{A}$ and the initial state $\ket{b}$ is prepared through the oracle $O_{b}$.

Many previous quantum differential equation algorithms proceed by recasting the problem as solving a system of linear equations, and then solving the recast problem using a quantum linear system solver. However, such algorithms have query complexity of $O_{b}$ comparable to that of $O_{A}$, and are thus inefficient when preparing initial states incurs a substantial cost. We show that the cost of initial state preparation can be lowered using our \thm{qls_opt_init} and block preconditioning, nearly matching or outperforming the performance of alternative solvers for differential equations, and attaining the query complexity lower bound established in~\cite{Fang2023timemarchingbased}. For illustration purposes, we focus on the differential equation solvers implementing the truncated Taylor series~\cite{Berry2017Differential}, although similar improvements can be achieved for other linear-system-based solvers, such as the one in~\cite{QEVP} based on the truncated Faber series and quantum eigenvalue transformation (\sec{precond_transform}).

In the Taylor-series algorithm, the coefficient matrix is given by
\begin{equation}
\begin{aligned}
    &C_{n,k,p}\left(\frac{A}{\alpha_A}\right)\\
    &=
    \sum_{i=0}^{n-1}\left(\sum_{j=0}^k\ketbra{i(k+1)+j}{i(k+1)+j}\otimes I-\sum_{j=1}^k\ketbra{i(k+1)+j}{i(k+1)+j-1}\otimes\frac{A}{j\alpha_A}\right)\\
    &\quad-\sum_{i=0}^{n-1}\sum_{j=0}^k\ketbra{(i+1)(k+1)}{i(k+1)+j}\otimes I\\
    &\quad+\left(\sum_{j=0}^p\ketbra{n(k+1)+j}{n(k+1)+j}\otimes I-\sum_{j=1}^p\ketbra{n(k+1)+j}{n(k+1)+j-1}\otimes I\right),
\end{aligned}
\end{equation}
whereas the initial state is $\ket{0}\ket{b}$. Here, the parameter $n$ denotes the total number of time steps in the algorithm, $k$ denotes the Taylor truncate order of the evolution operator within each step, $p$ refers to the additional padding steps required to boost the success probability, with the successful outcomes labeled by $n(k+1),\ldots,n(k+1)+p$. We set $p=n=\mathbf{\Theta}(\alpha_At)$. Then after running the quantum linear system solver and measuring the ancilla register of the output state, we get the outcomes $n(k+1),\ldots,n(k+1)+p$ with probability at least 
\begin{equation}
    \mathbf{\Omega}\left(\frac{\norm{e^{tA}\ket{b}}}{\max_{0\leq\tau\leq t}\norm{e^{\tau A}\ket{b}}}\right),
\end{equation}
which can then be boosted to unity by amplitude amplification.
The resulting state has error at most $\mathbf{O}\left(\frac{\alpha_At}{k!}\right)$, which implies the choice of $k=\mathbf{O}\left(\log\left(\frac{\alpha_At}{\epsilon}\right)\right)$ to achieve a target accuracy $\epsilon$.
Meanwhile, the coefficient matrix satisfies 
\begin{equation}
    \norm{C_{n,k,p}\left(\frac{A}{\alpha_A}\right)}=\mathbf{O}\left(\sqrt{k}\right),\qquad
    \norm{C_{n,k,p}^{-1}\left(\frac{A}{\alpha_A}\right)}=\mathbf{O}\left(\max_{0\leq\tau\leq t}\norm{e^{\tau A}}\sqrt{k}n\right).
\end{equation}
With the quantum linear system solver~\cite{Costa2021linearsystems}, this method has a query complexity of
\begin{small}
\begin{equation}
\newmaketag
    \mathbf{O}\left(\frac{\max_{0\leq\tau\leq t}\norm{e^{\tau A}\ket{b}}}{\norm{e^{tA}\ket{b}}}
    \max_{0\leq\tau\leq t}\norm{e^{\tau A}}
    \alpha_At\log\left(\frac{\alpha_At}{\epsilon}\right)
    \log\left(\frac{\max_{0\leq\tau\leq t}\norm{e^{\tau A}\ket{b}}}{\norm{e^{tA}\ket{b}}\epsilon}\right)
    \left(\mathbf{Cost}\left(O_{b}\right)+\mathbf{Cost}\left(O_{A}\right)\right)\right).
\end{equation}
\end{small}%
The state-of-the-art result~\cite{Krovi2023improvedquantum,BerryCosta22} has a slightly better query complexity of $O_{b}$ by shaving off the logarithmic factor $\log\left(\frac{\alpha_At}{\epsilon}\right)$.

To improve over this result, we perform the following block preconditioning. We choose the preconditioner based on the initial ancilla state $\Pi_{\text{cond}}=\ketbra{0}{0}\otimes I$ and set the scaling parameter $s=\frac{1}{\sqrt{kn}}$, defining the scaling operator
\begin{equation}
    S=\frac{1}{\sqrt{kn}}\ketbra{0}{0}\otimes I+\left(I-\ketbra{0}{0}\right)\otimes I,\qquad
    S^{-1}=\sqrt{kn}\ketbra{0}{0}\otimes I+\left(I-\ketbra{0}{0}\right)\otimes I.
\end{equation}
Note in particular that our preconditioner depends only on the ancilla state, and can be implemented without querying the oracle $O_{b}$.
This block preconditioning increases the solution norm of quantum linear system algorithm to
\begin{equation}
    \norm{\left(SC_{n,k,p}\left(\frac{A}{\alpha_A}\right)\right)^{-1}\ket{0}\ket{b}}
    =\frac{1}{s}\norm{C_{n,k,p}^{-1}\left(\frac{A}{\alpha_A}\right)\ket{0}\ket{b}}
    =\sqrt{kn}\norm{C_{n,k,p}^{-1}\left(\frac{A}{\alpha_A}\right)\ket{0}\ket{b}},
\end{equation}
while the norm of the inverse matrix remains asymptotically unaffected
\begin{equation}
\begin{aligned}
    \norm{\left(SC_{n,k,p}\left(\frac{A}{\alpha_A}\right)\right)^{-1}}
    &\leq\sqrt{\frac{\norm{C_{n,k,p}^{-1}\left(\frac{A}{\alpha_A}\right)\ket{0}\otimes I}^2}{s^2}+\norm{C_{n,k,p}^{-1}\left(\frac{A}{\alpha_A}\right)}^2}\\
    &=\mathbf{O}\left(\sqrt{kn}\sqrt{n}\max_{0\leq\tau\leq t}\norm{e^{\tau A}}
    +\max_{0\leq\tau\leq t}\norm{e^{\tau A}}\sqrt{k}n\right)\\
    &=\mathbf{O}\left(\max_{0\leq\tau\leq t}\norm{e^{\tau A}}\sqrt{k}n\right).
\end{aligned}
\end{equation}

Suppose that the norm of solution state of the quantum linear system solver can be estimated to a constant multiplicative accuracy. Then, the query complexity of the initial state preparation is bounded by
\begin{equation}
    \mathbf{O}\left(\frac{\max_{0\leq\tau\leq t}\norm{e^{\tau A}}\sqrt{k}n}{\sqrt{kn}\norm{C_{n,k,p}^{-1}\left(\frac{A}{\alpha_A}\right)\ket{0}\ket{b}}}
    \cdot\frac{\sqrt{kn}\norm{C_{n,k,p}^{-1}\left(\frac{A}{\alpha_A}\right)\ket{0}\ket{b}}}{\sqrt{kn}\sqrt{n}\norm{e^{tA}\ket{b}}}\right)
    =\mathbf{O}\left(\frac{\max_{0\leq\tau\leq t}\norm{e^{\tau A}}}{\norm{e^{tA}\ket{b}}}\right).
\end{equation}
We have thus obtained:
\begin{theorem}[Quantum differential equation solver with optimal initial state preparation]
\label{thm:diff_eq_init}
Let $A$ be a matrix such that $A/\alpha_A$ is block encoded by $O_A$ with normalization factor $\alpha_A\geq\norm{A}$. Let $\ket{b}$ be the initial state prepared by oracle $O_b$.
Then the quantum state
\begin{equation}
    \frac{e^{tA}\ket{b}}{\norm{e^{tA}\ket{b}}}
\end{equation}
can be prepared to accuracy $\epsilon$ and success probability $>\frac{1}{2}$ with query complexity
\begin{equation}
\begin{aligned}
    &\mathbf{O}\Bigg(\frac{\alpha_{\exp}}{\alpha_{\exp,b}}\mathbf{Cost}\left(O_{b}\right)\\
    &\qquad+\frac{\alpha_{\exp,b,\max}}{\alpha_{\exp,b}}
    \alpha_{\exp}\alpha_At
    \log\left(\frac{\alpha_{\exp}}{\alpha_{\exp,b}}\right)
    \log\left(\frac{\log\left(\frac{\alpha_{\exp}}{\alpha_{\exp,b}}\right)}{\epsilon}\right)
    \log\left(\frac{\alpha_At}{\epsilon}\right)
    \mathbf{Cost}\left(O_{A}\right)\Bigg),
\end{aligned}
\end{equation}
where $\alpha_{\exp}\geq\max_{0\leq\tau\leq t}\norm{e^{\tau A}}$ is an upper bound on norm of the evolution operator, $\alpha_{\exp,b,\max}\geq \max_{0\leq\tau\leq t}\norm{e^{\tau A}\ket{b}}$ is a norm upper bound on solution state of the differential equation, and $\alpha_{\exp,b}\leq\norm{e^{tA}\ket{b}}$ is a lower bound on norm of the solution state.
\end{theorem}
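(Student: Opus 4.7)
The plan is to invoke the truncated-Taylor linear system encoding with $n=p=\mathbf{\Theta}(\alpha_A t)$ and $k=\mathbf{O}(\log(\alpha_A t/\epsilon))$, apply block preconditioning with the ancilla projector $\Pi_{\text{cond}}=\ketbra{0}{0}\otimes I$ (which costs zero queries to $O_b$), then feed the preconditioned problem into \thm{qls_opt_init} together with the solution-norm estimator of \thm{sol_est}. Finally, a single extra round of amplitude amplification filters out the ``success'' clock labels $n(k+1),\dots,n(k+1)+p$ which hold copies of $e^{tA}\ket{b}$.

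Concretely, first I would set up the block encoding of $SC_{n,k,p}(A/\alpha_A)$ with normalization factor $\mathbf{O}(\alpha_A)$, observing that $S=\frac{1}{\sqrt{kn}}\ketbra{0}{0}\otimes I+(I-\ketbra{0}{0})\otimes I$ is block encoded by a controlled single-qubit rotation on the ancilla register alone, and the initial state remains $\ket{0}\ket{b}$ so only $O_b$ costs appear through the linear solver. Next, I would reproduce the two norm estimates already derived in the excerpt,
\begin{equation}
    \norm{(SC_{n,k,p})^{-1}}=\mathbf{O}\left(\alpha_{\exp}\sqrt{k}\,n\right),\qquad
    \norm{(SC_{n,k,p})^{-1}\ket{0}\ket{b}}=\sqrt{kn}\,\norm{C_{n,k,p}^{-1}\ket{0}\ket{b}},
\end{equation}
and combine them with the fact that the projection of $C_{n,k,p}^{-1}\ket{0}\ket{b}$ onto the success labels has norm $\mathbf{\Theta}(\sqrt{n}\,\norm{e^{tA}\ket{b}})$, to conclude that (i) the preconditioned condition number is still $\mathbf{O}(\alpha_{\exp}\alpha_A t)$, and (ii) both the ``linear-system success amplitude'' $\sqrt{p_{\text{succ}}}=\norm{(SC_{n,k,p})^{-1}\ket{0}\ket{b}}/\norm{(SC_{n,k,p})^{-1}}$ and the ``differential-equation success amplitude'' inside it satisfy the telescoping identity in the excerpt that yields the $\alpha_{\exp}/\alpha_{\exp,b}$ final scaling.

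Then I would apply \thm{sol_est} once to obtain a constant-multiplicative estimate of $\norm{(SC_{n,k,p})^{-1}\ket{0}\ket{b}}$, feed it into \thm{qls_opt_init} to produce the normalized preconditioned solution to accuracy $\mathbf{\Theta}(\epsilon)$, and amplify the success-label subspace with $\mathbf{O}(\alpha_{\exp,b,\max}/(\sqrt{n}\,\norm{e^{tA}\ket{b}})\cdot\sqrt{n})=\mathbf{O}(\alpha_{\exp,b,\max}/\norm{e^{tA}\ket{b}})$ extra rounds. Each round consumes one call to the \thm{qls_opt_init} solver and to its inverse, so the $O_b$ count multiplies as $(1/\sqrt{p_{\text{succ}}})\cdot(\alpha_{\exp,b,\max}/\norm{e^{tA}\ket{b}})$, which telescopes (as shown in the excerpt) to the claimed $\mathbf{O}(\alpha_{\exp}/\alpha_{\exp,b})$, while the $O_A$ count picks up the Taylor truncation factor $\log(\alpha_A t/\epsilon)$ together with the block-encoded matrix-inversion polylogs from \thm{qls_opt_init}.

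The main obstacle is the error bookkeeping: the Taylor truncation error $\mathbf{O}(\alpha_A t/k!)$, the VTAA preparation error from \thm{qls_opt_init}, and the amplitude-amplification error on the success subspace must each be kept below $\mathbf{\Theta}(\epsilon)$ after renormalization, which requires projecting the accumulated error onto the ``potentially good'' success labels analogous to the argument in \sec{dinv_err}, so that the $\mathbf{\Theta}(1/\sqrt{p_{\text{succ}}})$ state-normalization blow-up is compensated by the $\mathbf{\Theta}(\sqrt{p_{\text{succ}}})$ shrinking of the projected error, and no additional polylog factors beyond $\log((\alpha_{\exp}/\alpha_{\exp,b})/\epsilon)\log(\alpha_A t/\epsilon)$ creep into the $O_A$ count.
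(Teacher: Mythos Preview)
Your proposal is correct and follows essentially the same route as the paper: Taylor-series encoding with $n=p=\mathbf{\Theta}(\alpha_A t)$, block preconditioning by the ancilla projector with $s=1/\sqrt{kn}$, then \thm{qls_opt_init} (plus \thm{sol_est}) followed by an outer amplification on the success labels. The only imprecision is that the telescoping for the $O_b$ count requires using the \emph{actual} outer-round count $\norm{C^{-1}\ket{0}\ket{b}}/(\sqrt{n}\,\norm{e^{tA}\ket{b}})$ (which you know once \thm{sol_est} returns), not its upper bound $\alpha_{\exp,b,\max}/\norm{e^{tA}\ket{b}}$; the latter is what you plug in only when bounding the $O_A$ count, exactly as the paper's displayed product does.
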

\begin{remark}
The query complexity quoted above follows from \thm{qls_opt_init}, assuming that a constant multiplicative estimate of the solution norm is available for the linear system problem. Without this assumption, we can obtain such an estimate using \thm{sol_est}, with the complexity of initial state preparation $\mathbf{O}\left(\frac{\alpha_{\exp}}{\alpha_{\exp,b}}\right)$ remaining the same.
When $\alpha_{\exp}=\mathbf{O}(1)$ as is commonly assumed by recent work on differential equations~\cite{Krovi2023improvedquantum,BerryCosta22,AnChildsLin23}, our complexity of initial state preparation saturates the lower bound of~\cite[Theorem 10]{Fang2023timemarchingbased}.
\end{remark}

\subsection{Quantum eigenvalue estimator}
\label{sec:precond_est}
The efficient solution of the eigenvalue estimation problem underlies the quantum speedups for factoring integers~\cite{sho_polynomialtime_1997} and elucidating chemical reactions~\cite{vonBurg21,Lee21}. Here, an initial state $\ket{\psi}$ close to an eigenvector of the input matrix $A$ is given, and the goal is to estimate the corresponding eigenvalue.
If $A$ is Hermitian, this can be solved using the quantum singular value estimation algorithm with optimal query complexity~\cite{kerenidis2016quantum,Chakraborty2018BlockEncoding,Gilyen2018singular}. However, the problem becomes considerably more difficult when $A$ is a nonnormal matrix, which is relevant for applications in simulating transcorrelated quantum chemistry~\cite{McArdle20} and non-Hermitian physics~\cite{Bender07,Ashida20}.

To simplify the discussion, we assume $A=S\Lambda S^{-1}$ is diagonalizable with real spectra, and $A/\alpha_A$ is block encoded by $O_A$ with normalization factor $\alpha_A\geq\norm{A}$.
Suppose that oracle $O_\psi\ket{0}=\ket{\psi}$ prepares the initial state close to an eigenstate $\ket{\psi_{j}}$ such that $A\ket{\psi_{j}}=\lambda_j\ket{\psi_{j}}$.
Then recent work provides a linear-system-based quantum algorithm~\cite[Theorem 3]{QEVP} that estimates $\lambda_j$ to accuracy $\epsilon$ and success probability $>\frac{1}{2}$, with query complexity
\begin{equation}
    \mathbf{O}\left(\frac{\alpha_A\kappa_S}{\epsilon}
    \left(\mathbf{Cost}\left(O_A\right)+\mathbf{Cost}\left(O_{\psi}\right)\right)\right),
\end{equation}
where $\kappa_S\geq\norm{S}\norm{S^{-1}}$ is an upper bound on the spectral condition number of the basis transformation.
This improves over previous results~\cite{Shao2019eigenvalues,Shao2020GeneralizedEigenvalue} based on differential equation solvers.
A related result is obtained in~\cite[Theorem 12]{QEVP} for eigenvalue estimation on the unit circle, which is more recently generalized by~\cite{AlaseKaruvade24}.
In terms of the query complexity of $O_A$, the complexity quoted above exactly matches the Heisenberg scaling~\cite{Giovannetti06,Zwierz10} and is provably optimal for eigenvalue estimation.
However, the algorithm uses the same number of queries to the initial state, which underperforms alternative methods~\cite{Zhang2024Nonnormal} in this regard.

In the algorithm of~\cite{QEVP}, the eigenvalue estimation problem is solved by generating a Chebyshev history state, which is in turn realized by solving a linear system. Specifically, we introduce
\begin{equation}
\NiceMatrixOptions
{
    custom-line = 
    {
        letter = I , 
        command = hdashedline , 
        tikz = {dashed,dash phase=3pt} ,
        width = \pgflinewidth
    }
}
\begin{aligned}
    \mathbf{Pad}(A)&=
    \begin{bmatrix}
        A_{11} & 0\\
        A_{21} & A_{22}
    \end{bmatrix}\\
    &=\begin{bNiceArray}{ccccccIccc}
        I & 0 & \cdots & \cdots & \cdots &0   &0 &\cdots &\cdots\\
        -\frac{2A}{\alpha_A} & I & \ddots & \ddots & \ddots & \vdots   &\vdots &\vdots &\vdots \\
        I & -\frac{2A}{\alpha_A} & I & \ddots &\ddots & \vdots   &\vdots &\vdots &\vdots \\
        0 & I & \frac{-2A}{\alpha_A} & I & \ddots & \vdots   &\vdots &\vdots &\vdots \\
        \vdots & \ddots & \ddots & \ddots & \ddots & 0   &\vdots &\vdots &\vdots \\
        0 & \cdots & 0 & I & -\frac{2A}{\alpha_A} & I   &0 &\cdots &\cdots \\
        \hdashedline
        0 & \cdots &\cdots &0 & 0 & -I   &I &0 &\cdots \\
        0 & \cdots & \cdots &\cdots & 0 & 0   &-I &I &\ddots \\
        \vdots & \vdots & \vdots & \vdots & \vdots & \vdots   &\ddots &\ddots &\ddots \\
    \end{bNiceArray},\\
\end{aligned}
\end{equation}
where $A_{11}$ is $n$-by-$n$ and $A_{22}$ is $\eta n$-by-$\eta n$ for some nonnegative integers $\eta$ and $n$. The inverse of this matrix is
\begin{equation}
\NiceMatrixOptions
{
    custom-line = 
    {
        letter = I , 
        command = hdashedline , 
        tikz = {dashed,dash phase=3pt} ,
        width = \pgflinewidth
    }
}
    \mathbf{Pad}(A)^{-1}=
    \begin{bNiceArray}{ccccIccc}
        \mathbf{U}_{0}\left(\frac{A}{\alpha_A}\right) & 0 & \cdots & 0 & 0 & \cdots & \cdots \\
        \mathbf{U}_{1}\left(\frac{A}{\alpha_A}\right) & \mathbf{U}_{0}\left(\frac{A}{\alpha_A}\right) & \ddots & \vdots & \vdots & \vdots & \vdots \\
        \vdots & \ddots & \ddots & 0 & \vdots & \vdots & \vdots \\
        \mathbf{U}_{n-1}\left(\frac{A}{\alpha_A}\right) & \cdots & \mathbf{U}_{1}\left(\frac{A}{\alpha_A}\right) & \mathbf{U}_{0}\left(\frac{A}{\alpha_A}\right) & 0 & \cdots & \cdots \\
        \hdashedline
        \mathbf{U}_{n-1}\left(\frac{A}{\alpha_A}\right) & \cdots & \mathbf{U}_{1}\left(\frac{A}{\alpha_A}\right) & \mathbf{U}_{0}\left(\frac{A}{\alpha_A}\right) & I & 0 & \cdots \\
        \mathbf{U}_{n-1}\left(\frac{A}{\alpha_A}\right) & \cdots & \mathbf{U}_{1}\left(\frac{A}{\alpha_A}\right) & \mathbf{U}_{0}\left(\frac{A}{\alpha_A}\right) & I & I & \ddots \\
        \vdots & \vdots & \vdots & \vdots & \vdots & \vdots & \ddots 
    \end{bNiceArray},
\end{equation}
with $\mathbf{U}_j$ Chebyshev polynomials of the second kind. Here, the coefficent matrix and its inverse have norm bounds
\begin{equation}
    \norm{\mathbf{Pad}(A)}\leq 4,\qquad
    \norm{\mathbf{Pad}^{-1}(A)}=\mathbf{O}(n\kappa_S).
\end{equation}
For the eigenvalue estimation problem, we choose the initial state $\frac{\ket{0}-\ket{2}}{\sqrt{2}}\ket{\psi}$ and let $\eta=0$. Then the solution state becomes
\begin{equation}
    \frac{\mathbf{Pad}(A)^{-1}\frac{\ket{0}-\ket{2}}{\sqrt{2}}\ket{\psi}}{\norm{\mathbf{Pad}(A)^{-1}\frac{\ket{0}-\ket{2}}{\sqrt{2}}\ket{\psi}}}
    =\frac{\sum_{l=0}^{n-1}\ket{l}\widetilde{\mathbf{T}}_l\left(\frac{A}{\alpha_A}\right)\ket{\psi}}{\norm{\sum_{l=0}^{n-1}\ket{l}\widetilde{\mathbf{T}}_l\left(\frac{A}{\alpha_A}\right)\ket{\psi}}},
\end{equation}
with $\widetilde{\mathbf{T}}_l$ rescaled Chebyshev polynomials of the second kind. As long as this state is produced with constant accuracy and $n=\mathbf{\Theta}\left(\frac{\alpha_A}{\epsilon}\right)$, we can estimate the eigenvalue to accuracy $\epsilon$. This gives the query complexity claimed above.

We can improve the cost of initial state preparation using block preconditioning. Specifically, we choose $\Pi_{\text{cond}}=\frac{\ket{0}-\ket{2}}{\sqrt{2}}\frac{\bra{0}-\bra{2}}{\sqrt{2}}\otimes I$, $s=\frac{1}{\sqrt{n}}$, and define the scaling operator
\begin{equation}
\begin{aligned}
    S&=\frac{1}{\sqrt{n}}\frac{\ket{0}-\ket{2}}{\sqrt{2}}\frac{\bra{0}-\bra{2}}{\sqrt{2}}\otimes I
    +\left(I-\frac{\ket{0}-\ket{2}}{\sqrt{2}}\frac{\bra{0}-\bra{2}}{\sqrt{2}}\right)\otimes I,\\
    S^{-1}&=\sqrt{n}\frac{\ket{0}-\ket{2}}{\sqrt{2}}\frac{\bra{0}-\bra{2}}{\sqrt{2}}\otimes I
    +\left(I-\frac{\ket{0}-\ket{2}}{\sqrt{2}}\frac{\bra{0}-\bra{2}}{\sqrt{2}}\right)\otimes I.
\end{aligned}
\end{equation}
Note again that our preconditioner depends only on the ancilla state, and can be implemented without consuming queries to $O_\psi$. After block preconditioning, the solution norm is increased to~\cite[Lemma 17]{QEVP}
\begin{equation}
\begin{aligned}
    \norm{\left(S\mathbf{Pad}(A)\right)^{-1}\frac{\ket{0}-\ket{2}}{\sqrt{2}}\ket{\psi}}
    &=\sqrt{n}\norm{\mathbf{Pad}^{-1}(A)\frac{\ket{0}-\ket{2}}{\sqrt{2}}\ket{\psi}}
    =\mathbf{\Theta}\left(\sqrt{n}\norm{\sum_{l=0}^{n-1}\ket{l}\widetilde{\mathbf{T}}_l\left(\frac{A}{\alpha_A}\right)\ket{\psi}}\right)\\
    &=\mathbf{\Theta}\left(\sqrt{n}\sqrt{\sum_{l=0}^{n-1}\widetilde{\mathbf{T}}_l^2\left(\frac{\lambda_j}{\alpha_A}\right)}\right)
    =\mathbf{\Theta}(n),
\end{aligned}
\end{equation}
while the norm upper bound on the inverse matrix remains asymptotically unchanged:
\begin{equation}
\begin{aligned}
    \norm{\left(S\mathbf{Pad}(A)\right)^{-1}}
    &\leq\sqrt{\frac{\norm{\mathbf{Pad}^{-1}(A)\frac{\ket{0}-\ket{2}}{\sqrt{2}}\otimes I}^2}{s^2}
    +\norm{\mathbf{Pad}^{-1}(A)}^2}\\
    &=\mathbf{O}\left(\sqrt{n}\norm{\sum_{l=0}^{n-1}\ket{l}\widetilde{\mathbf{T}}_l\left(\frac{A}{\alpha_A}\right)}
    +n\kappa_S\right)
    =\mathbf{O}(n\kappa_S).
\end{aligned}
\end{equation}
Invoking \thm{qls_opt_init}, we have:
\begin{theorem}[Quantum eigenvalue estimator with improved initial state preparation]
\label{thm:qeve_init}
Let $A=S\Lambda S^{-1}$ be a diagonalizable matrix with real spectra and upper bound $\kappa_S\geq\norm{S}\norm{S^{-1}}$ on the condition number, such that $A/\alpha_A$ is block encoded by $O_A$ with normalization factor $\alpha_A\geq\norm{A}$.
Suppose that oracle $O_\psi\ket{0}=\ket{\psi}$ prepares an initial state within distance $\norm{\ket{\psi}-\ket{\psi_j}}=\mathbf{O}\left(\frac{1}{\kappa_S}\right)$ to an eigenstate such that $A\ket{\psi_{j}}=\lambda_j\ket{\psi_{j}}$. Then $\lambda_j$ can be estimate to accuracy $\epsilon$ and success probability $>\frac{1}{2}$ with query complexity
\begin{equation}
    \mathbf{O}\left(\kappa_S
    \mathbf{Cost}\left(O_{\psi}\right)
    +\frac{\alpha_A\kappa_S}{\epsilon}
    \log\left(\kappa_S\right)
    \log\log\left(\kappa_S\right)
    \mathbf{Cost}\left(O_A\right)\right).
\end{equation}
\end{theorem}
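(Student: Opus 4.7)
The plan is to recast eigenvalue estimation as preparing a Chebyshev history state via the linear-system formulation of \cite{QEVP}, and then combine block preconditioning with \thm{qls_opt_init} to cut the cost of initial state preparation. Concretely, I would set $\eta=0$ in the padded coefficient matrix $\mathbf{Pad}(A)$, take the input state of the linear system to be $\frac{\ket{0}-\ket{2}}{\sqrt{2}}\ket{\psi}$, and choose $n=\mathbf{\Theta}(\alpha_A/\epsilon)$, so that the normalized solution is proportional to $\sum_{l=0}^{n-1}\ket{l}\,\widetilde{\mathbf{T}}_l(A/\alpha_A)\ket{\psi}$. Once this history state is produced to constant trace distance of the ideal, the eigenvalue $\lambda_j$ can be extracted to accuracy $\epsilon$ by the standard measurement protocol of \cite{QEVP}, exploiting the sharp concentration of the Chebyshev amplitudes around $\arccos(\lambda_j/\alpha_A)$.

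To reduce the $O_\psi$ complexity, I apply block preconditioning with $\Pi_{\text{cond}}=\tfrac{\ket{0}-\ket{2}}{\sqrt{2}}\tfrac{\bra{0}-\bra{2}}{\sqrt{2}}\otimes I$ and scaling parameter $s=1/\sqrt{n}$. The input state already lies in $\mathbf{Im}(\Pi_{\text{cond}})$, so the preconditioned linear system has the same normalized solution. Because $\Pi_{\text{cond}}$ acts only on the history-register ancilla, its reflection costs zero queries to either $O_A$ or $O_\psi$; hence one block encoding of $S\mathbf{Pad}(A)$ still uses $\mathbf{O}(1)$ queries to $O_A$, and preparing $\frac{\ket{0}-\ket{2}}{\sqrt{2}}\ket{\psi}$ uses $\mathbf{O}(1)$ queries to $O_\psi$. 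The norm computations already carried out in the excerpt give $\norm{(S\mathbf{Pad}(A))^{-1}\tfrac{\ket{0}-\ket{2}}{\sqrt{2}}\ket{\psi}}=\mathbf{\Theta}(n)$ and $\norm{(S\mathbf{Pad}(A))^{-1}}=\mathbf{O}(n\kappa_S)$, so the preconditioned success amplitude is $\sqrt{p_{\text{succ}}}=\mathbf{\Theta}(1/\kappa_S)$ and the preconditioned condition-number upper bound is $\mathbf{O}(n\kappa_S)$. A constant multiplicative estimate of the solution norm, required by \thm{qls_opt_init}, comes for free from the identity $\sum_{l=0}^{n-1}\widetilde{\mathbf{T}}_l^2(\lambda_j/\alpha_A)=\mathbf{\Theta}(n)$, so there is no need to invoke \thm{sol_est}.

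Feeding these parameters into \thm{qls_opt_init} with target linear-system accuracy $\epsilon_{\text{lin}}=\mathbf{\Theta}(1)$ gives $\mathbf{O}(\kappa_S)$ queries to $O_\psi$ and $\mathbf{O}(n\kappa_S\log(\kappa_S)\log\log(\kappa_S))$ queries to $O_A$; substituting $n=\mathbf{\Theta}(\alpha_A/\epsilon)$ yields the bound claimed in the theorem. The main obstacle I anticipate is justifying that $\epsilon_{\text{lin}}=\mathbf{\Theta}(1)$ is sufficient, i.e., that a constant-trace-distance history state still supports $\epsilon$-accurate eigenvalue readout; this is precisely the content of the analysis in \cite{QEVP}, and it is this step that is responsible for the clean $\log\log(\kappa_S)$ factor on $O_A$ (as opposed to a $\log(1/\epsilon)$ factor). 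A secondary subtlety is verifying that the $1/\kappa_S$ perturbation $\norm{\ket{\psi}-\ket{\psi_j}}=\mathbf{O}(1/\kappa_S)$ does not degrade either $\sqrt{p_{\text{succ}}}$ or the norm bounds above; this follows from a direct stability argument on $\mathbf{Pad}^{-1}(A)$ together with the diagonalization tolerance $\norm{S}\norm{S^{-1}}\leq\kappa_S$, both already built into the eigenvalue-estimation framework of \cite{QEVP}.
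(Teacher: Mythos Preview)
Your proposal is correct and follows essentially the same approach as the paper: set $\eta=0$ in $\mathbf{Pad}(A)$ with initial state $\tfrac{\ket{0}-\ket{2}}{\sqrt{2}}\ket{\psi}$ and $n=\mathbf{\Theta}(\alpha_A/\epsilon)$, apply block preconditioning with $\Pi_{\text{cond}}=\tfrac{\ket{0}-\ket{2}}{\sqrt{2}}\tfrac{\bra{0}-\bra{2}}{\sqrt{2}}\otimes I$ and $s=1/\sqrt{n}$ to obtain $\sqrt{p_{\text{succ}}}=\mathbf{\Theta}(1/\kappa_S)$ and condition number $\mathbf{O}(n\kappa_S)$, then invoke \thm{qls_opt_init} with constant linear-system accuracy. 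Your identification of the two subtleties (that $\epsilon_{\text{lin}}=\mathbf{\Theta}(1)$ suffices for the Chebyshev readout, and that the $\mathbf{O}(1/\kappa_S)$ perturbation of $\ket{\psi}$ is absorbed by the \cite{QEVP} framework) is also exactly right.
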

\begin{remark}
The query complexity quoted above follows from~\thm{qls_opt_init} assuming a constant multiplicative approximation of the solution norm is available. Without this prior knowledge, one can use~\thm{sol_est} to get such an estimate with the same cost of initial state preparation, while the cost of block encoding remains the Heisenberg-limited scaling~\cite{Giovannetti06,Zwierz10}.

When $A$ is Hermitian, $\kappa_S=1$. In this case, we only need a constant number of queries to the initial state oracle and a constant overlap with the target eigenstate. Our result then recovers the performance of standard quantum phase estimation. For estimating real eigenvalues, the above result outperforms the one reported in~\cite{Zhang2024Nonnormal}. Note that the approach of~\cite{Zhang2024Nonnormal} works by implementing projections onto $\mathbf{Ker}(A-\mu I)$ with different shifting values $\mu$, assuming oracular queries to initial states that are $\mu$-dependent. This input model appears to be more demanding than ours.
\end{remark}

\subsection{Quantum eigenvalue transformer and ground state preparator}
\label{sec:precond_transform}
We obtain an analogous improvement for the quantum eigenvalue transformation of nonnormal matrices. For a diagonalizable input matrix block enocded as $A/\alpha_A$ with normalization factor $\alpha_A\geq\norm{A}$, this means performing a polynomial transformation on the eigenvalues 
\begin{equation}
\label{eq:qevt}
    \frac{A}{\alpha_A}=S\frac{\Lambda}{\alpha_A} S^{-1}\mapsto
    p\left(\frac{A}{\alpha_A}\right)=Sp\left(\frac{\Lambda}{\alpha_A}\right)S^{-1},
\end{equation}
applied to an initial state $\ket{\psi}$.
This covers the special case where $A$ is Hermitian, which is relevant for applications such as Hamiltonian simulation~\cite{Low2016Qubitization}, ground state~\cite{Lin2020nearoptimalground} and thermal state preparation~\cite{Gilyen2018singular}.
More generally, efficient algorithms for the eigenvalue transformation of nonnormal matrices can be applied to solve differential equations and prepare ground states of non-Hermitian matrices with real spectra.

A linear-system-based quantum algorithm was recently developed to transform eigenvalues of nonnormal matrices~\cite[Theorem 4 and 10]{QEVP}. For a diagonalizable input matrix $A=S\Lambda S^{-1}$ with only real eigenvalues, the previous method has query complexity
\begin{equation}
    \mathbf{O}\left(
    \frac{\norm{p}_{\max,\left[-\frac{1}{2},\frac{1}{2}\right]}\kappa_S^2n}{\norm{p\left(\frac{A}{\alpha_A}\right)\ket{\psi}}}
    \log\left(
    \frac{\norm{p}_{\max,\left[-\frac{1}{2},\frac{1}{2}\right]}\kappa_S\log(n)}{\norm{p\left(\frac{A}{\alpha_A}\right)\ket{\psi}}\epsilon}\right)\log\left(n\right)
    \left(\mathbf{Cost}\left(O_A\right)
    +\mathbf{Cost}\left(O_{\psi}\right)\right)\right),
\end{equation}
where $\epsilon$ is the accuracy of the output state, $\kappa_S$ is the condition number of the basis transformation, $p$ is the target polynomial with norm $\norm{p}_{\max,\left[-\frac{1}{2},\frac{1}{2}\right]}=\max_{x\in\left[-\frac{1}{2},\frac{1}{2}\right]}\abs{p(x)}$ and $n-1$ is its degree.
This is achieved by inverting the coefficient matrix $\mathbf{Pad}^{-1}(A)$ introduced in the previous subsection with $\eta=1$, which has the norm bounds $\norm{\mathbf{Pad}(A)}\leq 4$ and $\norm{\mathbf{Pad}^{-1}(A)}=\mathbf{O}(n\kappa_S)$ same as before. For the eigenvalue transformation problem, supposing that the target polynomial is represented under the (rescaled) Chebyshev basis as $p(x)=\sum_{j=0}^{n-1}\widetilde{\beta}_j\widetilde{\mathbf{T}}_j(x)$, we choose the initial state $\ket{0}\ket{\beta}\ket{\psi}$ where
\begin{equation}
\label{eq:beta_state}
    \ket{\beta}=\frac{1}{\alpha_{\widetilde{\beta}}}\sum_{k=0}^{n-1}(\widetilde\beta_k-\widetilde\beta_{k+2})\ket{n-1-k},\qquad
    \alpha_{\widetilde{\beta}}=\sqrt{\sum_{k=0}^{n-1}|\widetilde{\beta}_k-\widetilde{\beta}_{k+2}|^2}
    =\mathbf{\Theta}\left(\norm{p(\cos)\sin}_{2,[-\pi,\pi]}\right),
\end{equation}
where $\norm{p(\cos)\sin}_{2,[-\pi,\pi]}=\sqrt{\int_{-\pi}^{\pi}\mathrm{d}\theta\
    \abs{p(\cos{\theta})\sin{\theta}}^2}$.
Applying the quantum linear system solver then produces the Chebyshev history state
    \begin{equation}
        \frac{\ket{0}\sum_{l=0}^{n-1}\ket{l}
        \sum_{k=n-1-l}^{n-1}\widetilde{\beta}_k\widetilde{\mathbf{T}}_{k+l-n+1}\left(\frac{A}{\alpha_A}\right)\ket{\psi}
        +\ket{1}\sum_{l=0}^{n-1}\ket{l}
        \sum_{k=0}^{n-1}\widetilde{\beta}_k\widetilde{\mathbf{T}}_{k}\left(\frac{A}{\alpha_A}\right)\ket{\psi}}
        {\sqrt{\sum_{l=0}^{n-1}\norm{\sum_{k=n-1-l}^{n-1}\widetilde{\beta}_k\widetilde{\mathbf{T}}_{k+l-n+1}\left(\frac{A}{\alpha_A}\right)\ket{\psi}}^2
        +n\norm{\sum_{k=0}^{n-1}\widetilde{\beta}_k\widetilde{\mathbf{T}}_{k}\left(\frac{A}{\alpha_A}\right)\ket{\psi}}^2}}.
    \end{equation}
Here, the component flagged by the ancilla state $\ket{1}$ is desired, and the success amplitude is at least
\begin{equation}
    \mathbf{\Omega}\left(\frac{\norm{p\left(\frac{A}{\alpha_A}\right)\ket{\psi}}}{\norm{p}_{\max,\left[-\frac{1}{2},\frac{1}{2}\right]}\kappa_S\log(n)}\right).
\end{equation}
This can be boosted close to $1$ by amplitude amplification, leading to the query complexity cited above.
Note that compared to~\cite{QEVP}, we have changed the domain from $[-1,1]$ to $\left[-\frac{1}{2},\frac{1}{2}\right]$ which is without loss of generality by rescaling the input block encoding.

Once again, we can improve the query complexity of the initial state preparation using block preconditioning. To this end, we choose $\Pi_{\text{cond}}=\ketbra{\beta}{\beta}\otimes I$, $s=\frac{\norm{p}_{\max,[-\frac{1}{2},\frac{1}{2}]}}{\sqrt{n}\alpha_{\widetilde{\beta}}}$, and define the scaling operator
\begin{equation}
\begin{aligned}
    S&=\frac{\norm{p}_{\max,\left[-\frac{1}{2},\frac{1}{2}\right]}}{\sqrt{n}\alpha_{\widetilde{\beta}}}\ketbra{0,\beta}{0,\beta}\otimes I+\left(I-\ketbra{0,\beta}{0,\beta}\right)\otimes I,\\
    S^{-1}&=\frac{\sqrt{n}\alpha_{\widetilde{\beta}}}{\norm{p}_{\max,\left[-\frac{1}{2},\frac{1}{2}\right]}}\ketbra{0,\beta}{0,\beta}\otimes I+\left(I-\ketbra{0,\beta}{0,\beta}\right)\otimes I.
\end{aligned}
\end{equation}
Recall that in order for this to be a valid block preconditioning, $S$ must be invertible which further requires $0<s<1$. 
This is confirmed by the following lemma.
\begin{lemma}
Let $p(x)$ be a polynomial with the expansion $p(x)=\sum_{j=0}^{n-1}\widetilde{\beta}_j\widetilde{\mathbf{T}}_j(x)$ into rescaled Chebyshev polynomials of the first kind. It holds
\begin{equation}
    \norm{p}_{\max,\left[-\frac{1}{2},\frac{1}{2}\right]}<\sqrt{n}\sqrt{\sum_{k=0}^{n-1}|\widetilde{\beta}_k-\widetilde{\beta}_{k+2}|^2}.
\end{equation}
\end{lemma}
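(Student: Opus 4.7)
The plan is to pass to the angular variable $\theta$ via $x = \cos\theta$, which maps $[-\tfrac12,\tfrac12]$ onto $\theta\in[\pi/3,2\pi/3]$ (on which $|\sin\theta|\ge\sqrt3/2$), and then convert the Chebyshev expansion of $p$ into a cosine series. First I would write $p(\cos\theta) = \tfrac{\widetilde\beta_0}{2} + \sum_{k=1}^{n-1}\widetilde\beta_k\cos(k\theta)$ using $\widetilde{\mathbf{T}}_0=\tfrac12$ and $\widetilde{\mathbf{T}}_k(\cos\theta)=\cos(k\theta)$, multiply by $\sin\theta$, and apply the product-to-sum identity $2\cos(k\theta)\sin\theta=\sin((k+1)\theta)-\sin((k-1)\theta)$. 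After reindexing both resulting sums (shifting $k\mapsto j\mp 1$), gathering the coefficient of each $\sin(j\theta)$, and using the convention $\widetilde\beta_n=\widetilde\beta_{n+1}=0$, the end terms collapse cleanly and we obtain the telescoping-type identity
\begin{equation*}
    p(\cos\theta)\sin\theta \;=\; \tfrac12\sum_{k=0}^{n-1}(\widetilde\beta_k-\widetilde\beta_{k+2})\sin((k+1)\theta).
\end{equation*}

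Next I would divide by $\sin\theta$ on the range of interest, use $|\sin\theta|\ge\sqrt3/2$ for $\cos\theta\in[-\tfrac12,\tfrac12]$, and apply Cauchy--Schwarz on the finite sum together with the trivial bound $|\sin((k+1)\theta)|\le 1$. This yields
\begin{equation*}
    |p(\cos\theta)| \;\le\; \frac{1}{|\sin\theta|}\cdot\tfrac12\sqrt{\sum_{k=0}^{n-1}|\widetilde\beta_k-\widetilde\beta_{k+2}|^2}\cdot\sqrt{n} \;\le\; \frac{1}{\sqrt 3}\sqrt{n}\sqrt{\sum_{k=0}^{n-1}|\widetilde\beta_k-\widetilde\beta_{k+2}|^2}.
\end{equation*}
Since $1/\sqrt 3<1$, taking the maximum over $\theta\in[\pi/3,2\pi/3]$ gives the strict inequality (the only degenerate case $p\equiv 0$, for which both sides vanish, is excluded by the block-preconditioning context where $s>0$ is required).

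The only real subtlety is getting the sine-expansion identity right: keeping the two reindexed sums aligned, handling the boundary indices $j=1,2,n-1,n$, and absorbing the stray $\tfrac{\widetilde\beta_0}{2}\sin\theta$ term into the $k=0$ coefficient $\widetilde\beta_0-\widetilde\beta_2$. Once that identity is in hand, the rest is essentially mechanical: Cauchy--Schwarz with the crude uniform bound $\sin^2\le 1$ already suffices, since the constant $1/\sqrt3$ gives strict inequality with room to spare. No sharper trigonometric estimate (e.g.\ Dirichlet sums or Bernstein-type bounds) is needed.
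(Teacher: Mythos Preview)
Your proof is correct and follows the same overall strategy as the paper: both derive the identity $p(x)=\tfrac{1}{2}\sum_{k=0}^{n-1}(\widetilde\beta_k-\widetilde\beta_{k+2})\mathbf{U}_k(x)$ (your sine identity is exactly this rewritten via $\mathbf{U}_k(\cos\theta)=\sin((k+1)\theta)/\sin\theta$) and then apply Cauchy--Schwarz. The one genuine difference is in bounding $\sum_{k=0}^{n-1}\mathbf{U}_k^2(x)$ on $[-\tfrac12,\tfrac12]$: the paper evaluates the Dirichlet-type sum $\sum_{j=1}^{n}\sin^2(2\pi j\phi)/\sin^2(2\pi\phi)$ in closed form to get the sharper bound $\tfrac{2}{3}n+O(1)$, whereas you use the crude $\sin^2\le 1$ together with $\sin^2\theta\ge 3/4$ to get $\tfrac{4}{3}n$. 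Your route is more elementary and still delivers the constant $1/\sqrt{3}<1$ with room to spare; the paper's extra trigonometric computation buys only a better asymptotic constant (roughly $1/\sqrt{6}$) that is not needed for the lemma as stated. Your explicit handling of the degenerate case $p\equiv 0$ is also more careful than the paper's.
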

\begin{proof}
Extending the definition of coefficients by setting $\widetilde{\beta}_{n}=\widetilde{\beta}_{n+1}=\cdots=0$, we have
\begin{equation}
\begin{aligned}
    p(x)&=\sum_{j=0}^{\infty}\widetilde{\beta}_j\widetilde{\mathbf{T}}_j(x)
    =\sum_{j=2}^{\infty}\widetilde{\beta}_j\frac{\mathbf{U}_j(x)-\mathbf{U}_{j-2}(x)}{2}
    +\widetilde{\beta}_1\widetilde{\mathbf{T}}_1(x)
    +\widetilde{\beta}_0\widetilde{\mathbf{T}}_0(x)\\
    &=\frac{1}{2}\sum_{j=2}^{\infty}\widetilde{\beta}_j\mathbf{U}_j(x)
    -\frac{1}{2}\sum_{j=0}^{\infty}\widetilde{\beta}_{j+2}\mathbf{U}_j(x)
    +\widetilde{\beta}_1\widetilde{\mathbf{T}}_1(x)
    +\widetilde{\beta}_0\widetilde{\mathbf{T}}_0(x)\\
    &=\frac{1}{2}\sum_{j=0}^{\infty}\left(\widetilde{\beta}_j-\widetilde{\beta}_{j+2}\right)\mathbf{U}_j(x)
    -\frac{1}{2}\widetilde{\beta}_1\mathbf{U}_1(x)
    -\frac{1}{2}\widetilde{\beta}_0\mathbf{U}_0(x)
    +\widetilde{\beta}_1\widetilde{\mathbf{T}}_1(x)
    +\widetilde{\beta}_0\widetilde{\mathbf{T}}_0(x)\\
    &=\frac{1}{2}\sum_{j=0}^{\infty}\left(\widetilde{\beta}_j-\widetilde{\beta}_{j+2}\right)\mathbf{U}_j(x)
    =\frac{1}{2}\sum_{j=0}^{n-1}\left(\widetilde{\beta}_j-\widetilde{\beta}_{j+2}\right)\mathbf{U}_j(x).
\end{aligned}
\end{equation}
Now by the Cauchy-Schwarz inequality,
\begin{equation}
    \abs{p(x)}\leq\frac{1}{2}\sqrt{\sum_{j=0}^{n-1}\mathbf{U}_j^2(x)}
    \sqrt{\sum_{k=0}^{n-1}|\widetilde{\beta}_k-\widetilde{\beta}_{k+2}|^2}.
\end{equation}

We now claim that $\sum_{j=0}^{n-1}\mathbf{U}_j^2(x)\leq\frac{2}{3}n+1$ for $-\frac{1}{2}\leq x\leq\frac{1}{2}$.
To this end, let us introduce the angular variable $\phi=\frac{1}{2\pi}\arccos(x)$ and consider
\begin{equation}
\begin{aligned}
    \mathbf{U}_{n-1}^2\left(x\right)+\cdots+\mathbf{U}_{0}^2\left(x\right)
    &=\sum_{j=0}^{n-1}\frac{\sin^2(2\pi (j+1)\phi)}{\sin^2(2\pi\phi)}
    =
    \frac{1}{{\sin^2(2\pi\phi)}}\sum_{j=1}^{n}\frac{1-\cos(4\pi j\phi)}{2}\\
    &=\frac{1}{{\sin^2(2\pi\phi)}}\left(\frac{n}{2}-\frac{1}{4}\sum_{j=0}^{n-1}\left(e^{i4\pi j\phi}+e^{-i4\pi j\phi}\right)\right)\\
    &=\frac{1}{{\sin^2(2\pi\phi)}}\left(\frac{n}{2}-\frac{1}{2}\frac{\sin(2\pi n\phi)}{\sin(2\pi\phi)}\cos((n-1)2\pi\phi)\right).
\end{aligned}
\end{equation}
By our assumption, $\frac{\sqrt{3}}{2}\leq\sin(2\pi\phi)\leq1$ and hence
\begin{equation}
    \frac{n}{2}-\frac{\sqrt{3}}{3}
    \leq\mathbf{U}_{n-1}^2\left(x\right)+\cdots+\mathbf{U}_{0}^2\left(x\right)
    \leq\frac{4}{3}\left(\frac{n}{2}+\frac{\sqrt{3}}{3}\right).
\end{equation}
\end{proof}

Similar as before, our preconditioner is defined by the initial ancilla state, and its implementation uses no query to $O_{\psi}$. After block preconditioning, the solution norm becomes
\begin{equation}
    \norm{\left(S\mathbf{Pad}(A)\right)^{-1}\ket{0}\ket{\beta}\ket{\psi}}
    =\frac{\sqrt{n}\alpha_{\widetilde{\beta}}}{\norm{p}_{\max,\left[-\frac{1}{2},\frac{1}{2}\right]}}
    \norm{\mathbf{Pad}^{-1}(A)\ket{0}\ket{\beta}\ket{\psi}},
\end{equation}
whereas the upper bound on the norm of inverse matrix is asymptotically:
\begin{equation}
\begin{aligned}
    \norm{\left(S\mathbf{Pad}(A)\right)^{-1}}
    &\leq\sqrt{\frac{\norm{\mathbf{Pad}^{-1}(A)\ket{0,\beta}\otimes I}^2}{s^2}
    +\norm{\mathbf{Pad}^{-1}(A)}^2}\\
    &=\mathbf{O}\left(\frac{\sqrt{n}\alpha_{\widetilde{\beta}}}{\norm{p}_{\max,\left[-\frac{1}{2},\frac{1}{2}\right]}}\frac{\sqrt{n}\max_{l}\norm{\sum_{k=l}^{n-1}\widetilde\beta_k\widetilde{\mathbf{T}}_{k-l}\left(\frac{A}{\alpha_A}\right)}}{\alpha_{\widetilde{\beta}}}
    +n\kappa_S\right)\\
    &=\mathbf{O}\left(n\kappa_S\log(n)\right).
\end{aligned}
\end{equation}
This is larger by a factor of $\log(n)$ in the worst case, but one can potentially remove this logarithmic factor when running the algorithm over an average input similar as in~\cite{QEVP}.

Suppose that the norm of solution state from the quantum linear system solver (\thm{qls_opt_init}) can be estimated to a constant multiplicative accuracy. Then the query complexity of the initial state oracle is bounded by
\begin{equation}
\begin{aligned}
    &\mathbf{O}\left(\frac{n\kappa_S\log(n)}{\frac{\sqrt{n}\alpha_{\widetilde{\beta}}}{\norm{p}_{\max,\left[-\frac{1}{2},\frac{1}{2}\right]}}
    \norm{\mathbf{Pad}^{-1}(A)\ket{0}\ket{\beta}\ket{\psi}}}
    \frac{\frac{\sqrt{n}\alpha_{\widetilde{\beta}}}{\norm{p}_{\max,\left[-\frac{1}{2},\frac{1}{2}\right]}}
    \norm{\mathbf{Pad}^{-1}(A)\ket{0}\ket{\beta}\ket{\psi}}}{\frac{\sqrt{n}\alpha_{\widetilde{\beta}}}{\norm{p}_{\max,\left[-\frac{1}{2},\frac{1}{2}\right]}}\frac{\sqrt{n}}{\alpha_{\widetilde{\beta}}}\norm{p\left(\frac{A}{\alpha_A}\right)\ket{\psi}}}\right)\\
    &=\mathbf{O}\left(\frac{\norm{p}_{\max,\left[-\frac{1}{2},\frac{1}{2}\right]}\kappa_S\log(n)}{\norm{p\left(\frac{A}{\alpha_A}\right)\ket{\psi}}}\right).
\end{aligned}
\end{equation}
We thus obtain:
\begin{theorem}[Quantum eigenvalue transformer with improved initial state preparation]
\label{thm:qevt_init}
Let $A=S\Lambda S^{-1}$ be a diagonalizable matrix with real spectra and upper bound $\kappa_S\geq\norm{S}\norm{S^{-1}}$ on the condition number, such that $A/\alpha_A$ is block encoded by $O_A$ with normalization factor $\alpha_A\geq\norm{A}$.
Let $O_\psi\ket{0}=\ket{\psi}$ be the oracle preparing the initial state, and $p(x)=\sum_{k=0}^{n-1}\widetilde\beta_k\widetilde{\mathbf{T}}_{k}(x)=\sum_{k=0}^{n-1}\beta_k{\mathbf{T}}_{k}(x)$ be the Chebyshev expansion of a degree-($n-1$) polynomial $p$.
    Then, the quantum state
    \begin{equation}
        \frac{p\left(\frac{A}{\alpha_A}\right)\ket{\psi}}{\norm{p\left(\frac{A}{\alpha_A}\right)\ket{\psi}}}
    \end{equation}
can be prepared with accuracy $\epsilon$ and success probability $>\frac{1}{2}$ with query complexity
\begin{footnotesize}
\begin{equation}
\newmaketag
\begin{aligned}
    &\mathbf{O}\Bigg(
    \frac{\norm{p}_{\max,[-\frac{1}{2},\frac{1}{2}]}\kappa_S\log\left(n\right)}{\alpha_{p,\psi}}
    \mathbf{Cost}\left(O_{\psi}\right)\\
    &\qquad+\frac{\norm{p}_{\max,[-\frac{1}{2},\frac{1}{2}]}\kappa_S^2n\log\left(n\right)}{\alpha_{p,\psi}}
    \log\left(
    \frac{\norm{p}_{\max,[-\frac{1}{2},\frac{1}{2}]}\kappa_S\log(n)}{\alpha_{p,\psi}}\right)
    \log\left(\frac{\log\left(\frac{\norm{p}_{\max,[-\frac{1}{2},\frac{1}{2}]}\kappa_S\log(n)}{\alpha_{p,\psi}}\right)}{\epsilon}\right)
    \mathbf{Cost}\left(O_{A}\right)\Bigg),
\end{aligned}
\end{equation}
\end{footnotesize}
where $\alpha_{p,\psi}\leq\norm{p\left(\frac{A}{\alpha_A}\right)\ket{\psi}}$ is a norm lower bound on the transformed state.
\end{theorem}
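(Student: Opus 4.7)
The plan is to instantiate the block preconditioning framework within the linear-system formulation of quantum eigenvalue transformation from~\cite{QEVP}, and then invoke our optimal-initial-state linear system solver~\thm{qls_opt_init}. The construction was already sketched in the exposition immediately preceding the theorem; a rigorous proof consists of verifying each ingredient and assembling the costs.

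First, I would recall the padded coefficient matrix $\mathbf{Pad}(A)$ with $\eta = 1$ and choose the initial state $\ket{0}\ket{\beta}\ket{\psi}$, with $\ket{\beta}$ defined in~\eq{beta_state}. Exact solution of this linear system gives the Chebyshev history state in which the $\ket{1}$-flagged register contains, up to normalization, $p(A/\alpha_A)\ket{\psi}$ on every clock slot. The norm bounds $\|\mathbf{Pad}(A)\|\leq 4$ and $\|\mathbf{Pad}^{-1}(A)\| = \mathbf{O}(n\kappa_S)$ are inherited from~\cite{QEVP}.

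Next, I would introduce the block preconditioner $\Pi_{\text{cond}} = \ketbra{0,\beta}{0,\beta}\otimes I$ with scaling parameter $s = \|p\|_{\max,[-\tfrac{1}{2},\tfrac{1}{2}]}/(\sqrt{n}\,\alpha_{\widetilde{\beta}})$, yielding the scaling operator $S$. The lemma stated in the exposition implies $0 < s < 1$, so $S$ is invertible and may be block encoded with normalization factor $1$ at the cost of a constant number of queries to the unitary preparing $\ket{0}\ket{\beta}$; crucially, since $\Pi_{\text{cond}}$ acts only on the ancilla registers, implementing $S$ requires \emph{no} queries to $O_\psi$. I would then verify, exactly as done in the paragraph preceding the theorem, that preconditioning boosts the solution norm by a factor $1/s$ to $\mathbf{\Theta}\bigl(\tfrac{\sqrt{n}\,\alpha_{\widetilde{\beta}}}{\|p\|_{\max,[-\tfrac{1}{2},\tfrac{1}{2}]}}\|\mathbf{Pad}^{-1}(A)\ket{0,\beta,\psi}\|\bigr)$ while the inverse-norm upper bound $\|(S\,\mathbf{Pad}(A))^{-1}\|$ remains $\mathbf{O}(n\kappa_S \log n)$, using the bound $\|\sum_{k=l}^{n-1}\widetilde{\beta}_k \widetilde{\mathbf{T}}_{k-l}(A/\alpha_A)\| = \mathbf{O}(\|p\|_{\max,[-\tfrac{1}{2},\tfrac{1}{2}]}\kappa_S \log n)$ from~\cite[Lemma 17]{QEVP}. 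Combining these gives a post-preconditioning success amplitude $\sqrt{p_{\text{succ}}} = \mathbf{\Omega}\bigl(\alpha_{p,\psi}/(\|p\|_{\max,[-\tfrac{1}{2},\tfrac{1}{2}]}\kappa_S \log n)\bigr)$.

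Finally, I would feed the preconditioned linear system into~\thm{qls_opt_init}, so that the cost of $O_\psi$ is $\mathbf{O}(1/\sqrt{p_{\text{succ}}})$ and the cost of $O_A$ is $\mathbf{O}(\kappa\log(1/\sqrt{p_{\text{succ}}})\log(\log(1/\sqrt{p_{\text{succ}}})/\epsilon))$, where $\kappa = \mathbf{O}(n\kappa_S\log n)$ for the preconditioned system; a constant number of rounds of amplitude amplification on the $\ket{1}$-flag then extracts the desired normalized transformed state with probability above $\tfrac{1}{2}$. Substituting these scalings yields the claimed complexity. The main obstacle is ensuring the success amplitude and $1/s$ are well-defined when only a multiplicative estimate of $\alpha_{p,\psi}$ is available; as in~\sec{precond_qls} this is handled by first invoking~\thm{sol_est} on the preconditioned system to obtain a constant multiplicative approximation with matching $\mathbf{O}(1/\sqrt{p_{\text{succ}}})$ initial-state cost. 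A secondary subtlety is the appearance of the extra $\log n$ factor in $\|\mathbf{Pad}^{-1}(A)\Pi_{\text{cond}}\|$, which is a worst-case bound arising from the maximum partial-sum norm of Chebyshev partial sums and propagates to the final complexity but cannot be avoided without stronger assumptions on $p$.
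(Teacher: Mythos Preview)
Your proposal follows the paper's approach closely and correctly identifies every ingredient: the padded linear system with $\eta=1$, the ancilla-state preconditioner $\Pi_{\text{cond}}=\ketbra{0,\beta}{0,\beta}\otimes I$ with $s=\|p\|_{\max,[-\frac{1}{2},\frac{1}{2}]}/(\sqrt{n}\,\alpha_{\widetilde\beta})$, the lemma certifying $0<s<1$, the $\mathbf{O}(n\kappa_S\log n)$ inverse-norm bound after preconditioning, and the invocation of \thm{qls_opt_init}.

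The one slip is the assertion that extracting the $\ket{1}$-flagged component from the linear-system solution takes only a constant number of amplitude-amplification rounds. In general it does not: the $\ket{1}$-amplitude within the normalized Chebyshev history state is only guaranteed to be $\mathbf{\Omega}\bigl(\alpha_{p,\psi}/(\|p\|_{\max}\kappa_S\log n)\bigr)$, and it can be that small when the $\ket{0}$-flagged partial-sum terms dominate. What the paper actually uses (made explicit in the $O_\psi$ calculation displayed just before the theorem statement) is that the linear-system inverse amplitude and the $\ket{1}$-extraction inverse amplitude \emph{telescope}: their product is always $\mathbf{O}\bigl(\|p\|_{\max}\kappa_S\log n/\alpha_{p,\psi}\bigr)$, independently of how the two factors split. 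This telescoping is what delivers the stated $O_\psi$ bound. Your constant-rounds assumption would also lead you to an $O_A$ leading factor of $\kappa=\mathbf{O}(n\kappa_S\log n)$ rather than the theorem's $\frac{\|p\|_{\max}\kappa_S^2 n\log n}{\alpha_{p,\psi}}$, which arises precisely because the post-processing rounds multiply the solver's $O_A$ cost.
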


As an immediate application, we also obtain an improved quantum algorithm for ground state preparation.  In the case where the input operator is a Hermitian Hamiltonian, this problem has been extensively studied by previous work~\cite{Poulin09,Ge19}, and can be solved near optimally on a quantum computer~\cite{Lin2020nearoptimalground}.
Here, we consider the general case where inputs are non-normal matrices with real eigenvalues whose ground states are still well defined, which are relevant to applications in non-Hermitian physics and transcorrelated quantum chemistry.
Specifically, let $A=S\Lambda S^{-1}$ be a diagonalizable matrix with only real eigenvalues and an upper bound $\kappa_S$ on the condition number of its basis transformation, such that $A/\alpha_A$ is block encoded by oracle $O_A$ with normalization factor $\alpha_A$.
Suppose that $\lambda_0$ is the smallest eigenvalue of $A$ with the corresponding eigenstate $\ket{\psi_0}$, which is separated from the next eigenvalue $\lambda_1$:
\begin{equation}
    \lambda_0\leq-\frac{\delta_A}{2}<0<\frac{\delta_A}{2}\leq\lambda_1
\end{equation}
for some spectral gap $\delta_A>0$.
Then our goal is to prepare a quantum state that $\epsilon$-approximates the ground state $\ket{\psi_0}$ up to a global phase, given an initial state $\ket{\psi}=\gamma_0\ket{\psi_0}+\sum_{l=1}^{d-1}\gamma_l\ket{\psi_l}$ prepared by oracle $O_{\psi}$.

The best previous quantum ground state preparation algorithm proceeds by implementing a degree
    \begin{equation}
        n=\mathbf{O}\left(\frac{\alpha_A}{\delta_A}
        \log\left(\frac{\kappa_S}{|\gamma_0|\epsilon}\right)\right).
    \end{equation}
polynomial using quantum eigenvalue transformer~\cite[Theorem 8]{QEVP}, which leads to the query complexity
\begin{equation}
    \mathbf{O}\left(\frac{\kappa_S^2}{|\gamma_0|}\frac{\alpha_A}{\delta_A}\log^2\left(\frac{\kappa_S}{|\gamma_0|\epsilon}\right)
    \left(\mathbf{Cost}\left(O_A\right)
    +\mathbf{Cost}\left(O_{\psi}\right)\right)
    \right).
\end{equation}
Using block preconditioning and our quantum linear system solver with optimal initial state preparation, we improve this to:

\begin{theorem}[Quantum ground state preparator with improved initial state preparation]
\label{thm:gs_init}
Let $A=S\Lambda S^{-1}$ be a diagonalizable matrix with real spectra and upper bound $\kappa_S\geq\norm{S}\norm{S^{-1}}$ on the condition number, such that $A/\alpha_A$ is block encoded by $O_A$ with normalization factor $\alpha_A\geq\norm{A}$.
Let eigenvalues of $A$ be ordered nondecreasingly, with $\lambda_0$ the smallest one corresponding to eigenstate $\ket{\psi_0}$, satisfying the condition
    \begin{equation}
        \lambda_0\leq-\frac{\delta_A}{2}<0<\frac{\delta_A}{2}\leq\lambda_1
    \end{equation}
    for some spectral gap $\delta_A>0$. 
Let $O_\psi\ket{0}=\ket{\psi}$ be the oracle preparing the initial state with the eigenbasis expansion
    \begin{equation}
        \ket{\psi}=\gamma_0\ket{\psi_0}+\sum_{l=1}^{d-1}\gamma_l\ket{\psi_l}.
    \end{equation}
Then, the ground state $\ket{\psi_0}$ can be produced with accuracy $\epsilon$, success probability $>\frac{1}{2}$ and global phase factor $\gamma_0/\abs{\gamma_0}$, with query complexity
\begin{equation}
    \mathbf{O}\Bigg(
    \frac{\kappa_S}{|\gamma_0|}\log\left(\frac{\alpha_A}{\delta_A}
        \log\left(\frac{\kappa_S}{|\gamma_0|\epsilon}\right)\right)
    \mathbf{Cost}\left(O_{\psi}\right)
    +\frac{\kappa_S^2}{|\gamma_0|}
    \frac{\alpha_A}{\delta_A}
    \polylog\left(\frac{\alpha_A}{\delta_A}
        \log\left(\frac{\kappa_S}{|\gamma_0|\epsilon}\right)\right)
    \mathbf{Cost}\left(O_{A}\right)\Bigg).
\end{equation}
\end{theorem}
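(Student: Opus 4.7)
The plan is to reduce ground-state preparation to an instance of \thm{qevt_init}, applied with a target polynomial $p$ that smoothly approximates the indicator of the negative half-line so that $p(A/\alpha_A)$ acts as an approximate projector onto the ground eigenspace.

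First, I would construct $p$ of degree $n-1$ with $|p(x)-1|\le\epsilon''$ for $x\in[-\tfrac{1}{2},-\tfrac{\delta_A}{2\alpha_A}]$, $|p(x)|\le\epsilon''$ for $x\in[\tfrac{\delta_A}{2\alpha_A},\tfrac{1}{2}]$, and $\|p\|_{\max,[-1/2,1/2]}=\mathbf{O}(1)$. Standard sign/step-function approximations (e.g.\ those built from truncated error-function expansions as in~\cite{Gilyen2018singular,Lin2020nearoptimalground}) achieve this with
\[
n=\mathbf{O}\!\left(\frac{\alpha_A}{\delta_A}\log\!\left(\frac{1}{\epsilon''}\right)\right),
\]
and I would rescale $\alpha_A$ by at most a constant so that the spectrum of $A/\alpha_A$ lies inside $[-\tfrac{1}{2},\tfrac{1}{2}]$, consistent with the domain assumed by \thm{qevt_init}.

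Next I would expand $\ket{\psi}=\sum_\ell\gamma_\ell\ket{\psi_\ell}$ with $\ket{\psi_\ell}=S\ket{e_\ell}$. The non-normal basis yields
\[
p(A/\alpha_A)\ket{\psi}-\gamma_0\ket{\psi_0}
=S\Bigl((p(\lambda_0/\alpha_A)-1)\gamma_0\ket{e_0}+\sum_{\ell\ge 1}\gamma_\ell\, p(\lambda_\ell/\alpha_A)\ket{e_\ell}\Bigr),
\]
whose norm is at most $\|S\|\,\epsilon''\,\bigl\|\sum_\ell\gamma_\ell\ket{e_\ell}\bigr\|\le\|S\|\|S^{-1}\|\epsilon''\|\ket{\psi}\|\le\kappa_S\epsilon''$. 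Choosing $\epsilon''=\mathbf{\Theta}(|\gamma_0|\epsilon/\kappa_S)$ would then give both $\|p(A/\alpha_A)\ket{\psi}-\gamma_0\ket{\psi_0}\|\le|\gamma_0|\epsilon/2$ and (by the reverse triangle inequality) $\|p(A/\alpha_A)\ket{\psi}\|\ge|\gamma_0|/2$, placing the normalized output within $\epsilon$ of $(\gamma_0/|\gamma_0|)\ket{\psi_0}$ via~\cite[Lemma 24]{QEVP}, and furnishing $\alpha_{p,\psi}=|\gamma_0|/2$. Substituting $n=\mathbf{O}(\tfrac{\alpha_A}{\delta_A}\log(\tfrac{\kappa_S}{|\gamma_0|\epsilon}))$, $\|p\|_{\max,[-1/2,1/2]}=\mathbf{O}(1)$, and $\alpha_{p,\psi}=\mathbf{\Theta}(|\gamma_0|)$ into the complexity of \thm{qevt_init} directly recovers the two stated query counts.

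The main delicate point will be the interplay between polynomial-approximation accuracy and the non-normal basis: any pointwise eigenvalue error of size $\epsilon''$ is amplified by a factor $\kappa_S$ in Euclidean error on the transformed state, so $\epsilon''=|\gamma_0|\epsilon/\kappa_S$ is forced (producing the $\log(\kappa_S/(|\gamma_0|\epsilon))$ inside $n$) \emph{on top of} the $\kappa_S^2$ overhead already incurred by the transformer. I must then verify that $\|p(A/\alpha_A)\ket{\psi}\|=\mathbf{\Omega}(|\gamma_0|)$ survives the $\kappa_S\epsilon''$ drift; everything else is a direct substitution into \thm{qevt_init}.
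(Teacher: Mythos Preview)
Your proposal is correct and matches the paper's approach exactly: the paper states the result as a direct corollary of \thm{qevt_init} applied with the step-function polynomial of degree $n=\mathbf{O}\bigl(\tfrac{\alpha_A}{\delta_A}\log(\tfrac{\kappa_S}{|\gamma_0|\epsilon})\bigr)$ from~\cite[Theorem~8]{QEVP}, and your error analysis (the $\kappa_S$ amplification forcing $\epsilon''=\mathbf{\Theta}(|\gamma_0|\epsilon/\kappa_S)$, hence $\alpha_{p,\psi}=\mathbf{\Theta}(|\gamma_0|)$) is precisely the content behind that reference.
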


\subsection{Block encoded quantum eigenvalue transformer}
\label{sec:precond_transform_blk}
We report another improvement to the block-encoding version of the quantum eigenvalue transformation algorithm. This is similar to the application from the previous subsection, except we construct the block encoding of the transformed matrix, rather than applying it to an initial state, making it more versatile when quantum eigenvalue transformation is used as a subroutine in constructing other quantum algorithms.

Specifically, we set $\eta=1$ and use \lem{inv_block} to construct a block encoding of
    \begin{equation}
        \frac{\mathbf{Pad}^{-1}(A)}{2\alpha_{\scriptscriptstyle \mathbf{Pad^{-1}(A)}}}
    \end{equation}
with a normalization factor $\alpha_{\scriptscriptstyle \mathbf{Pad^{-1}(A)}}=\mathbf{O}\left(n\kappa_S\right)$. 
Together with the preparation of initial state $\ket{0}\ket{\beta}$ and unpreparation of $\ket{1}\frac{1}{\sqrt{n}}\sum_{k=0}^{n-1}\ket{k}$ where $\ket{\beta}$ is given by \eq{beta_state}, we obtain the block encoding
    \begin{equation}
        \left(\bra{1}\frac{1}{\sqrt{n}}\sum_{k=0}^{n-1}\bra{k}\otimes I\right)
        \frac{\mathbf{Pad}^{-1}(A)}{2\alpha_{\scriptscriptstyle \mathbf{Pad^{-1}(A)}}}
        \left(\ket{0}\ket{\beta}\otimes I\right)
        =\frac{p\left(\frac{A}{\alpha_A}\right)}{\alpha_{p,\text{pre}}}
    \end{equation}
    with
    \begin{equation}
        \alpha_{p,\text{pre}}=\frac{\alpha_{\scriptscriptstyle \mathbf{Pad(A)}^{-1}}\alpha_{\widetilde\beta}}{\sqrt{n}}
        =\mathbf{O}\left(\sqrt{n}\kappa_S \alpha_{\widetilde\beta}\right),
    \end{equation}
Therefore, we have block encoded the target polynomial but with a larger normalization factor. In prior art~\cite[Theorem 5]{QEVP}, further amplification is performed to reduce the normalization factor, giving
    \begin{equation}
        \frac{p\left(\frac{A}{\alpha_A}\right)}{2\norm{p\left(\frac{A}{\alpha_A}\right)}}.
    \end{equation}
The overall query complexity is then
    \begin{equation}
        \mathbf{O}\left(\frac{\norm{p(\cos)\sin}_{2,[-\pi,\pi]}n^{\frac{3}{2}}\kappa_S^2}{\norm{p\left(\frac{A}{\alpha_A}\right)}}\log\left(\frac{\norm{p(\cos)\sin}_{2,[-\pi,\pi]}\sqrt{n}\kappa_S}{\norm{p\left(\frac{A}{\alpha_A}\right)}\epsilon}\right)\log\left(\frac{1}{\epsilon}\right)\right).
    \end{equation}

We improve this block encoding cost using the block preconditioning technique. Specifically, we define the scaling operator
\begin{equation}
\begin{aligned}
    S&=\frac{\norm{p}_{\max,\left[-\frac{1}{2},\frac{1}{2}\right]}}{\sqrt{n}\alpha_{\widetilde{\beta}}}\ketbra{0,\beta}{0,\beta}\otimes I+\left(I-\ketbra{0,\beta}{0,\beta}\right)\otimes I,\\
    S^{-1}&=\frac{\sqrt{n}\alpha_{\widetilde{\beta}}}{\norm{p}_{\max,\left[-\frac{1}{2},\frac{1}{2}\right]}}\ketbra{0,\beta}{0,\beta}\otimes I+\left(I-\ketbra{0,\beta}{0,\beta}\right)\otimes I
\end{aligned}
\end{equation}
same as in the previous subsection. Recall that this does not increase the asymptotic scaling of the condition number. However, the block encoding now becomes
    \begin{equation}
        \left(\bra{1}\frac{1}{\sqrt{n}}\sum_{k=0}^{n-1}\bra{k}\otimes I\right)
        \frac{\mathbf{Pad}^{-1}(A)S^{-1}}{2\alpha_{\scriptscriptstyle \mathbf{Pad^{-1}(A)}}}
        \left(\ket{0}\ket{\beta}\otimes I\right)
        =\frac{\sqrt{n}\alpha_{\widetilde{\beta}}}{\norm{p}_{\max,\left[-\frac{1}{2},\frac{1}{2}\right]}}\frac{p\left(\frac{A}{\alpha_A}\right)}{\alpha_{p,\text{pre}}}
        =\frac{p\left(\frac{A}{\alpha_A}\right)}{\alpha_{p,\text{cond}}}
    \end{equation}
for
\begin{equation}
    \alpha_{p,\text{cond}}
    =\mathbf{O}\left(\sqrt{n}\kappa_S \alpha_{\widetilde\beta}
    \frac{\norm{p}_{\max,\left[-\frac{1}{2},\frac{1}{2}\right]}}{\sqrt{n}\alpha_{\widetilde{\beta}}}\right)
    =\mathbf{O}\left(\norm{p}_{\max,\left[-\frac{1}{2},\frac{1}{2}\right]}\kappa_S\right).
\end{equation}
So the query complexity to the block encoding is improved from $\mathbf{O}(n^{1.5})$ to $\mathbf{O}(n)$.

\begin{theorem}[Block encoded quantum eigenvalue transformer with linear degree cost]
\label{thm:qevt_blk}
Let $A=S\Lambda S^{-1}$ be a diagonalizable matrix with real spectra and upper bound $\kappa_S\geq\norm{S}\norm{S^{-1}}$ on the condition number, such that $A/\alpha_A$ is block encoded by $O_A$ with normalization factor $\alpha_A\geq\norm{A}$.
Let $p(x)=\sum_{k=0}^{n-1}\widetilde\beta_k\widetilde{\mathbf{T}}_{k}(x)=\sum_{k=0}^{n-1}\beta_k{\mathbf{T}}_{k}(x)$ be the Chebyshev expansion of a degree-($n-1$) polynomial $p$.
Then for any $\alpha_p\geq\norm{p\left(\frac{A}{\alpha_A}\right)}$, the operator
    \begin{equation}
        \frac{p\left(\frac{A}{\alpha_A}\right)}{2\alpha_p}
    \end{equation}
    can be block encoded with accuracy $\epsilon$ using
    \begin{equation}
        \mathbf{O}\left(\frac{\norm{p}_{\max,\left[-\frac{1}{2},\frac{1}{2}\right]}n\kappa_S^2}{\alpha_p}\log\left(\frac{\norm{p}_{\max,\left[-\frac{1}{2},\frac{1}{2}\right]}\kappa_S}{\alpha_p\epsilon}\right)\log\left(\frac{1}{\epsilon}\right)\right)
    \end{equation}
    queries to $O_A$.
\end{theorem}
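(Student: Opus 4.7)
The plan is to follow the block-preconditioned construction sketched just before the statement, and to add the missing final amplification step that reduces the normalization factor down to $2\alpha_p$. I will build on the same coefficient matrix $\mathbf{Pad}(A)$ with $\eta=1$ that was used for \thm{qevt_init}, but rather than applying $\mathbf{Pad}^{-1}(A)$ to a particular state $\ket{0}\ket{\beta}\ket{\psi}$, I will retain it as a matrix-valued object, select the appropriate block by means of the isometric encodings $\ket{0}\ket{\beta}$ on the right and $\ket{1}\tfrac{1}{\sqrt{n}}\sum_{k=0}^{n-1}\ket{k}$ on the left, and insert the scaling operator $S^{-1}$ between them.

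First, I would invoke \lem{inv_block} on $\mathbf{Pad}(A)$ using $\norm{\mathbf{Pad}(A)}\leq 4$ and the bound $\norm{\mathbf{Pad}^{-1}(A)}=\mathbf{O}(n\kappa_S)$, yielding a block encoding of $\mathbf{Pad}^{-1}(A)/(2\alpha_{\scriptscriptstyle \mathbf{Pad^{-1}(A)}})$ with accuracy $\epsilon'$ using $\mathbf{O}(n\kappa_S\log(1/\epsilon'))$ queries to $O_A$. Next, I would take the linear combination block encoding of the scaling operator $S^{-1}$ from the end of \sec{precond_transform}; since $S^{-1}$ differs from the identity only on the ancilla registers and uses $O_b$-style preparation of $\ket{\beta}$ on those ancillas (not on the system), it introduces no additional queries to $O_A$. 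Composing the two block encodings and sandwiching by the isometries $\ket{0}\ket{\beta}$ and $\ket{1}\tfrac{1}{\sqrt{n}}\sum_k\ket{k}$, exactly as displayed in \sec{precond_transform_blk}, produces a block encoding of $p(A/\alpha_A)/\alpha_{p,\text{cond}}$ with $\alpha_{p,\text{cond}}=\mathbf{O}(\norm{p}_{\max,[-\frac{1}{2},\frac{1}{2}]}\kappa_S)$. The key arithmetic here is that the factor $\sqrt{n}\alpha_{\widetilde\beta}/\norm{p}_{\max,[-\frac{1}{2},\frac{1}{2}]}$ inserted by $S^{-1}$ on its invariant subspace exactly cancels one factor of $\sqrt{n}\alpha_{\widetilde\beta}$ from the prior normalization, saving the $\sqrt{n}$ factor relative to the previous $\mathbf{O}(n^{1.5})$ construction.

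Second, since $\alpha_{p,\text{cond}}$ can still exceed the target $\alpha_p$, I would apply a standard uniform singular value amplification to the block encoding of $p(A/\alpha_A)/\alpha_{p,\text{cond}}$ to produce one with normalization $2\alpha_p$. This step incurs a multiplicative overhead of $\mathbf{O}(\alpha_{p,\text{cond}}/\alpha_p \cdot \log(1/\epsilon''))$ queries to the preconditioned block encoding, where $\epsilon''$ is the amplification error. Composing the two error layers and choosing $\epsilon',\epsilon''=\mathbf{\Theta}(\epsilon)$ (with care to accumulate errors linearly under the composition rules for block encodings), the total query count becomes
\begin{equation}
    \mathbf{O}\!\left(\frac{\norm{p}_{\max,[-\frac{1}{2},\frac{1}{2}]}\kappa_S}{\alpha_p}\log\!\left(\frac{1}{\epsilon}\right)\cdot n\kappa_S\log\!\left(\frac{\norm{p}_{\max,[-\frac{1}{2},\frac{1}{2}]}\kappa_S}{\alpha_p\epsilon}\right)\right),
\end{equation}
which is precisely the claimed bound.

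The main obstacle will be bookkeeping rather than conceptual: I need to verify that the preconditioning $S^{-1}$, which acts nontrivially only on the one-dimensional subspace spanned by $\ket{0}\ket{\beta}$ in the ancilla register, leaves the isometric extraction $(\bra{1}\tfrac{1}{\sqrt{n}}\sum_k\bra{k}\otimes I)\,\mathbf{Pad}^{-1}(A)\,S^{-1}\,(\ket{0}\ket{\beta}\otimes I)$ genuinely equal to $\tfrac{\sqrt{n}\alpha_{\widetilde\beta}}{\norm{p}_{\max,[-\frac{1}{2},\frac{1}{2}]}}\cdot p(A/\alpha_A)/\alpha_{\scriptscriptstyle \mathbf{Pad^{-1}(A)}}$, and that the spectral-norm bound $\norm{(S\mathbf{Pad}(A))^{-1}}=\mathbf{O}(n\kappa_S\log(n))$ derived in \sec{precond_transform} is not spoiled by the block-encoding composition (which works at the operator level, independent of any particular input state). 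Once these linear-algebra checks are in place, the amplification and error analysis are routine applications of the standard block-encoding toolkit.
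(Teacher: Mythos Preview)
Your proposal has a genuine gap in the order of operations. You apply \lem{inv_block} to $\mathbf{Pad}(A)$ and then try to compose with a separate block encoding of $S^{-1}$. But $\norm{S^{-1}}=1/s=\sqrt{n}\,\alpha_{\widetilde\beta}/\norm{p}_{\max,[-\frac{1}{2},\frac{1}{2}]}>1$, so any block encoding of $S^{-1}$ must carry normalization at least $1/s$. Composing then yields a block encoding of $s\,\mathbf{Pad}^{-1}(A)S^{-1}/(2\alpha_{\scriptscriptstyle\mathbf{Pad^{-1}(A)}})$, and after sandwiching by $\ket{0}\ket{\beta}$ the factor $1/s$ from $S^{-1}\ket{0}\ket{\beta}$ is exactly cancelled by the extra $s$ from the normalization. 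You recover $p(A/\alpha_A)/\alpha_{p,\text{pre}}$ with no improvement over the unpreconditioned construction, so the final amplification overhead is still $\alpha_{p,\text{pre}}/\alpha_p=\mathbf{O}(\sqrt{n}\kappa_S\alpha_{\widetilde\beta}/\alpha_p)$ and the total scaling stays at $\mathbf{O}(n^{1.5})$. The linear-combination formula you cite from the preconditioning discussion block encodes $S$, not $S^{-1}$.

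The paper instead applies \lem{inv_block} directly to the preconditioned matrix $S\mathbf{Pad}(A)$. This works because $\norm{S}\le 1$, so $S\mathbf{Pad}(A)$ is block encoded with the same normalization $\le 4$ as $\mathbf{Pad}(A)$, while the inverse norm $\norm{(S\mathbf{Pad}(A))^{-1}}=\norm{\mathbf{Pad}^{-1}(A)S^{-1}}$ remains $\mathbf{O}(n\kappa_S)$ (up to the $\log n$ caveat you noted) by the estimate in \sec{precond_transform}. Thus the inversion step costs the same $\mathbf{O}(n\kappa_S\log(1/\epsilon'))$ queries, but its output is $\mathbf{Pad}^{-1}(A)S^{-1}/(2\alpha_{\scriptscriptstyle\mathbf{Pad^{-1}(A)}})$ with \emph{no} extra normalization penalty, and sandwiching now gives $p(A/\alpha_A)/\alpha_{p,\text{cond}}$ with $\alpha_{p,\text{cond}}=\mathbf{O}(\norm{p}_{\max,[-\frac{1}{2},\frac{1}{2}]}\kappa_S)$. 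The entire point of block preconditioning is that the favorable scaling appears only when the inversion is performed on the preconditioned operator; it cannot be recovered post hoc by composing with $S^{-1}$.
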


\section{Discussion}
\label{sec:discuss}
In this work, we have developed a quantum linear system algorithm that achieves an optimal query complexity of initial state preparation, while maintaining a nearly optimal cost scaling of the coefficient block encoding. This outperforms recent linear system solvers that make optimal uses of one resource, but far too many queries to the other. Our algorithm employs a nested amplitude amplification with a deterministic amplification schedule, which substantially simplifies prior approaches based on VTAA while improving their asymptotic query complexity. We also present a quantum algorithm that produces a constant multiplicative approximation of norm of the solution state, improving over all previous approaches where finding such an estimate incurs polylogarithmic overhead.

Our state preparation cost scales strictly linear in the inverse success amplitude $\mathbf{O}\left(\frac{1}{\sqrt{p_{\text{succ}}}}\right)$. We further develop a block preconditioning technique that boosts $p_{\text{succ}}$ in practical applications of quantum linear system solvers. As a result, we obtain quantum algorithms for solving differential equations, for preparing ground states of non-Hermitian operators, and for processing eigenvalues of non-normal matrices, all with reduced query complexity of the initial state preparation, nearly matching or outperforming alternative methods for the same tasks. Of independent interest, our block preconditioning technique also allows for improvement of other scaling parameters. We present an extremely simple quantum linear system solver with an optimal block encoding cost by choosing the initial state as the preconditioner. We also give a preconditioned quantum eigenvalue transformer that implements a degree-$n$ polynomial using $\mathbf{O}(n)$ queries to the block encoding oracle, whereas the best prior scaling was $\mathbf{O}(n^{1.5})$.

We have examined variable time amplitude amplifications with tunable threshold values that capture the power of a generic nested amplification. We have shown that Tunable VTAA makes $\mathbf{O}\left(\frac{1}{\sqrt{p_{\text{succ}}}}\right)$ queries to the initial state matching the performance of standard amplitude amplification, while only contains at most $\mathbf{O}\left(\log\left(\frac{1}{\sqrt{p_{\text{succ}}}}\right)\right)$ nontrivial amplification stages allowing majority of the algorithms to be pre-merged.
Specialized to the quantum linear system problem, we have designed a discretized inverse state which can be prepared by Tunable VTAA with a complexity scaling with $\ell_2$-norm of the input costs. We show that this scheme translates naturally to a deterministic amplification schedule, avoiding the significant setup overhead of prior VTAA approaches. However, we prove that one can in principle optimize the amplification thresholds so that the cost of Tunable VTAA attains $\ell_{\frac{2}{3}}$-quasinorm of the input costs. This improves over the $\ell_1$-norm result of Ambainis and the more commonly used $\ell_2$-norm result. 
With more prior knowledge about the target linear system, this result can be utilized to further improve the query complexity of quantum solvers. For instance, consider the coefficient matrix
\begin{equation}
A=
\begin{bmatrix}
    \frac{1}{3} & & &\\
    & \frac{1}{9} & &\\
    & & \ddots &\\
    & & & \frac{1}{3^m}
\end{bmatrix}
\end{equation}
and initial state $\ket{b}=\ket{m-l-1}$, such that $A\ket{b}=\frac{1}{3^{m-l}}\ket{b}$. Here $A$ can be block encoded with normalization factor $\alpha_A=1$, whereas $\norm{A^{-1}}\leq 3^m=\alpha_{A^{-1}}=\kappa$. Meanwhile, we have $\norm{A^{-1}\ket{b}}=3^{m-l}$, so $\sqrt{p_{\text{succ}}}=\frac{\norm{A^{-1}\ket{b}}}{\alpha_{A^{-1}}}=\frac{1}{3^l}$. Choosing a non-uniform schedule of GPE accuracies $\epsilon_{\text{gpe},j}=\frac{\epsilon}{2^{m-j+2}}$ for $j=m-l+1,\ldots,m$ and $\epsilon_{\text{gpe},j}=\frac{\epsilon}{2^{m-l-j+2}}$ for $j=1,\ldots,m-l$, one can attain the improved query complexity $\mathbf{O}\left(\frac{1}{\sqrt{p_{\text{succ}}}}\mathbf{Cost}(O_b)
    +\kappa\log\left(\frac{1}{\epsilon}\right)\mathbf{Cost}\left(O_A\right)\right)$.
It would be interesting to identify more algorithmic applications where the $\ell_{\frac{2}{3}}$-quasinorm can be tightly bounded without introducing additional scaling factors.

Related to the above discussion, an obvious open question is whether it is possible to have a quantum linear system algorithm with query complexity
\begin{equation}
    \mathbf{O}\left(\frac{1}{\sqrt{p_{\text{succ}}}}\mathbf{Cost}(O_b)
    +\kappa\log\left(\frac{1}{\epsilon}\right)\mathbf{Cost}\left(O_A\right)\right)
\end{equation}
for a generic linear system problem.
From an algorithmic perspective, our Tunable VTAA matches the above query scaling of $O_b$ and, theoretically, the cost of $O_A$ scales strictly linear with the $\ell_{\frac{2}{3}}$-quasinorm of input costs, although upper bounding it in terms of $\kappa$ without logarithmic overhead can be difficult in general. On the other hand, there may also be room to further improve the lower bound when one considers the two oracles $O_b$ and $O_A$ simultaneously.

We have mainly focused on applications of quantum linear system solvers where the target outputs are quantum states or block encoded matrices. In this case, we have introduced a scaling operator such that, when inverted from the right side of coefficient matrix, the success amplitude can be boosted up and complexity of the algorithm can be reduced. However, there also exist other linear-system related problems such as computing Green's functions, where the goal is to estimate the expectation value of an observable. We leave it as a subject for future work to explore whether algorithms for such problems can be improved by implementing generalized versions of block preconditioning on both sides of the coefficient matrix.

A number of other questions call for more investigations. Our preconditioned quantum linear system algorithm has an optimal query complexity of block encoding and appears to be conceptually simpler than recent optimal methods based on the discrete adiabatic evolution and kernel reflection. It may be fruitful to assess the resources required to implement this algorithm, for both near-term and fault-tolerant quantum devices.
It is also possible to improve the constant prefactor of the complexity of VTAA: for instance, one may use the exact trigonometric identity $\frac{\sin(3\theta)}{3\sin(\theta)}=1-\frac{4}{3}\sin^2(\theta)$ as opposed to its looser bounds when the amplification step number is exactly $3$; one may also analyze the nested amplitude amplification directly without resorting to~\prop{tunable_universal} which introduces constant-factor overhead; or one may replace the first $m-l$ stages of GPE by a single GPE covering a larger interval of eigenvalues (although this modification could significantly increase the success probability and result in over amplification).
Additionally, our solution norm estimation algorithm introduces extra logarithmic factors to the complexity of $O_A$, which may be improved by solving an augmented linear system similar to~\cite{Dalzell2024shortcut}. 
Its query complexity of $O_b$ scales with a known lower bound $\alpha_{p_{\text{succ}}}$ worse than the actual success probability $p_{\text{succ}}$. However, we show how to tightly bound $\alpha_{p_{\text{succ}}}$ in applications to differential equations and eigenvalue processing.
Finally, it would be of interest to explore other algorithmic applications of quantum linear system solvers beyond those studied here.

\section*{Acknowledgements}
We thank Rolando Somma for helpful comments on an earlier draft.

\appendix
\section{Axiomatic definition of variable time amplification}
\label{append:axiom}

In this appendix, we collect mathematical results useful for formulating the axiomatic definition of variable time amplification in \sec{prelim_amp}.
We assume throughout this appendix that an underlying Hilbert space is fixed, on which all operators act.
We denote the complementary operator of $\Pi$ by $\overline{\Pi}=I-\Pi$.

\subsection{Clock projections}
\label{append:axiom_clock}

\begin{lemma}[Partially ordered projections]
\label{lem:partial_order}
    Let $\Pi_1^2=\Pi_1=\Pi_1^\dagger$ and $\Pi_2^2=\Pi_2=\Pi_2^\dagger$ be orthogonal projections. The following statements are all equivalent:
    \begin{enumerate}
        \item $\mathbf{Im}(\Pi_1)\subseteq\mathbf{Im}(\Pi_2)$.
        \item 
        \begin{enumerate}
            \item $\Pi_2\Pi_1=\Pi_1$.
            \item $\Pi_1\Pi_2=\Pi_1$.
            \item $(\Pi_2-\Pi_1)^2=\Pi_2-\Pi_1$ is an orthogonal projection.
            \item $\Pi_1\Pi_2\Pi_1=\Pi_1$.
        \end{enumerate}
        \item 
        \begin{enumerate}
            \item $\Pi_1\leq\Pi_2$ as Hermitian operators.
            \item $\norm{\Pi_1\ket{\psi}}\leq\norm{\Pi_2\ket{\psi}}$ for all states $\ket{\psi}$.
        \end{enumerate}
    \end{enumerate}
\end{lemma}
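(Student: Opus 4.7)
The plan is to establish all equivalences as a single cyclic chain of implications rather than proving the $\binom{7}{2}$ pairwise equivalences directly. A natural order is $(1) \Rightarrow (2a) \Rightarrow (2b) \Rightarrow (2c) \Rightarrow (3a) \Leftrightarrow (3b) \Rightarrow (1)$, with statement $(2d)$ attached as a short side loop via $(2a)\wedge(2b) \Rightarrow (2d) \Rightarrow (2a)$. Every implication reduces to a one-line algebraic identity or a standard fact about orthogonal projections, so the ``proof'' is really about organizing the logic cleanly rather than overcoming a serious obstacle.

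For the main cycle: $(1)\Rightarrow(2a)$ is immediate, since for any vector $v$ one has $\Pi_1 v \in \mathbf{Im}(\Pi_1) \subseteq \mathbf{Im}(\Pi_2)$, so $\Pi_2\Pi_1 v = \Pi_1 v$. The step $(2a)\Leftrightarrow(2b)$ follows by taking adjoints using $\Pi_i^\dagger = \Pi_i$. For $(2a)\wedge(2b)\Rightarrow(2c)$, expand $(\Pi_2-\Pi_1)^2 = \Pi_2 - \Pi_2\Pi_1 - \Pi_1\Pi_2 + \Pi_1 = \Pi_2 - \Pi_1$, with Hermiticity inherited. Then $(2c)\Rightarrow(3a)$ because every orthogonal projection is positive semidefinite, so $\Pi_2 - \Pi_1 \geq 0$. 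The equivalence $(3a)\Leftrightarrow(3b)$ uses the standard characterization $A\leq B \Leftrightarrow \bra{\psi}A\ket{\psi}\leq\bra{\psi}B\ket{\psi}$ together with $\bra{\psi}\Pi_i\ket{\psi} = \norm{\Pi_i\ket{\psi}}^2$. Finally, $(3b)\Rightarrow(1)$: for a unit vector $\ket{\psi}\in\mathbf{Im}(\Pi_1)$ we get $1 = \norm{\Pi_1\ket{\psi}} \leq \norm{\Pi_2\ket{\psi}} \leq 1$, forcing $\Pi_2\ket{\psi} = \ket{\psi}$ and hence $\ket{\psi}\in\mathbf{Im}(\Pi_2)$.

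For the side loop, $(2a)\wedge(2b)\Rightarrow(2d)$ is instant: $\Pi_1\Pi_2\Pi_1 = \Pi_1\cdot\Pi_1 = \Pi_1$. Conversely $(2d)\Rightarrow(2a)$ by the computation
\begin{equation}
    \norm{(\Pi_2\Pi_1 - \Pi_1)v}^2 = \bra{v}\Pi_1(I-\Pi_2)\Pi_1\ket{v} = \bra{v}\Pi_1\ket{v} - \bra{v}\Pi_1\Pi_2\Pi_1\ket{v} = 0,
\end{equation}
which forces $\Pi_2\Pi_1 = \Pi_1$. This plugs $(2d)$ into the cycle at the same node as $(2a)$, completing the equivalence of all seven statements. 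The main thing to watch is simply not to introduce a redundant implication that would blow up the exposition; otherwise the result is essentially a careful bookkeeping exercise with no analytic content beyond basic linear algebra.
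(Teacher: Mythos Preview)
Your proof is correct and follows essentially the same cycle-of-implications approach as the paper. The only notable difference is that for $(2d)\Rightarrow(2a)$ the paper uses a trace argument (observing $\mathbf{Tr}((I-\Pi_2)\Pi_1)=0$ and appealing to positivity), whereas your direct norm computation $\norm{(\Pi_2\Pi_1-\Pi_1)v}^2=\bra{v}\Pi_1(I-\Pi_2)\Pi_1\ket{v}=0$ is slightly cleaner and avoids any implicit finite-dimensionality assumption; similarly, for $(3b)\Rightarrow(1)$ the paper argues via kernels ($\mathbf{Ker}(\Pi_2)\subseteq\mathbf{Ker}(\Pi_1)$) while you argue via images directly, but these are cosmetic variations.
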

\begin{proof}
We begin with the equivalence of 3(a) and 3(b) which follows from a direct verification:
\begin{equation}
\begin{aligned}
    \norm{\Pi_1\ket{\psi}}\leq\norm{\Pi_2\ket{\psi}}
    \quad&\Leftrightarrow\quad
    \norm{\Pi_1\ket{\psi}}^2\leq\norm{\Pi_2\ket{\psi}}^2
    \quad&&\Leftrightarrow\quad
    \bra{\psi}\Pi_1^\dagger\Pi_1\ket{\psi}\leq\bra{\psi}\Pi_2^\dagger\Pi_2\ket{\psi}\\
    &\Leftrightarrow\quad\bra{\psi}\Pi_1\ket{\psi}\leq\bra{\psi}\Pi_2\ket{\psi}
    &&\Leftrightarrow\quad\Pi_1\leq\Pi_2.
\end{aligned}
\end{equation}
The equivalence of 2(a) and 2(b) is also immediate:
\begin{equation}
    \Pi_1=\Pi_2\Pi_1
    \quad\Leftrightarrow\quad
    \Pi_1^\dagger=\left(\Pi_2\Pi_1\right)^\dagger=\Pi_1^\dagger\Pi_2^\dagger
    \quad\Leftrightarrow\quad
    \Pi_1=\Pi_1\Pi_2.
\end{equation}
Assuming 2(b) (and hence 2(a)) holds, 2(c) follows from the calculation
\begin{equation}
    (\Pi_2-\Pi_1)^2
    =\Pi_2^2+\Pi_1^2-\Pi_1\Pi_2-\Pi_2\Pi_1
    =\Pi_2+\Pi_1-\Pi_1-\Pi_1=\Pi_2-\Pi_1.
\end{equation}
Now if 2(c) is true, we have
\begin{equation}
\begin{aligned}
    &\Pi_2+\Pi_1-\Pi_1\Pi_2-\Pi_2\Pi_1
    =(\Pi_2-\Pi_1)^2=\Pi_2-\Pi_1\\
    \quad\Leftrightarrow\quad&
    2\Pi_1=\Pi_1\Pi_2+\Pi_2\Pi_1
    \quad\Rightarrow\quad
    \Pi_1=\Pi_1\Pi_2\Pi_1,
\end{aligned}
\end{equation}
where in the last step we perform $\Pi_1$ on both sides of the operators. This proves 2(d).

The implication 2(d) $\Rightarrow$ 2(a) can be established by a trace argument. Note that 
\begin{equation}
    \mathbf{Tr}\left((I-\Pi_2)\Pi_1\right)
    =\mathbf{Tr}\left(\Pi_1-\Pi_2\Pi_1\right)
    =\mathbf{Tr}\left(\Pi_1-\Pi_1\Pi_2\Pi_1\right)
    =0,
\end{equation}
where the second equality follows from cyclic property of the trace function. Since $I-\Pi_2$ and $\Pi_1$ are both positive semidefinite, this necessarily means that $(I-\Pi_2)\Pi_1=0$ which is equivalent to 2(a).

Suppose that $\mathbf{Im}(\Pi_1)\subseteq\mathbf{Im}(\Pi_2)$. For any state $\ket{\psi}$, we have $\Pi_1\ket{\psi}\in\mathbf{Im}(\Pi_1)\subseteq\mathbf{Im}(\Pi_2)$, which gives $\Pi_2\Pi_1\ket{\psi}=\Pi_1\ket{\psi}$ and thus $\Pi_1=\Pi_2\Pi_1$. This shows 1 $\Rightarrow$ 2(a). The implication 2(c) $\Rightarrow$ 3(a) is trivial. Finally, assume that 3(b) is true. For any $\ket{\psi}\in\mathbf{Ker}(\Pi_2)$, we have $0\leq\norm{\Pi_1\ket{\psi}}\leq\norm{\Pi_2\ket{\psi}}=0$, which forces $\norm{\Pi_1\ket{\psi}}=0$ and hence $\Pi_1\ket{\psi}=0$. This means $\mathbf{Ker}(\Pi_2)\subseteq\mathbf{Ker}(\Pi_1)$, which is equivalent to Claim 1 after taking the orthogonal complement.
\end{proof}
The above list can be expanded to include other equivalent characterizations, but this is not needed for our work.
See~\cite[Section 2.5]{kadison1983fundamentals} and~\cite[Section 9.6]{kreyszig1991introductory} for further discussions about the partial ordering of orthogonal projections.

\begin{proposition}
Let $\{\Pi_j\}_{j=0}^m$ be orthogonal projections partially ordered as  $0=\Pi_0\leq\Pi_1\leq\cdots\leq\Pi_m=I$. Then, we have $I=\overline{\Pi_0}\geq\overline{\Pi_1}\geq\cdots\geq\overline{\Pi_m}=0$ and
\begin{equation}
    \Pi_j=\Pi_k\Pi_j=\Pi_j\Pi_k,\qquad 
    \overline{\Pi_k}=\overline{\Pi_j}\cdot\overline{\Pi_k}=\overline{\Pi_k}\cdot\overline{\Pi_j},\qquad
    0\leq j\leq k\leq m,
\end{equation}
whereas $\Pi_k-\Pi_j$ are all orthogonal projections.
\end{proposition}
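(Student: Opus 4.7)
The plan is to leverage \lem{partial_order} which already establishes all the needed equivalences for pairs of comparable orthogonal projections, and then extend pointwise facts to the whole chain by invoking transitivity of the partial ordering $\leq$ on Hermitian operators.

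First I would observe that the reverse chain $I=\overline{\Pi_0}\geq\overline{\Pi_1}\geq\cdots\geq\overline{\Pi_m}=0$ is immediate from the hypothesis: for Hermitian operators, $X\leq Y$ is equivalent to $I-Y\leq I-X$, so the given nondecreasing chain becomes a nonincreasing chain after complementation. Next, for any $0\leq j\leq k\leq m$, transitivity of $\leq$ applied to the adjacent inequalities $\Pi_j\leq\Pi_{j+1}\leq\cdots\leq\Pi_k$ gives $\Pi_j\leq\Pi_k$. Then invoking the equivalence of Statements 2(a), 2(b), and 3(a) in \lem{partial_order} yields $\Pi_k\Pi_j=\Pi_j=\Pi_j\Pi_k$, as desired.

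The complementary identities follow by a direct expansion that uses only the identities just proved:
\begin{equation}
\overline{\Pi_j}\cdot\overline{\Pi_k}=(I-\Pi_j)(I-\Pi_k)=I-\Pi_j-\Pi_k+\Pi_j\Pi_k=I-\Pi_j-\Pi_k+\Pi_j=\overline{\Pi_k},
\end{equation}
and symmetrically $\overline{\Pi_k}\cdot\overline{\Pi_j}=\overline{\Pi_k}$. Alternatively one could simply apply the same lemma to the reversed chain of complements, but the computation above is so short that it is preferable. Finally, since $\Pi_j\leq\Pi_k$, the equivalence of Statements 2(c) and 3(a) in \lem{partial_order} gives $(\Pi_k-\Pi_j)^2=\Pi_k-\Pi_j$; combined with the manifest Hermiticity of $\Pi_k-\Pi_j$, this shows that $\Pi_k-\Pi_j$ is an orthogonal projection.

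There is no real obstacle here: the proposition is a structural unwrapping of the pairwise lemma along a totally ordered chain, and the only content beyond direct quotation of \lem{partial_order} is the (routine) observation that $\leq$ on Hermitian operators is transitive and that complementation reverses the ordering. If anything, the only point requiring a hair of care is to justify chaining the pairwise hypotheses into the pairwise ordering $\Pi_j\leq\Pi_k$ for arbitrary $j\leq k$, which follows by a trivial induction on $k-j$.
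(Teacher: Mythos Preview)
Your proof is correct and matches the paper's intended approach: the proposition is stated there without proof, as an immediate corollary of \lem{partial_order}, and your argument---transitivity of $\leq$ to get $\Pi_j\leq\Pi_k$, then invoking the equivalences 2(a), 2(b), 2(c), 3(a)---is exactly the natural unwrapping. The direct expansion for the complementary identities is a clean touch.
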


\subsection{Flag projections}
\label{append:axiom_flag}

\begin{proposition}
Let $\{\Pi_j\}_{j=0}^m$ be orthogonal projections partially ordered as  $0=\Pi_0\leq\Pi_1\leq\cdots\leq\Pi_m=I$. Let $\Pi_b$ be an orthogonal projection commuting with all $\Pi_j$: $\Pi_b\Pi_j=\Pi_j\Pi_b$ for $j=0,\ldots, m$. Then, the following two resolutions of identity hold
\begin{equation}
    I=\sum_{j=1}^{m}\left(\Pi_j-\Pi_{j-1}\right)
    =\Pi_b+\overline{\Pi_b},
\end{equation}
where all $\{\Pi_j-\Pi_{j-1}\}_{j=1}^m$ and $\{\Pi_b,\overline{\Pi_b}\}$ are orthogonal projections and pairwise commute.
Moreover, we have
\begin{equation}
    0=\Pi_0\Pi_b\leq\Pi_1\Pi_b\leq\cdots\leq\Pi_m\Pi_b=\Pi_b,\qquad
    I=\overline{\Pi_0\Pi_b}\geq\overline{\Pi_1\Pi_b}\geq\cdots\geq\overline{\Pi_m\Pi_b}=\overline{\Pi_b},
\end{equation}
with
\begin{equation}
    \Pi_j\Pi_b=\Pi_k\Pi_b\cdot\Pi_j\Pi_b=\Pi_j\Pi_b\cdot\Pi_k\Pi_b,\qquad
    \overline{\Pi_k\Pi_b}=\overline{\Pi_j\Pi_b}\cdot\overline{\Pi_k\Pi_b}=\overline{\Pi_k\Pi_b}\cdot\overline{\Pi_j\Pi_b},\qquad
    0\leq j\leq k\leq m.
\end{equation}
\end{proposition}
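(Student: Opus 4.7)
The plan is to observe that every claim reduces to routine algebraic manipulations once we combine the partial ordering relations from \lem{partial_order} with the commutation hypothesis $\Pi_b\Pi_j=\Pi_j\Pi_b$. The overall strategy is: first establish the two resolutions of identity independently, then derive all joint properties from the fact that $\Pi_b$ commutes with every $\Pi_j$ and hence with every $\Pi_j-\Pi_{j-1}$.

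First I would verify $I=\sum_{j=1}^{m}(\Pi_j-\Pi_{j-1})$ by a telescoping argument using $\Pi_m=I$ and $\Pi_0=0$. Each summand is an orthogonal projection by Claim~2(c) of \lem{partial_order}, applied to the pair $\Pi_{j-1}\leq\Pi_j$. For pairwise commutation and orthogonality of the summands, I would note that Claim~2(a)--(b) of \lem{partial_order} gives $\Pi_j\Pi_k=\Pi_k\Pi_j=\Pi_j$ whenever $j\leq k$, so all $\Pi_j$ mutually commute; a short expansion then shows $(\Pi_j-\Pi_{j-1})(\Pi_k-\Pi_{k-1})=0$ for $j<k$ and $(\Pi_j-\Pi_{j-1})(\Pi_j-\Pi_{j-1})=\Pi_j-\Pi_{j-1}$. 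The resolution $I=\Pi_b+\overline{\Pi_b}$ is immediate, and mutual commutation of $\{\Pi_j-\Pi_{j-1}\}_{j=1}^m$ with $\{\Pi_b,\overline{\Pi_b}\}$ follows directly from $\Pi_b\Pi_j=\Pi_j\Pi_b$.

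Next I would handle the products $\Pi_j\Pi_b$. Commutation of $\Pi_j$ and $\Pi_b$ immediately gives
\begin{equation}
    (\Pi_j\Pi_b)^2=\Pi_j^2\Pi_b^2=\Pi_j\Pi_b,\qquad (\Pi_j\Pi_b)^\dagger=\Pi_b\Pi_j=\Pi_j\Pi_b,
\end{equation}
so each $\Pi_j\Pi_b$ is an orthogonal projection, with $\Pi_0\Pi_b=0$ and $\Pi_m\Pi_b=\Pi_b$. For the identity $\Pi_j\Pi_b=\Pi_k\Pi_b\cdot\Pi_j\Pi_b=\Pi_j\Pi_b\cdot\Pi_k\Pi_b$ when $j\leq k$, I would apply Claim~2(a)--(b) of \lem{partial_order} to the $\Pi$'s and then use commutation with $\Pi_b$:
\begin{equation}
    \Pi_k\Pi_b\cdot\Pi_j\Pi_b=\Pi_k\Pi_j\Pi_b^2=\Pi_j\Pi_b,
\end{equation}
and symmetrically on the other side. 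This verifies the hypothesis of Claim~2(a) of \lem{partial_order} for the pair $\Pi_j\Pi_b,\Pi_k\Pi_b$, which yields $\Pi_j\Pi_b\leq\Pi_k\Pi_b$ via Claim~3(a), establishing the chain $0=\Pi_0\Pi_b\leq\cdots\leq\Pi_m\Pi_b=\Pi_b$.

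Finally, the complementary chain follows by taking $I-(\cdot)$: the partial order reverses, $\overline{\Pi_0\Pi_b}=I$, $\overline{\Pi_m\Pi_b}=\overline{\Pi_b}$, and
\begin{equation}
    \overline{\Pi_j\Pi_b}\cdot\overline{\Pi_k\Pi_b}=(I-\Pi_j\Pi_b)(I-\Pi_k\Pi_b)=I-\Pi_j\Pi_b-\Pi_k\Pi_b+\Pi_j\Pi_b
    =I-\Pi_k\Pi_b=\overline{\Pi_k\Pi_b},
\end{equation}
using the product identity just established; the same computation with the factors reversed gives the symmetric equality. There is no substantive obstacle here — every step is a direct application of \lem{partial_order} together with the commutation of $\Pi_b$ with all clock projections; the only point requiring any care is remembering to pair each identity statement with its symmetric counterpart so that both the left and right product forms are recorded explicitly.
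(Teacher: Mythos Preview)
Your proposal is correct. The paper does not actually supply a proof for this proposition; it is stated in \append{axiom_flag} and immediately followed by discussion, the authors evidently regarding it as a routine consequence of \lem{partial_order} together with the commutation hypothesis. Your argument fills in precisely those details, using the same ingredients the paper has set up, so there is nothing to compare against and no discrepancy to flag.
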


The above proposition shows that the two sets of orthogonal projections $\{\Pi_j-\Pi_{j-1}\}_{j=1}^m$ and $\{\Pi_b,I-\Pi_b\}$ are complete and pairwise commute, so they can be simultaneously measured in a quantum computation. We tabulate the meaning of different outcomes from such a simultaneous measurement in \tab{vtaa_proj}.
Note that logical operations correspond directly to arithmetics of the orthogonal projections. For instance, the statement ``the algorithm does not fail up to stage $j$'' means ``the algorithm can potentially succeed at stage $j$'' and is logically equivalent to ``the algorithm succeeds before or at stage $j$, or it is still running'', which can be represented in terms of operators as
\begin{equation}
    \overline{\Pi_j\Pi_b}
    =I-\Pi_j\Pi_b
    =\Pi_j(I-\Pi_b)+(I-\Pi_j)
    =\Pi_j\overline{\Pi_b}+\overline{\Pi_j}.
\end{equation}

\begin{table}[t]
    \centering
    \begin{tabular}{c|c}
        Projection & Meaning of the image space \\
        \hline
        $\Pi_j$ & Algorithm stops before or at stage $j$\\
        $\Pi_j\overline{\Pi_b}$ & Algorithm succeeds before or at stage $j$\\
        $\Pi_j\Pi_b$ & Algorithm fails before or at stage $j$\\
        $\Pi_j-\Pi_{j-1}$ & Algorithm stops exactly at stage $j$\\
        $\left(\Pi_j-\Pi_{j-1}\right)\overline{\Pi_b}$ & Algorithm succeeds exactly at stage $j$\\
        $\left(\Pi_j-\Pi_{j-1}\right)\Pi_b$ & Algorithm fails exactly at stage $j$\\
        $\overline{\Pi_j}$ & Algorithm is still running at stage $j$\\
        $\overline{\Pi_j\overline{\Pi_b}}$ & Algorithm does not succeed up to stage $j$\\
        $\overline{\Pi_j\Pi_b}$ & Algorithm does not fail up to stage $j$\\
    \end{tabular}
    \caption{Orthogonal projections and meaning of their image spaces in the description of variable time algorithms.}
    \label{tab:vtaa_proj}
\end{table}

\subsection{Input algorithms}
\label{append:axiom_input}

\begin{lemma}[Controlled unitaries]
\label{lem:control_unitary}
    Let $U^\dagger U=UU^\dagger=I$ be a unitary operator and $\Pi^2=\Pi=\Pi^\dagger$ be an orthogonal projection. The following statements are all equivalent:
    \begin{enumerate}
        \item 
        \begin{enumerate}
            \item $U\Pi=\Pi$.
            \item $\Pi U=\Pi$.
            \item $\Pi U\Pi=\Pi$.
        \end{enumerate}
        \item $U=\Pi+(I-\Pi)U(I-\Pi)$.
    \end{enumerate}
\end{lemma}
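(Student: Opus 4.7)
The plan is to establish the implications $1(a) \Leftrightarrow 1(b) \Leftrightarrow 1(c) \Leftrightarrow 2$, with the nontrivial link being $1(c) \Rightarrow 1(a)$, which is where the unitarity of $U$ must be used essentially.

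First I would handle $1(a) \Leftrightarrow 1(b)$ by a short adjoint argument: assuming $U\Pi = \Pi$, left-multiply by $U^\dagger$ to get $\Pi = U^\dagger \Pi$, then take adjoints and use $\Pi^\dagger = \Pi$ to obtain $\Pi = \Pi U$. The reverse direction is symmetric. The implication $1(a) \Rightarrow 1(c)$ is immediate from $\Pi U \Pi = \Pi \cdot \Pi = \Pi$, and similarly from $1(b)$.

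The key step is $1(c) \Rightarrow 1(a)$. Given any $\ket{\phi} \in \mathbf{Im}(\Pi)$, write $\ket{\phi} = \Pi\ket{\phi}$, so $\Pi U \ket{\phi} = \Pi U \Pi \ket{\phi} = \Pi \ket{\phi} = \ket{\phi}$. Therefore $\|\Pi U \ket{\phi}\| = \|\ket{\phi}\| = \|U\ket{\phi}\|$, where the last equality is precisely the unitarity of $U$. Since $\Pi$ is a contraction achieving equality on $U\ket{\phi}$, the vector $U\ket{\phi}$ must already lie in $\mathbf{Im}(\Pi)$, i.e.\ $\Pi U\ket{\phi} = U\ket{\phi}$. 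Combined with $\Pi U\ket{\phi} = \ket{\phi}$, this yields $U\ket{\phi} = \ket{\phi}$ for every $\ket{\phi} \in \mathbf{Im}(\Pi)$, which is exactly $U\Pi = \Pi$.

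Finally, I would close the loop with $1 \Leftrightarrow 2$ by direct computation. Assuming $1(a)$ and $1(b)$, expand $U = (\Pi + \overline{\Pi})U(\Pi + \overline{\Pi}) = \Pi U \Pi + \Pi U \overline{\Pi} + \overline{\Pi} U \Pi + \overline{\Pi}U\overline{\Pi}$. Using $\Pi U = \Pi$, $U\Pi = \Pi$, and $\Pi\overline{\Pi} = 0$, the first three cross-terms simplify to $\Pi$, $0$, and $0$ respectively, giving $U = \Pi + \overline{\Pi}U\overline{\Pi}$. Conversely, if $U = \Pi + (I-\Pi)U(I-\Pi)$, then right-multiplying by $\Pi$ and using $(I-\Pi)\Pi = 0$ immediately yields $U\Pi = \Pi$, recovering $1(a)$. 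The only subtle point in the whole argument is the use of unitarity in $1(c) \Rightarrow 1(a)$: a purely algebraic manipulation from $\Pi U \Pi = \Pi$ does not force $U\Pi = \Pi$ without the norm-preservation property, so the proof must go through either the isometry argument above or, equivalently, by showing $(I-\Pi)U\Pi = 0$ via $\|(I-\Pi)U\Pi\ket{\psi}\|^2 = \|U\Pi\ket{\psi}\|^2 - \|\Pi U \Pi \ket{\psi}\|^2 = \|\Pi\ket{\psi}\|^2 - \|\Pi\ket{\psi}\|^2 = 0$.
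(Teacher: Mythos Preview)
Your proof is correct and follows the same overall skeleton as the paper's (the same chain $1(a)\Leftrightarrow 1(b)$, $1\Rightarrow 1(c)$ trivially, $1\Leftrightarrow 2$ by expansion), but the key link $1(c)\Rightarrow 1(a)$ is handled differently. The paper uses a trace argument: it computes
\[
\mathbf{Tr}\bigl(\Pi(U-I)^\dagger(U-I)\Pi\bigr)=\mathbf{Tr}\bigl(2\Pi-\Pi U\Pi-(\Pi U\Pi)^\dagger\bigr)=0,
\]
and concludes $(U-I)\Pi=0$ from positivity. Your argument instead works pointwise via the Pythagorean identity $\|(I-\Pi)U\Pi\ket{\psi}\|^2=\|U\Pi\ket{\psi}\|^2-\|\Pi U\Pi\ket{\psi}\|^2=0$. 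Both exploit unitarity in the same essential way, but your version is slightly more elementary and does not require the trace to be finite, so it goes through verbatim in infinite-dimensional Hilbert spaces where $\Pi$ need not be trace-class; the paper's trace computation tacitly assumes finite dimension (or at least finite-rank $\Pi$).
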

\begin{proof}
We start with the equivalence 1(a) $\Leftrightarrow$ 1(b) which follows from a direct verification
\begin{equation}
    U\Pi=\Pi
    \quad\Leftrightarrow\quad
    \Pi U^\dagger=\Pi
    \quad\Leftrightarrow\quad
    \Pi=\Pi U.
\end{equation}
Now assuming either 1(a) or 1(b), proving 1(c) is trivial.

The implication 1(c) $\Rightarrow$ 1(a) can be established by a trace argument. Note that
\begin{equation}
\begin{aligned}
    \mathbf{Tr}\left(\Pi(U-I)^\dagger(U-I)\Pi\right)
    &=\mathbf{Tr}\left(\Pi\left(2I-U-U^\dagger\right)\Pi\right)
    =\mathbf{Tr}\left(2\Pi-\Pi U\Pi-\left(\Pi U\Pi\right)^\dagger\right)\\
    &=\mathbf{Tr}\left(2\Pi-\Pi-\Pi\right)
    =0.
\end{aligned}
\end{equation}
This means that $(U-I)\Pi=0$, which is equivalent to 1(a).

Finally, if 1(a) (hence also 1(b) and 1(c)) is true,
\begin{equation}
\begin{aligned}
    U&=\left(\Pi+(I-\Pi)\right)U\left(\Pi+(I-\Pi)\right)\\
    &=\Pi U\Pi+\Pi U(I-\Pi)+(I-\Pi)U\Pi+(I-\Pi)U(I-\Pi)\\
    &=\Pi+\Pi(I-\Pi)+(I-\Pi)\Pi+(I-\Pi)U\Pi+(I-\Pi)U(I-\Pi)\\
    &=\Pi+(I-\Pi)U(I-\Pi),
\end{aligned}
\end{equation}
which proves Claim 2. The reverse direction follows from a direct calculation.
\end{proof}

\begin{proposition}
Let $\{\Pi_j\}_{j=0}^m$ be orthogonal projections partially ordered as  $0=\Pi_0\leq\Pi_1\leq\cdots\leq\Pi_m=I$.
Let $\Pi_b$ be an orthogonal projection commuting with all $\Pi_j$: $\Pi_b\Pi_j=\Pi_j\Pi_b$ for $j=0,\ldots, m$.
Let $A_j$ be unitaries such that $A_j\Pi_{j-1}=\Pi_{j-1}$ for all $j=1,\ldots, m$, and $A_0=I$. Then,
\begin{equation}
\begin{aligned}
    A_j\Pi_l&=\Pi_l=\Pi_lA_j,\qquad &&A_j\Pi_l\Pi_b=\Pi_l\Pi_bA_j,\\
    A_j\overline{\Pi_l}&=\overline{\Pi_l}=\overline{\Pi_l}A_j,
    &&A_j\overline{\Pi_l\Pi_b}=\overline{\Pi_l\Pi_b}A_j,\qquad
    0\leq l<j\leq m.
\end{aligned}
\end{equation}
Moreover, for any quantum state $\ket{\psi}$,
\begin{equation}
    \norm{\overline{\Pi_{j}\Pi_b}\ket{\psi}}=\norm{\overline{\Pi_{j}\Pi_b}A_{k}\cdots A_{j+1}\ket{\psi}},\qquad
    0\leq j\leq k\leq m,
\end{equation}
and for any state $\ket{\psi_0}$,
\begin{equation}
    1=\norm{\overline{\Pi_{0}\Pi_b}\ket{\psi_0}}\geq\norm{\overline{\Pi_{1}\Pi_b}A_1\ket{\psi_0}}\geq\cdots\geq\norm{\overline{\Pi_{m}\Pi_b}A_{m}\cdots A_1\ket{\psi_0}}=\norm{\overline{\Pi_b}A_{m}\cdots A_1\ket{\psi_0}}.
\end{equation}
\end{proposition}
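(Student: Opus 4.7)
The proof plan is to reduce each claim to the canonical decomposition of a controlled unitary, then exploit unitarity and the partial ordering of the clock projections.

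First, I would invoke \lem{control_unitary} with $U = A_j$ and $\Pi = \Pi_{j-1}$, which gives the decomposition
\begin{equation}
    A_j = \Pi_{j-1} + \overline{\Pi_{j-1}}\, A_j\, \overline{\Pi_{j-1}}.
\end{equation}
Since $l < j$ implies $\Pi_l \leq \Pi_{j-1}$, \lem{partial_order} yields $\Pi_{j-1}\Pi_l = \Pi_l = \Pi_l\Pi_{j-1}$, and consequently $\overline{\Pi_{j-1}}\Pi_l = 0 = \Pi_l\overline{\Pi_{j-1}}$. Multiplying the displayed decomposition on the right (resp.\ left) by $\Pi_l$ collapses the second summand and leaves $A_j\Pi_l = \Pi_l$ (resp.\ $\Pi_l A_j = \Pi_l$). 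The corresponding identities for $\overline{\Pi_l}$ then follow by subtracting $\Pi_l$ from $A_j = A_j \Pi_l + A_j \overline{\Pi_l}$ and its mirror, giving the commutation $A_j\overline{\Pi_l} = \overline{\Pi_l} A_j$ (both sides equal $A_j - \Pi_l$).

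Next I would upgrade to the flagged projections. Because $\Pi_b$ commutes with every $\Pi_l$ by axiom, I can slide $\Pi_b$ through $\Pi_l A_j = \Pi_l$ to obtain
\begin{equation}
    A_j\,\Pi_l\Pi_b = \Pi_l\Pi_b = \Pi_b\Pi_l = \Pi_l\Pi_b\,A_j,
\end{equation}
and the analogous commutation for $\overline{\Pi_l\Pi_b} = I - \Pi_l\Pi_b$ falls out by subtraction from $A_j$. This establishes all four commutation/fixing identities in the first display.

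The norm identity $\norm{\overline{\Pi_j\Pi_b}\ket{\psi}} = \norm{\overline{\Pi_j\Pi_b}A_k\cdots A_{j+1}\ket{\psi}}$ is then immediate: since each $A_i$ with $i > j$ commutes with $\overline{\Pi_j\Pi_b}$, I can pull the projection all the way to the right of the product $A_k\cdots A_{j+1}$, and then unitarity of each factor removes them from the norm. The final chain of inequalities combines this norm-preservation with the monotone ordering $\overline{\Pi_0\Pi_b} \geq \overline{\Pi_1\Pi_b} \geq \cdots \geq \overline{\Pi_m\Pi_b}$ established earlier: writing $\ket{\varphi_j} = A_j\cdots A_1\ket{\psi_0}$, I have
\begin{equation}
    \norm{\overline{\Pi_{j+1}\Pi_b}\,\ket{\varphi_{j+1}}}^2 \leq \norm{\overline{\Pi_j\Pi_b}\,\ket{\varphi_{j+1}}}^2 = \norm{\overline{\Pi_j\Pi_b}\,\ket{\varphi_j}}^2,
\end{equation}
where the inequality uses $\overline{\Pi_{j+1}\Pi_b} \leq \overline{\Pi_j\Pi_b}$ via the characterization 3(b) of \lem{partial_order}, and the equality uses the norm-preservation I just proved with $k = j+1$. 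The endpoints $\norm{\overline{\Pi_0\Pi_b}\ket{\psi_0}} = 1$ and $\norm{\overline{\Pi_m\Pi_b}\,\cdot} = \norm{\overline{\Pi_b}\,\cdot}$ are tautological since $\Pi_0 = 0$ and $\Pi_m = I$.

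There is no real conceptual obstacle here; the whole proof is a bookkeeping exercise that chains together \lem{control_unitary}, \lem{partial_order}, and the commutativity axiom on $\Pi_b$. The only mildly delicate point is keeping straight that $A_j$ does \emph{not} commute with $\Pi_b$ in general, so the identities involving $\Pi_b$ must be derived via the fixing property $A_j\Pi_l = \Pi_l$ rather than by trying to commute $A_j$ past $\Pi_b$ directly.
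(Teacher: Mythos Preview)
Your proposal is correct and follows essentially the same route as the paper's proof: both reduce $A_j\Pi_l=\Pi_l$ to the hypothesis $A_j\Pi_{j-1}=\Pi_{j-1}$ via the factorization $\Pi_l=\Pi_{j-1}\Pi_l$ from \lem{partial_order} (you do this through the explicit decomposition of \lem{control_unitary}, the paper does it by direct multiplication), then derive the flagged-projection commutations by subtraction, and finally combine unitarity with the ordering $\overline{\Pi_{j+1}\Pi_b}\leq\overline{\Pi_j\Pi_b}$ to obtain the monotone chain. The only cosmetic difference is that the paper routes the monotonicity step through the full product $A_m\cdots A_1$ while you compare directly at adjacent stages; both are equivalent.
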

\begin{proof}
We will only prove $A_j\Pi_l=\Pi_l=\Pi_lA_j$ and $A_j\overline{\Pi_l\Pi_b}=\overline{\Pi_l\Pi_b}A_j$ as verifications of other operator identities proceed in a similar way. They follow from
\begin{equation}
    A_j\Pi_l=A_j\Pi_{j-1}\Pi_l=\Pi_{j-1}\Pi_l
    =\Pi_l=\Pi_l\Pi_{j-1}=\Pi_l\Pi_{j-1}A_j=\Pi_lA_j
\end{equation}
and 
\begin{equation}
    A_j\overline{\Pi_l\Pi_b}=A_j(1-\Pi_l\Pi_b)
    =A_j-\Pi_l\Pi_b=A_j-\Pi_b\Pi_l
    =(1-\Pi_b\Pi_l)A_j=\overline{\Pi_l\Pi_b}A_j.
\end{equation}
The norm identity then follows directly as
\begin{equation}
    \norm{\overline{\Pi_{j}\Pi_b}\ket{\psi}}
    =\norm{A_{k}\cdots A_{j+1}\overline{\Pi_{j}\Pi_b}\ket{\psi}}
    =\norm{\overline{\Pi_{j}\Pi_b}A_{k}\cdots A_{j+1}\ket{\psi}}.
\end{equation}
Using the above identity, the claim about monotonicity is established:
    \begin{equation}
    \begin{aligned}
        \norm{\overline{\Pi_{j}\Pi_b}A_{j}\cdots A_1\ket{\psi_0}}
        &=\norm{\overline{\Pi_{j}\Pi_b}A_{m}\cdots A_1\ket{\psi_0}}\\
        &\geq\norm{\overline{\Pi_{j+1}\Pi_b}A_{m}\cdots A_1\ket{\psi_0}}
        =\norm{\overline{\Pi_{j+1}\Pi_b}A_{j+1}\cdots A_1\ket{\psi_0}}.
    \end{aligned}
    \end{equation}
\end{proof}

\subsection{Amplified algorithms}
\label{append:axiom_amp}
\begin{proposition}
Let $\{\Pi_j\}_{j=0}^m$ be orthogonal projections partially ordered as  $0=\Pi_0\leq\Pi_1\leq\cdots\leq\Pi_m=I$.
Let $\Pi_b$ be an orthogonal projection commuting with all $\Pi_j$: $\Pi_b\Pi_j=\Pi_j\Pi_b$ for $j=0,\ldots, m$.
Let $A_j$ be unitaries such that $A_j\Pi_{j-1}=\Pi_{j-1}$ for all $j=1,\ldots, m$, and $A_0=I$.
Finally, let $\widetilde A_j$ be unitaries such that $\frac{\overline{\Pi_j\Pi_b}\widetilde{A}_j\ket{\psi_0}}{\norm{\overline{\Pi_j\Pi_b}\widetilde{A}_j\ket{\psi_0}}}
    =\frac{\overline{\Pi_j\Pi_b}A_j\widetilde{A}_{j-1}\ket{\psi_0}}{\norm{\overline{\Pi_j\Pi_b}A_j\widetilde{A}_{j-1}\ket{\psi_0}}}$ for all $j=1,\ldots, m$, and $\widetilde A_0=I$.
Then, for any quantum state $\ket{\psi_0}$,
\begin{equation}
    \frac{\overline{\Pi_{k}\Pi_b}A_k\cdots A_{j+1}\widetilde{A}_{j}\ket{\psi_0}}{\norm{\overline{\Pi_{k}\Pi_b}A_k\cdots A_{j+1}\widetilde{A}_{j}\ket{\psi_0}}}
        =\frac{\overline{\Pi_{k}\Pi_b}A_k\cdots A_{l+1}\widetilde{A}_l\ket{\psi_0}}{\norm{\overline{\Pi_{k}\Pi_b}A_k\cdots A_{l+1}\widetilde{A}_l\ket{\psi_0}}},\qquad 0\leq l,j\leq k\leq m,
\end{equation}
and
\begin{equation}
    \frac{\norm{\overline{\Pi_{h}\Pi_b}A_{h}\cdots A_{j+1}\widetilde{A}_{j}\ket{\psi_0}}}{\norm{\overline{\Pi_{k}\Pi_b}A_{k}\cdots A_{j+1}\widetilde{A}_{j}\ket{\psi_0}}}
    =\frac{\norm{\overline{\Pi_{h}\Pi_b}A_{h}\cdots A_{l+1}\widetilde{A}_{l}\ket{\psi_0}}}{\norm{\overline{\Pi_{k}\Pi_b}A_{k}\cdots A_{l+1}\widetilde{A}_{l}\ket{\psi_0}}},\qquad
    0\leq l,j\leq k,h\leq m.
\end{equation}
\end{proposition}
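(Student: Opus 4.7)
My plan is to reduce both identities to a single-step propagation lemma: for every $0\le j<k\le m$,
\begin{equation*}
    \overline{\Pi_k\Pi_b}\,A_k\cdots A_{j+2}\widetilde{A}_{j+1}\ket{\psi_0}
    \;=\; c_{j+1}\cdot \overline{\Pi_k\Pi_b}\,A_k\cdots A_{j+1}\widetilde{A}_j\ket{\psi_0},
\end{equation*}
where the positive scalar $c_{j+1}=\norm{\overline{\Pi_{j+1}\Pi_b}\widetilde A_{j+1}\ket{\psi_0}}/\norm{\overline{\Pi_{j+1}\Pi_b}A_{j+1}\widetilde A_{j}\ket{\psi_0}}$ depends only on the index $j+1$ and the starting state $\ket{\psi_0}$, and in particular is independent of $k$. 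Granting this lemma, iterating it (assuming $l\le j$ without loss of generality by swapping the roles of $l$ and $j$) produces
\begin{equation*}
    \overline{\Pi_k\Pi_b}\,A_k\cdots A_{j+1}\widetilde{A}_j\ket{\psi_0}
    \;=\;\Bigl(\prod_{i=l+1}^{j}c_i\Bigr)\,\overline{\Pi_k\Pi_b}\,A_k\cdots A_{l+1}\widetilde{A}_l\ket{\psi_0},
\end{equation*}
from which Claim~1 follows because the prefactor is a positive real, and Claim~2 follows because the $k$-independent prefactor $\prod_{i=l+1}^{j}c_i$ appears identically in the numerator and denominator of the ratio and therefore cancels.

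To prove the single-step lemma, I would insert the resolution of identity $I=\overline{\Pi_{j+1}\Pi_b}+\Pi_{j+1}\Pi_b$ immediately after $\widetilde{A}_{j+1}\ket{\psi_0}$ and analyze the two resulting branches. On the $\Pi_{j+1}\Pi_b$ branch, the relations established in \append{axiom_input} give $A_i\Pi_{j+1}=\Pi_{j+1}$ for every $i\ge j+2$, so the operators $A_{j+2},\ldots,A_k$ act as the identity on that subspace; moreover $\overline{\Pi_k\Pi_b}\cdot\Pi_{j+1}\Pi_b=0$ because $\Pi_{j+1}\Pi_b\le\Pi_k\Pi_b$ from \append{axiom_flag}, so this branch contributes zero. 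On the $\overline{\Pi_{j+1}\Pi_b}$ branch, Axiom~\ref{enum:vta_amp} states $\overline{\Pi_{j+1}\Pi_b}\widetilde A_{j+1}\ket{\psi_0}=c_{j+1}\,\overline{\Pi_{j+1}\Pi_b}A_{j+1}\widetilde A_{j}\ket{\psi_0}$; I then commute $\overline{\Pi_{j+1}\Pi_b}$ past $A_{j+2},\ldots,A_k$ (using $A_i\overline{\Pi_l\Pi_b}=\overline{\Pi_l\Pi_b}A_i$ for $l<i$) and absorb it into $\overline{\Pi_k\Pi_b}$ using $\overline{\Pi_k\Pi_b}\cdot\overline{\Pi_{j+1}\Pi_b}=\overline{\Pi_k\Pi_b}$ from \append{axiom_flag}, which produces the claimed identity.

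The main subtlety, and the part that makes the whole argument go through, is recognizing that the scalar $c_{j+1}$ arising from Axiom~\ref{enum:vta_amp} is intrinsic to the amplification stage $j+1$ and does not know about the later index $k$. This $k$-independence is precisely what makes the product $\prod c_i$ telescope out of the ratios in Claim~2, and it also ensures that iterating the single-step lemma produces a scalar multiple rather than some operator that could rotate the direction of the vector, which is what Claim~1 requires. Beyond this observation, the remaining work is purely bookkeeping with the commutation and ordering relations collected in \append{axiom_clock}--\append{axiom_input}, and so I do not anticipate any further obstacle.
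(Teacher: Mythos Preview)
Your proposal is correct and follows essentially the same approach as the paper. Both arguments iterate the defining relation $\overline{\Pi_{j+1}\Pi_b}\widetilde A_{j+1}\ket{\psi_0}\propto\overline{\Pi_{j+1}\Pi_b}A_{j+1}\widetilde A_j\ket{\psi_0}$, using the commutation $A_i\overline{\Pi_l\Pi_b}=\overline{\Pi_l\Pi_b}A_i$ and absorption $\overline{\Pi_k\Pi_b}\cdot\overline{\Pi_{j+1}\Pi_b}=\overline{\Pi_k\Pi_b}$ to propagate through the chain; the only cosmetic difference is that you track the explicit scalars $c_i$ whereas the paper works in $\propto$ notation, and for the second identity you cancel $\prod_i c_i$ directly from numerator and denominator while the paper instead inserts $\overline{\Pi_k\Pi_b}$ and invokes the first identity.
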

\begin{proof}
    By exchanging the two sides of both equations and taking reciprocal of the second one, we may assume without loss of generality that $0\leq l\leq j\leq k\leq h\leq m$.
    Recall that the defining property of $\widetilde{A}_j$ can be succinctly represented as $\overline{\Pi_j\Pi_b}\widetilde{A}_j\ket{\psi_0}\propto\overline{\Pi_j\Pi_b}A_j\widetilde{A}_{j-1}\ket{\psi_0}$.
    We then have
    \begin{equation}
    \begin{aligned}
        \overline{\Pi_{j}\Pi_b}\widetilde{A}_{j}\ket{\psi_0}
        &\propto\overline{\Pi_j\Pi_b}A_j\widetilde{A}_{j-1}\ket{\psi_0}
        =\overline{\Pi_j\Pi_b}\cdot\overline{\Pi_{j-1}\Pi_b}A_j\widetilde{A}_{j-1}\ket{\psi_0}
        =\overline{\Pi_j\Pi_b}A_j\overline{\Pi_{j-1}\Pi_b}\widetilde{A}_{j-1}\ket{\psi_0}\\
        &\propto\overline{\Pi_j\Pi_b}A_j\overline{\Pi_{j-1}\Pi_b}A_{j-1}\widetilde{A}_{j-2}\ket{\psi_0}
        \propto\cdots\\
        &\propto\overline{\Pi_{j}\Pi_b}A_j\cdots \overline{\Pi_{l+1}\Pi_b}A_{l+1}
        \overline{\Pi_{l}\Pi_b}\widetilde{A}_l\ket{\psi_0}
        =\overline{\Pi_{j}\Pi_b}\cdots\overline{\Pi_{l+1}\Pi_b}A_j\cdots A_{l+1}\widetilde{A}_l\ket{\psi_0}\\
        &=\overline{\Pi_{j}\Pi_b}A_j\cdots A_{l+1}\widetilde{A}_l\ket{\psi_0},
    \end{aligned}
    \end{equation}
    which proves the first claim through
    \begin{equation}
    \begin{aligned}
        \overline{\Pi_{k}\Pi_b}A_k\cdots A_{j+1}\widetilde{A}_{j}\ket{\psi_0}
        &=\overline{\Pi_{k}\Pi_b}A_k\cdots A_{j+1}\overline{\Pi_{j}\Pi_b}\widetilde{A}_{j}\ket{\psi_0}\\
        &\propto\overline{\Pi_{k}\Pi_b}A_k\cdots A_{j+1}\overline{\Pi_{j}\Pi_b}A_j\cdots A_{l+1}\widetilde{A}_l\ket{\psi_0}
        =\overline{\Pi_{k}\Pi_b}A_k\cdots A_{l+1}\widetilde{A}_l\ket{\psi_0}.
    \end{aligned}
    \end{equation}
As for the second claim, we use the first one to rewrite
\begin{equation}
\begin{aligned}
    \frac{\overline{\Pi_{h}\Pi_b}A_{h}\cdots A_{j+1}\widetilde{A}_{j}\ket{\psi_0}}{\norm{\overline{\Pi_{k}\Pi_b}A_{k}\cdots A_{j+1}\widetilde{A}_{j}\ket{\psi_0}}}
    &=\overline{\Pi_{h}\Pi_b}A_{h}\cdots A_{k+1}\frac{\overline{\Pi_{k}\Pi_b}A_{k}\cdots A_{j+1}\widetilde{A}_{j}\ket{\psi_0}}{\norm{\overline{\Pi_{k}\Pi_b}A_{k}\cdots A_{j+1}\widetilde{A}_{j}\ket{\psi_0}}}\\
    &=\overline{\Pi_{h}\Pi_b}A_{h}\cdots A_{k+1}\frac{\overline{\Pi_{k}\Pi_b}A_k\cdots A_{l+1}\widetilde{A}_l\ket{\psi_0}}{\norm{\overline{\Pi_{k}\Pi_b}A_k\cdots A_{l+1}\widetilde{A}_l\ket{\psi_0}}}\\
    &=\frac{\overline{\Pi_{h}\Pi_b}A_h\cdots A_{l+1}\widetilde{A}_l\ket{\psi_0}}{\norm{\overline{\Pi_{k}\Pi_b}A_k\cdots A_{l+1}\widetilde{A}_l\ket{\psi_0}}}.
\end{aligned}
\end{equation}
The claim is then justified by taking the norm on both sides of the equation.
\end{proof}

\section{Tight bounds on the Dirichlet kernel}
\label{append:dirichlet}

In this appendix, we prove the following bounds useful for analyzing the loss factor of amplitude amplification in \sec{tunable_universal}. Up to a change of variables, they are essentially bounds on the Dirichlet kernel that arises in the Fourier analysis.

\begin{proposition}[Tight bounds on the Dirichlet kernel]
\label{prop:dirichlet}
For any integer $\rho\geq3$ and real number $\theta\geq0$ such that $0\leq\rho\theta\leq\frac{\pi}{2}$, we have
    \begin{equation}
        1-\frac{1}{6}\rho^2\sin^2(\theta)
    \leq\frac{\sin(\rho\theta)}{\rho\sin(\theta)}
    \leq1-\frac{4\pi-8}{\pi^3}\rho^2\sin^2(\theta).
    \end{equation}
The constant factors are tight in the sense that
    \begin{equation}
    \begin{aligned}
        \frac{1}{6}&=\inf\left\{c>0\ \bigg|\ 1-c\rho^2\sin^2(\theta)
    \leq\frac{\sin(\rho\theta)}{\rho\sin(\theta)},\ 0\leq\rho\theta\leq\frac{\pi}{2}\right\},\\
    \frac{4\pi-8}{\pi^3}&=\sup\left\{c>0\ \bigg|\ \frac{\sin(\rho\theta)}{\rho\sin(\theta)}\leq1-c\rho^2\sin^2(\theta)
    ,\ 0\leq\rho\theta\leq\frac{\pi}{2}\right\}.\\
    \end{aligned}
    \end{equation}
In particular, when $\rho=3$,
\begin{equation}
    \frac{\sin(3\theta)}{3\sin(\theta)}
    =1-\frac{4}{3}\sin^2(\theta).
\end{equation}
\end{proposition}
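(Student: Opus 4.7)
The plan is to establish the two bounds separately, leveraging the product representation
\[
\phi(\theta):=\frac{\sin(\rho\theta)}{\rho\sin\theta}=\prod_{k=1}^{(\rho-1)/2}\!\left(1-\frac{\sin^2\theta}{\sin^2(k\pi/\rho)}\right)
\]
for odd $\rho$, obtained by pairing the $k$-th and $(\rho-k)$-th factors of the classical identity $\sin(\rho\theta)=2^{\rho-1}\prod_{k=0}^{\rho-1}\sin(\theta+k\pi/\rho)$; for even $\rho$ an additional factor of $\cos\theta$ is inserted. The special case $\rho=3$ follows immediately from the triple-angle identity $\sin(3\theta)=3\sin\theta-4\sin^3\theta$, giving $\phi=1-(4/3)\sin^2\theta$, which is sandwiched between the claimed bounds since $9c<4/3<3/2$ for $c=(4\pi-8)/\pi^3$.

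For the lower bound, I will apply the elementary inequality $\prod_k(1-a_k)\geq 1-\sum_k a_k$ to the product formula, together with the classical identity $\sum_{k=1}^{(\rho-1)/2}\csc^2(k\pi/\rho)=(\rho^2-1)/6$ (which follows by symmetry from $\sum_{k=1}^{\rho-1}\csc^2(k\pi/\rho)=(\rho^2-1)/3$). This immediately yields $\phi(\theta)\geq 1-(\rho^2-1)\sin^2\theta/6>1-\rho^2\sin^2\theta/6$ for odd $\rho\geq 3$. For even $\rho\geq 4$, the extra $\cos\theta$ factor is handled via the refined estimate $\cos\theta\geq 1-\sin^2\theta/2-\sin^4\theta/2$ (valid on our range since $\sin^2\theta\leq\sqrt{3}/2$), followed by a routine polynomial expansion in $s=\sin^2\theta$ verifying that the resulting correction terms have nonnegative coefficients once $\rho\geq 4$.

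The upper bound is the main obstacle. Substituting $u=\rho\theta\in[0,\pi/2]$ and $v=\rho\sin(u/\rho)\in[0,u]$ recasts the claim as $F(u,\rho):=v-cv^3-\sin u\geq 0$. In the limit $\rho\to\infty$, $v\to u$ and $F$ reduces to $g(u):=u-\sin u-cu^3$; the function $g$ vanishes at both endpoints of $[0,\pi/2]$ with $g'(0)=0$, and $g'''(0)=1-6c>0$ forces $g''$ to be positive near $0$, so $g$ ascends from $0$, attains a maximum, and descends back to $0$, remaining nonnegative throughout. To pass to finite $\rho\geq 3$, I will separate boundary and interior arguments: the endpoint $u=\pi/2$ reduces to $G(\rho):=\rho\sin(\pi/(2\rho))\bigl(1-c\rho^2\sin^2(\pi/(2\rho))\bigr)\geq 1$, which Taylor expansion shows equals $1+\pi^3(3c\pi^2/4-1)/(48\rho^2)+O(1/\rho^4)$ with positive leading correction, supplemented by direct verification at $\rho=3$; for $u$ near $0$, the expansion $F(u,\rho)=u^3\bigl((\rho^2-1)/(6\rho^2)-c\bigr)+O(u^5)$ is positive since $(\rho^2-1)/(6\rho^2)\geq 8/54>c$; and the interior is controlled via the identity $F(u,\rho)-g(u)=(v-u)\bigl(1-c(u^2+uv+v^2)\bigr)$ combined with monotonicity of $v=\rho\sin(u/\rho)$ in $\rho$.

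For tightness, the lower-bound constant $1/6$ is optimal because $\phi(\theta)=1-(\rho^2-1)\sin^2\theta/6+O(\sin^4\theta)$ as $\theta\to 0$ and $(\rho^2-1)/\rho^2\to 1$ as $\rho\to\infty$. The upper-bound constant is tight at $\rho\theta=\pi/2$ in the large-$\rho$ limit: $\phi(\pi/(2\rho))=1/(\rho\sin(\pi/(2\rho)))\to 2/\pi$ while $\rho^2\sin^2(\pi/(2\rho))\to\pi^2/4$, giving $(1-\phi)/(\rho^2\sin^2\theta)\to(1-2/\pi)\cdot 4/\pi^2=(4\pi-8)/\pi^3$.
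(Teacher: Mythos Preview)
Your lower-bound argument via the product representation is correct and considerably cleaner than the paper's route. The paper instead shows that $\frac{\rho\sin\theta-\sin(\rho\theta)}{\rho^3\sin^3\theta}$ is monotonically decreasing in $\theta$ by a rather heavy Taylor-expansion computation, and only recovers the constant $\tfrac{1}{6}$ from the $\theta\to 0$ limit. Your use of $\prod_k(1-a_k)\geq 1-\sum_k a_k$ together with $\sum_{k=1}^{(\rho-1)/2}\csc^2(k\pi/\rho)=(\rho^2-1)/6$ gives the sharper bound $\phi\geq 1-\frac{\rho^2-1}{6}\sin^2\theta$ in one line for odd $\rho$, and your handling of the $\cos\theta$ factor for even $\rho$ checks out (indeed $\frac{s}{6}+\frac{\rho^2-10}{12}s^2+\frac{\rho^2-4}{12}s^3\geq 0$ for $\rho\geq 4$). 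The tightness claims are also fine and agree with the paper.

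Your upper-bound plan, however, has a gap at the ``interior'' step. The identity $F(u,\rho)-g(u)=(v-u)\bigl(1-c(u^2+uv+v^2)\bigr)$ does not by itself point the right way: since $v-u\leq 0$ and the second factor is positive throughout most of $[0,\pi/2]$ (it only becomes negative when $u^2+uv+v^2>1/c\approx 6.79$, a tiny sliver near $u=\pi/2$), you get $F\leq g$, not $F\geq 0$. Monotonicity of $v$ in $\rho$ alone does not rescue this. What does work is the unimodality of $F(u,\cdot)$ in $\rho$: since $\partial_v F=1-3cv^2$ and $V(3)=3\sin(u/3)\leq\tfrac{3}{2}<1/\sqrt{3c}$, the map $\rho\mapsto F(u,\rho)$ first increases and then (possibly) decreases on $[3,\infty)$, so $F(u,\rho)\geq\min\{F(u,3),\,g(u)\}$. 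You have already verified $F(u,3)\geq 0$ via $4/3>9c$ and $g\geq 0$ via the endpoint-and-concavity analysis, so this closes the argument---but you need to state it. Your separate endpoint treatment at $u=\pi/2$ is then redundant (and as written it only covers large $\rho$ anyway, since the Taylor correction $\pi^2(\pi-3)/(24\rho^2)$ is asymptotic). For comparison, the paper reuses the $\theta$-monotonicity from its lower-bound proof to reduce to the endpoint $\theta=\pi/(2\rho)$, then shows the resulting expression is monotone in $x=1/\rho$.
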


\subsection{Proof of lower bound}
Let us consider the lower bound first. We start by reorganizing it as
    \begin{equation}
        \left(1-\frac{1}{6}\rho^2\sin^2(\theta)\right)
    \overset{?}{\leq}\frac{\sin(\rho\theta)}{\rho\sin(\theta)}
    \qquad\Leftrightarrow\qquad
    \frac{\rho\sin(\theta)-\sin(\rho\theta)}{\rho^3\sin^3(\theta)}\overset{?}{\leq}\frac{1}{6}.
    \end{equation}
    Note that 
    \begin{equation}
        \lim_{\theta\rightarrow0}\frac{\rho\sin(\theta)-\sin(\rho\theta)}{\rho^3\sin^3(\theta)}
        =\lim_{\theta\rightarrow0}\frac{-\frac{\rho}{6}\theta^3+\frac{\rho^3}{6}\theta^3}{\rho^3\theta^3}
        =\frac{1}{6}-\frac{1}{6\rho^2}\nearrow\frac{1}{6},
    \end{equation}
    which shows that the lower bound is tight.
    So it remains to prove that the function on the left hand side is monotonically decreasing over $0\leq\theta\leq\frac{\pi}{2\rho}$.

    To this end, we differentiate it with respect to $\theta$
    \begin{equation}
        \left(\frac{\rho\sin(\theta)-\sin(\rho\theta)}{\rho^3\sin^3(\theta)}\right)'
        \overset{?}{\leq}0,
    \end{equation}
    which simplifies to
    \begin{equation}
        3\sin(\rho\theta)\cos(\theta)
        \overset{?}{\leq}2\rho\sin(\theta)\cos(\theta)
        +\rho\sin(\theta)\cos(\rho\theta).
    \end{equation}
    Using the trigonometric inequalities
    \begin{equation}
    \begin{aligned}
        \theta-\frac{\theta^3}{6}+\frac{\theta^5}{120}-\frac{\theta^7}{5040}
        &\leq\sin(\theta)
        \leq\theta-\frac{\theta^3}{6}+\frac{\theta^5}{120},\\
        1-\frac{\theta^2}{2}+\frac{\theta^4}{24}-\frac{\theta^6}{720}
        &\leq\cos(\theta)
        \leq1-\frac{\theta^2}{2}+\frac{\theta^4}{24},
    \end{aligned}
    \end{equation}
    justified via Taylor's theorem with Lagrange's remainder, we want
    \begin{equation}
    \begin{aligned}
        3\left(\rho\theta-\frac{\rho^3\theta^3}{6}+\frac{\rho^5\theta^5}{120}\right)\left(1-\frac{\theta^2}{2}+\frac{\theta^4}{24}\right)
        \overset{?}{\leq}&\
        2\rho\left(\theta-\frac{\theta^3}{6}+\frac{\theta^5}{120}-\frac{\theta^7}{5040}\right)\left(1-\frac{\theta^2}{2}+\frac{\theta^4}{24}-\frac{\theta^6}{720}\right)\\
        &\ +\rho\left(\theta-\frac{\theta^3}{6}+\frac{\theta^5}{120}-\frac{\theta^7}{5040}\right)\left(1-\frac{\rho^2\theta^2}{2}+\frac{\rho^4\theta^4}{24}-\frac{\rho^6\theta^6}{720}\right)
    \end{aligned}
    \end{equation}
    or equivalently
    \begin{equation}
    \begin{aligned}
        3\left(1-\frac{\rho^2\theta^2}{6}+\frac{\rho^4\theta^4}{120}\right)\left(1-\frac{\theta^2}{2}+\frac{\theta^4}{24}\right)
        \overset{?}{\leq}&\
        2\left(1-\frac{\theta^2}{6}+\frac{\theta^4}{120}-\frac{\theta^6}{5040}\right)\left(1-\frac{\theta^2}{2}+\frac{\theta^4}{24}-\frac{\theta^6}{720}\right)\\
        &\ +\left(1-\frac{\theta^2}{6}+\frac{\theta^4}{120}-\frac{\theta^6}{5040}\right)\left(1-\frac{\rho^2\theta^2}{2}+\frac{\rho^4\theta^4}{24}-\frac{\rho^6\theta^6}{720}\right).
    \end{aligned}
    \end{equation}

    The left hand side of the above equation expands to
    \begin{equation}
        3
        +\frac{-\rho^2-3}{2}\theta^2
        +\frac{\rho^4+10\rho^2+5}{40}\theta^4
        +\frac{-\rho^2(3\rho^2+5)}{240}\theta^6
        +\frac{\rho^4}{960}\theta^8,
    \end{equation}
    whereas the right hand side expands to
    \begin{equation}
    \begin{aligned}
        &3+\frac{-\rho^2-3}{2}\theta^2
        +\frac{5\rho^4+10\rho^2+33}{120}\theta^4\\
        &+\frac{-7\rho^6-35\rho^4-21\rho^2-129}{5040}\theta^6
        +\frac{14\rho^6+21\rho^4+6\rho^2+82}{60480}\theta^8\\
        &+\frac{-7\rho^6-5\rho^4-24}{604800}\theta^{10}
        +\frac{\rho^6+2}{3628800}\theta^{12}.
    \end{aligned}
    \end{equation}
    It thus suffices to show that
    \begin{equation}
    \label{eq:vtaa_taylor_ineq}
    \begin{aligned}
        \frac{\rho^4+10\rho^2+5}{40}\theta^4
        +\frac{\rho^4}{960}\theta^8
        \overset{?}{\leq}&\ \frac{5\rho^4+10\rho^2+33}{120}\theta^4\\
        &+\frac{-7\rho^6-35\rho^4-21\rho^2-129}{5040}\theta^6
        +\frac{-7\rho^6-5\rho^4-24}{604800}\theta^{10}.
    \end{aligned}
    \end{equation}
    
    Note that $\theta^2\leq\frac{\pi^2}{4\rho^2}<\frac{5}{2\rho^2}$.
    This upper bounds the left hand side of \eq{vtaa_taylor_ineq} by
    \begin{equation}
        \left(\frac{\rho^4}{40}+\frac{\rho^2}{4}+\frac{101}{768}\right)\theta^4.
    \end{equation}
    Meanwhile, using $\theta^2<\frac{5}{2\rho^2}\leq\frac{5}{32}$ for $\rho\geq 4$, we lower bound the right hand side of \eq{vtaa_taylor_ineq} by
    \begin{equation}
    \begin{aligned}
        &\frac{5\rho^4+10\rho^2+33}{120}\theta^4
        +\frac{-7\rho^6-35\rho^4-21\rho^2-129}{5040}\theta^6
        +\frac{-7\cdot\frac{25}{4}\rho^2-5\cdot\frac{25}{4}-24\cdot\frac{25}{1024}}{604800}\theta^{6}\\
        &\geq\frac{5\rho^4+10\rho^2+33}{120}\theta^4
        +\frac{-8\rho^6-36\rho^4-22\rho^2-130}{5040}\theta^6\\
        &\geq\frac{5\rho^4+10\rho^2+33}{120}\theta^4
        +\frac{-8\cdot\frac{5}{2}\rho^4-36\cdot\frac{5}{2}\rho^2-22\cdot\frac{5}{2}-130\cdot\frac{5}{32}}{5040}\theta^4\\
        &\geq\frac{190\rho^4+330\rho^2+1331-130\cdot\frac{5}{32}}{5040}\theta^4
        \geq\frac{190\rho^4+330\rho^2+1310}{5040}\theta^4.
    \end{aligned}
    \end{equation}
    We thus want to prove that
    \begin{equation}
        \frac{\rho^4}{40}+\frac{\rho^2}{4}+\frac{101}{768}
        \overset{?}{\leq}\frac{19\rho^4+33\rho^2+131}{504}
    \end{equation}
    which can be directly verified to hold for $\rho\geq4$.

\subsection{Proof of upper bound}
Recall that in the proof of lower bound, we have actually shown that when $\rho\geq 4$,
$\frac{\rho\sin(\theta)-\sin(\rho\theta)}{\rho^3\sin^3(\theta)}$
monotonically decreases as a function of $\theta$ over $0\leq\theta\leq\frac{\pi}{2\rho}$. Hence,
\begin{equation}
    \frac{\rho\sin(\theta)-\sin(\rho\theta)}{\rho^3\sin^3(\theta)}
    \geq\frac{\rho\sin\left(\frac{\pi}{2\rho}\right)-1}{\rho^3\sin^3\left(\frac{\pi}{2\rho}\right)}.
\end{equation}
Denoting $x=\frac{1}{\rho}$, our goal is to lower bound
\begin{equation}
    \frac{\frac{\sin\left(\frac{\pi}{2}x\right)}{x}-1}{\frac{\sin^3\left(\frac{\pi}{2}x\right)}{x^3}}=\frac{x^2\sin\left(\frac{\pi}{2}x\right)-x^3}{\sin^3\left(\frac{\pi}{2}x\right)}
\end{equation}
for $0<x\leq\frac{1}{3}$.

Note that 
\begin{equation}
    \lim_{x\rightarrow0}\frac{x^2\sin\left(\frac{\pi}{2}x\right)-x^3}{\sin^3\left(\frac{\pi}{2}x\right)}
    =\frac{\frac{\pi}{2}-1}{\frac{\pi^3}{2^3}}
    =\frac{4\pi-8}{\pi^3},
\end{equation}
which shows that the claimed constant factor is tight. It remains to prove that the function on the left hand side is monotonically increasing over $0<x\leq\frac{1}{3}$.

To this end, we differentiate it with respect to $x$:
\begin{equation}
    \left(\frac{x^2\sin\left(\frac{\pi}{2}x\right)-x^3}{\sin^3\left(\frac{\pi}{2}x\right)}\right)'
    =\frac{x}{2}
    \left(\pi x\cot\left(\frac{\pi}{2}x\right)-2\right)
    \csc^2\left(\frac{\pi}{2}x\right)
    \left(3x\csc\left(\frac{\pi}{2}x\right)-2\right).
\end{equation}
Then, it suffices to show that
\begin{equation}
    \pi x\cot\left(\frac{\pi}{2}x\right)-2\overset{?}{\leq}0,\qquad
    3x\csc\left(\frac{\pi}{2}x\right)-2\overset{?}{\leq}0.
\end{equation}
The first inequality is equivalent to $\frac{\pi}{2}x\overset{?}{\leq}\tan\left(\frac{\pi}{2}x\right)$ and becomes trivial. The second inequality is equivalent to $\frac{3}{2}x\overset{?}{\leq}\sin\left(\frac{\pi}{2}x\right)$ which also holds trivially when $x\leq\frac{1}{3}$.

Altogether, we have shown that
\begin{equation}
    \frac{\rho\sin(\theta)-\sin(\rho\theta)}{\rho^3\sin^3(\theta)}
    \geq\frac{4\pi-8}{\pi^3}.
\end{equation}
This is equivalent to the claimed bound.

\section{Hermitian qubitization}
\label{append:qubitization}

In this appendix, we review results on Hermitian qubitization~\cite{Low2016Qubitization} useful for constructing the branch marking and gapped phase estimation algorithm in \sec{dinv_branch}. We also give a self-contained exposition of amplitude amplification based on qubitization. We use $\mathcal{G}$ and $\mathcal{H}$ to represent finite-dimensional Hilbert spaces, on which all operators act.

\subsection{\texorpdfstring{$U$}{U}-cyclic subspaces and qubitization}
\label{append:qubitization_subspace}
We begin by discussing how subspaces can be invariant under repeated applications of a unitary operator.
\begin{lemma}[$U$-cyclic subspaces]
\label{lem:subspace}
Let $U:\mathcal{H}\rightarrow\mathcal{H}$ be a unitary operator and $G:\mathcal{G}\rightarrow\mathcal{H}$ be an isometry, such that $G^\dagger UG$ is Hermitian. Let $\ket{\phi_u}$ be eigenvectors of $G^\dagger UG$ with corresponding eigenvalues $\lambda_u$, and define the cyclic subspaces
\begin{equation}
    \mathcal{H}_u=\mathbf{Span}\left\{\ldots,U^{\dagger2} G\ket{\phi_u},U^\dagger G\ket{\phi_u},G\ket{\phi_u},UG\ket{\phi_u},U^2G\ket{\phi_u},\ldots\right\}.
\end{equation}
Then:
\begin{enumerate}
    \item The following conditions are equivalent:
    \begin{enumerate}
        \item $U^2G=G$;
        \item $G^\dagger U^2G=I$.
    \end{enumerate}
    When any of the conditions is satisfied, all $\mathcal{H}_u=\mathbf{Span}\left\{G\ket{\phi_u},UG\ket{\phi_u}\right\}$ have dimensions $\dim(\mathcal{H}_u)=1,2$ and are invariant under $U$, $U^\dagger$ and $GG^\dagger$.
    \item The following conditions are equivalent for 1D subspaces:
    \begin{enumerate}
        \item $\dim(\mathcal{H}_u)=1$;
        \item $\left\{G\ket{\phi_u}\right\}$ is a basis for $\mathcal{H}_u$;
        \item $\lambda_u=\pm1$;
    \end{enumerate}
    in which case we have the matrix representation
    \begin{equation}
        U=\begin{bmatrix}
            \lambda_u
        \end{bmatrix},\qquad
        GG^\dagger=\begin{bmatrix}
            1
        \end{bmatrix}.
    \end{equation}
    \item The following conditions are equivalent  for 2D subspaces:
    \begin{enumerate}
        \item $\dim(\mathcal{H}_u)=2$;
        \item $\left\{G\ket{\phi_u},UG\ket{\phi_u}\right\}$ is a basis for $\mathcal{H}_u$;
        \item $-1<\lambda_u<1$;
    \end{enumerate}
    in which case we have the matrix representation
    \begin{equation}
        U=\begin{bmatrix}
            0 & 1\\
            1 & 0
        \end{bmatrix},\qquad
        GG^\dagger=\begin{bmatrix}
            1 & \lambda_u\\
            0 & 0
        \end{bmatrix}.
    \end{equation}
\end{enumerate}
\end{lemma}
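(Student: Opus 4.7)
My plan is to handle the three claims in order, leveraging the central observation that the qubitization hypothesis $U^2 G = G$ collapses the a priori infinite-dimensional $U$-cyclic subspace $\mathcal{H}_u$ to dimension at most two. First I will establish the equivalence (1a) $\Leftrightarrow$ (1b). The implication $U^2 G = G \Rightarrow G^\dagger U^2 G = I$ is immediate from $G^\dagger G = I$. For the converse, letting $V = U^2 G$, I note that $V^\dagger V = G^\dagger U^{\dagger 2} U^2 G = I$ by unitarity of $U$ and the isometry property of $G$, while the hypothesis gives $G^\dagger V = I$ and hence $V^\dagger G = I$ by taking adjoint. Then
\begin{equation}
    (V - G)^\dagger (V - G) = V^\dagger V - V^\dagger G - G^\dagger V + G^\dagger G = I - I - I + I = 0,
\end{equation}
forcing $V = G$. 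Once $U^2 G = G$ is in hand, every $U^k G\ket{\phi_u}$ equals $G\ket{\phi_u}$ (for even $k$) or $UG\ket{\phi_u}$ (for odd $k$), so $\mathcal{H}_u = \mathbf{Span}\{G\ket{\phi_u}, UG\ket{\phi_u}\}$ is manifestly invariant under $U$ and $U^\dagger = U^3|_{\mathcal{H}_u}$. Invariance under $GG^\dagger$ follows from $GG^\dagger \cdot G\ket{\phi_u} = G\ket{\phi_u}$ and $GG^\dagger \cdot UG\ket{\phi_u} = G(G^\dagger U G)\ket{\phi_u} = \lambda_u G\ket{\phi_u}$.

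Next I will resolve the dimension dichotomy by proving the 1D equivalences in claim 2, from which claim 3 follows by contrapositive together with the fact that Hermiticity of $G^\dagger U G$ and $\norm{G^\dagger U G} \leq 1$ restrict $\lambda_u$ to $[-1,1]$. The equivalence (2a) $\Leftrightarrow$ (2b) is immediate from the explicit two-vector spanning set and $G\ket{\phi_u} \neq 0$. For (2c) $\Rightarrow$ (2b), I will use $|\lambda_u| = 1$ together with $\norm{G^\dagger U G\ket{\phi_u}} = |\lambda_u| = 1 = \norm{UG\ket{\phi_u}}$; writing the orthogonal decomposition $UG\ket{\phi_u} = GG^\dagger U G\ket{\phi_u} + (I - GG^\dagger)UG\ket{\phi_u}$ and applying Pythagoras with $\norm{GG^\dagger UG\ket{\phi_u}} = \norm{G^\dagger UG\ket{\phi_u}} = 1$ forces $UG\ket{\phi_u} \in \mathbf{Im}(G)$, so $UG\ket{\phi_u} = G(G^\dagger UG)\ket{\phi_u} = \lambda_u G\ket{\phi_u}$. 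The reverse (2b) $\Rightarrow$ (2c) takes $UG\ket{\phi_u} = c\,G\ket{\phi_u}$, obtains $|c| = 1$ by norm preservation under $U$, computes $\lambda_u\ket{\phi_u} = G^\dagger UG\ket{\phi_u} = c\ket{\phi_u}$ so $\lambda_u = c$, and invokes realness of $\lambda_u$ to conclude $\lambda_u \in \{\pm 1\}$.

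Finally, the matrix representations will follow from explicit basis calculations. In the 1D case, $UG\ket{\phi_u} = \lambda_u G\ket{\phi_u}$ and $GG^\dagger G\ket{\phi_u} = G\ket{\phi_u}$ immediately give $U = [\lambda_u]$ and $GG^\dagger = [1]$. In the 2D case with (non-orthonormal) basis $\{G\ket{\phi_u}, UG\ket{\phi_u}\}$, I will verify the antidiagonal $U$ by computing $U \cdot G\ket{\phi_u} = UG\ket{\phi_u}$ and $U \cdot UG\ket{\phi_u} = U^2 G\ket{\phi_u} = G\ket{\phi_u}$, and the upper-triangular $GG^\dagger$ by reusing the two calculations from the invariance argument. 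I expect the main technical subtlety to be the norm-preservation step in (2c) $\Rightarrow$ (2b): while short, articulating cleanly why the full norm of $UG\ket{\phi_u}$ being captured by the partial isometry $G^\dagger$ forces $UG\ket{\phi_u} \in \mathbf{Im}(G)$ requires the Pythagorean decomposition above, since a slick-looking argument via $\norm{G^\dagger}_{\mathrm{op}} \leq 1$ alone does not produce the desired containment.
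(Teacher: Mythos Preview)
Your proposal is correct and follows essentially the same approach as the paper: collapse $\mathcal{H}_u$ to a two-vector span via $U^2G=G$, use the Pythagorean decomposition $UG\ket{\phi_u}=GG^\dagger UG\ket{\phi_u}+(I-GG^\dagger)UG\ket{\phi_u}$ to characterize the 1D case, and read off claim~3 by contrapositive. The one noteworthy difference is your argument for (1b)~$\Rightarrow$~(1a): you expand $(U^2G-G)^\dagger(U^2G-G)=0$ directly, whereas the paper invokes its controlled-unitary \lem{control_unitary} (which in turn relies on a trace argument); your route is more self-contained here.
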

\begin{proof}
Applying \lem{control_unitary} to unitary $U^2$ and orthogonal projection $GG^\dagger$, we have
\begin{equation}
    U^2G=G
    \quad\Leftrightarrow\quad
    U^2GG^\dagger=GG^\dagger
    \quad\Leftrightarrow\quad
    GG^\dagger U^2 GG^\dagger=GG^\dagger
    \quad\Leftrightarrow\quad
    G^\dagger U^2G=I.
\end{equation}
One concludes from $\norm{G^\dagger UG}\leq\norm{G^\dagger}\norm{U}\norm{G}=1$ that the real eigenvalues of $G^\dagger UG$ must satisfy $-1\leq\lambda_u\leq 1$. As a cyclic subspace, $\mathcal{H}_u$ is clearly invariant under $U$ and $U^\dagger$. Furthermore, due to the equivalence $G^\dagger U^2G=I\Leftrightarrow U^2G=G\Leftrightarrow U^{\dagger2}G=G$, all $U^lG\ket{\phi_u}$ can be reduced to either $G\ket{\phi_u}$ or $UG\ket{\phi_u}$, implying $\mathcal{H}_u=\mathbf{Span}\left\{G\ket{\phi_u},UG\ket{\phi_u}\right\}$ and $\dim(\mathcal{H}_u)=1,2$.

Note that $G\ket{\phi_u}$ is a vector of unit length, so it must be a basis of $\mathcal{H}_u$ if $\dim(\mathcal{H}_u)=1$. When this happens, $UG\ket{\phi_u}=\mu_uG\ket{\phi_u}$ for some complex number $\mu_u$. But this means $\lambda_uG\ket{\phi_u}=GG^\dagger UG\ket{\phi_u}=\mu_uGG^\dagger G\ket{\phi_u}=\mu_uG\ket{\phi_u}$ and hence $\abs{\lambda_u}=\norm{\lambda_uG\ket{\phi_u}}=\norm{\mu_uG\ket{\phi_u}}=\norm{UG\ket{\phi_u}}=1$, giving $\lambda_u=\pm 1$. Now assuming $\lambda_u=\pm 1$, we consider the orthogonal decomposition $UG\ket{\phi_u}=\left(GG^\dagger\right) UG\ket{\phi_u}+\left(I-GG^\dagger\right)UG\ket{\phi_u}=\lambda_uG\ket{\phi_u}+\left(I-GG^\dagger\right)UG\ket{\phi_u}$. By the  Pythagorean theorem, $\norm{\left(I-GG^\dagger\right)UG\ket{\phi_u}}=0$, which implies that $\left(I-GG^\dagger\right)UG\ket{\phi_u}=0$ and $UG\ket{\phi_u}=\lambda_uG\ket{\phi_u}$, establishing the equivalence of three conditions. The matrix representation of $U$ and $GG^\dagger$ then follows from a direct calculation.

When $\dim(\mathcal{H}_u)=2$, the span set $\left\{G\ket{\phi_u},UG\ket{\phi_u}\right\}$ is naturally a basis for $\mathcal{H}_u$. In this case, we know that $-1\leq\lambda_u\leq 1$ and $\lambda_u\neq\pm1$ simultaneously hold, so it must be that $-1<\lambda_u<1$. Similarly, assuming $-1<\lambda_u<1$, then it simultaneously holds that $\dim(\mathcal{H}_u)=1,2$ and $\dim(\mathcal{H}_u)\neq1$, implying $\dim(\mathcal{H}_u)=2$. The matrix representation of $U$ and $GG^\dagger$ follows again from a direct calculation.
\end{proof}

Now, we perform the orthogonal decompositions
\begin{equation}
\begin{aligned}
    \mathcal{H}&=\mathbf{Im}\left(GG^\dagger+UGG^\dagger U^\dagger\right)\obot\mathbf{Ker}\left(GG^\dagger+UGG^\dagger U^\dagger\right)\\
    &=\left(\mathbf{Im}\left(GG^\dagger\right)+\mathbf{Im}\left(UGG^\dagger U^\dagger\right)\right)
    \obot\left(\mathbf{Ker}\left(GG^\dagger\right)\cap\mathbf{Ker}\left(UGG^\dagger U^\dagger\right)\right),
\end{aligned}
\end{equation}
which are equivalent since for any two positive semidefinite operators $P$, $Q$ and state $\ket{\phi}$,
\begin{equation}
    (P+Q)\ket{\phi}=0
    \quad\Leftrightarrow\quad
    \bra{\phi}(P+Q)\ket{\phi}=0
    \quad\Leftrightarrow\quad
    \bra{\phi}P\ket{\phi}=\bra{\phi}Q\ket{\phi}=0
    \quad\Leftrightarrow\quad
    P\ket{\phi}=Q\ket{\phi}=0.
\end{equation}
For the purpose of qubitization, we further decompose the first term into
\begin{equation}
\begin{aligned}
    \mathbf{Im}\left(GG^\dagger\right)&=\mathbf{Im}(G)=G\bigobot_{u}\mathbf{Span}\left\{\ket{\phi_u}\right\}=\bigobot_{u}\mathbf{Span}\left\{G\ket{\phi_u}\right\},\\
    \mathbf{Im}\left(UGG^\dagger U^\dagger\right)
    &=\mathbf{Im}(UG)=UG\bigobot_{u}\mathbf{Span}\left\{\ket{\phi_u}\right\}=\bigobot_{u}\mathbf{Span}\left\{UG\ket{\phi_u}\right\}.
\end{aligned}
\end{equation}
This gives:

\begin{proposition}[Hermitian qubitization]
\label{prop:qubitization}
Let $U:\mathcal{H}\rightarrow\mathcal{H}$ be a unitary operator and $G:\mathcal{G}\rightarrow\mathcal{H}$ be an isometry, such that $G^\dagger UG$ is Hermitian and $G^\dagger U^2G=I$. Let $\ket{\phi_u}$ be eigenvectors of $G^\dagger UG$ with corresponding eigenvalues $\lambda_u$, and define the cyclic subspaces
\begin{equation}
    \mathcal{H}_u=\mathbf{Span}\left\{\ldots,U^{\dagger2} G\ket{\phi_u},U^\dagger G\ket{\phi_u},G\ket{\phi_u},UG\ket{\phi_u},U^2G\ket{\phi_u},\ldots\right\}.
\end{equation}
Then:
\begin{enumerate}
    \item $\mathcal{H}$ admits the orthogonal decomposition
    \begin{equation}
        \mathcal{H}=\bigobot_{u}\mathcal{H}_u
        \obot\mathcal{H}_\bot,
    \end{equation}
    where $\mathcal{H}_u=\mathbf{Span}\left\{G\ket{\phi_u},UG\ket{\phi_u}\right\}$ sum up to 
    $
    \bigobot_{u}\mathcal{H}_u=\mathbf{Im}\left(GG^\dagger\right)+\mathbf{Im}\left(UGG^\dagger U^\dagger\right)$, and $\mathcal{H}_\bot=\ker\left(GG^\dagger\right)\cap\ker\left(UGG^\dagger U^\dagger\right)$.
    All subspaces are invariant under $U$, $U^\dagger$ and $GG^\dagger$.
    \item When $\lambda_u=\pm1$, $\left\{G\ket{\phi_u}\right\}$ is an orthonormal basis for $\mathcal{H}_u$, under which
    \begin{equation}
        U=\begin{bmatrix}
            \lambda_u
        \end{bmatrix},\qquad
        GG^\dagger=\begin{bmatrix}
            1
        \end{bmatrix}.
    \end{equation}
    \item When $-1<\lambda_u<1$, $\left\{G\ket{\phi_u},\frac{UG\ket{\phi_u}-\lambda_uG\ket{\phi_u}}{\sqrt{1-\lambda_u^2}}\right\}$ is an orthonormal basis for $\mathcal{H}_u$, under which
    \begin{equation}
        U=\begin{bmatrix}
            \lambda_u & \sqrt{1-\lambda_u^2}\\
            \sqrt{1-\lambda_u^2} & -\lambda_u
        \end{bmatrix},\qquad
        GG^\dagger=
        \begin{bmatrix}
            1 & 0\\
            0 & 0
        \end{bmatrix}.
    \end{equation}
    \item Restricted to $\mathcal{H}_\bot$, $U$ is still a unitary and $GG^\dagger=0$.
\end{enumerate}
\end{proposition}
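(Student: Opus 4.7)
The plan is to leverage the previous Lemma heavily and reduce the proposition to three separate tasks: (i) establishing mutual orthogonality of the cyclic subspaces $\mathcal{H}_u$ and their relation to $\mathbf{Im}(GG^\dagger)+\mathbf{Im}(UGG^\dagger U^\dagger)$, (ii) orthonormalizing the spanning sets from the Lemma in the $-1<\lambda_u<1$ case to obtain the matrix representations in an orthonormal basis, and (iii) verifying the statements about $\mathcal{H}_\bot$.

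For (i), the Lemma already gives $\mathcal{H}_u=\mathbf{Span}\{G\ket{\phi_u},UG\ket{\phi_u}\}$, invariance under $U$, $U^\dagger$, $GG^\dagger$, and the 1D/2D dichotomy. The new ingredient is orthogonality across different $u$. I would compute the four cross inner products of basis vectors: $\bra{\phi_u}G^\dagger G\ket{\phi_v}=\delta_{uv}$, $\bra{\phi_u}G^\dagger U G\ket{\phi_v}=\lambda_v\delta_{uv}$, $\bra{\phi_u}G^\dagger U^\dagger G\ket{\phi_v}=\lambda_v\delta_{uv}$ (using that $G^\dagger UG$ is Hermitian, so $G^\dagger U^\dagger G=G^\dagger U G$), and $\bra{\phi_u}G^\dagger U^\dagger U G\ket{\phi_v}=\delta_{uv}$, concluding $\mathcal{H}_u\perp\mathcal{H}_v$ when $u\neq v$. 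Identifying $\bigobot_u\mathcal{H}_u=\mathbf{Im}(GG^\dagger)+\mathbf{Im}(UGG^\dagger U^\dagger)$ is then easy: each $G\ket{\phi_u}$ spans $\mathbf{Im}(GG^\dagger)$ (orthonormal basis $\{\ket{\phi_u}\}$ of $\mathcal{G}$), and applying $U$ transfers this to $\mathbf{Im}(UGG^\dagger U^\dagger)$; conversely both images lie in $\bigobot_u\mathcal{H}_u$ by construction. Then $\mathcal{H}_\bot$ is defined as the orthogonal complement, which coincides with $\mathbf{Ker}(GG^\dagger)\cap\mathbf{Ker}(UGG^\dagger U^\dagger)$ by the preamble identification.

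For (ii), in the 2D case I would apply Gram--Schmidt to $\{G\ket{\phi_u},UG\ket{\phi_u}\}$: the inner product is $\bra{\phi_u}G^\dagger UG\ket{\phi_u}=\lambda_u$, so the orthogonalized vector is $UG\ket{\phi_u}-\lambda_u G\ket{\phi_u}$, with squared norm $1-\lambda_u^2$ by a direct expansion. The matrix of $U$ then follows from $UG\ket{\phi_u}=\lambda_u G\ket{\phi_u}+\sqrt{1-\lambda_u^2}\,\tilde\phi_u$ (first column), while for the second column I use $U^2G=G$ to compute $U\tilde\phi_u=(G\ket{\phi_u}-\lambda_u UG\ket{\phi_u})/\sqrt{1-\lambda_u^2}$ and re-expand in the orthonormal basis, yielding $\sqrt{1-\lambda_u^2}G\ket{\phi_u}-\lambda_u\tilde\phi_u$. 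For $GG^\dagger$ I observe $GG^\dagger G\ket{\phi_u}=G\ket{\phi_u}$ and $GG^\dagger\tilde\phi_u=0$, using $G^\dagger UG\ket{\phi_u}=\lambda_u\ket{\phi_u}$ to cancel. The 1D case is immediate from the Lemma.

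For (iii), since $\bigobot_u\mathcal{H}_u$ is $U$- and $U^\dagger$-invariant (each $\mathcal{H}_u$ is), its orthogonal complement $\mathcal{H}_\bot$ is automatically $U$- and $U^\dagger$-invariant by the standard adjoint argument, so $U|_{\mathcal{H}_\bot}$ is a unitary. The vanishing $GG^\dagger|_{\mathcal{H}_\bot}=0$ is immediate from $\mathcal{H}_\bot\subseteq\mathbf{Ker}(GG^\dagger)$. The one minor obstacle I anticipate is bookkeeping the 1D versus 2D summands uniformly in the orthogonal decomposition---I will simply handle them as separate cases rather than trying to force a unified notation, since the Lemma already cleanly distinguishes them by the criterion $\lambda_u=\pm1$ versus $-1<\lambda_u<1$.
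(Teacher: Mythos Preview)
Your proposal is correct and follows essentially the same route as the paper's proof: verifying mutual orthogonality of the $\mathcal{H}_u$ via the four cross inner products, identifying $\bigobot_u\mathcal{H}_u$ with $\mathbf{Im}(GG^\dagger)+\mathbf{Im}(UGG^\dagger U^\dagger)$, applying Gram--Schmidt in the 2D case to obtain the orthonormal basis and matrix representations, and invoking invariance of the orthogonal complement for $\mathcal{H}_\bot$. The only cosmetic difference is that the paper phrases the invariance of $\mathcal{H}_\bot$ in terms of $U$ and $GG^\dagger$ being normal operators, while you use the adjoint argument directly---these amount to the same thing.
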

\begin{proof}
We start by checking the pairwise orthogonality of $\mathcal{H}_u$ and $\mathcal{H}_v$ for $u\neq v$:
\begin{equation}
\begin{aligned}
    \bra{\phi_v}G^\dagger G\ket{\phi_u}=0,\qquad
    \bra{\phi_v}G^\dagger UG\ket{\phi_u}=\lambda_u\langle\phi_v|\phi_u\rangle=0,\\
    \bra{\phi_v}G^\dagger U^\dagger G\ket{\phi_u}=\lambda_v\langle\phi_v|\phi_u\rangle=0,\qquad
    \bra{\phi_v}G^\dagger U^\dagger UG\ket{\phi_u}=\langle\phi_v|\phi_u\rangle=0.
\end{aligned}
\end{equation}
Combining with the analysis proceeding this theorem, we have established the claimed orthogonal decomposition. Since $\bigobot_{u}\mathcal{H}_u$ is invariant under the normal operators $U$, $U^\dagger$ and $GG^\dagger$, its orthogonal complement $\mathcal{H}_\bot$ is also invariant under $U$, $U^\dagger$ and $GG^\dagger$.

The statement about 1D subspaces is already proved in \lem{subspace}. 
When $\dim(\mathcal{H}_u)=2$, we can construct an orthonormal basis by applying the Gram-Schmidt process to $\left\{G\ket{\phi_u},UG\ket{\phi_u}\right\}$. This produces the unit basis vector
\begin{equation}
\begin{aligned}
    \frac{UG\ket{\phi_u}-G\ketbra{\phi_u}{\phi_u}G^\dagger UG\ket{\phi_u}}{\norm{UG\ket{\phi_u}-G\ketbra{\phi_u}{\phi_u}G^\dagger UG\ket{\phi_u}}}
    &=\frac{UG\ket{\phi_u}-\lambda_uG\ket{\phi_u}}{\norm{UG\ket{\phi_u}-\lambda_uG\ket{\phi_u}}}\\
    &=\frac{UG\ket{\phi_u}-\lambda_uG\ket{\phi_u}}{\sqrt{1+\lambda_u^2-\lambda_u\bra{\phi_u}G^\dagger UG\ket{\phi_u}-\lambda_u\bra{\phi_u}G^\dagger U^\dagger G\ket{\phi_u}}}\\
    &=\frac{UG\ket{\phi_u}-\lambda_uG\ket{\phi_u}}{\sqrt{1-\lambda_u^2}}
\end{aligned}
\end{equation}
orthogonal to $G\ket{\phi_u}$. The matrix representation then follows from a direct calculation.
\end{proof}

\subsection{Quantum walk}
\label{append:qubitization_walk}
For a 2D subspace $\mathcal{H}_u$ where $-1<\lambda_u<1$, we see that 
\begin{equation*}
    \ket{\phi_{u,0}}=G\ket{\phi_u},\qquad
    \ket{\phi_{u,1}}=\frac{UG\ket{\phi_u}-\lambda_uG\ket{\phi_u}}{\sqrt{1-\lambda_u^2}}
\end{equation*}
is an orthonormal basis, under which
    \begin{equation}
        U=\begin{bmatrix}
            \lambda_u & \sqrt{1-\lambda_u^2}\\
            \sqrt{1-\lambda_u^2} & -\lambda_u
        \end{bmatrix}
        =\lambda_uZ+\sqrt{1-\lambda_u^2}X,\qquad
        GG^\dagger=
        \begin{bmatrix}
            1 & 0\\
            0 & 0
        \end{bmatrix}
        =\frac{1}{2}I+\frac{1}{2}Z.
    \end{equation}
This means that the quantum walk operator defined by $W=\left(2GG^\dagger-I\right)U$ has the matrix representation
\begin{equation}
    W=\left(2GG^\dagger-I\right)U=
    \begin{bmatrix}
        \lambda_u & \sqrt{1-\lambda_u^2}\\
        -\sqrt{1-\lambda_u^2} & \lambda_u
    \end{bmatrix}
    =\lambda_uI+i\sqrt{1-\lambda_u^2}Y.
\end{equation}
given by a linear combination of Pauli operators.
Therefore, for each $u$, $W$ has two eigenvalues
\begin{equation}
    \lambda_{u,\pm}=\lambda_u\pm i\sqrt{1-\lambda_u^2}
    =e^{\pm i\arccos(\lambda_u)}
\end{equation}
with the associated eigenvectors
\begin{equation}
    \ket{\phi_{u,Y\pm}}=\frac{\ket{\phi_{u,0}}\pm i\ket{\phi_{u,1}}}{\sqrt{2}}.
\end{equation}
We thus obtain the following spectral decomposition of the walk operator.

\begin{corollary}[Quantum walk]
\label{cor:walk}
Let $U$ be a unitary operator and $G$ be an isometry, such that $G^\dagger UG$ is Hermitian and $G^\dagger U^2G=I$.
If $G^\dagger UG$ has the spectral decomposition
\begin{equation}
    G^\dagger UG=\sum_u\lambda_u\ketbra{\phi_u}{\phi_u},
\end{equation}
then the quantum walk operator $W=\left(2GG^\dagger-I\right)U$ has the spectral decomposition
\begin{equation}
\begin{aligned}
    W&=\left(2GG^\dagger-I\right)U\\
    &=\sum_{\abs{\lambda_u}=1}\lambda_u\ketbra{\phi_{u,0}}{\phi_{u,0}}
    +\sum_{\abs{\lambda_u}<1}\left(e^{+ i\arccos(\lambda_u)}\ketbra{\phi_{u,Y+}}{\phi_{u,+}}
    +e^{-i\arccos(\lambda_u)}\ketbra{\phi_{u,Y-}}{\phi_{u,-}}\right)
\end{aligned}
\end{equation}
within $\mathbf{Im}\left(GG^\dagger\right)+\mathbf{Im}\left(UGG^\dagger U^\dagger\right)$, where
\begin{equation}
    \ket{\phi_{u,0}}=G\ket{\phi_u},\qquad
    \ket{\phi_{u,1}}=\frac{UG\ket{\phi_u}-\lambda_uG\ket{\phi_u}}{\sqrt{1-\lambda_u^2}},\qquad
    \ket{\phi_{u,Y\pm}}=\frac{\ket{\phi_{u,0}}\pm i\ket{\phi_{u,1}}}{\sqrt{2}}.
\end{equation}

\end{corollary}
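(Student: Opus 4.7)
The plan is to bootstrap the spectral decomposition from the subspace structure already provided by \prop{qubitization}. The hypotheses that $G^\dagger UG$ is Hermitian and $G^\dagger U^2 G = I$ are exactly the hypotheses of that proposition, yielding the orthogonal decomposition $\mathcal{H} = \bigobot_u \mathcal{H}_u \obot \mathcal{H}_\bot$ with $\bigobot_u\mathcal{H}_u = \mathbf{Im}(GG^\dagger) + \mathbf{Im}(UGG^\dagger U^\dagger)$. Since each cyclic subspace $\mathcal{H}_u$ is stable under both $U$ and $GG^\dagger$, it is stable under $W = (2GG^\dagger - I)U$ as well, and it suffices to diagonalize $W$ on each $\mathcal{H}_u$ separately and then take the direct sum.

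First I would dispose of the one-dimensional cyclic subspaces, which by \prop{qubitization} correspond exactly to the eigenvalues $\lambda_u = \pm 1$. In the one-dimensional basis $\{G\ket{\phi_u}\}$ we have $U = [\lambda_u]$ and $GG^\dagger = [1]$, so $W = (2\cdot 1 - 1)\lambda_u = \lambda_u$. Hence $\ket{\phi_{u,0}} = G\ket{\phi_u}$ is an eigenvector of $W$ with eigenvalue $\lambda_u \in \{+1,-1\}$, which reproduces the first sum in the claimed decomposition.

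For the two-dimensional cyclic subspaces ($-1 < \lambda_u < 1$), I would work in the orthonormal basis $\{\ket{\phi_{u,0}}, \ket{\phi_{u,1}}\}$ from \prop{qubitization}, in which $GG^\dagger = \mathrm{diag}(1,0)$ so that $2GG^\dagger - I$ acts as the Pauli $Z$. A direct $2\times 2$ matrix multiplication then gives
\begin{equation}
    W = \begin{bmatrix} 1 & 0 \\ 0 & -1 \end{bmatrix}\begin{bmatrix} \lambda_u & \sqrt{1-\lambda_u^2} \\ \sqrt{1-\lambda_u^2} & -\lambda_u \end{bmatrix} = \begin{bmatrix} \lambda_u & \sqrt{1-\lambda_u^2} \\ -\sqrt{1-\lambda_u^2} & \lambda_u \end{bmatrix} = \lambda_u I + i\sqrt{1-\lambda_u^2}\,Y.
\end{equation}
Since $Y$ has eigenvectors $\frac{\ket{0} \pm i\ket{1}}{\sqrt{2}}$ with eigenvalues $\pm 1$, the vectors $\ket{\phi_{u,\pm}} = \frac{\ket{\phi_{u,0}} \pm i\ket{\phi_{u,1}}}{\sqrt{2}}$ diagonalize $W$ with eigenvalues $\lambda_u \pm i\sqrt{1-\lambda_u^2} = e^{\pm i\arccos(\lambda_u)}$, matching the second sum. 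Summing over $u$ and restricting to $\bigobot_u\mathcal{H}_u = \mathbf{Im}(GG^\dagger) + \mathbf{Im}(UGG^\dagger U^\dagger)$ then yields the full claim. The argument is entirely mechanical once \prop{qubitization} is invoked; the only mild subtlety is verifying the identification $2GG^\dagger - I = Z$ in the prescribed basis, which follows immediately from $GG^\dagger\ket{\phi_{u,1}} = 0$, itself a direct consequence of $G^\dagger UG\ket{\phi_u} = \lambda_u\ket{\phi_u}$.
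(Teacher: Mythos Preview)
Your proposal is correct and follows essentially the same approach as the paper: both invoke \prop{qubitization} to obtain the block decomposition and matrix representations of $U$ and $GG^\dagger$ on each $\mathcal{H}_u$, then compute $W=(2GG^\dagger-I)U=\lambda_uI+i\sqrt{1-\lambda_u^2}\,Y$ in the 2D case and read off the eigenpairs from the Pauli-$Y$ diagonalization. Your treatment is slightly more explicit in separately verifying the 1D case $W=[\lambda_u]$, but otherwise the arguments coincide.
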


A similar analysis applies to $U$. Recall that $U$ is represented as a $2$-by-$2$ matrix which can be expanded as a linear combination of Pauli operators $Z$ and $X$. Through diagonalization, we find that $U$ has eigenvectors
\begin{equation*}
    \ket{\phi_{u,ZX\pm}}=\frac{\sqrt{1\pm\lambda_u}\ket{\phi_{u,0}}\pm\sqrt{1\mp\lambda_u}\ket{\phi_{u,1}}}{\sqrt{2}}
\end{equation*}
with corresponding eigenvalues $\pm1$. We summarize the spectral properties of operators $2GG^\dagger-I$, $W$ and $U$ in \tab{qubitization_eigen}.

\begin{table}
    \centering
    \begin{tabular}{c|c|c|c}
        Operator & Matrix representation & Eigenvalues & Eigenvectors\\
        \hline
        $G^\dagger UG$ & $\left[\begin{smallmatrix}
            \lambda_u
        \end{smallmatrix}\right]$ & $\lambda_u$ & $\ket{\phi_u}$\\
        $2GG^\dagger-I$ & $\left[\begin{smallmatrix}
            1 & 0 \\
            0 & -1
        \end{smallmatrix}\right]$ & $\pm1$ & $\ket{\phi_{u,0}}=G\ket{\phi_u},\ \ket{\phi_{u,1}}=\frac{UG\ket{\phi_u}-\lambda_uG\ket{\phi_u}}{\sqrt{1-\lambda_u^2}}$ \\
    $U$ & $\left[\begin{smallmatrix}
            \lambda_u & \sqrt{1-\lambda_u^2}\\
            \sqrt{1-\lambda_u^2} & -\lambda_u
        \end{smallmatrix}\right]$ & $\pm1$ & $\ket{\phi_{u,ZX\pm}}=\frac{\sqrt{1\pm\lambda_u}\ket{\phi_{u,0}}\pm\sqrt{1\mp\lambda_u}\ket{\phi_{u,1}}}{\sqrt{2}}$\\
        $\left(2GG^\dagger-I\right)U$ & $\left[\begin{smallmatrix}
        \lambda_u & \sqrt{1-\lambda_u^2}\\
        -\sqrt{1-\lambda_u^2} & \lambda_u
    \end{smallmatrix}\right]$ & $e^{\pm i\arccos(\lambda_u)}$ & $\ket{\phi_{u,Y\pm}}=\frac{\ket{\phi_{u,0}}\pm i\ket{\phi_{u,1}}}{\sqrt{2}}$\\
    $U\left(2GG^\dagger-I\right)$ & $\left[\begin{smallmatrix}
        \lambda_u & -\sqrt{1-\lambda_u^2}\\
        \sqrt{1-\lambda_u^2} & \lambda_u
    \end{smallmatrix}\right]$ & $e^{\mp i\arccos(\lambda_u)}$ & $\ket{\phi_{u,Y\pm}}=\frac{\ket{\phi_{u,0}}\pm i\ket{\phi_{u,1}}}{\sqrt{2}}$\\
    \end{tabular}
    \caption{Spectral properties of common operators involved in Hermitian qubitization. Here, the matrix representation of $G^\dagger UG$ is defined with respect to the basis $\{\ket{\phi_u}\}$, while the matrix representation of $2GG^\dagger-I$, $\left(2GG^\dagger-I\right)U$, $U\left(2GG^\dagger-I\right)$, and $U$ is defined relative to $\{\ket{\phi_{u,0}},\ket{\phi_{u,1}}\}$.}
    \label{tab:qubitization_eigen}
\end{table}

\subsection{Quantum amplitude amplification}
\label{append:qubitization_amp}
We now give a self-contained description of quantum amplitude amplification based on qubitization.
Let $\ket{\psi_0}$ be a quantum state and $\Pi$ be an orthogonal projection, so that directly measuring $\ket{\psi_0}$ yields the outcome $\Pi$ with success amplitude
\begin{equation*}
    \norm{\Pi\ket{\psi_0}}=\sin(\theta).
\end{equation*}
Here, amplitude amplification is infeasible if $\theta=0$, whereas the amplitude is already maximally amplified when $\theta=\frac{\pi}{2}$. So we assume $0<\theta<\frac{\pi}{2}$ in the following analysis.

We let
\begin{equation*}
    G=\ket{\psi_0},\qquad
    U=I-2\Pi,
\end{equation*}
and verify that $G$ is an isometry and $U$ is a Hermitian unitary. Then,
\begin{equation*}
    G^\dagger UG=\bra{\psi_0}(I-2\Pi)\ket{\psi_0}=1-2\norm{\Pi\ket{\psi_0}}^2=\cos(2\theta)=\lambda,
\end{equation*}
where we have dropped the subscript $u$ since $G^\dagger UG$ has only one eigenvalue.
This implies
\begin{equation*}
    \ket{\phi_0}=\ket{\psi_0},\qquad
    \ket{\phi_1}=\frac{-2\Pi\ket{\psi_0}+2\sin^2(\theta)\ket{\psi_0}}{\sin(2\theta)}.
\end{equation*}

Within the 2D subspace $\mathbf{Span}\left\{\ket{\phi_0},\ket{\phi_1}\right\}$, $U$ and $2GG^\dagger-I$ has the matrix representation
\begin{equation*}
    U=\begin{bmatrix}
        \cos(2\theta) & \sin(2\theta)\\
        \sin(2\theta) & -\cos(2\theta)
    \end{bmatrix},\qquad
    2GG^\dagger-I=\begin{bmatrix}
        1 & 0\\
        0 & -1
    \end{bmatrix}.
\end{equation*}
Hence, for any nonnegative integer $r$,
\begin{equation*}
    \left(-(I-2\ketbra{\psi_0}{\psi_0})(I-2\Pi)\right)^r
   =\left(\left(2GG^\dagger-I\right)U\right)^r=\begin{bmatrix}
        \cos(2r\theta) & \sin(2r\theta)\\
        -\sin(2r\theta) & \cos(2r\theta)
    \end{bmatrix}.
\end{equation*}
Applied to the initial state, this gives the amplified state
\begin{equation*}
    \left(-(I-2\ketbra{\psi_0}{\psi_0})(I-2\Pi)\right)^r\ket{\psi_0}
    =\left(\left(2GG^\dagger-I\right)U\right)^r\ket{\phi_0}
    =\cos(2r\theta)\ket{\phi_0}-\sin(2r\theta)\ket{\phi_1}.
\end{equation*}

We now rewrite the initial and amplified states with respect to the eigenvectors $\ket{\phi_{ZX\pm}}$ of $U$
with eigenvalues $\pm1$. For the initial state, we have
\begin{equation*}
    \ket{\phi_0}=\cos(\theta)\ket{\phi_{ZX+}}+\sin(\theta)\ket{\phi_{ZX-}},
\end{equation*}
whereas
\begin{equation*}
\begin{aligned}
    &\left(\left(2GG^\dagger-I\right)U\right)^r\ket{\phi_0}=\cos(2r\theta)\ket{\phi_0}-\sin(2r\theta)\ket{\phi_1}\\
    &=\cos(2r\theta)\left(\cos(\theta)\ket{\phi_{ZX+}}+\sin(\theta)\ket{\phi_{ZX-}}\right)
    -\sin(2r\theta)\left(\sin(\theta)\ket{\phi_{ZX+}}-\cos(\theta)\ket{\phi_{ZX-}}\right)\\
    &=\cos((2r+1)\theta)\ket{\phi_{ZX+}}+\sin((2r+1)\theta)\ket{\phi_{ZX-}}.
\end{aligned}
\end{equation*}
Finally, since the eigenvalues $+1$, $-1$ of $U$ are mapped to the eigenvalues $0$, $1$ of $\Pi=\frac{I-U}{2}$,
\begin{equation*}
    \Pi\ket{\phi_{ZX+}}=0,\qquad
    \Pi\ket{\phi_{ZX-}}=\ket{\phi_{ZX-}}.
\end{equation*}
This gives the following version of amplitude amplification.

\begin{lemma*}[Quantum amplitude amplification]
Let $\ket{\psi_0}$ be a quantum state and $\Pi$ be an orthogonal projection. There exist $\ket{\phi_{ZX-}}\in\mathbf{Im}(\Pi)$ and $\ket{\phi_{ZX+}}\in\mathbf{Ker}(\Pi)$ such that for any nonnegative integer $r=0,1,2,\ldots$,
\begin{equation*}
\begin{aligned}
    &\left(-(I-2\ketbra{\psi_0}{\psi_0})(I-2\Pi)\right)^r\ket{\psi_0}\\
    &=\sin\left((2r+1)\arcsin(\norm{\Pi\ket{\psi_0}})\right)\ket{\phi_{ZX-}}
    +\cos\left((2r+1)\arcsin(\norm{\Pi\ket{\psi_0}})\right)\ket{\phi_{ZX+}}.
\end{aligned}
\end{equation*}
Moreover, 
\begin{equation*}
    \ket{\phi_{ZX-}}=\frac{\Pi\ket{\psi_0}}{\norm{\Pi\ket{\psi_0}}},\qquad
    \ket{\phi_{ZX+}}=\frac{(I-\Pi)\ket{\psi_0}}{\norm{(I-\Pi)\ket{\psi_0}}},
\end{equation*}
are uniquely determined if $\ket{\psi_0}\notin\mathbf{Ker}(\Pi)$ and $\ket{\psi_0}\notin\mathbf{Im}(\Pi)$ respectively.

\end{lemma*}

\section{Gapped phase estimation with branch marking}
\label{append:gpe_bm}

In this appendix, we provide details on the construction of the gapped phase estimation algorithm with branch marking, which is used in \sec{dinv_branch} for preparing the discretized inverse state.

\subsection{Simultaneous Fourier approximation of even and odd functions}
\label{append:gpe_bm_fourier}
We begin by considering the general problem of applying functions to the eigenphases of an input unitary. Specifically, suppose that the given unitary has the spectral decomposition $U=\sum_ve^{i\theta_v}\ketbra{\phi_v}{\phi_v}$. Our goal is to obtain an operator $V$ close to $\sum_v \left(f_a(\theta_v)I+if_c(\theta_v)X\right)\otimes\ketbra{\phi_v}{\phi_v}$ acting jointly on the input system and an ancilla qubit, for some desired even/odd periodic functions $f_a$ and $f_c$, using oracular queries to the unitary operator $U$ and its inverse. This problem is solved by the quantum signal processing technique~\cite{Low2016HamSim,MethodologyLYC}, which we review below.

\begin{lemma}[Quantum signal processing]
\label{lem:qsp}
Let $n>0$ be an integer, $f_a(\theta)=\sum_{k=0}^{n}a_j\cos(k\theta)$ and $f_c(\theta)=\sum_{k=1}^{n}a_j\sin(k\theta)$ be real Fourier series, such that $f_a^2(\theta)+f_c^2(\theta)\leq 1$ for all $\theta\in[-\pi,\pi]$ and $f_a(0)=1$. Given $U$ with the spectral decomposition $U=\sum_ve^{i\theta_v}\ketbra{\phi_v}{\phi_v}$, there exists a quantum circuit $V$ acting on the system register and a single-qubit ancilla as
\begin{equation}
    V=\sum_v \left(f_{a}(\theta_v)I+if_{b}(\theta_v)Z+if_{c}(\theta_v)X+if_{d}(\theta_v)Y\right)\otimes\ketbra{\phi_v}{\phi_v},
\end{equation}
with some even/odd real $2\pi$-periodic functions $f_b$/$f_d$ satisfying $f_{a}^2(\theta)+f_{b}^2(\theta)+f_{c}^2(\theta)+f_{d}^2(\theta)=1$ for all $\theta\in[-\pi,\pi]$, using at most $2n$ queries to $U$ and $U^\dagger$.
\end{lemma}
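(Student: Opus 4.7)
The plan is to realize $V$ through the standard quantum signal processing ansatz and then reduce the problem to a trigonometric completion problem. Concretely, on the single-qubit ancilla I will use the controlled walk step $W = \ketbra{0}{0}\otimes U + \ketbra{1}{1}\otimes U^{\dagger}$ (or, equivalently, $W = (\ketbra{0}{0}\otimes I)\cdot\mathrm{c\text{-}}U + (\ketbra{1}{1}\otimes I)\cdot\mathrm{c\text{-}}U^\dagger$), interleaved with single-qubit $Z$-axis rotations $e^{i\phi_j Z}$, giving an ansatz
\begin{equation}
    V(\vec{\phi}) = e^{i\phi_0 Z}\prod_{j=1}^{n}\left(W\cdot e^{i\phi_j Z}\right).
\end{equation}
Restricted to the invariant two-dimensional subspace spanned by $\ket{0}\ket{\phi_v},\ket{1}\ket{\phi_v}$, $W$ acts as $\mathrm{diag}(e^{i\theta_v},e^{-i\theta_v})$, so each such $V$ decomposes as $V = \sum_v V_v(\theta_v)\otimes\ketbra{\phi_v}{\phi_v}$ with $V_v(\theta)\in\mathrm{SU}(2)$. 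The total query cost is $n$ controlled-$U$ and $n$ controlled-$U^\dagger$, matching the claimed $2n$.

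The first step is the \emph{forward characterization}: by induction on $n$, the entries of $V_v(\theta)$ are real-coefficient Fourier polynomials of degree $\leq n$, and the $I$-component (resp.\ $Z$-component) is even (resp.\ even) while the $X$-component (resp.\ $Y$-component) is odd (resp.\ odd) in $\theta$, because $Z$-rotations commute with the diagonal phase and alternate real/imaginary parts under interleaving with $W$. Writing $V_v(\theta) = f_a(\theta)I + if_b(\theta)Z + if_c(\theta)X + if_d(\theta)Y$, unitarity forces $f_a^2 + f_b^2 + f_c^2 + f_d^2 = 1$ pointwise in $\theta$, which is the structural constraint announced in the statement.

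The main obstacle, and the step I would carry out most carefully, is the \emph{converse}: given only the even part $f_a$ and odd part $f_c$ with $f_a^2 + f_c^2\leq 1$ and $f_a(0)=1$, find complementary even/odd Fourier polynomials $f_b,f_d$ of degree $\leq n$ making the identity $f_a^2+f_b^2+f_c^2+f_d^2 = 1$ hold. Grouping $P(e^{i\theta}) := f_a(\theta)+if_b(\theta)$ and $Q(e^{i\theta}) := f_c(\theta)-if_d(\theta)$, I need $|P|^2+|Q|^2 = 1$ on the unit circle with $P,Q$ Laurent polynomials of degree $\leq n$. The trigonometric polynomial $R(\theta) := 1 - f_a^2(\theta)-f_c^2(\theta)$ is nonnegative of degree $\leq 2n$, so by the Fej\'er--Riesz theorem it factors as $R(\theta) = |S(e^{i\theta})|^2$ for some Laurent polynomial $S$ of degree $\leq n$. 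Splitting $S$ into its even/odd Fourier components then yields the desired $f_b,f_d$ (with parity forced by the parity of $f_a,f_c$ through comparing coefficients), and the normalization $f_a(0)=1$ together with $f_a^2+f_c^2\leq 1$ guarantees consistency of the factorization.

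Finally, having produced $(f_a,f_b,f_c,f_d)$ satisfying the structural constraints, the standard QSP angle-finding argument (Low--Chuang \emph{et al.}) peels off one $e^{i\phi_j Z}$ and one $W$ at a time, reducing the Fourier degree by one at each step, and terminates in $n$ iterations with explicit real phases $\phi_0,\ldots,\phi_n$ realizing the target $V$. Putting the three ingredients together -- ansatz, forward characterization, and Fej\'er--Riesz completion followed by recursive angle extraction -- establishes the lemma; the only nontrivial step is the completion, and everything else is bookkeeping on parity and degree.
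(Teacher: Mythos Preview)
The paper does not prove this lemma; it is stated as a known result from the quantum signal processing literature (Low--Chuang and the ``Methodology'' paper), and the surrounding text only discusses how to \emph{use} it --- constructing $f_a,f_c$ from shifted Chebyshev approximations of the sign function and invoking the sum-of-squares completion to obtain $f_b,f_d$. Your high-level strategy (QSP ansatz, forward characterization, Fej\'er--Riesz completion, recursive angle extraction) is precisely the standard argument from those references, so in that sense your approach aligns with what the paper defers to.

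There is, however, a concrete error in your ansatz. With $W$ acting on the ancilla subspace as $\mathrm{diag}(e^{i\theta_v},e^{-i\theta_v}) = e^{i\theta_v Z}$ and processing rotations $e^{i\phi_j Z}$, every factor in $V(\vec\phi) = e^{i\phi_0 Z}\prod_j \bigl(W\,e^{i\phi_j Z}\bigr)$ is diagonal, so the product is diagonal and can only generate the $I$ and $Z$ components --- never the $X$ and $Y$ components needed for $f_c,f_d$. The fix is standard: interleave the $Z$-type signal with rotations about an orthogonal axis (e.g.\ $e^{i\phi_j X}$), or conjugate the signal into an $X$-type operator via Hadamards before applying $Z$-rotations. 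With that correction the rest of your sketch (parity bookkeeping, Fej\'er--Riesz factorization, layer-peeling) goes through as in the cited literature.
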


Note that this is slightly different from the setting of QSVT where we are often interested in a single function with a fixed parity---here we want to construct a unitary $V$ that simultaneously implements both even and odd functions.
In practice, the desired $f_a$ and $f_c$ can be obtained by truncating an infinite Fourier series. We then slightly modify the functions to satisfy the prerequisites $f_a^2(\theta)+f_c^2(\theta)\leq 1$ and $f_a(0)=1$. We will return to this point momentarily.

Now we describe how to construct the Fourier approximation. Our starting point is the following Chebyshev approximation of the sign function~\cite{Low2017USA,Wan22}
\begin{equation}
    \sum\limits_{\substack{j=0\\j\text{ odd}}}^{n}\beta_j\mathbf{T}_j(x)\in
    \begin{cases}
        [-1,-1+\epsilon],\quad&x\in[-1,-\nu],\\
        [-1,1],&x\in[-\nu,\nu],\\
        [1-\epsilon,1],&x\in[\nu,1],
    \end{cases}
\end{equation}
for arbitrary margin $0<\nu<1$, accuracy $\epsilon>0$ and some coefficients $\beta_j$. Here, one can choose the polynomial degree to scale like
\begin{equation}
\label{eq:sgn_degree}
    n=\mathbf{O}\left(\frac{1}{\nu}\log\left(\frac{1}{\epsilon}\right)\right).
\end{equation}
From this, we then construct a Fourier approximation of the periodic sign function by substituting $x=\sin(\theta)$ and $\nu=\sin(\varphi)$, assuming $0<\varphi\leq\frac{\pi}{2}$. Within the period of $[-\pi,\pi]$, it has the following behavior
\begin{equation}
    g(\theta)=\sum\limits_{\substack{j=0\\j\text{ odd}}}^{n}\beta_j\mathbf{T}_j(\sin(\theta))\in
    \begin{cases}
        [-1,1],\quad&x\in[-\pi,-\pi+\varphi],\\
        [-1,-1+\epsilon],&x\in[-\pi+\varphi,-\varphi],\\
        [-1,1],&x\in[-\varphi,\varphi],\\
        [1-\epsilon,1],&x\in[\varphi,\pi-\varphi],\\
        [-1,1],&x\in[\pi-\varphi,\pi].
    \end{cases}
\end{equation}

Now assuming $0<\varphi\leq\theta_0\leq\frac{\pi}{2}$, we reflect and shift the periodic sign function  to construct the threshold functions for GPE. Specifically, we define
\begin{equation}
\begin{aligned}
    f_c(\theta)&=\frac{g(\theta-\theta_0)-g(-\theta-\theta_0)}{2}
    =\frac{\sum\limits_{\substack{j=0\\j\text{ odd}}}^{n}\beta_j\mathbf{T}_j(\sin(\theta-\theta_0))
    -\sum\limits_{\substack{j=0\\j\text{ odd}}}^{n}\beta_j\mathbf{T}_j(\sin(-\theta-\theta_0))}{2}\\
    &=\frac{\sum\limits_{\substack{j=0\\j\text{ odd}}}^{n}\beta_j\sin(j\theta-j\theta_0)(-1)^{\frac{j-1}{2}}
    -\sum\limits_{\substack{j=0\\j\text{ odd}}}^{n}\beta_j\sin(-j\theta-j\theta_0)(-1)^{\frac{j-1}{2}}}{2}\\
    &=\sum\limits_{\substack{j=0\\j\text{ odd}}}^{n}\beta_j\sin(j\theta)\cos(j\theta_0)(-1)^{\frac{j-1}{2}},
\end{aligned}
\end{equation}
and
\begin{equation}
\begin{aligned}
    f_a(\theta)&=\frac{-g(\theta-\theta_0)-g(-\theta-\theta_0)}{2}
    =\frac{-\sum\limits_{\substack{j=0\\j\text{ odd}}}^{n}\beta_j\mathbf{T}_j(\sin(\theta-\theta_0))
    -\sum\limits_{\substack{j=0\\j\text{ odd}}}^{n}\beta_j\mathbf{T}_j(\sin(-\theta-\theta_0))}{2}\\
    &=\frac{-\sum\limits_{\substack{j=0\\j\text{ odd}}}^{n}\beta_j\sin(j\theta-j\theta_0)(-1)^{\frac{j-1}{2}}
    -\sum\limits_{\substack{j=0\\j\text{ odd}}}^{n}\beta_j\sin(-j\theta-j\theta_0)(-1)^{\frac{j-1}{2}}}{2}\\
    &=\sum\limits_{\substack{j=0\\j\text{ odd}}}^{n}\beta_j\cos(j\theta)\sin(j\theta_0)(-1)^{\frac{j-1}{2}}.
\end{aligned}
\end{equation}
The above functions actually depend on various parameters such as $\theta_0$, $\varphi$ and $\epsilon$, but when their values are clear from the context, we will suppress these dependences for presentational purposes.
The qualitative behaviors of these functions are shown in \tab{fourier_gpe} and \fig{qsp_poly}.

\begin{table}[t]
\begin{center}
\begin{tabular}{c|cc|cc}
     $\theta$ & $g(\theta-\theta_0)$ & $g(-\theta-\theta_0)$ & $f_c(\theta)$ & $f_a(\theta)$  \\
     \hline
     $[-\pi,-\pi+\theta_0-\varphi]$ & $[1-\epsilon,1]$ & $[1-\epsilon,1]$ & $[-\frac{\epsilon}{2},\frac{\epsilon}{2}]$ & $[-1,-1+\epsilon]$\\
     $[-\pi+\theta_0-\varphi,-\pi+\theta_0+\varphi]$ & $[-1,1]$ & $[1-\epsilon,1]$ & $[-1,\frac{\epsilon}{2}]$ & $[-1,\frac{\epsilon}{2}]$\\
     $[-\pi+\theta_0+\varphi,-\theta_0-\varphi]$ & $[-1,-1+\epsilon]$ & $[1-\epsilon,1]$ & $[-1,-1+\epsilon]$ & $[-\frac{\epsilon}{2},\frac{\epsilon}{2}]$\\
     $[-\theta_0-\varphi,-\theta_0+\varphi]$ & $[-1,-1+\epsilon]$ & $[-1,1]$ & $[-1,\frac{\epsilon}{2}]$ & $[-\frac{\epsilon}{2},1]$\\
     $[-\theta_0+\varphi,\theta_0-\varphi]$ & $[-1,-1+\epsilon]$ & $[-1,-1+\epsilon]$ & $[-\frac{\epsilon}{2},\frac{\epsilon}{2}]$ & $[1-\epsilon,1]$\\
     $[\theta_0-\varphi,\theta_0+\varphi]$ & $[-1,1]$ & $[-1,-1+\epsilon]$ & $[-\frac{\epsilon}{2},1]$ & $[-\frac{\epsilon}{2},1]$\\
     $[\theta_0+\varphi,\pi-\theta_0-\varphi]$ & $[1-\epsilon,1]$ & $[-1,-1+\epsilon]$ & $[1-\epsilon,1]$ & $[-\frac{\epsilon}{2},\frac{\epsilon}{2}]$\\
     $[\pi-\theta_0-\varphi,\pi-\theta_0+\varphi]$ & $[1-\epsilon,1]$ & $[-1,1]$ & $[-\frac{\epsilon}{2},1]$ & $[-1,\frac{\epsilon}{2}]$\\
     $[\pi-\theta_0+\varphi,\pi]$ & $[1-\epsilon,1]$ & $[1-\epsilon,1]$ & $[-\frac{\epsilon}{2},\frac{\epsilon}{2}]$ & $[-1,-1+\epsilon]$\\
\end{tabular}
\end{center}
    \caption{Qualitative behavior of the Fourier approximation of threshold functions used in branch marking and GPE.}
    \label{tab:fourier_gpe}
\end{table}

\begin{figure}[t]
    \centering
    \begin{subfigure}[b]{0.475\textwidth}
        \centering
        \includegraphics[width=.975\textwidth]{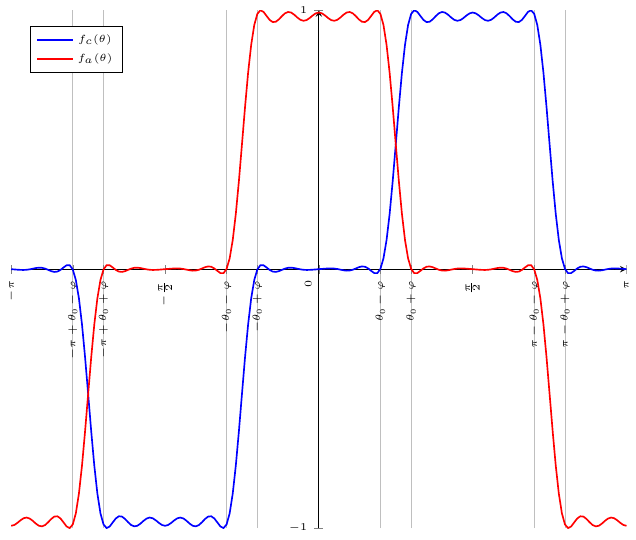}
        \caption{}
    \end{subfigure}%
    \hspace{0.5cm}
    \begin{subfigure}[b]{0.475\textwidth}
        \centering
        \includegraphics[width=.975\textwidth]{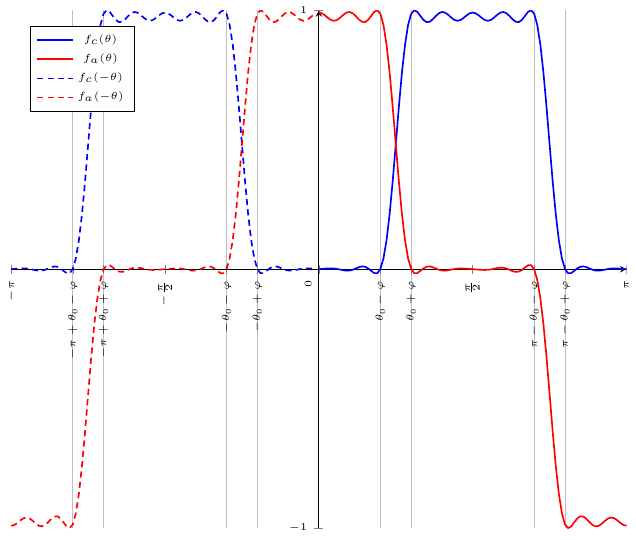}
        \caption{}
    \end{subfigure}
    \caption{Illustration of the qualitative behavior of functions used in branch marking (left panel) and GPE (right panel).}
    \label{fig:qsp_poly}
\end{figure}

\subsection{Branch marking}
\label{append:gpe_bm_bm}
Recall from~\sec{dinv_branch} that our preparation of the discretized inverse state uses the quantum walk operator. When the input matrix has the spectral decomposition
\begin{equation}
    A=\sum_u\lambda_u\ketbra{\phi_u}{\phi_u},
\end{equation}
the walk operator has the corresponding spectral decomposition
\begin{equation}
    W
    =\sum_{u}\left(e^{i\theta_{u,+}}\ketbra{\phi_{u,Y+}}{\phi_{u,Y+}}
    +e^{i\theta_{u,-}}\ketbra{\phi_{u,Y-}}{\phi_{u,Y-}}\right),\qquad
    \theta_{u,\pm}=\pm\arccos\left(\frac{\lambda_u}{\alpha_A}\right).
\end{equation}
Here, each eigenvector $\ket{\phi_u}$ of the coefficient matrix splits into $2$ eigenvectors $\ket{\phi_{u,Y\pm}}$ of the quantum walk opearator, and our primary goal is to ensure that these two branches undergo the exact same computation in GPE. This way, they can be merged back to recover $\ket{\phi_{u,0}}$, on which we perform the eigenvalue inversion $\frac{1}{\lambda_u}$.

To this end, we choose the block encoding normalization factor $\alpha_A\geq2\norm{A}$, so that
\begin{equation}
    \frac{\lambda_u}{\alpha_A}\in\left[-\frac{1}{2},\frac{1}{2}\right].
\end{equation}
Then in the quantum walk operator, this eigenvalue corresponds to two eigenphases
\begin{equation}
    \arccos\left(\frac{\lambda_u}{\alpha_A}\right)\in\left[\frac{\pi}{3},\frac{2\pi}{3}\right],\qquad
    -\arccos\left(\frac{\lambda_u}{\alpha_A}\right)\in\left[-\frac{2\pi}{3},-\frac{\pi}{3}\right].
\end{equation}
To distinguish between these cases, we can implement the periodic threshold functions with $\theta_0=\varphi=\frac{\pi}{6}$. This gives
\begin{equation}
    f_c\left(\pm\arccos\left(\frac{\lambda_u}{\alpha_A}\right)\right)\approx\pm1,\qquad
    f_a\left(\pm\arccos\left(\frac{\lambda_u}{\alpha_A}\right)\right)\approx0.
\end{equation}

We now carefully analyze the error. Specifically, we consider the two periodic threshold functions $f_c(\theta)$ and $f_a(\theta)$ with $\theta_0=\varphi=\frac{\pi}{6}$ whose actions are given by \tab{fourier_gpe} and \fig{qsp_poly}. By symmetry, it suffices to focus on the interval $\left[\frac{\pi}{3},\frac{2\pi}{3}\right]$. 
Recall that $f_c(\theta)=\frac{g(\theta-\theta_0)
    -g(-\theta-\theta_0)}{2}$ and $f_a(\theta)=\frac{-g(\theta-\theta_0)
    -g(-\theta-\theta_0)}{2}$, which implies
\begin{equation}
    f_c^2(\theta)+f_a^2(\theta)=\frac{g^2(\theta-\theta_0)+g^2(-\theta-\theta_0)}{2}.
\end{equation}
For all $-\pi\leq\theta\leq\pi$, we have from \tab{fourier_gpe} that $\abs{g(\theta-\theta_0)},\abs{g(-\theta-\theta_0)}\leq1$, so $f_c^2(\theta)+f_a^2(\theta)\leq1$ always holds. Also we know from \tab{fourier_gpe} that $1-\epsilon\leq f_a(0)\leq1$. Furthermore, for the target interval where $\frac{\pi}{3}\leq\theta\leq\frac{2\pi}{3}$, we have $1-\epsilon\leq g(\theta-\theta_0)\leq 1$ and $-1\leq g(-\theta-\theta_0)\leq-1+\epsilon$, which implies $1-\epsilon\leq f_c(\theta)\leq1$. To summarize,
\begin{equation}
\begin{cases}
    f_a^2(\theta)+f_c^2(\theta)\leq1,\quad &\forall \theta\in[-\pi,\pi],\\
    1-\epsilon\leq f_a(0)\leq1,\\
    1-\epsilon\leq f_c(\theta)\leq1,&\forall\theta\in\left[\frac{\pi}{3},\frac{2\pi}{3}\right],\\
    -1\leq f_c(\theta)\leq-1+\epsilon,&\forall\theta\in\left[-\frac{2\pi}{3},-\frac{\pi}{3}\right].
\end{cases}
\end{equation}

We now explain how to satisfy the requirements of \lem{qsp}.
We first run the sum-of-squares method~\cite{MethodologyLYC} to obtain functions $f_{b,1}(\theta)$, and $f_{d,1}(\theta)$ from
\begin{equation}
    f_{a,1}(\theta)=f_a(\theta),\qquad
    f_{c,1}(\theta)=f_c(\theta),
\end{equation}
where
\begin{equation}
\begin{cases}
    f_{a,1}^2(\theta)+f_{b,1}^2(\theta)+f_{c,1}^2(\theta)+f_{d,1}^2(\theta)=1,\quad &\forall \theta\in[-\pi,\pi],\\
    f_{a,1}^2(0)+f_{b,1}^2(0)=1.
\end{cases}
\end{equation}
Then, we let
\begin{equation}
    f_{a,2}(\theta)=f_{a,1}(\theta)f_{a,1}(0)+f_{b,1}(\theta)f_{b,1}(0),\qquad
    f_{c,2}(\theta)=f_{c,1}(\theta).
\end{equation}
Note that by the Cauchy-Schwarz inequality,
\begin{equation}
\begin{aligned}
    f_{a,2}^2(\theta)+f_{c,2}^2(\theta)
    &=\left(f_{a,1}(\theta)f_{a,1}(0)+f_{b,1}(\theta)f_{b,1}(0)\right)^2+f_{c,1}^2(\theta)\\
    &\leq\left(f_{a,1}^2(\theta)+f_{b,1}^2(\theta)\right)\left(f_{a,1}^2(0)+f_{b,1}^2(0)\right)+f_{c,1}^2(\theta)\\
    &=f_{a,1}^2(\theta)+f_{b,1}^2(\theta)+f_{c,1}^2(\theta)\leq1.
\end{aligned}
\end{equation}
Thus, we can rerun the sum-of-squares method to get functions $f_{b,2}(\theta)$, and $f_{d,2}(\theta)$. These functions now satisfy
\begin{equation}
\begin{cases}
    f_{a,2}^2(\theta)+f_{b,2}^2(\theta)+f_{c,2}^2(\theta)+f_{d,2}^2(\theta)=1,\quad &\forall \theta\in[-\pi,\pi],\\
    f_{a,2}(0)=1,
\end{cases}
\end{equation}
as desired, resulting in the QSP operator
\begin{equation}
    \sum_v \left(f_{a,2}(\theta_v)I+if_{b,2}(\theta_v)Z+if_{c,2}(\theta_v)X+if_{d,2}(\theta_v)Y\right)\otimes\ketbra{\phi_v}{\phi_v},
\end{equation}
where $v$ goes through all 2-tuple $(u,\pm)$.

It is clear that $f_{c,2}=f_{c,1}=f_{c}$, so we introduce no additional error to the function $f_c$:
\begin{equation}
\begin{cases}
    1-\epsilon\leq f_{c,2}(\theta)\leq1,\quad &\forall \theta\in\left[\frac{\pi}{3},\frac{2\pi}{3}\right],\\
    -1\leq f_{c,2}(\theta)\leq-1+\epsilon,&\forall\theta\in\left[-\frac{2\pi}{3},-\frac{\pi}{3}\right].
\end{cases}
\end{equation}
As for the remaining components,
\begin{equation}
\begin{aligned}
    f_{a,2}^2(\theta)+f_{b,2}^2(\theta)+f_{d,2}^2(\theta)=1-f_{c,2}^2(\theta)\leq1-(1-\epsilon)^2=2\epsilon-\epsilon^2,\qquad
    \forall\theta\in\left[-\frac{2\pi}{3},-\frac{\pi}{3}\right]\cup\left[\frac{\pi}{3},\frac{2\pi}{3}\right].
\end{aligned}
\end{equation}
To proceed, we need the following distance formula for matrices expanded in the Pauli basis.
\begin{lemma}[Matrix distance in the Pauli basis]
For real vectors $\beta=\left[
    \beta_a\ \beta_b\ \beta_c\ \beta_d
\right]$ and $\gamma=\left[
    \gamma_a\ \gamma_b\ \gamma_c\ \gamma_d \right]$,
\begin{equation}
    \norm{\left(\beta_aI+i\beta_bZ+i\beta_cX+i\beta_dY\right)
    -\left(\gamma_aI+i\gamma_bZ+i\gamma_cX+i\gamma_dY\right)}
    =\norm{\beta-\gamma}.
\end{equation}
\end{lemma}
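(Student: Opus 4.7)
The plan is to prove the identity by directly computing $M^\dagger M$ where $M$ denotes the operator $\bigl(\beta_aI+i\beta_bZ+i\beta_cX+i\beta_dY\bigr)-\bigl(\gamma_aI+i\gamma_bZ+i\gamma_cX+i\gamma_dY\bigr)$. Writing $\alpha=\beta-\gamma$ (which remains real since $\beta,\gamma$ are real), we have $M=\alpha_aI+i\alpha_bZ+i\alpha_cX+i\alpha_dY$ and hence $M^\dagger=\alpha_aI-i\alpha_bZ-i\alpha_cX-i\alpha_dY$. The spectral norm satisfies $\norm{M}=\sqrt{\norm{M^\dagger M}}$, so it suffices to show $M^\dagger M=\norm{\alpha}^2 I$.

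Expanding $M^\dagger M$, the diagonal terms contribute $\alpha_a^2 I+\alpha_b^2 Z^2+\alpha_c^2 X^2+\alpha_d^2 Y^2=(\alpha_a^2+\alpha_b^2+\alpha_c^2+\alpha_d^2)I$, using $Z^2=X^2=Y^2=I$. The cross terms between the scalar component and each Pauli cancel in pairs because of the sign flip induced by Hermitian conjugation: $(\alpha_a I)(i\alpha_P P)+(-i\alpha_P P)(\alpha_a I)=0$ for each Pauli $P\in\{X,Y,Z\}$. The cross terms between two different Paulis take the form $(-i\alpha_P P)(i\alpha_Q Q)+(-i\alpha_Q Q)(i\alpha_P P)=\alpha_P\alpha_Q\{P,Q\}$, which vanishes since distinct Paulis anticommute. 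Therefore $M^\dagger M=\norm{\alpha}^2 I$, giving $\norm{M}=\norm{\alpha}=\norm{\beta-\gamma}$ as claimed.

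There is essentially no obstacle here; the proof is a routine anticommutator computation and the only thing to be mildly careful about is the sign pattern arising from Hermitian conjugation of the imaginary coefficients (which is why the reality of $\beta,\gamma$ is used), together with the cancellation $\{P,Q\}=0$ for $P\neq Q$ among Pauli matrices. The statement is a direct analogue of the familiar fact that $\{I,X,Y,Z\}$ is orthonormal with respect to the normalized Hilbert--Schmidt inner product on $2\times 2$ matrices, translated to operator (spectral) norm via the observation that $M^\dagger M$ is a scalar multiple of the identity.
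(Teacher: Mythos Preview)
Your proof is correct. Showing $M^\dagger M=\norm{\alpha}^2 I$ via the Pauli anticommutation relations is clean and in fact establishes a bit more than needed: it shows that $M/\norm{\alpha}$ is unitary, so all singular values of $M$ equal $\norm{\alpha}$.

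The paper takes a slightly different route: it first observes that the Hermitian matrix $(\beta_b-\gamma_b)Z+(\beta_c-\gamma_c)X+(\beta_d-\gamma_d)Y$ has eigenvalues $\pm\sqrt{(\beta_b-\gamma_b)^2+(\beta_c-\gamma_c)^2+(\beta_d-\gamma_d)^2}$, then multiplies by $i$ (rotating the eigenvalues onto the imaginary axis) and shifts by the real scalar $\beta_a-\gamma_a$, so the eigenvalues of $M$ become $(\beta_a-\gamma_a)\pm i\sqrt{\cdots}$, each with absolute value $\norm{\beta-\gamma}$. This relies implicitly on $M$ being normal (identity plus anti-Hermitian), so that the spectral norm equals the spectral radius. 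Your $M^\dagger M$ computation sidesteps that normality observation entirely and is arguably more direct; the paper's version makes the spectral structure of $M$ more explicit. Both are routine and equally short.
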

\begin{proof}
We know that the Hermitian matrix
\begin{equation}
    (\beta_b-\gamma_b)Z+(\beta_c-\gamma_c)X+(\beta_d-\gamma_d)Y
\end{equation}
has eigenvalues
\begin{equation}
    \pm\sqrt{(\beta_b-\gamma_b)^2+(\beta_c-\gamma_c)^2+(\beta_d-\gamma_d)^2}.
\end{equation}
The claim then follows by rescaling by $i$ and shifting by $\beta_a-\gamma_a$.
\end{proof}

When $\theta\in\left[\frac{\pi}{3},\frac{2\pi}{3}\right]$, our ideal operator is $iX$, so the error is bounded by
\begin{equation}
\begin{aligned}
    &\ \norm{iX-\left(\left(f_{a,2}(\theta)I+if_{b,2}(\theta)Z+if_{c,2}(\theta)X+if_{d,2}(\theta)Y\right)\right)}\\
    &\leq\ \sqrt{
    (1-f_{c,2}(\theta))^2+f_{a,2}^2(\theta)+f_{b,2}^2(\theta)+f_{d,2}^2(\theta)}\\
    &\leq\ \sqrt{\epsilon^2+2\epsilon-\epsilon^2}
    =\sqrt{2\epsilon}.
\end{aligned}
\end{equation}
Similarly, when $\theta\in\left[-\frac{2\pi}{3},-\frac{\pi}{3}\right]$, our ideal operator is $-iX$, so the error is again bounded by
\begin{equation}
\begin{aligned}
    &\ \norm{-iX-\left(\left(f_{a,2}(\theta)I+if_{b,2}(\theta)Z+if_{c,2}(\theta)X+if_{d,2}(\theta)Y\right)\right)}\\
    &\leq\ \sqrt{
    (-1-f_{c,2}(\theta))^2+f_{a,2}^2(\theta)+f_{b,2}^2(\theta)+f_{d,2}^2(\theta)}\\
    &\leq\ \sqrt{\epsilon^2+2\epsilon-\epsilon^2}
    =\sqrt{2\epsilon}.
\end{aligned}
\end{equation}
This establishes the following:
\begin{proposition}[Branch marking]
\label{prop:bm}
Let $A/\alpha_A$ be block encoded by $O_A$ with normalization factor $\alpha_A\geq2\norm{A}$. 
Let $A\ket{\phi_u}=\lambda_u\ket{\phi_u}$ and $W\ket{\phi_{u,Y\pm}}= e^{\pm i\arccos\left(\frac{\lambda_u}{\alpha_A}\right)}\ket{\phi_{u,Y\pm}}$ be the corresponding eigenpairs of $A$ and the quantum walk operator $W$.
For any $\epsilon>0$, the isometry
    \begin{equation}
        \ket{+}\ket{+}\ket{\phi_{u,Y\pm}}\mapsto
        \ket{\xi_{u,\pm}}\ket{\phi_{u,Y\pm}},\qquad
        \norm{\ket{\xi_{u,\pm}}-\ket{+}\ket{\pm}}\leq\epsilon
    \end{equation}
can be implemented using
    \begin{equation}
        \mathbf{O}\left(\log\left(\frac{1}{\epsilon}\right)\right)
    \end{equation}
    queries to $O_A$.
\end{proposition}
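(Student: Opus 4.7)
The plan is to invoke Lemma~\ref{lem:qsp} on the quantum walk operator $W$ obtained from $O_A$ via qubitization (\cor{walk} in \append{qubitization_walk}), using the Fourier pair $(f_a,f_c)$ constructed in \append{gpe_bm_fourier} with the specific parameter choice $\theta_0=\varphi=\pi/6$. Because the assumption $\alpha_A\geq 2\norm{A}$ forces $\lambda_u/\alpha_A\in[-\tfrac12,\tfrac12]$, every relevant eigenphase of $W$ lies in the two symmetric bands $[\pi/3,2\pi/3]$ and $[-2\pi/3,-\pi/3]$, which are precisely the two passbands where $f_c$ is quantitatively close to $+1$ and $-1$ respectively (\tab{fourier_gpe}). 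The degree needed for polynomial margin-accuracy $\epsilon'$ is $n=\mathbf{O}\bigl(\tfrac{1}{\varphi}\log(\tfrac{1}{\epsilon'})\bigr)=\mathbf{O}\bigl(\log(\tfrac{1}{\epsilon'})\bigr)$ by \eq{sgn_degree}, so Lemma~\ref{lem:qsp} uses $\mathbf{O}\bigl(\log(\tfrac{1}{\epsilon'})\bigr)$ queries to $W$, and hence to $O_A$.

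To meet the prerequisites of Lemma~\ref{lem:qsp}, namely $f_a^2+f_c^2\leq 1$ globally and $f_a(0)=1$, I apply the two-pass sum-of-squares completion already spelled out in \append{gpe_bm_bm}: a first sum-of-squares step produces companions $(f_{b,1},f_{d,1})$; a linear recombination that leaves $f_{c,1}=f_c$ untouched yields $f_{a,2}$ with $f_{a,2}(0)=1$; a second sum-of-squares step produces the final $(f_{b,2},f_{d,2})$. This finishes the construction of the QSP unitary
\[
    V=\sum_v\bigl(f_{a,2}(\theta_v)I+if_{b,2}(\theta_v)Z+if_{c,2}(\theta_v)X+if_{d,2}(\theta_v)Y\bigr)\otimes\ketbra{\phi_v}{\phi_v}.
\]
The matrix-distance estimate already carried out in \append{gpe_bm_bm} shows that on each 2D qubitized block, $V$ is within spectral norm $\sqrt{2\epsilon'}$ of $\mathrm{sgn}(\theta_{u,\pm})\,iX$ tensored with $\ketbra{\phi_{u,\pm}}{\phi_{u,\pm}}$. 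In particular, $V(\ket{+}\otimes\ket{\phi_{u,\pm}})\approx (\pm i)\ket{+}\otimes\ket{\phi_{u,\pm}}$.

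The remaining task is to convert this branch-dependent \emph{global phase} $\pm i$ into a \emph{computational-basis} distinction on an output qubit. I do this by promoting $V$ to its controlled version $c\text{-}V$, with the second ancilla (initialized in $\ket{+}$) as control. Applied to $\ket{+}_1\ket{+}_2\ket{\phi_{u,\pm}}$, the $\ket{0}_2$ branch passes through untouched while the $\ket{1}_2$ branch picks up the $\pm i$ phase, yielding approximately $\ket{+}_1\otimes\tfrac{1}{\sqrt 2}(\ket{0}_2\pm i\ket{1}_2)\otimes\ket{\phi_{u,\pm}}$. A single $S^\dagger$ gate on the control ancilla rotates $\tfrac{1}{\sqrt 2}(\ket{0}\pm i\ket{1})$ to $\ket{\pm}$, producing the advertised output state $\ket{\xi_{u,\pm}}\ket{\phi_{u,\pm}}$ with $\ket{\xi_{u,\pm}}$ within Euclidean distance $\sqrt{2\epsilon'}$ of $\ket{+}\ket{\pm}$. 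Choosing $\epsilon'=\Theta(\epsilon^2)$ and noting that $\log(1/\epsilon^2)=\mathbf{O}(\log(1/\epsilon))$ delivers the claimed query count $\mathbf{O}(\log(1/\epsilon))$.

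The main obstacle is this last conversion step: QSP on a single ancilla differentiates the two branches only up to a global phase $\pm i$, which is operationally invisible. The controlled-$V$ (Hadamard-test-flavored) maneuver is what promotes the phase signal into orthogonal states; everything else is bookkeeping. The routine but essential piece of error analysis is to verify that the $\sqrt{2\epsilon'}$ bound on $\norm{V-\mathrm{sgn}(\theta_{u,\pm})\, iX}$ on each qubitized block propagates correctly through the controlled construction and through the final $S^\dagger$, and that the small residual $f_{a,2},f_{b,2},f_{d,2}$ components do not contaminate the output in a way that invalidates the target bound on $\norm{\ket{\xi_{u,\pm}}-\ket{+}\ket{\pm}}$.
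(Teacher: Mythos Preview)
Your proposal is correct and essentially identical to the paper's proof: both use the same QSP construction with $\theta_0=\varphi=\pi/6$, the same two-pass sum-of-squares completion, and the same controlled-$V$ maneuver to convert the branch-dependent phase $\pm i$ into a computational-basis label. The only cosmetic difference is that the paper folds the phase correction into the controlled operation itself (using controlled-$(-iV)=\ketbra{0}{0}\otimes I+\ketbra{1}{1}\otimes(-iV)$, which is exactly $S^\dagger_{\text{control}}\cdot\text{c-}V$), whereas you apply the $S^\dagger$ as a separate final gate; your register ordering in fact matches the proposition's stated target $\ket{+}\ket{\pm}$ more directly than the paper's own $\ket{\pm}\ket{+}$.
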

\begin{proof}
By \lem{qsp} and the proceeding analysis, we can implement the operator
\begin{equation}
    V=\sum_{u} \left(F(\theta_{u,+})\otimes\ketbra{\phi_{u,Y+}}{\phi_{u,Y+}}
    +F(\theta_{u,-})\otimes\ketbra{\phi_{u,Y-}}{\phi_{u,Y-}}\right),
\end{equation}
such that $\norm{F(\theta_{u,\pm})-\left(\pm iX\right)}\leq\sqrt{2\epsilon}$ for all $u$.
Then, the controlled unitary
\begin{equation}
    \ketbra{0}{0}\otimes I+\ketbra{1}{1}\otimes (-iV)
\end{equation}
implements the mapping
\begin{equation}
    \ket{+}\ket{+}\ket{\phi_{u,Y\pm}}\mapsto
        \ket{\pm}\ket{+}\ket{\phi_{u,Y\pm}}
\end{equation}
to accuracy $\sqrt{2\epsilon}$. The query complexity follows now from \eq{sgn_degree} and the rescaling $\sqrt{2\epsilon}\mapsto\epsilon$.
\end{proof}

\subsection{Branch marked gapped phase estimation}
\label{append:gpe_bm_gpe}
In GPE, our goal is to distinguish between eigenvalues of the input matrix within intervals
\begin{equation}
    \frac{\lambda_u}{\alpha_A}\in\left[-\frac{\gamma}{\rho},\frac{\gamma}{\rho}\right]\quad\text{and}\quad
    \frac{\lambda_u}{\alpha_A}\in\left(-1,-\gamma\right]\cup\left[\gamma,1\right).
\end{equation}
When we construct the quantum walk operator, each eigenvalue is split into two eigenphases. For the positive branch, our goal is to distinguish between
\begin{equation}
    \arccos\left(\frac{\lambda_u}{\alpha_A}\right)\in\left[\arccos\left(\frac{\gamma}{\rho}\right),\pi-\arccos\left(\frac{\gamma}{\rho}\right)\right]
\end{equation}
and
\begin{equation}
    \arccos\left(\frac{\lambda_u}{\alpha_A}\right)\in\left(0,\arccos(\gamma)\right]\cup\left[\pi-\arccos(\gamma),\pi\right),
\end{equation}
whereas for the negative branch, we need to differentiate 
\begin{equation}
    -\arccos\left(\frac{\lambda_u}{\alpha_A}\right)\in\left[-\pi+\arccos\left(\frac{\gamma}{\rho}\right),-\arccos\left(\frac{\gamma}{\rho}\right)\right]
\end{equation}
and
\begin{equation}
    -\arccos\left(\frac{\lambda_u}{\alpha_A}\right)\in\left(-\pi,-\pi+\arccos(\gamma)\right]\cup\left[-\arccos(\gamma),0\right).
\end{equation}
Additionally, we need to ensure that GPE has exactly the same behavior on eigenphases with opposite signs.

Let us consider the positive branch first (corresponding to $\arccos\left(\frac{\lambda_u}{\alpha_A}\right)$). To this end, we choose
\begin{equation}
    \theta_0=\frac{\arccos\left(\gamma\right)+\arccos\left(\frac{\gamma}{\rho}\right)}{2},\qquad
    \varphi=\frac{\arccos\left(\frac{\gamma}{\rho}\right)-\arccos\left(\gamma\right)}{2}.
\end{equation}
Then, we implement
\begin{equation}
    \sum_u \left(f_{a}(\theta_{u,+})I+if_{b}(\theta_{u,+})Z+if_{c}(\theta_{u,+})X+if_{d}(\theta_{u,+})Y\right)\otimes\ketbra{\phi_{u,Y+}}{\phi_{u,Y+}},
\end{equation}
where
\begin{equation}
    f_c(\theta)\approx
    \begin{cases}
        0,\quad&\theta\in\left(0,\arccos(\gamma)\right],\\
        1,\quad&\theta\in\left[\arccos\left(\frac{\gamma}{\rho}\right),\pi-\arccos\left(\frac{\gamma}{\rho}\right)\right],\\
        0,\quad&\theta\in\left[\pi-\arccos(\gamma),\pi\right),\\
    \end{cases}
\end{equation}
and
\begin{equation}
    f_a(\theta)\approx
    \begin{cases}
        1,\quad&\theta\in\left(0,\arccos(\gamma)\right],\\
        0,\quad&\theta\in\left[\arccos\left(\frac{\gamma}{\rho}\right),\pi-\arccos\left(\frac{\gamma}{\rho}\right)\right],\\
        -1,\quad&\theta\in\left[\pi-\arccos(\gamma),\pi\right).\\
    \end{cases}.
\end{equation}
Note that $f_a(\theta_{u,+})\approx\pm1$ has different signs for eigenvalue of the input matrix $\lambda_u\gtrless0$ with opposite signs, and the function values are also quite different in the transition band. But this is acceptable since this auxiliary information will be uncomputed in the end.

However, this choice of functions will have a different behavior on the negative branch. To address this issue, we instead aim to perform
\begin{equation}
    \sum_v \left(f_{a}(-\theta_{u,-})I+if_{b}(-\theta_{u,-})Z+if_{c}(-\theta_{u,-})X+if_{d}(-\theta_{u,-})Y\right)\otimes\ketbra{\phi_{u,Y-}}{\phi_{u,Y-}}.
\end{equation}
That is, the action of this new operator on negative eigenphases is exactly the same as that of the above operator on positive eigenphases. Moreover, the new polynomials automatically satisfy the requirements. In particular, it holds that $f_c(-\theta)=-f_c(\theta)$, $f_a(-\theta)=f_a(\theta)$.

Now we discuss error analysis. Without loss of generality, we focus on the distinguishment of $\left[0,\arccos\left(\gamma\right)\right]$ and $\left[\arccos\left(\frac{\gamma}{\rho}\right),\frac{\pi}{2}\right]$, since the other cases have exactly the same error by symmetry. Proceeding as in the previous subsection, we consider $f_c(\theta)=\frac{g(\theta-\theta_0)
    -g(-\theta-\theta_0)}{2}$ and $f_a(\theta)=\frac{-g(\theta-\theta_0)
    -g(-\theta-\theta_0)}{2}$ where
\begin{equation}
    f_c^2(\theta)+f_a^2(\theta)=\frac{g^2(\theta-\theta_0)+g^2(-\theta-\theta_0)}{2}.
\end{equation}
We also know from \tab{fourier_gpe} that $f_c^2(\theta)+f_a^2(\theta)\leq1$ and $1-\epsilon\leq f_a(0)\leq1$. Furthermore, we have $-\frac{\epsilon}{2}\leq f_c(\theta)\leq\frac{\epsilon}{2}$ and $1-\epsilon\leq f_a(\theta)\leq1$ for the target interval $0\leq\theta\leq\arccos\left(\gamma\right)$, and $1-\epsilon\leq f_c(\theta)\leq1$ and $-\frac{\epsilon}{2}\leq f_a(\theta)\leq\frac{\epsilon}{2}$ for the target interval $\arccos\left(\frac{\gamma}{\rho}\right)\leq\theta\leq\frac{\pi}{2}$.
To summarize,
\begin{equation}
\begin{cases}
    f_a^2(\theta)+f_c^2(\theta)\leq1,\quad &\forall \theta\in[-\pi,\pi],\\
    1-\epsilon\leq f_a(0)\leq1,\\
    \abs{f_c(\theta)}\leq\frac{\epsilon}{2},\ 1-\epsilon\leq f_a(\theta)\leq1,&\forall\theta\in\left[0,\arccos\left(\gamma\right)\right],\\
    1-\epsilon\leq f_c(\theta)\leq1,\ \abs{f_a(\theta)}\leq\frac{\epsilon}{2},&\forall\theta\in\left[\arccos\left(\frac{\gamma}{\rho}\right),\frac{\pi}{2}\right].
\end{cases}
\end{equation}
When $\theta\in\left[\arccos\left(\frac{\gamma}{\rho}\right),\frac{\pi}{2}\right]$, our ideal operator is $iX$, so the error can be bounded in a similar way as in the previous subsection
\begin{equation}
\begin{aligned}
    &\ \norm{iX-\left(\left(f_{a,2}(\theta)I+if_{b,2}(\theta)Z+if_{c,2}(\theta)X+if_{d,2}(\theta)Y\right)\right)}\\
    &\leq\ \sqrt{
    (1-f_{c,2}(\theta))^2+f_{a,2}^2(\theta)+f_{b,2}^2(\theta)+f_{d,2}^2(\theta)}\\
    &\leq\ \sqrt{\epsilon^2+2\epsilon-\epsilon^2}
    =\sqrt{2\epsilon}.
\end{aligned}
\end{equation}

Let us now consider $\theta\in\left[0,\arccos\left(\gamma\right)\right]$. We let
\begin{equation}
    f_{a,1}(\theta)=f_a(\theta),\qquad
    f_{c,1}(\theta)=f_c(\theta),
\end{equation}
and run the sum-of-squares method~\cite{MethodologyLYC} to obtain
\begin{equation}
\begin{cases}
    f_{a,1}^2(\theta)+f_{b,1}^2(\theta)+f_{c,1}^2(\theta)+f_{d,1}^2(\theta)=1,\quad &\forall \theta\in[-\pi,\pi],\\
    f_{a,1}^2(0)+f_{b,1}^2(0)=1.
\end{cases}
\end{equation}
Continuing, we define
\begin{equation}
    f_{a,2}(\theta)=f_{a,1}(\theta)f_{a,1}(0)+f_{b,1}(\theta)f_{b,1}(0),\qquad
    f_{c,2}(\theta)=f_{c,1}(\theta).
\end{equation}
and rerun the sum-of-squares method to get
\begin{equation}
\begin{cases}
    f_{a,2}^2(\theta)+f_{b,2}^2(\theta)+f_{c,2}^2(\theta)+f_{d,2}^2(\theta)=1,\quad &\forall \theta\in[-\pi,\pi],\\
    f_{a,2}(0)=1.
\end{cases}
\end{equation}
This results in
\begin{equation}
    \sum_u \left(f_{a,2}(\theta_{u,+})I+if_{b,2}(\theta_{u,+})Z+if_{c,2}(\theta_{u,+})X+if_{d,2}(\theta_{u,+})Y\right)\otimes\ketbra{\phi_{u,Y+}}{\phi_{u,Y+}}.
\end{equation}

Now we have
\begin{equation}
\begin{aligned}
    \abs{f_{a,2}(\theta)-f_{a,1}(\theta)}
    &=\abs{f_{a,1}(\theta)\left(f_{a,1}(0)-1\right)+f_{b,1}(\theta)f_{b,1}(0)}\\
    &\leq\abs{f_{a,1}(\theta)}\abs{f_{a,1}(0)-1}+\sqrt{1-f_{a,1}^2(\theta)}\sqrt{1-f_{a,1}^2(0)}\\
    &\leq1\cdot\epsilon+\sqrt{1-(1-\epsilon)^2}\sqrt{1-(1-\epsilon)^2}
    =3\epsilon-\epsilon^2,
\end{aligned}
\end{equation}
which implies
\begin{equation}
    \abs{f_{a,2}(\theta)-1}\leq4\epsilon-\epsilon^2.
\end{equation}
Combining with the obvious requirement that $\abs{f_{a,2}}\leq1$, we arrive at
\begin{equation}
    1-4\epsilon\leq1-4\epsilon+\epsilon^2\leq f_{a,2}(\theta)\leq1.
\end{equation}
When $\theta\in\left[0,\arccos\left(\gamma\right)\right]$, our ideal operator is $I$, so the error can be bounded as
\begin{equation}
\begin{aligned}
    &\ \norm{I-\left(\left(f_{a,2}(\theta)I+if_{b,2}(\theta)Z+if_{c,2}(\theta)X+if_{d,2}(\theta)Y\right)\right)}\\
    &\leq\ \sqrt{
    (1-f_{a,2}(\theta))^2+f_{b,2}^2(\theta)+f_{c,2}^2(\theta)+f_{d,2}^2(\theta)}\\
    &\leq\ \sqrt{(4\epsilon)^2+1-(1-4\epsilon)^2}
    =2\sqrt{2\epsilon}.
\end{aligned}
\end{equation}
Altogether, we have established the following proposition. Note it is important that the output state $\ket{\xi_u}$ of gapped phase estimation has \emph{no} dependence on the $\pm$ branches of eigenphases. This is the primary goal of branch marking. 
\begin{proposition}[Branch marked gapped phase estimation]
\label{prop:gpe}
Let $A/\alpha_A$ be block encoded by $O_A$ with normalization factor $\alpha_A\geq2\norm{A}$. 
Let $A\ket{\phi_u}=\lambda_u\ket{\phi_u}$ and $W\ket{\phi_{u,Y\pm}}= e^{\pm i\arccos\left(\frac{\lambda_u}{\alpha_A}\right)}\ket{\phi_{u,Y\pm}}$ be the corresponding eigenpairs of $A$ and the quantum walk operator $W$.
For any $\epsilon>0$, $0<\gamma<\frac{\alpha_A}{2}$ and constant $\rho>0$, the isometry
\begin{equation}
    \ket{\pm}\ket{0}\ket{\phi_{u,Y\pm}}\mapsto\ket{\pm}\ket{\xi_{u}}\ket{\phi_{u,Y\pm}},\qquad
\begin{cases}
    \norm{\ket{\xi_{u}}-\ket{0}}\leq\epsilon,\quad&\frac{\lambda_u}{\alpha_A}\in[\gamma,1),\\
    \norm{\ket{\xi_{u}}-i\ket{1}}\leq\epsilon,&\frac{\lambda_u}{\alpha_A}\in\left[-\frac{\gamma}{\rho},\frac{\gamma}{\rho}\right],\\
    \norm{\ket{\xi_{u}}-(-\ket{0})}\leq\epsilon,&\frac{\lambda_u}{\alpha_A}\in\left(-1,-\gamma\right],
\end{cases}
\end{equation}
can be implemented 
using
    \begin{equation}
        \mathbf{O}\left(\frac{1}{\gamma}\log\left(\frac{1}{\epsilon}\right)\right)
    \end{equation}
    queries to $O_A$.
\end{proposition}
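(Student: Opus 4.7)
The plan is to build the branch-marked GPE as a quantum signal processing circuit applied to the quantum walk operator $W$, with the branch qubit used as a control to apply different (but related) QSP phase sequences to the $+$ and $-$ branches. This lets both branches produce the \emph{same} output $\ket{\xi_u}$ even though their eigenphases have opposite signs.

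First I would choose the threshold parameters
\begin{equation}
    \theta_0=\frac{\arccos(\gamma)+\arccos(\gamma/\rho)}{2},\qquad
    \varphi=\frac{\arccos(\gamma/\rho)-\arccos(\gamma)}{2},
\end{equation}
so that the passband of eigenphases corresponds to $\frac{\lambda_u}{\alpha_A}\in[-\gamma/\rho,\gamma/\rho]$ and the stopband corresponds to $\abs{\lambda_u/\alpha_A}\geq\gamma$. Since $\gamma<\alpha_A/2$, the mean value theorem gives $\arccos(\gamma/\rho)-\arccos(\gamma)=\Theta(\gamma(1-1/\rho))=\Theta(\gamma)$, so the margin $\nu=\sin(\varphi)=\Theta(\gamma)$. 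Then I would invoke the Chebyshev sign approximation with parameters $\nu$ and $\epsilon'=\Theta(\epsilon^2)$, whose degree scales as $n=\mathbf{O}((1/\gamma)\log(1/\epsilon))$ by \eq{sgn_degree}, and construct $f_c$ and $f_a$ via the reflect-and-shift formulas from Appendix \ref{append:gpe_bm_fourier}; their qualitative behaviour is exactly the one summarized in Table \ref{tab:fourier_gpe} for the relevant intervals.

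Next I would complete $(f_a,f_c)$ to an admissible QSP input by two passes of the sum-of-squares method (as in Appendix \ref{append:gpe_bm_bm}) so that $f_{a,2}^2+f_{b,2}^2+f_{c,2}^2+f_{d,2}^2=1$ and $f_{a,2}(0)=1$, and feed the result into Lemma \ref{lem:qsp} to obtain a QSP unitary $V_+$ acting on system $+$ ancilla that realizes
\begin{equation}
    V_+=\sum_u\bigl(f_{a,2}(\theta_{u,\pm})I+if_{b,2}(\theta_{u,\pm})Z+if_{c,2}(\theta_{u,\pm})X+if_{d,2}(\theta_{u,\pm})Y\bigr)\otimes\ketbra{\phi_{u,\pm}}{\phi_{u,\pm}}
\end{equation}
on the relevant invariant subspaces. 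Using the identities $f_c(-\theta)=-f_c(\theta)$ and $f_a(-\theta)=f_a(\theta)$, I would define $V_-$ as the circuit that implements the same combination at $-\theta_{u,-}$; concretely this is obtained by replacing the QSP phases by their negatives (equivalently, conjugating $V_+$ by $Z$ on the ancilla or running with $W^\dagger$). I would then apply $V_+$ controlled on $\ket{+}$ of the branch register and $V_-$ controlled on $\ket{-}$, so that in both cases the action on the eigenvector $\ket{\phi_{u,\pm}}$ is identical, outputting a state $\ket{\xi_u}$ that depends only on $\lambda_u$.

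Finally I would read off the error bound from the distance formula in the Pauli basis (Lemma in Appendix \ref{append:gpe_bm_bm}): in the stopband intervals $\abs{\lambda_u/\alpha_A}\geq\gamma$ the ideal operator is $\pm iX$ and the distance is at most $\sqrt{2\epsilon'}$, while in the passband $\abs{\lambda_u/\alpha_A}\leq\gamma/\rho$ the ideal operator is $I$ and the distance is at most $2\sqrt{2\epsilon'}$. Choosing $\epsilon'=\Theta(\epsilon^2)$ absorbs these factors, and total query cost is $\mathbf{O}(n)=\mathbf{O}((1/\gamma)\log(1/\epsilon))$ by Lemma \ref{lem:qsp}. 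The main obstacle I anticipate is the branch-symmetry bookkeeping: verifying that a single QSP circuit controlled by the branch qubit does produce the \emph{identical} $\ket{\xi_u}$ on both $\pm$ branches, which requires checking that the two sum-of-squares completions $(f_{b,2},f_{d,2})$ can be chosen compatibly (and not just up to a branch-dependent sign) so that the controlled assembly has no residual dependence on the branch label.
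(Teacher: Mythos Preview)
Your proposal follows the paper's approach essentially verbatim: same choice of $\theta_0,\varphi$, same sign-function Fourier construction, same two-pass sum-of-squares completion, and same branch-controlled assembly $\ketbra{+}{+}\otimes V_+ + \ketbra{-}{-}\otimes V_-$. Two small corrections: you have the ideal operators swapped in the final error step---for $\abs{\lambda_u/\alpha_A}\geq\gamma$ the target output is $\pm\ket{0}$ so the ideal is $\pm I$ (error $2\sqrt{2\epsilon'}$), while for $\abs{\lambda_u/\alpha_A}\leq\gamma/\rho$ the target is $i\ket{1}$ so the ideal is $iX$ (error $\sqrt{2\epsilon'}$); this does not affect the bound after rescaling. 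Your closing worry about compatible completions is a non-issue: $V_-$ is built from the \emph{same} quadruple $(f_{a,2},f_{b,2},f_{c,2},f_{d,2})$ as $V_+$, just evaluated at $-\theta_{u,-}=\theta_{u,+}$, so both branches output exactly $F(\theta_{u,+})\ket{0}$ by construction and no separate compatibility check is needed.
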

\begin{proof}
By \lem{qsp} and the proceeding analysis, we can implement the operators
\begin{equation}
    V_+=\sum_u F(\theta_{u,+})\otimes\ketbra{\phi_{u,Y+}}{\phi_{u,Y+}},\qquad
    V_-=\sum_u F(-\theta_{u,-})\otimes\ketbra{\phi_{u,Y-}}{\phi_{u,Y-}}
\end{equation}
respectively, such that
\begin{equation}
\begin{cases}
    \norm{F(\theta)-I}\leq2\sqrt{2\epsilon},\quad&\theta\in\left(0,\arccos\left(\gamma\right)\right],\\
    \norm{F(\theta)-iX}\leq\sqrt{2\epsilon},&\theta\in\left[\arccos\left(\frac{\gamma}{\rho}\right),\pi-\arccos\left(\frac{\gamma}{\rho}\right)\right],\\
    \norm{F(\theta)-(-I)}\leq2\sqrt{2\epsilon},&\theta\in\left[\pi-\arccos(\gamma),\pi\right).\\
\end{cases}
\end{equation}
Then, the controlled unitary
\begin{equation}
    \ketbra{+}{+}\otimes V_++\ketbra{-}{-}\otimes V_-
\end{equation}
implements the mapping
\begin{equation}
    \ket{\pm}\ket{0}\ket{\phi_{u,Y\pm}}\mapsto
    \begin{cases}
        \ket{\pm}\ket{0}\ket{\phi_{u,Y\pm}},\quad&\frac{\lambda_u}{\alpha_A}\in[\gamma,1),\\
        i\ket{\pm}\ket{1}\ket{\phi_{u,Y\pm}},&\frac{\lambda_u}{\alpha_A}\in\left[-\frac{\gamma}{\rho},\frac{\gamma}{\rho}\right],\\
        -\ket{\pm}\ket{0}\ket{\phi_{u,Y\pm}},&\frac{\lambda_u}{\alpha_A}\in\left(-1,-\gamma\right],\\
    \end{cases}
\end{equation}
to accuracy $2\sqrt{2\epsilon}$. The query complexity follows now from \eq{sgn_degree} and the rescaling $2\sqrt{2\epsilon}\mapsto\epsilon$.
\end{proof}

\section{Multiplicative approximation of success probabilities}
\label{append:multi}

In this appendix, we will consider a number of success probabilities involved in the analysis of our quantum linear system algorithm in \sec{dinv}. Specifically, we will analyze five different probabilities and show that they are in fact all constant multiplicative approximations of one another.

\subsection{Definition of success probabilities}
\label{append:multi_def}

The first probability corresponds to the success probability of applying a block-encoded inverse of the input matrix to the initial state. Suppose that our input matrix is block encoded as $A/\alpha_A$ with normalization factor $\alpha_A\geq\norm{A}$ and upper bound $\alpha_{A^{-1}}\geq\norm{A^{-1}}$ on norm of the inverse matrix. Without loss of generality, we assume that both $\alpha_A$ and $\alpha_{A^{-1}}$ are powers of $3$, so that $\alpha_A\alpha_{A^{-1}}=3^m$ for some positive integer $m$.
By switching to the Hermitian dilation $\ketbra{0}{1}\otimes A+\ketbra{1}{0}\otimes A^\dagger$, we further assume that $A$ is Hermitian with the spectral decomposition $A=\sum_u\lambda_u\ketbra{\phi_u}{\phi_u}$ where $\frac{1}{\alpha_{A^{-1}}}\leq\abs{\lambda_u}\leq\alpha_A$, while the initial state can be expanded in the eigenbasis of $A$ as $\ket{b}=\sum_u\gamma_u\ket{\phi_u}$. 
Then, we define
\begin{equation}
    \psi_{\text{inv}}=\frac{A^{-1}}{\alpha_{A^{-1}}}\ket{b}
    =\sum_u\frac{1}{\alpha_{A^{-1}}\lambda_u}\gamma_u\ket{\phi_u},\qquad
    p_{\text{succ}}=\frac{\norm{A^{-1}\ket{b}}^2}{\alpha_{A^{-1}}^2}
    =\frac{1}{\alpha_{A^{-1}}^2}\left(\sum_u\abs{\frac{\gamma_u}{\lambda_u}}^2\right).
\end{equation}

The second probability is an auxiliary probability for a version of discretized inverse state supported only on one basis state of the time register. We let
\begin{equation}
    \psi_{\text{d-inv},1}=
    \sum_{k=0}^{m-1}\frac{3^{k+1}}{3^m}\ket{k}\sum_{\abs{\frac{\lambda_u}{\alpha_A}}\in\big[\frac{1}{3^{k+1}},\frac{1}{3^k}\big)}
    \gamma_u\ket{\phi_u},\qquad
    p_{\text{succ,d-inv},1}=\sum_{k=0}^{m-1}\sum_{\abs{\frac{\lambda_u}{\alpha_A}}\in\big[\frac{1}{3^{k+1}},\frac{1}{3^k}\big)}
    \abs{\gamma_u}^2\frac{9^{k+1}}{9^m}.
\end{equation}

The third probability corresponds to the probability of the discretized inverse state supported on two basis states of the time register. Concretely,
\begin{equation}
\begin{aligned}
    \psi_{\text{d-inv}}&=
    \sum_{k=0}^{m-1}\sum_{\abs{\frac{\lambda_u}{\alpha_A}}\in\big[\frac{1}{3^{k+1}},\frac{1}{3^k}\big)}
    \left(\zeta_{k+1,u}\frac{3^{k+1}}{3^m}\ket{k}+\zeta_{k,u}\frac{3^k}{3^m}\ket{k-1}\right)\gamma_u\ket{\phi_u},\\
    p_{\text{succ,d-inv}}&=\sum_{k=0}^{m-1}\sum_{\abs{\frac{\lambda_u}{\alpha_A}}\in\big[\frac{1}{3^{k+1}},\frac{1}{3^k}\big)}
    \abs{\gamma_u}^2
    \sum_{j=k,k+1}\abs{\zeta_{j,u}}^2\frac{9^{j}}{9^m}.
\end{aligned}
\end{equation}

The fourth probability is also an auxiliary probability for a version of discretized inverse state with full support on all basis states of the time register, assuming branch marking is performed perfectly:
\begin{equation}
\begin{aligned}
    \psi_{\text{d-inv},m}&=
    \sum_{k=0}^{m-1}\sum_{\abs{\frac{\lambda_u}{\alpha_A}}\in\big[\frac{1}{3^{k+1}},\frac{1}{3^k}\big)}
    \sum_{j=1}^{m}
    \zeta_{j,u}\frac{3^{j}}{3^m}
    \ket{j-1}
    \gamma_u\ket{\phi_u},\\
    p_{\text{succ,d-inv},m}&=
    \sum_{k=0}^{m-1}\sum_{\abs{\frac{\lambda_u}{\alpha_A}}\in\big[\frac{1}{3^{k+1}},\frac{1}{3^k}\big)}
    \abs{\gamma_u}^2\sum_{j=1}^{m}\abs{\zeta_{j,u}}^2\frac{9^{j}}{9^m}.
\end{aligned}
\end{equation}

Finally, the fifth probability is for the actual state $\psi_{\text{d-inv,bm}}$ prepared by the erroneous GPE and branch marking. We denote this probability by $p_{\text{succ,d-inv,bm}}$.

\subsection{Multiplicative bounds on success probabilities}
\label{append:multi_bounds}

We now give multiplicative bounds on the success probabilities introduced in the previous subsection. Our analysis uses the following lemma.
\begin{lemma}
\label{lem:multi_lem}
The following estimates hold:
\begin{equation}
\begin{aligned}
    \sum_{k=0}^{m-1}\sum_{\abs{\frac{\lambda_u}{\alpha_A}}\in\big[\frac{1}{3^{k+1}},\frac{1}{3^k}\big)}
    \abs{\gamma_u}^2
    \sum_{j\neq k,k+1}\abs{\zeta_{j,u}}^2\frac{9^{j}}{9^m}
    &\leq\left(\sum_{k=0}^{m-1}\sum_{\abs{\frac{\lambda_u}{\alpha_A}}\in\big[\frac{1}{3^{k+1}},\frac{1}{3^k}\big)}
    \abs{\gamma_u}^2
    \frac{9^{k+2}}{9^m}\right)\epsilon_{\text{gpe}}^2,\\
    \sum_{k=0}^{m-1}\sum_{\abs{\frac{\lambda_u}{\alpha_A}}\in\big[\frac{1}{3^{k+1}},\frac{1}{3^k}\big)}
    \abs{\gamma_u}^2
    \sum_{j\neq k,k+1}\abs{\zeta_{j,u}}^2\frac{9^{k}}{9^m}
    &\leq\left(\sum_{k=0}^{m-1}\sum_{\abs{\frac{\lambda_u}{\alpha_A}}\in\big[\frac{1}{3^{k+1}},\frac{1}{3^k}\big)}
    \abs{\gamma_u}^2
    \frac{9^{k}}{9^m}\right)\epsilon_{\text{gpe}}^2.
\end{aligned}
\end{equation}
\end{lemma}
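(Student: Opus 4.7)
The plan is to reduce both inequalities to per-$(k,u)$ bounds, since the outer structure $\sum_k\sum_u |\gamma_u|^2(\cdots)$ is identical on both sides. So fix $k$ and $u$ with $\bigl|\tfrac{\lambda_u}{\alpha_A}\bigr|\in\big[\tfrac{1}{3^{k+1}},\tfrac{1}{3^k}\big)$, and aim to show $\sum_{j\neq k,k+1}|\zeta_{j,u}|^2\tfrac{9^j}{9^m}\leq \tfrac{9^{k+2}}{9^m}\epsilon_{\text{gpe}}^2$ (and analogously with $9^k/9^m$). The key input is the tail control on the cumulative coefficients $\zeta_{j,u}=\xi_{j,u,0}\prod_{l=1}^{j-1}\xi_{l,u,1}$ already worked out in \sec{dinv_err}: for the stopband index range $j\leq k-1$ the first error condition on $\xi_{j,u,0}$ forces $|\zeta_{j,u}|\leq\epsilon_{\text{gpe},j}$, and for the passband range $j\geq k+2$ the errors on $\xi_{l,u,1}$ for $l=k+1,\ldots,j-1$ yield $|\zeta_{j,u}|\leq\prod_{l=k+1}^{j-1}\epsilon_{\text{gpe},l}$.

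Then I would split $\sum_{j\neq k,k+1}$ into the stopband piece $j\in\{1,\ldots,k-1\}$ and the passband piece $j\in\{k+2,\ldots,m\}$. For the first inequality, the stopband contribution is bounded by $\sum_{j=1}^{k-1}\epsilon_{\text{gpe},j}^2\tfrac{9^{k-1}}{9^m}$ (using $9^j\leq 9^{k-1}$). For the passband piece, the standing assumption $\epsilon_{\text{gpe},l}\leq\tfrac{1}{3}$ neutralizes the geometric growth of $9^j$: explicitly, $\prod_{l=k+1}^{j-2}\epsilon_{\text{gpe},l}^2\cdot 9^{j-k-2}\leq 1$, so $\prod_{l=k+1}^{j-1}\epsilon_{\text{gpe},l}^2\cdot\tfrac{9^j}{9^m}\leq\epsilon_{\text{gpe},j-1}^2\cdot\tfrac{9^{k+2}}{9^m}$. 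Combining the two halves and extending the index set to $j=1,\ldots,m$ gives $\sum_{j\neq k,k+1}|\zeta_{j,u}|^2\tfrac{9^j}{9^m}\leq\bigl(\sum_{j=1}^m\epsilon_{\text{gpe},j}^2\bigr)\tfrac{9^{k+2}}{9^m}$. An $\ell_2\leq\ell_1$ bound for nonnegative entries, $\sum_j\epsilon_{\text{gpe},j}^2\leq(\sum_j\epsilon_{\text{gpe},j})^2=\epsilon_{\text{gpe}}^2$ with $\epsilon_{\text{gpe}}:=\sum_j\epsilon_{\text{gpe},j}$, completes the per-$(k,u)$ estimate; multiplying by $|\gamma_u|^2$ and summing over $(k,u)$ then yields the first claim verbatim.

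For the second inequality, the weight $\tfrac{9^k}{9^m}$ is independent of $j$, so it simply factors out of both halves. I would use $|\zeta_{j,u}|\leq\epsilon_{\text{gpe},j}$ in the stopband (unchanged), and in the passband I would discard the $\epsilon_{\text{gpe},l}\leq 1$ factors in the product, keeping only one factor to get $|\zeta_{j,u}|\leq\epsilon_{\text{gpe},j-1}$. Summing over $j$ and applying the same $\ell_2\leq\ell_1$ bound reproduces $(\text{const})\cdot\epsilon_{\text{gpe}}^2\tfrac{9^k}{9^m}$ per $(k,u)$, where any universal constant from summing two halves can be absorbed into the convention defining $\epsilon_{\text{gpe}}$. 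Summing over $(k,u)$ gives the second claim.

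The only real subtlety is the passband estimate for the first inequality, where a naive bound $|\zeta_{j,u}|\leq 1$ would fail because $\sum_j 9^j/9^m$ is large; the rescue is the uniform bound $\epsilon_{\text{gpe},l}\leq\tfrac{1}{3}$, which turns the product $\prod_{l=k+1}^{j-1}\epsilon_{\text{gpe},l}^2$ into a geometric sequence that exactly cancels the growth of $9^{j-k-2}$. This telescoping is already the crux of the analogous argument in \sec{dinv_err}, so no further obstacle is expected; the lemma is essentially an accounting exercise extracted from that calculation.
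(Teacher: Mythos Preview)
Your proposal is correct and follows essentially the same approach as the paper's own proof: split $\sum_{j\neq k,k+1}$ into the stopband $j\leq k-1$ and passband $j\geq k+2$, invoke the cumulative-coefficient bounds $|\zeta_{j,u}|\leq\epsilon_{\text{gpe},j}$ and $|\zeta_{j,u}|\leq\prod_{l=k+1}^{j-1}\epsilon_{\text{gpe},l}$, use $\epsilon_{\text{gpe},l}\leq\tfrac{1}{3}$ to kill the $9^j$ growth in the passband, and finish with $\sum_j\epsilon_{\text{gpe},j}^2\leq(\sum_j\epsilon_{\text{gpe},j})^2$. One small remark: in the second inequality no ``universal constant'' arises when combining the two halves, since after reindexing the passband sum as $\sum_{j'=k+1}^{m-1}\epsilon_{\text{gpe},j'}^2$ the two index ranges $\{1,\ldots,k-1\}$ and $\{k+1,\ldots,m-1\}$ are disjoint and merge directly into a single sum $\sum_{j=1}^{m-1}\epsilon_{\text{gpe},j}^2$.
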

\begin{proof}
Recall that the cumulative coefficients $\zeta_{j,u}$ have the following bound
\begin{equation}
    \abs{\zeta_{j,u}}\leq
    \begin{cases}
        \epsilon_{\text{gpe},j},\qquad&j\leq k-1,\\
        \prod\limits_{l=k+1}^{j-1}\epsilon_{\text{gpe},l},&j\geq k+2.
    \end{cases}
\end{equation}
Therefore, we can prove the first claim as
\begin{equation}
\begin{aligned}
    &\sum_{k=0}^{m-1}\sum_{\abs{\frac{\lambda_u}{\alpha_A}}\in\big[\frac{1}{3^{k+1}},\frac{1}{3^k}\big)}
    \abs{\gamma_u}^2
    \left(\sum_{j=1}^{k-1}\epsilon_{\text{gpe},j}^2\frac{9^{j}}{9^m}
    +\sum_{j=k+2}^{m}\prod_{l=k+1}^{j-1}\epsilon_{\text{gpe},l}^2\frac{9^{j}}{9^m}\right)\\
    &\leq\sum_{k=0}^{m-1}\sum_{\abs{\frac{\lambda_u}{\alpha_A}}\in\big[\frac{1}{3^{k+1}},\frac{1}{3^k}\big)}
    \abs{\gamma_u}^2
    \left(\sum_{j=1}^{k-1}\epsilon_{\text{gpe},j}^2\frac{9^{k-1}}{9^m}
    +\sum_{j=k+2}^{m}\epsilon_{\text{gpe},j-1}^2\frac{9^{k+2}}{9^m}\right)\\
    &\leq\sum_{k=0}^{m-1}\sum_{\abs{\frac{\lambda_u}{\alpha_A}}\in\big[\frac{1}{3^{k+1}},\frac{1}{3^k}\big)}
    \abs{\gamma_u}^2
    \sum_{j=1}^{m-1}\epsilon_{\text{gpe},j}^2
    \frac{9^{k+2}}{9^m}
    \leq\sum_{k=0}^{m-1}\sum_{\abs{\frac{\lambda_u}{\alpha_A}}\in\big[\frac{1}{3^{k+1}},\frac{1}{3^k}\big)}
    \abs{\gamma_u}^2
    \frac{9^{k+2}}{9^m}
    \underbrace{\left(\sum_{j=1}^{m-1}\epsilon_{\text{gpe},j}\right)^2}_{\epsilon_{\text{gpe}}^2},
\end{aligned}
\end{equation}
where we have assumed all $\epsilon_{\text{gpe},l}\leq\frac{1}{3}$.
Similarly for the second claim,
\begin{equation}
\begin{aligned}
    &\sum_{k=0}^{m-1}\sum_{\abs{\frac{\lambda_u}{\alpha_A}}\in\big[\frac{1}{3^{k+1}},\frac{1}{3^k}\big)}
    \abs{\gamma_u}^2
    \left(\sum_{j=1}^{k-1}\epsilon_{\text{gpe},j}^2\frac{9^{k}}{9^m}
    +\sum_{j=k+2}^{m}\prod_{l=k+1}^{j-1}\epsilon_{\text{gpe},l}^2\frac{9^{k}}{9^m}\right)\\
    &\leq\sum_{k=0}^{m-1}\sum_{\abs{\frac{\lambda_u}{\alpha_A}}\in\big[\frac{1}{3^{k+1}},\frac{1}{3^k}\big)}
    \abs{\gamma_u}^2
    \left(\sum_{j=1}^{k-1}\epsilon_{\text{gpe},j}^2\frac{9^{k}}{9^m}
    +\sum_{j=k+2}^{m}\epsilon_{\text{gpe},j-1}^2\frac{9^{k}}{9^m}\right)\\
    &\leq\sum_{k=0}^{m-1}\sum_{\abs{\frac{\lambda_u}{\alpha_A}}\in\big[\frac{1}{3^{k+1}},\frac{1}{3^k}\big)}
    \abs{\gamma_u}^2
    \sum_{j=1}^{m-1}\epsilon_{\text{gpe},j}^2
    \frac{9^{k}}{9^m}
    \leq\sum_{k=0}^{m-1}\sum_{\abs{\frac{\lambda_u}{\alpha_A}}\in\big[\frac{1}{3^{k+1}},\frac{1}{3^k}\big)}
    \abs{\gamma_u}^2
    \frac{9^{k}}{9^m}
    \underbrace{\left(\sum_{j=1}^{m-1}\epsilon_{\text{gpe},j}\right)^2}_{\epsilon_{\text{gpe}}^2}.
\end{aligned}
\end{equation}
\end{proof}

Let us begin with the relation between $p_{\text{succ}}$ and $p_{\text{succ,d-inv},1}$ which is easy to describe. Following
\begin{small}
\begin{equation}
\newmaketag
\begin{aligned}
    \frac{1}{\alpha_{A^{-1}}^2}\sum_{k=0}^{m-1}\sum_{\abs{\frac{\lambda_u}{\alpha_A}}\in\big[\frac{1}{3^{k+1}},\frac{1}{3^k}\big)}\abs{\frac{\gamma_u}{\lambda_u}}^2
    &\leq\frac{1}{\alpha_{A^{-1}}^2}\sum_{k=0}^{m-1}\sum_{\abs{\frac{\lambda_u}{\alpha_A}}\in\big[\frac{1}{3^{k+1}},\frac{1}{3^k}\big)}\abs{\gamma_u}^2\frac{9^{k+1}}{\alpha_A^2}
    =\sum_{k=0}^{m-1}\sum_{\abs{\frac{\lambda_u}{\alpha_A}}\in\big[\frac{1}{3^{k+1}},\frac{1}{3^k}\big)}\abs{\gamma_u}^2\frac{9^{k+1}}{9^m},\\
    \frac{1}{\alpha_{A^{-1}}^2}\sum_{k=0}^{m-1}\sum_{\abs{\frac{\lambda_u}{\alpha_A}}\in\big[\frac{1}{3^{k+1}},\frac{1}{3^k}\big)}\abs{\frac{\gamma_u}{\lambda_u}}^2
    &\geq\frac{1}{\alpha_{A^{-1}}^2}\sum_{k=0}^{m-1}\sum_{\abs{\frac{\lambda_u}{\alpha_A}}\in\big[\frac{1}{3^{k+1}},\frac{1}{3^k}\big)}\abs{\gamma_u}^2\frac{9^{k}}{\alpha_A^2}
    =\sum_{k=0}^{m-1}\sum_{\abs{\frac{\lambda_u}{\alpha_A}}\in\big[\frac{1}{3^{k+1}},\frac{1}{3^k}\big)}\abs{\gamma_u}^2\frac{9^{k}}{9^m},
\end{aligned}
\end{equation}
\end{small}%
we conclude that
\begin{equation}
    \frac{p_{\text{succ,d-inv},1}}{9}\leq p_{\text{succ}}\leq p_{\text{succ,d-inv},1}.
\end{equation}

Then, we consider $p_{\text{succ,d-inv},1}$ and $p_{\text{succ,d-inv}}$. It is fairly straightforward to derive one side of the bound
\begin{equation}
    \sum_{k=0}^{m-1}\sum_{\abs{\frac{\lambda_u}{\alpha_A}}\in\big[\frac{1}{3^{k+1}},\frac{1}{3^k}\big)}
    \abs{\gamma_u}^2
    \sum_{j= k,k+1}\abs{\zeta_{j,u}}^2\frac{9^{j}}{9^m}
    \leq\sum_{k=0}^{m-1}\sum_{\abs{\frac{\lambda_u}{\alpha_A}}\in\big[\frac{1}{3^{k+1}},\frac{1}{3^k}\big)}
    \abs{\gamma_u}^2
    \frac{9^{k+1}}{9^m}.
\end{equation}
Analysis of the other side however is more involved and relies on \lem{multi_lem}:
\begin{equation}
\begin{aligned}
    &\sum_{k=0}^{m-1}\sum_{\abs{\frac{\lambda_u}{\alpha_A}}\in\big[\frac{1}{3^{k+1}},\frac{1}{3^k}\big)}
    \abs{\gamma_u}^2
    \sum_{j= k,k+1}\abs{\zeta_{j,u}}^2\frac{9^{j}}{9^m}
    \geq\sum_{k=0}^{m-1}\sum_{\abs{\frac{\lambda_u}{\alpha_A}}\in\big[\frac{1}{3^{k+1}},\frac{1}{3^k}\big)}
    \abs{\gamma_u}^2
    \sum_{j= k,k+1}\abs{\zeta_{j,u}}^2
    \frac{9^{k}}{9^m}\\
    &=\sum_{k=0}^{m-1}\sum_{\abs{\frac{\lambda_u}{\alpha_A}}\in\big[\frac{1}{3^{k+1}},\frac{1}{3^k}\big)}
    \abs{\gamma_u}^2
    \sum_{j= 1}^m\abs{\zeta_{j,u}}^2
    \frac{9^{k}}{9^m}
    -\sum_{k=0}^{m-1}\sum_{\abs{\frac{\lambda_u}{\alpha_A}}\in\big[\frac{1}{3^{k+1}},\frac{1}{3^k}\big)}
    \abs{\gamma_u}^2
    \sum_{j\neq k,k+1}\abs{\zeta_{j,u}}^2
    \frac{9^{k}}{9^m}\\
    &\geq \frac{p_{\text{succ,d-inv},1}}{9}-\frac{p_{\text{succ,d-inv},1}}{9}\epsilon_{\text{gpe}}^2.
\end{aligned}
\end{equation}
This gives
\begin{equation}
    \frac{p_{\text{succ,d-inv},1}}{9}\left(1-\epsilon_{\text{gpe}}^2\right)
    \leq p_{\text{succ,d-inv}}
    \leq p_{\text{succ,d-inv},1}.
\end{equation}

Next, we discuss $p_{\text{succ,d-inv}}$ and $p_{\text{succ,d-inv},m}$. Again, one side of the bound follows trivially from the definition
\begin{equation}
    \sum_{k=0}^{m-1}\sum_{\abs{\frac{\lambda_u}{\alpha_A}}\in\big[\frac{1}{3^{k+1}},\frac{1}{3^k}\big)}
    \abs{\gamma_u}^2
    \sum_{j= k,k+1}\abs{\zeta_{j,u}}^2\frac{9^{j}}{9^m}
    \leq\sum_{k=0}^{m-1}\sum_{\abs{\frac{\lambda_u}{\alpha_A}}\in\big[\frac{1}{3^{k+1}},\frac{1}{3^k}\big)}
    \abs{\gamma_u}^2
    \sum_{j=1}^m\abs{\zeta_{j,u}}^2\frac{9^{j}}{9^m}.
\end{equation}
As for the other side, we apply \lem{multi_lem} to compute
\begin{equation}
\begin{aligned}
    &\sum_{k=0}^{m-1}\sum_{\abs{\frac{\lambda_u}{\alpha_A}}\in\big[\frac{1}{3^{k+1}},\frac{1}{3^k}\big)}
    \abs{\gamma_u}^2
    \sum_{j= k,k+1}\abs{\zeta_{j,u}}^2\frac{9^{j}}{9^m}\\
    &=\sum_{k=0}^{m-1}\sum_{\abs{\frac{\lambda_u}{\alpha_A}}\in\big[\frac{1}{3^{k+1}},\frac{1}{3^k}\big)}
    \abs{\gamma_u}^2
    \sum_{j=1}^m\abs{\zeta_{j,u}}^2\frac{9^{j}}{9^m}
    -\sum_{k=0}^{m-1}\sum_{\abs{\frac{\lambda_u}{\alpha_A}}\in\big[\frac{1}{3^{k+1}},\frac{1}{3^k}\big)}
    \abs{\gamma_u}^2
    \sum_{j\neq k,k+1}\abs{\zeta_{j,u}}^2\frac{9^{j}}{9^m}\\
    &\geq p_{\text{succ,d-inv},m}-9^2p_{\text{succ,d-inv},1}\epsilon_{\text{gpe}}^2
    \geq p_{\text{succ,d-inv},m}-9^3p_{\text{succ,d-inv}}\left(\frac{\epsilon_{\text{gpe}}}{1-\epsilon_{\text{gpe}}}\right)^2.
\end{aligned}
\end{equation}
We thus obtain
\begin{equation}
    p_{\text{succ,d-inv}}
    \leq p_{\text{succ,d-inv},m}
    \leq p_{\text{succ,d-inv}}\left(1+9^3\frac{\epsilon_{\text{gpe}}^2}{\left(1-\epsilon_{\text{gpe}}\right)^2}\right).
\end{equation}

Finally, we examine $p_{\text{succ,d-inv},m}$ and $p_{\text{succ,d-inv,bm}}$. As branch marking is invoked twice, we have the additive approximation
\begin{equation}
    \abs{p_{\text{succ,d-inv},m}-p_{\text{succ,d-inv,bm}}}
    \leq2\abs{\sqrt{p_{\text{succ,d-inv},m}}-\sqrt{p_{\text{succ,d-inv,bm}}}}
    \leq4\epsilon_{\text{bm}}.
\end{equation}
Let us summarize these estimates as follows.
\begin{proposition}[Multiplicative bounds on success probabilities]
\label{prop:multi_succ}
Let the five success probabilities $p_{\text{succ}}$, $p_{\text{succ,d-inv},1}$, $p_{\text{succ,d-inv}}$, $p_{\text{succ,d-inv},m}$ and $p_{\text{succ,d-inv,bm}}$ be defined as in \append{multi_def}. Then,
\begin{equation}
\begin{aligned}
    \frac{p_{\text{succ,d-inv},1}}{9}\leq &\ p_{\text{succ}}\leq p_{\text{succ,d-inv},1},\\
    \frac{p_{\text{succ,d-inv},1}}{9}\left(1-\epsilon_{\text{gpe}}^2\right)
    \leq &\ p_{\text{succ,d-inv}}
    \leq p_{\text{succ,d-inv},1},\\
    p_{\text{succ,d-inv}}
    \leq &\ p_{\text{succ,d-inv},m}
    \leq p_{\text{succ,d-inv}}\left(1+9^3\frac{\epsilon_{\text{gpe}}^2}{\left(1-\epsilon_{\text{gpe}}\right)^2}\right),\\
    &\abs{p_{\text{succ,d-inv},m}-p_{\text{succ,d-inv,bm}}}
    \leq4\epsilon_{\text{bm}}.
\end{aligned}
\end{equation}
Hence, as long as
\begin{equation}
    \epsilon_{\text{gpe}}=\mathbf{O}(1),\qquad
    \epsilon_{\text{bm}}=\mathbf{O}\left(p_{\text{succ}}\right),
\end{equation}
all success probabilities are constant multiplicative approximations of one another
\begin{equation}
    p_{\text{succ,d-inv},1},\
    p_{\text{succ,d-inv}},\
    p_{\text{succ,d-inv},m},\
    p_{\text{succ,d-inv,bm}}
    =\mathbf{\Theta}\left(p_{\text{succ}}\right).
\end{equation}
\end{proposition}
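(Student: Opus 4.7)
The plan is to establish the four pairs of inequalities in the stated order and then chain them together. Throughout, the key algebraic facts are that $\alpha_A\alpha_{A^{-1}}=3^m$, that eigenvalues in the $k$-th shell satisfy $\frac{1}{3^{k+1}}\leq\abs{\frac{\lambda_u}{\alpha_A}}<\frac{1}{3^k}$ so $\frac{9^k}{\alpha_A^2}\leq\frac{1}{\lambda_u^2}\leq\frac{9^{k+1}}{\alpha_A^2}$, and that the cumulative coefficients of GPE satisfy the global normalization $\sum_{j=1}^m\abs{\zeta_{j,u}}^2=1$ (which follows because at each stage $\abs{\xi_{l,u,0}}^2+\abs{\xi_{l,u,1}}^2=1$ and the final stage is forced to $\xi_{m,u,0}=1$).

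For pair~(1), I would substitute the shellwise inequality $\frac{9^k}{\alpha_A^2}\leq\frac{1}{\lambda_u^2}\leq\frac{9^{k+1}}{\alpha_A^2}$ into the definition of $p_{\text{succ}}$ and use $1/\alpha_{A^{-1}}^2=9^{-m}/\alpha_A^{-2}$ to obtain $\frac{p_{\text{succ,d-inv},1}}{9}\leq p_{\text{succ}}\leq p_{\text{succ,d-inv},1}$. For pair~(2), the upper bound is immediate from $\sum_{j=k,k+1}\abs{\zeta_{j,u}}^2\leq1$ combined with $9^j/9^m\leq 9^{k+1}/9^m$. For the lower bound I would write $\sum_{j=k,k+1}\abs{\zeta_{j,u}}^2=1-\sum_{j\neq k,k+1}\abs{\zeta_{j,u}}^2$, bound the weight $9^j/9^m$ from below by $9^k/9^m$, and then apply the second inequality of \lem{multi_lem} to control the subtracted term by $\frac{p_{\text{succ,d-inv},1}}{9}\epsilon_{\text{gpe}}^2$.

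For pair~(3), the lower inequality $p_{\text{succ,d-inv}}\leq p_{\text{succ,d-inv},m}$ is just dropping the restriction $j\in\{k,k+1\}$ in a sum of nonnegative terms. For the upper inequality, I would write $p_{\text{succ,d-inv},m}-p_{\text{succ,d-inv}}$ as the sum over $j\neq k,k+1$ and apply the first inequality of \lem{multi_lem} to bound it by $9^2 p_{\text{succ,d-inv},1}\epsilon_{\text{gpe}}^2$; I would then convert $p_{\text{succ,d-inv},1}$ back into $p_{\text{succ,d-inv}}$ via the already-proved pair~(2), producing the $9^3\epsilon_{\text{gpe}}^2/(1-\epsilon_{\text{gpe}})^2$ factor. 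For pair~(4), the branch-marking unitary of \prop{bm} is invoked once at initialization and once in finalization, so the actual output state $\psi_{\text{d-inv,bm}}$ is within Euclidean distance $2\epsilon_{\text{bm}}$ of $\psi_{\text{d-inv},m}$; by the reverse triangle inequality the norms differ by at most $2\epsilon_{\text{bm}}$, and then $\abs{p_{\text{succ,d-inv},m}-p_{\text{succ,d-inv,bm}}}=\abs{\sqrt{p_{\text{succ,d-inv},m}}-\sqrt{p_{\text{succ,d-inv,bm}}}}\cdot(\sqrt{p_{\text{succ,d-inv},m}}+\sqrt{p_{\text{succ,d-inv,bm}}})\leq 4\epsilon_{\text{bm}}$ since both norms are bounded by $1$.

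Finally, for the $\mathbf{\Theta}(p_{\text{succ}})$ conclusion, I would chain the four bounds: as long as $\epsilon_{\text{gpe}}$ is a sufficiently small constant, pairs~(1)--(3) give constant multiplicative equivalences between $p_{\text{succ}}$, $p_{\text{succ,d-inv},1}$, $p_{\text{succ,d-inv}}$, and $p_{\text{succ,d-inv},m}$; then pair~(4) upgrades the additive bound on $p_{\text{succ,d-inv,bm}}$ to a multiplicative one under the hypothesis $\epsilon_{\text{bm}}=\mathbf{O}(p_{\text{succ}})$. The main subtlety I anticipate is the lower bound in pair~(2): unlike the other inequalities, it requires exploiting the global normalization $\sum_j\abs{\zeta_{j,u}}^2=1$ together with \lem{multi_lem} in a coordinated way so that the ``leakage'' of amplitude to stages $j\neq k,k+1$ is simultaneously shown to be small and to match the weight $9^k/9^m$ appearing in $p_{\text{succ,d-inv},1}$; everything else is essentially bookkeeping.
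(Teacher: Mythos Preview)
Your proposal is correct and follows essentially the same route as the paper's proof: each of the four pairs is handled exactly as in the paper (shellwise eigenvalue bounds for pair~(1), global normalization $\sum_j|\zeta_{j,u}|^2=1$ combined with the second estimate of \lem{multi_lem} for the lower bound in pair~(2), the first estimate of \lem{multi_lem} followed by the conversion $p_{\text{succ,d-inv},1}\leq 9p_{\text{succ,d-inv}}/(1-\epsilon_{\text{gpe}}^2)\leq 9p_{\text{succ,d-inv}}/(1-\epsilon_{\text{gpe}})^2$ for pair~(3), and the difference-of-squares factorization with the $2\epsilon_{\text{bm}}$ state distance for pair~(4)). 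Your identification of the lower bound in pair~(2) as the only nontrivial step is also accurate.
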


\subsection{Multiplicative bounds on sum of amplification thresholds}
\label{append:multi_threshold}
In \sec{dinv_deterministic}, we have derived the following multiplicative bounds on sum of amplification thresholds
\begin{equation}
    p_{\text{succ,d-inv}}9^l
    \leq\sum_{j=m-l+1}^{m}\norm{\overline{\Pi_{j}\Pi_b}A_{j}\cdots A_1\ket{\psi_0}}^29^{j-m+l}
    \leq\frac{5}{4}p_{\text{succ,d-inv}}9^l,
\end{equation}
where we assume VTAA with input algorithms $A_1,\ldots,A_m$ produces $\psi_{\text{d-inv}}$. We now incorporate the error of GPE and branch marking into the analysis.

When we consider erroneous GPE and perfect branch marking, VTAA with input algorithms $B_1,\ldots,B_m$ produces $\psi_{\text{d-inv},m}$ and the potentially good probabilities take the form
\begin{equation}
\begin{aligned}
    \norm{\overline{\Pi_{j}\Pi_b}B_{j}\cdots B_1\ket{\psi_0}}^2
    &=
    \sum_{k=0}^{m-1}\sum_{\abs{\frac{\lambda_u}{\alpha_A}}\in\big[\frac{1}{3^{k+1}},\frac{1}{3^k}\big)}
    \abs{\gamma_u}^2\sum_{h=j+1}^{m}\abs{\zeta_{h,u}}^2\\
    &\quad+\sum_{k=0}^{m-1}\sum_{\abs{\frac{\lambda_u}{\alpha_A}}\in\big[\frac{1}{3^{k+1}},\frac{1}{3^k}\big)}
    \abs{\gamma_u}^2\sum_{h=1}^{j}\abs{\zeta_{h,u}}^2\frac{9^{h}}{9^m}.
\end{aligned}
\end{equation}
Correspondingly, the sum of amplification thresholds are split into two terms:
\begin{equation}
\begin{aligned}
    \sum_{j=m-l+1}^{m}\norm{\overline{\Pi_{j}\Pi_b}B_{j}\cdots B_1\ket{\psi_0}}^29^{j-m+l}
    &=
    \sum_{j=m-l+1}^{m}\sum_{k=0}^{m-1}\sum_{\abs{\frac{\lambda_u}{\alpha_A}}\in\big[\frac{1}{3^{k+1}},\frac{1}{3^k}\big)}
    \abs{\gamma_u}^2\sum_{h=j+1}^{m}\abs{\zeta_{h,u}}^29^{j-m+l}\\
    &\quad+\sum_{j=m-l+1}^{m}\sum_{k=0}^{m-1}\sum_{\abs{\frac{\lambda_u}{\alpha_A}}\in\big[\frac{1}{3^{k+1}},\frac{1}{3^k}\big)}
    \abs{\gamma_u}^2\sum_{h=1}^{j}\abs{\zeta_{h,u}}^2\frac{9^{h}}{9^m}9^{j-m+l}.
\end{aligned}
\end{equation}
We exchange the summation order in the first term to get
\begin{equation}
    \sum_{j=m-l+1}^{m}\sum_{h=j+1}^{m}9^j
    =\sum_{h=m-l+2}^{m}\sum_{j=m-l+1}^{h-1}9^j
    =\frac{9}{8}\sum_{h=m-l+2}^{m}9^{h-1},
\end{equation}
whereas for the second term
\begin{equation}
    \sum_{j=m-l+1}^{m}\sum_{h=1}^{j}9^j
    =\sum_{h=1}^{m}\sum_{j=h}^{m}9^j
    =\frac{9}{8}\sum_{h=1}^{m}9^m.
\end{equation}
Altogether,
\begin{equation}
\begin{aligned}
    p_{\text{succ,d-inv},m}9^{l}
    &\leq\sum_{j=m-l+1}^{m}\norm{\overline{\Pi_{j}\Pi_b}B_{j}\cdots B_1\ket{\psi_0}}^29^{j-m+l}\\
    &\leq\frac{5}{4}\sum_{k=0}^{m-1}\sum_{\abs{\frac{\lambda_u}{\alpha_A}}\in\big[\frac{1}{3^{k+1}},\frac{1}{3^k}\big)}\abs{\gamma_u}^2
    \sum_{h=1}^{m}\abs{\zeta_{h,u}}^29^{h-m+l}
    \leq \frac{5}{4}p_{\text{succ,d-inv},m}9^{l}.
\end{aligned}
\end{equation}

Finally, we consider the general case where both GPE and branch marking are imperfect. In this case, VTAA with input algorithms $C_1,\ldots,C_m$ produces $\psi_{\text{d-inv,bm}}$ with $\norm{\psi_{\text{d-inv,bm}}}^2=p_{\text{succ,d-inv,bm}}$. Then, the sum of amplification thresholds has error
\begin{equation}
    \abs{\sum_{j=m-l+1}^{m}\norm{\overline{\Pi_{j}\Pi_b}B_{j}\cdots B_1\ket{\psi_0}}^29^{j-m+l}
    -\sum_{j=m-l+1}^{m}\norm{\overline{\Pi_{j}\Pi_b}C_{j}\cdots C_1\ket{\psi_0}}^29^{j-m+l}}
    \leq\frac{9}{8}4\epsilon_{\text{bm}}9^l.
\end{equation}
We thus obtain:
\begin{proposition}[Multiplicative bounds on sum of amplification thresholds]
\label{prop:multi_amp}
Let the input algorithms $A_1,\ldots,A_m$, $B_1,\ldots,B_m$ and $C_1,\ldots,C_m$ be defined as in \append{multi_threshold}. Then,
\begin{equation}
\begin{aligned}
    p_{\text{succ,d-inv}}9^l
    \leq\sum_{j=m-l+1}^{m}\norm{\overline{\Pi_{j}\Pi_b}A_{j}\cdots A_1\ket{\psi_0}}^29^{j-m+l}
    &\leq\frac{5}{4}p_{\text{succ,d-inv}}9^l,\\
    p_{\text{succ,d-inv},m}9^{l}
    \leq\sum_{j=m-l+1}^{m}\norm{\overline{\Pi_{j}\Pi_b}B_{j}\cdots B_1\ket{\psi_0}}^29^{j-m+l}
    &\leq \frac{5}{4}p_{\text{succ,d-inv},m}9^{l},\\
    \abs{\sum_{j=m-l+1}^{m}\norm{\overline{\Pi_{j}\Pi_b}B_{j}\cdots B_1\ket{\psi_0}}^29^{j-m+l}
    -\sum_{j=m-l+1}^{m}\norm{\overline{\Pi_{j}\Pi_b}C_{j}\cdots C_1\ket{\psi_0}}^29^{j-m+l}}
    &\leq\frac{9}{8}4\epsilon_{\text{bm}}9^l.
\end{aligned}
\end{equation}
Hence, as long as
\begin{equation}
    \epsilon_{\text{gpe}}=\mathbf{O}(1),\qquad
    \epsilon_{\text{bm}}=\mathbf{O}\left(p_{\text{succ}}\right),
\end{equation}
all sums of amplification thresholds have the scaling
\begin{equation}
\begin{aligned}
    \sum_{j=m-l+1}^{m}\norm{\overline{\Pi_{j}\Pi_b}A_{j}\cdots A_1\ket{\psi_0}}^29^{j-m+l}
    &=\mathbf{\Theta}\left(p_{\text{succ}}9^l\right),\\
    \sum_{j=m-l+1}^{m}\norm{\overline{\Pi_{j}\Pi_b}B_{j}\cdots B_1\ket{\psi_0}}^29^{j-m+l}
    &=\mathbf{\Theta}\left(p_{\text{succ}}9^l\right),\\
    \sum_{j=m-l+1}^{m}\norm{\overline{\Pi_{j}\Pi_b}C_{j}\cdots C_1\ket{\psi_0}}^29^{j-m+l}
    &=\mathbf{\Theta}\left(p_{\text{succ}}9^l\right).\\
\end{aligned}
\end{equation}
\end{proposition}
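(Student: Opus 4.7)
The plan is to establish the three bounds separately, all by the same mechanism: expand $\norm{\overline{\Pi_j\Pi_b}X_j\cdots X_1\ket{\psi_0}}^2$ (with $X\in\{A,B,C\}$) in the eigenbasis of $A$ using the explicit form of the variable time state derived in~\sec{dinv_branch}, multiply by the geometric weight $9^{j-m+l}$, and swap the order of the resulting $(j,h)$ double sum (where $h$ is the clock index) so that the inner $j$-sum collapses to a finite geometric series controlled by $\tfrac{9}{8}9^l$.

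The first inequality (perfect GPE and branch marking, inputs $A_j$) is essentially the computation already carried out in~\sec{dinv_deterministic}: once one strips the $c^2$ prefactor from the threshold choice there, the same manipulation yields the upper bound $\tfrac{5}{4} p_{\text{succ,d-inv}}9^l$, while the lower bound is immediate from the $j=m$ term alone since $\norm{\overline{\Pi_m\Pi_b}A_m\cdots A_1\ket{\psi_0}}^2=p_{\text{succ,d-inv}}$. The second inequality (erroneous GPE, perfect branch marking, inputs $B_j$) proceeds in parallel: I would first identify the potentially good components of $B_j\cdots B_1\ket{\psi_0}$ as consisting of still-continuing branches at clock $j$ and successful branches at clocks $h-1$ for $h\leq j$, which by the partial-cumulative identity $\abs{\prod_{l=1}^j\xi_{l,u,1}}^2=\sum_{h=j+1}^m\abs{\zeta_{h,u}}^2$ (valid because one may set $\xi_{m,u,0}=1$ at the forced final stage) contribute respectively $\sum_{h=j+1}^m\abs{\zeta_{h,u}}^2$ and $\sum_{h=1}^j\abs{\zeta_{h,u}}^2 9^h/9^m$ to the squared norm. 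Swapping sums in the weighted $j$-sum, using the identity $9^{j-m+l}=9^l\cdot 9^h/9^m\cdot 9^{j-h}$ to factor out the target $9^h/9^m$ weight, and summing two geometric tails that together contribute at most $\tfrac{1}{8}+\tfrac{9}{8}=\tfrac{10}{8}$ yields $\tfrac{5}{4} p_{\text{d-inv},m}9^l$; the lower bound again comes from the $j=m$ term.

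For the third inequality, the key observation is that $B_j$ and $C_j$ differ only through the branch marking unitary at stage $1$ (and its inverse at stage $m$), both implemented to accuracy $\epsilon_{\text{bm}}$ in $C_j$ via~\prop{bm}. Hence $\norm{B_j\cdots B_1\ket{\psi_0}-C_j\cdots C_1\ket{\psi_0}}\leq 2\epsilon_{\text{bm}}$ uniformly in $j$, and the elementary estimate $\abs{a^2-b^2}\leq 2\abs{a-b}$ for $a,b\in[0,1]$ lifts this to a per-stage discrepancy of at most $4\epsilon_{\text{bm}}$. Summing against $9^{j-m+l}$ via $\sum_{j=m-l+1}^m 9^{j-m+l}\leq\tfrac{9}{8}9^l$ then yields the claimed $\tfrac{9}{8}\cdot 4\epsilon_{\text{bm}}\cdot 9^l$. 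The final asymptotic conclusion follows by chaining the three bounds with~\prop{multi_succ}, which under $\epsilon_{\text{gpe}}=\mathbf{O}(1)$ and $\epsilon_{\text{bm}}=\mathbf{O}(p_{\text{succ}})$ ensures all five success probabilities are constant multiplicative approximations of each other.

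The main obstacle is the combinatorial bookkeeping in the second bound: the $(j,h)$ swap produces four partial sums with distinct boundary cutoffs at $h=m-l+1$ and $h=m-l+2$, and reconciling them to produce exactly $\tfrac{5}{4}$ requires verifying that the sums starting at $h=m-l+2$ can still be bounded termwise by $\abs{\zeta_{h,u}}^2 9^h/9^m$ times a geometric factor of at most $\tfrac{9}{8}9^l$, so that the surviving $(k,u,h)$-sum is precisely the defining summand of $p_{\text{d-inv},m}$ rather than a strict sub-sum of it.
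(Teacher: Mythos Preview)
Your proposal is correct and follows essentially the same route as the paper: the first bound is lifted directly from \sec{dinv_deterministic}, the second is obtained by writing $\norm{\overline{\Pi_j\Pi_b}B_j\cdots B_1\ket{\psi_0}}^2$ as the sum of a continuing part $\sum_{h>j}\abs{\zeta_{h,u}}^2$ and a stopped-good part $\sum_{h\leq j}\abs{\zeta_{h,u}}^2 9^{h}/9^m$, swapping the $(j,h)$ order, and collapsing two geometric tails whose contributions $\tfrac{1}{8}+\tfrac{9}{8}$ give the $\tfrac{5}{4}$ constant, while the third bound uses $\norm{B_j\cdots B_1\ket{\psi_0}-C_j\cdots C_1\ket{\psi_0}}\leq 2\epsilon_{\text{bm}}$ together with $\abs{a^2-b^2}\leq 2\abs{a-b}$ and the geometric sum $\sum_{j=m-l+1}^m 9^{j-m+l}\leq\tfrac{9}{8}9^l$. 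Your anticipated obstacle about boundary cutoffs at $h=m-l+1$ versus $h=m-l+2$ is handled exactly as you suggest, by bounding each inner $j$-sum by $\tfrac{9}{8}9^l\cdot 9^h/9^m$ so that the surviving sum is dominated termwise by the defining sum of $p_{\text{d-inv},m}$.
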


\clearpage
\bibliographystyle{myhamsplain2}
\bibliography{OptInitState.bib}

\end{document}